\newtheorem{thm}{Theorem}[chapter]
\newtheorem{cor}[thm]{Corollary}
\newtheorem{lem}[thm]{Lemma}
\newtheorem{defn}{Definition}[chapter]
\DeclareMathOperator*{\vect}{\mathrm{vec}}
\renewcommand\Re{\operatorname{Re}}
\renewcommand\Im{\operatorname{Im}}
\begin{document}

\thispagestyle{empty}

\begin{center}
  \vspace*{0.25\textheight{}}
  \Huge {
    \textbf{Numerical methods for phase retrieval}
  }

  \vspace*{0.3\textheight{}}
  \LARGE{\textbf{Eliyahu Osherovich}}
\end{center}
\newpage
\thispagestyle{empty}
\mbox{}

\pagenumbering{roman}

\cleardoublepage{}

\thispagestyle{empty}
\begin{doublespace}
  \begin{center}
    \Huge
    \textbf {Numerical methods for phase retrieval}
  \end{center}
  \vfill{}
  \begin{center}
    \Large{Research Thesis}
  \end{center}
  \vfill{}
  \begin{center}
    \large{
      Submitted in Partial Fulfillment of the
      
      Requirement for the degree of
      
      Doctor of Philosophy} 
  \end{center}
  \vfill{}
  \begin{center}
    \LARGE\textbf{Eliyahu Osherovich}
  \end{center}
  \vfill{}
  \begin{center}  
    \mbox{\normalsize%
      Submitted to the Senate of the Technion --- %
      Israel Institute of Technology}
 
    Kislev 5772\hfill{}Haifa\hfill{}December 2011
  \end{center}
\end{doublespace}
\newpage
\thispagestyle{empty}
\mbox{}

\thispagestyle{empty}
\begin{doublespace}
  \begin{center}
    The Research Thesis Was Done Under The Supervision of
    Prof. Irad Yavneh \\
    and Dr. Micheal Zibulevsky in the Department of Computer Science
  \end{center}
  \vspace{3cm}
  \begin{center}
    {\LARGE Acknowledgment}
  \end{center} 
  \vspace{2\baselineskip} I owe my deepest gratitude to my advisers:
  Prof. Irad Yavneh, and Dr. Michael Zibulevsky---without their help
  this work would not have been possible.

  I would like to thank Profs. Alfred Bruckstein, Avraham
  Sidi, and Marius Ungarish for our
  fruitful discussions and for their good company.

  I very much enjoyed our joint work with Prof. Mordechai Segev's
  group on sparsity-based subwavelength CDI. Special thanks to
  Profs. Yonina Eldar, and Mordechai Segev.

  Last, but not least, I am deeply indebted to my family and
  most of all to my wife Noa for their love and support.
\end{doublespace}
\vfill{}
The generous financial help of THE TECHNION, Ministry of Trade and
Industry MAGNET/IMG4 GRANT, and
ERC ADVANCED GRANT (via Prof. Mordechai Segev) is gratefully acknowledged.
\newpage\normalsize\thispagestyle{empty}


\tableofcontents
\listoffigures

\begin{abstract}
  In this work we consider the problem of reconstruction of a signal
  from the magnitude of its Fourier transform, also known as phase
  retrieval. The problem arises in many areas of astronomy,
  crystallography, optics, and coherent diffraction imaging (CDI). Our
  main goal is to develop an efficient reconstruction method based on
  continuous optimization techniques. Unlike current reconstruction
  methods, which are based on alternating projections, our approach
  leads to a much faster and more robust method. However, all previous
  attempts to employ continuous optimization methods, such as
  Newton-type algorithms, to the phase retrieval problem failed. In this
  work we provide an explanation for this failure, and based on this
  explanation we devise a sufficient condition that allows development
  of new reconstruction methods---approximately known Fourier phase. We
  demonstrate that a rough (up to $\pi/2$ radians) Fourier phase
  estimate practically guarantees successful reconstruction by any
  reasonable method. We also present a new reconstruction method whose
  reconstruction time is orders of magnitude faster than that of the
  current method-of-choice in phase retrieval---Hybrid Input-Output
  (HIO). Moreover, our method is capable of successful reconstruction
  even in the situations where HIO is known to fail. We also extended
  our method to other applications: Fourier domain holography, and
  interferometry.

  Additionally we\footnote{The work on sub-wavelength CDI was done in
    collaboration with Prof. M.~Segev's group from the Technion
    Physics Department, Solid State Institute.} developed a new
  sparsity-based method for sub-wavelength CDI. Using this method we
  demonstrated experimental resolution exceeding several times the
  physical limit imposed by the diffraction light properties (so
  called diffraction limit).
\end{abstract}

\cleardoublepage{}
\pagenumbering{arabic}


\chapter{Introduction}
\label{cha:introduction}

\section{Motivation}
\label{sec:motivation}
Recent development of nanotechnology has resulted in great interest in
imaging techniques suitable for visualization of nano-structures.  One
of the most promising techniques for such high resolution imaging is
Coherent Diffraction Imaging (CDI). In CDI, a highly coherent beam of
X-rays or electrons is incident on a specimen, generating a
diffraction pattern.  Under certain conditions the diffracted
wavefront is approximately equal (within a scale factor) to the
Fourier transform of the specimen.  After being recorded by a CCD
sensor, the diffraction pattern is used to reconstruct the
specimen~\shortcite{sayre52implications,miao99extending,quiney10coherent}.
Effectively, in CDI we replace the objective lens of a typical
microscope with a software algorithm. The advantage in using no lenses
is that the final image is aberration-free and the final resolution is
only diffraction and dose limited, that is, dependent only on the
wavelength, aperture size and exposure time. This process is
illustrated in Figure~\ref{fig:cdi-process}.
\begin{figure}[H]
  \centering
  \subfloat[]{
    \includegraphics[width=0.3\textwidth]{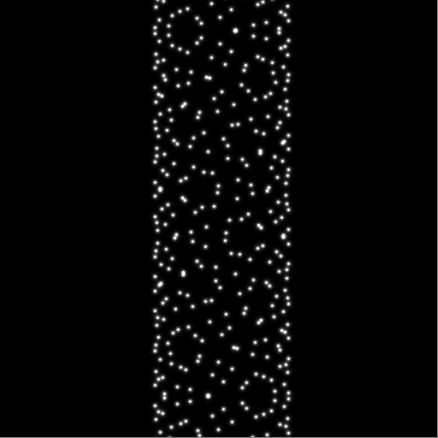}
  }
  \subfloat[]{
    \includegraphics[width=0.3\textwidth]{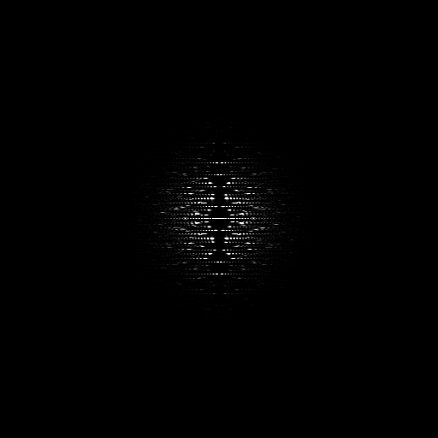}
  }
  \subfloat[]{
    \includegraphics[width=0.3\textwidth]{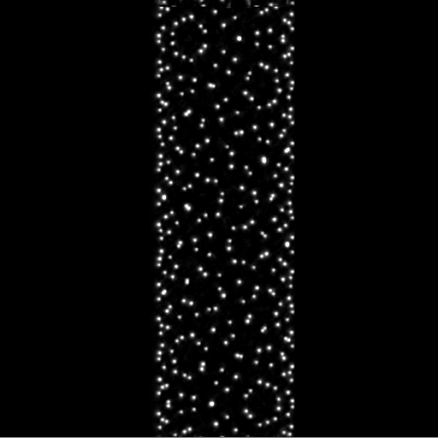}
  }
  \caption[CDI process]{CDI process: a specimen (a) scatters a wave of
    X-rays or electrons that otherwise diffracts freely; the
    diffraction pattern (b) is used for an algorithmic reconstruction
    (c) of the original specimen. Illustrations taken
    from~\shortcite{wikipedia11coherent}.}
  \label{fig:cdi-process}
\end{figure}
The method has been successfully applied to visualizing a variety of
nano-structures, such as carbon nano-tubes~\shortcite{zuo03atomic},
defects inside nano-crystals~\shortcite{pfeifer06three-dimensional},
proteins, and
more~\shortcite{neutze00potential,chapman06high-resolution,chapman07femtosecond}.
Furthermore, exactly the same problem---the reconstruction of a signal
from the magnitude of its Fourier transform---arises in may other
areas of science. Notable examples include astronomy, crystallography,
and speckle interferometry.

It is important to note that, due to the physical nature of the
sensor, we are limited to recording only the intensity (squared
amplitude) of the diffracted wave, hence its phase is lost. As will
be shown later, this loss of the phase leaves us with highly incomplete
data, which makes the problem of reconstruction hard.

\section{Data acquisition model}
\label{sec:data-acqu-model}
Of course, in the real world, the sought object $x(t)$ and its Fourier
transform (denoted by $\hat{x}(\omega)$) are both continuous functions
of $t$ and $w$, respectively, where $t$ and $w$ are multidimensional
coordinate vectors. Furthermore, the support of $x(t)$ is limited,
which means that $\hat{x}(\omega)$ is spread over all frequencies:
from $-\infty$ to $\infty$.  However, during the data acquisition
process we capture only a finite extent in the Fourier domain and all
further processing is done on digital computers. This naturally leads
to discrete approximations of $x(t)$ and $\hat{x}(\omega)$, that are
well justified in  view of the finite resolution that stems from
the measurements and, in general, from the fact that all optical
systems have resolution limits. Given that $x[n]$ (an adequate
sampling of $x(t)$) contains $N$ points (for simplicity we assume that
$x$ is one-dimensional---generalization for the multi-dimensional case
is straightforward) we assume that $x[n]$ vanishes outside the interval
$[0,N-1]$. Furthermore, if we assume, without loss of generality, that
the physical extent of $x$ is unity we immediately conclude that the
sampling rate in the Fourier domain must be $1/N$ to acquire the
measurements that are related to $x[n]$ via the Discrete Fourier
Transform (DFT). However, a more thorough examination of the problem
yields a higher sampling rate requirement. Recall that we record only
the intensity of the diffraction pattern. This intensity can be
represented as follows:
\begin{equation}
  \label{eq:intro-1}
  I(\omega) =
  |\hat{x}(\omega)|^{2}=\bar{\hat{x}}(\omega)\circ\hat{x}(\omega)\,, 
\end{equation}
where the overbar denotes the complex conjugate and $\circ$ denotes
element-wise multiplication. Hence, the inverse Fourier transform of
the measured intensity $I(\omega)$ results in the auto-correlation
function (denoted by $\star$) of the sought object,
\begin{equation}
  \label{eq:intro-2}
  \mathcal{F}^{-1}[I(\omega)] = x(t)\star x(t) \,. 
\end{equation}
Obviously, the autocorrelation $x(t)\star x(t)$ has support that is
twice as large as the support of $x(t)$ (in each dimension),
therefore, the diffraction pattern intensity must be sampled with the
rate two times higher than $1/N$ to capture all the information about
the auto-correlation function. To this end, we always assume that the
signal $x(t)$ (or $x[n]$) is ``padded'' with zeros so that its
size is doubled (in each dimension). This requirement is an
approximation to the physical constraint on $x(t)$ having finite
support. Without adding it into the reconstruction scheme, the problem
would be severely undetermined with multiple solutions that are
unrelated to the original signal $x$. To illustrate the last claim, one
can imagine the case where $x[n]$ is reconstructed from its Fourier
magnitude without additional constraints. Obviously, $\emph{any}$
choice of the Fourier phase will give rise to a valid solution which,
unfortunately,  has little to do  with the sought signal. 

\section{Reconstruction from incomplete Fourier data}
\label{sec:reconstr-from-incomp}
Before we proceed to the main subject of this work---the
reconstruction of a signal from the magnitude of its Fourier
transform---let us consider a number of toy problems, where we
evaluate the importance of different parts of the Fourier transform.

The Fourier transform is, in general, complex. There are two common
representations of a complex number: one is the sum of its real
($\Re$) and
imaginary ($\Im$) parts
\begin{equation}
  \label{eq:intro-3}
  z = \Re + j\Im \,, 
\end{equation}
and the other is the product of its magnitude ($r$) and the complex
exponent of its phase $e^{j\phi}$, 
\begin{equation}
  \label{eq:intro-4}
  z = re^{j\phi} \,. 
\end{equation}
From these formulas it is not clear whether one part of the
representation is more important
than the other. Below we demonstrate that the real and the
imaginary parts carry about equal amount of information and the loss
of one of them can often be recovered. However, the  phase carries
most of the information, and consequently, its loss is more difficult to
overcome. 

\subsection{Reconstruction from the real part}
\label{sec:reconstr-from-real}
Let us assume that $x[n]$ is a real one-dimensional signal\footnote{A
  generalization to a complex multidimensional  $x$ is
  straightforward.}. Furthermore, we assume that $x$ vanishes outside
the interval $[0,N-1]$, specifically, we assume that
$x[n]=0$ for $n = -1, -2, \ldots, -(N-1)$. Recall that any real signal
can be represented uniquely as a sum of two signals
\begin{equation}
  \label{eq:intro-5}
  x = x_{e}+x_{o} \,, 
\end{equation}
where $x_{e}$ is even and $x_{o}$ is odd\footnote{In the complex case,
  $x_{e}$ is Hermitian, and $x_{o}$ is anti-Hermitian.}
(see Figure~\ref{fig:even-odd}).  It can be easily
shown that
\begin{equation}
  \label{eq:intro-6}
  x_{e} = \frac{x[n] + x[-n]}{2}\,, \quad x_{o}=\frac{x[n] - x[-n]}{2}
  \,. 
\end{equation}
Recall also that the Fourier transform of an even signal is real and
that of an odd signal is purely imaginary. Hence, we conclude that the
real part of $\hat{x}$ is nothing but the Fourier transform of
$x_{e}$. Thus, we can obtain the even part $x_{e}$ by the inverse
Fourier transform of the real part of $\hat{x}$.  Furthermore,
reconstructing $x$ from $x_{e}$ is trivial---we should take the right-hand
side ($n\geq0$) of $x_{e}$ multiplied by two everywhere except the
origin (see Figure~\ref{fig:even-odd}).

\begin{figure}[H]
  \centering
  \subfloat[]{
    \includegraphics[width=0.8\textwidth{}]{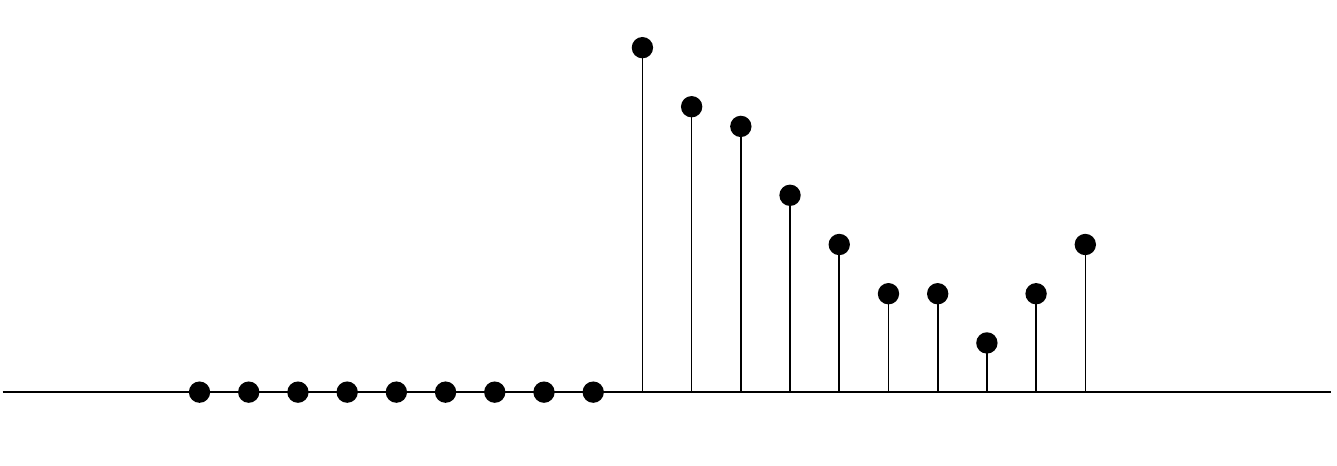}
  }\\
  \subfloat[]{
    \includegraphics[width=0.8\textwidth{}]{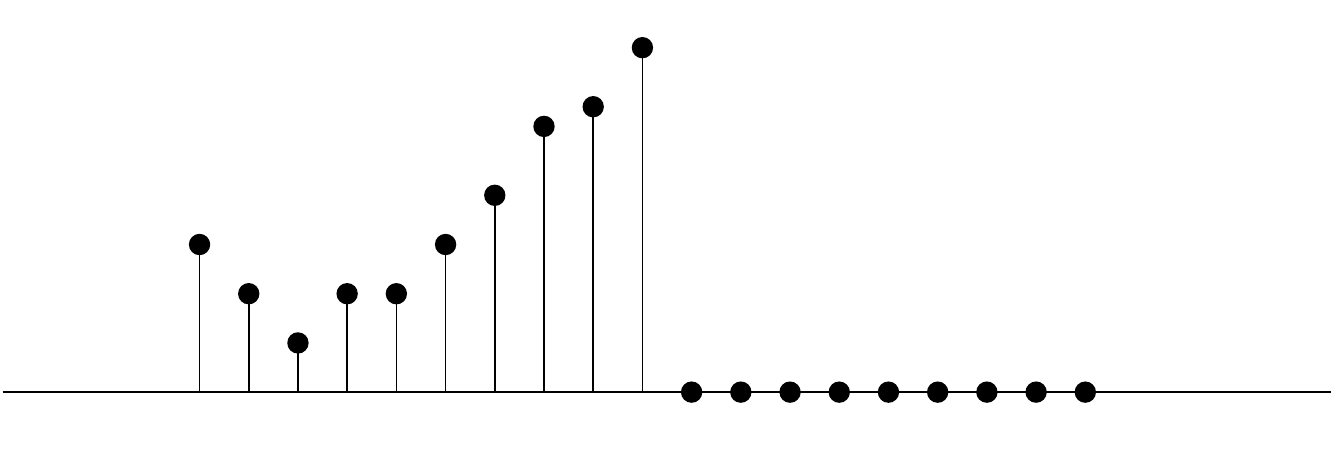}
  }\\
  \subfloat[]{
    \includegraphics[width=0.8\textwidth{}]{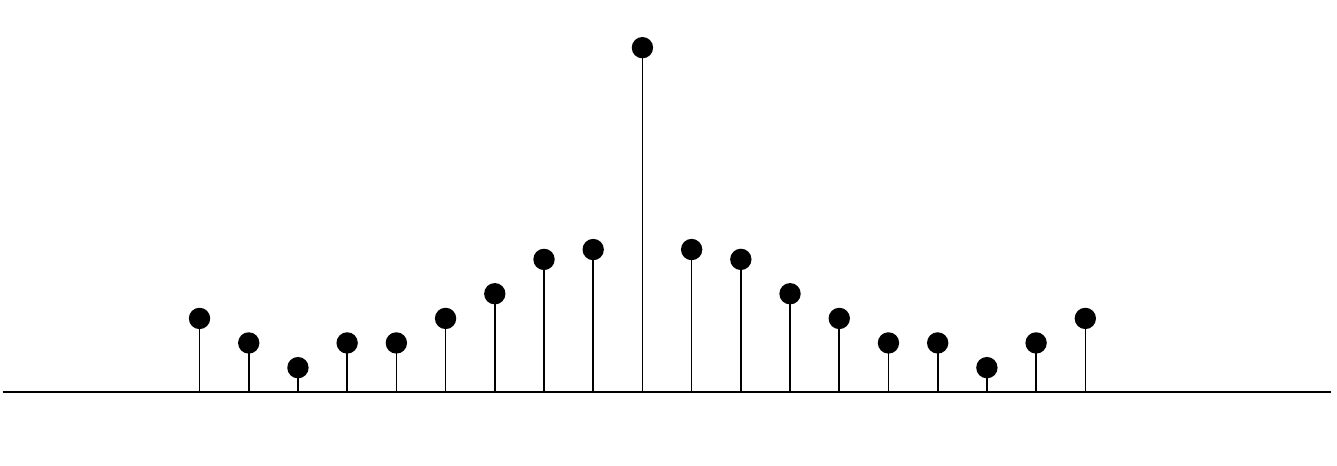}
  }\\
  \subfloat[]{
    \includegraphics[width=0.8\textwidth{}]{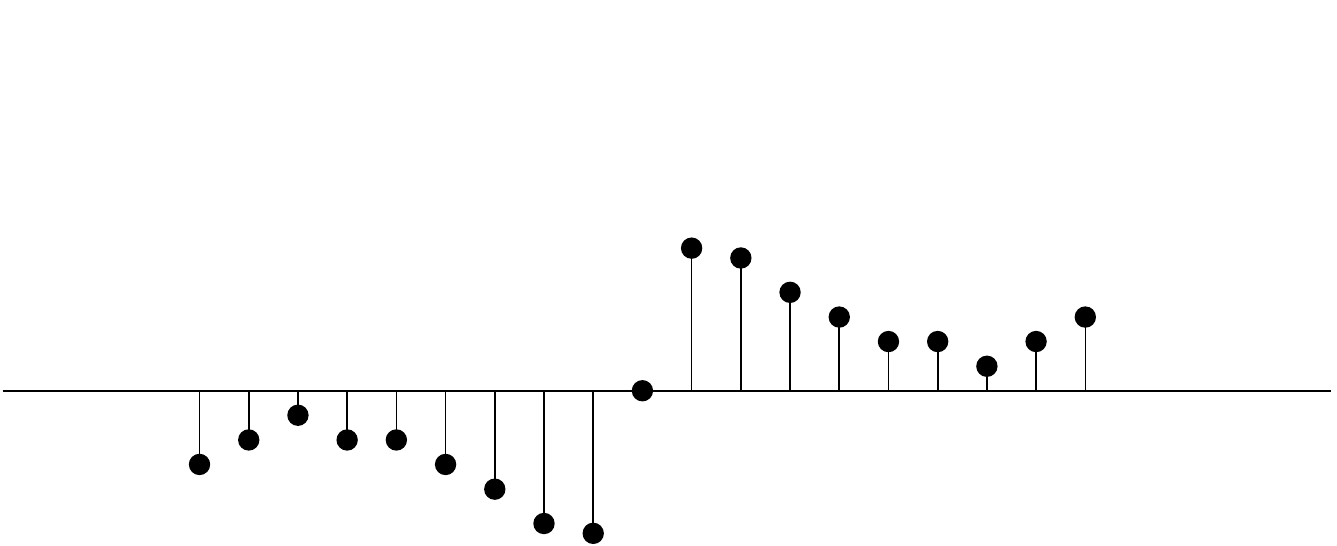}
  }
  \caption[Signal decomposition into even and odd parts]{Signal decomposition into even and odd parts: (a) original
    signal $x[n]$, (b) its reversed version $x[-n]$, (c) the even
    part, and (d) the odd part.}
  \label{fig:even-odd}
\end{figure}

\subsection{Reconstruction from the imaginary part}
\label{sec:reconstr-from-imag}
The reconstruction from the imaginary part is also easy. The method is
very similar to the reconstruction from the real part. The only
difference is that now we obtain the odd part of the sought
signal. But, again, taking the right-hand side of $x_{o}$ and multiplying
it by two we obtain the original signal $x$ everywhere besides the
origin.

For the most general case, where $x$ is complex, it is easy to show
that the missing imaginary part leads to almost perfect
reconstruction---only $\Im(x[0])$ is lost. Similarly, when
reconstructing from the imaginary part of $\hat{x}$---only $\Re(x[n])$
is lost. Therefore, we can conclude that the real and the imaginary
parts of the Fourier transform carry the same amount of information and
losing either one of them can be easily overcome if $x[n]$ is
sufficiently padded with zeros and vanishes for $n<0$.

\subsection{Reconstruction from the phase}
\label{sec:reconstr-from-phase}
Now we switch to the second (polar) representation of complex numbers and
consider first the situation where the magnitude is lost, namely,
reconstruction from the Fourier phase. Several researchers
(see~\shortcite{hayes82reconstruction,hayes80signal,millane96multidimensional,millane90phase,oppenheim81importance})
showed that sufficiently padded signals can be reconstructed from the
Fourier phase. However, the reconstruction is possible only up to
a scale factor, that is, one obtains  $\alpha x$ for some
real positive scalar $\alpha$. Furthermore, is was shown that this is
the only type of non-uniqueness possible in signal reconstruction from
the Fourier phase~\shortcite{hayes80signal}.

It seems that all previous works used a variation of the method
of alternating projections (see
Section~\ref{sec:gerchb-saxt-meth}). However, the problem can easily
be represented as linear programming, for which much more efficient
algorithms exist. Actually, we solve a variation of this problem, where
the Fourier phase is known to lie within  certain interval in
Chapters~\ref{cha:appr-four-phase-first} and
\ref{cha:appr-four-phase-explanation}, and our method is much faster
than the method of alternating projections. 

\subsection{Reconstruction from the magnitude}
\label{sec:reconstr-from-magn}
We finally arrive at the case that is the main subject of this
research---reconstruction from the Fourier magnitude alone. It turns
out that this case of incomplete Fourier information is the most
difficult among the four possibilities considered here. This
phenomenon can be related to the fact that the Fourier phase carries
the majority of the information in a signal. As an informal ``proof''
of the latter claim, look at Figure~\ref{fig:phase-importance}, where
we exchange the Fourier phase between two different images, while
keeping the Fourier magnitude intact. The results show, unmistakably,
an exchange of the images.

It turns out that this problem is very different in the
one-dimensional case and in the multi-dimensional case. The former
suffers from multiplicity of possible solutions (no matter how much
padding we use in $x$), while the latter is usually less prone to
multiple solutions (aside from trivial transformations: lateral
shifts, axis reversal, and constant phase factor), but finding a
solution is very difficult. An explanation for this phenomenon is
given in Chapter~\ref{cha:math-found}.

\begin{figure}[H]
  \centering
  \subfloat[]{
    \includegraphics[height=0.4\textheight{}]{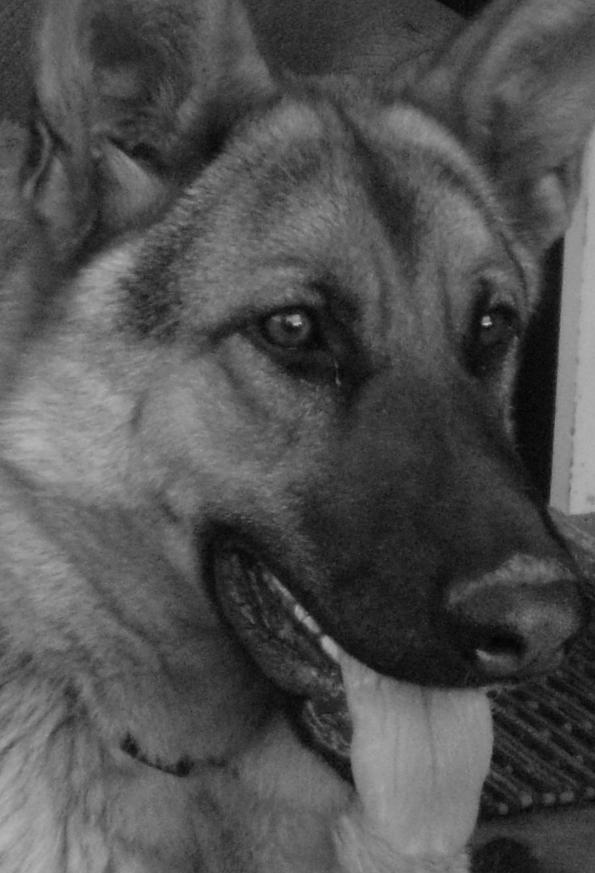}
  }\qquad{}
  \subfloat[]{
    \includegraphics[height=0.4\textheight{}]{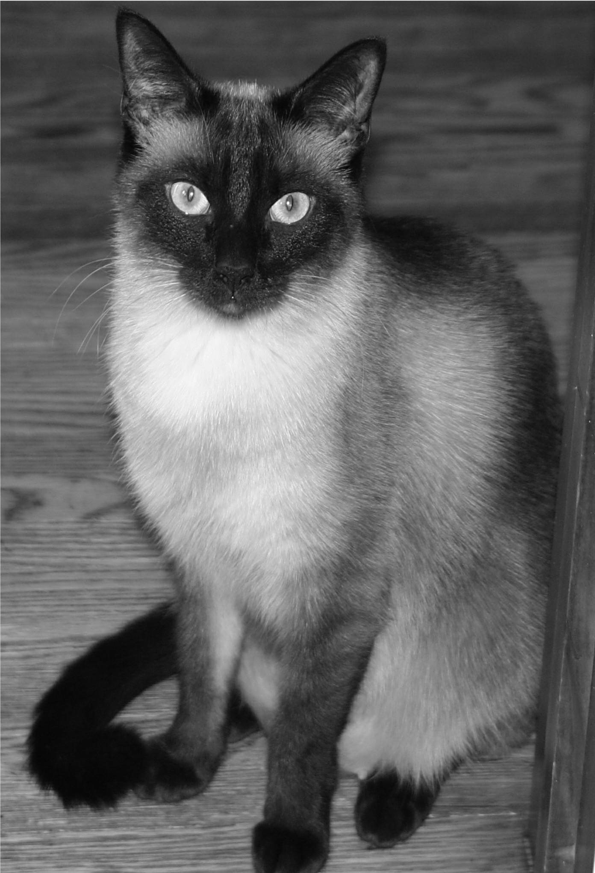}
  }\\
  \subfloat[]{
    \includegraphics[height=0.4\textheight{}]{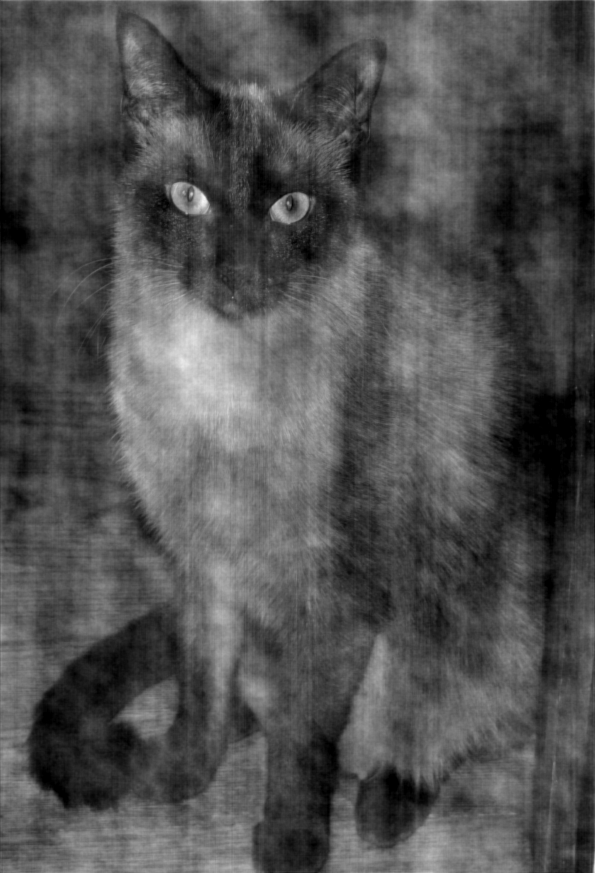}
  }\qquad{}
  \subfloat[]{
    \includegraphics[height=0.4\textheight{}]{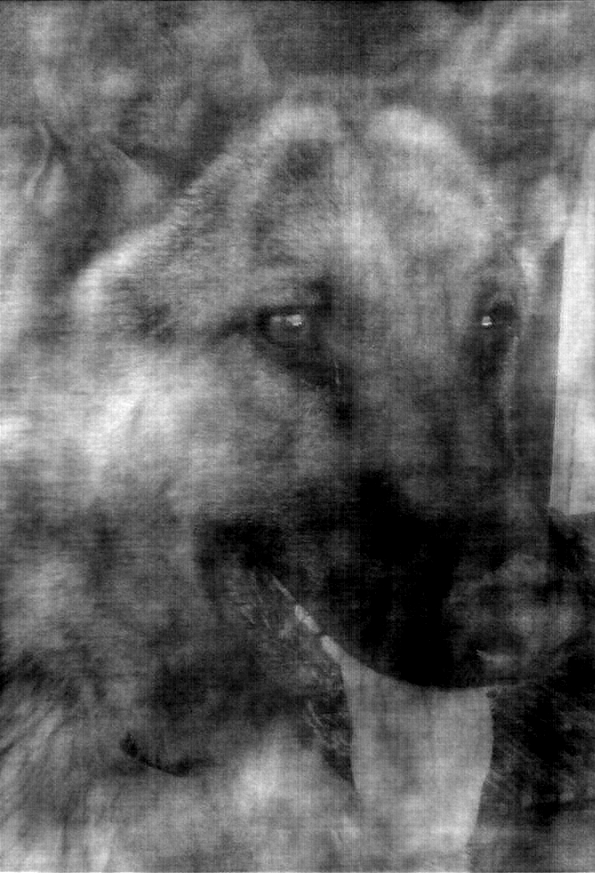}
  }
  \caption[The importnace of phase in signals]{The importnace of phase in signals---given two images (a) a
    dog, and (b) a cat, we exchange their Fourier phase while keeping
    the magnitude. The results, unmistakably, show the cat instead of
    the dog (c), and the dog instead of the cat (d).}
  \label{fig:phase-importance}
\end{figure}

\section{A short overview}
\label{sec:short-overview}
After this short introduction we are ready to proceed to the main
part of this work. In the next chapter we present the mathematical
foundations of the phase retrieval problem: our main concern is
uniqueness of the reconstruction, moreover, we demonstrate why
the one-dimensional and the multi-dimensional cases behave extremely
differently. In Chapter~\ref{cha:curr-reconstr-meth} we present the
methods that are used today for phase retrieval. These chapters are
based on a compilation of known facts and do not contain our original
research. Our development starts in
Chapter~\ref{cha:found-optim-meth}, where we introduce machinery for
efficient optimization of a real function of complex
argument. Moreover, we find and analyze the eigen-decomposition of the
Hessian of one of the most frequently used objective function in phase
retrieval. Chapters~\ref{cha:appr-four-phase-first} and
\ref{cha:appr-four-phase-explanation} are dedicated to a variant of
the phase retrieval problem where the Fourier phase is not lost
completely, but rather a rough estimate of it is available. We demonstrate
empirically and then prove mathematically that this scenario leads to
a new family of algorithms that are orders of magnitude faster than
the existing methods. The ideas from these chapters are taken into the
new field of the Fourier domain holography in
Chapters~\ref{cha:phase-retr-holography}, and
\ref{cha:design-bound-phase} where we develop a new method of
reconstruction and present guide rules for a reference beam design
that lead to fast and robust reconstruction. A new type of prior
knowledge---image sparsity---is exploited in
Chapter~\ref{cha:bandw-extr-using} where we present a CDI with
resolution several times higher than the maximal theoretical
limit. Finally, in Chapter~\ref{cha:afterword} we present concluding
remarks and provide reference to the works that was done in the course
of this Ph.D. research but have not been included in this thesis. For
example, we used Fienup's Hybrid Input-Output algorithm (see
Section~\ref{sec:fienup-algor-phase}) for creation of Grassmanian
matrices~\shortcite{osherovich09designing}---an interesting work
that was left outside because it is not related to the phase retrieval
problem.



\chapter{Mathematical foundations}
\label{cha:math-found}

In this chapter we review some mathematical properties of the phase
retrieval problem. Our main concern is the uniqueness properties of the solution,
which turns to be different in the one-dimensional and
multi-dimensional cases.

\section{One-dimensional signals}
\label{sec:one-dimensional-signals}

\subsection{Continuous-time case}
\label{sec:continuous-case}

What can be said about two signals $g_{1}(t)$ and $g_{2}(t)$ having
the same power spectrum? Obviously,
\begin{equation}
  \label{eq:28}
  |\hat{g}_{1}(\omega)|^{2} = |\hat{g}_{2}(\omega)|^{2}
\end{equation}
means that
\begin{equation}
  \label{eq:29}
  |\hat{g}_{1}(\omega)| = |\hat{g}_{2}(\omega)|\,.
\end{equation}
This, in turn, leads to the following relation
\begin{equation}
  \label{eq:30}
  \hat{g}_{2}(\omega) = \hat{g}_{1}(\omega)e^{j\phi(\omega)}\,.
\end{equation}
for some real-valued function $\phi(\omega)$. Hence, if $g_{2}(t)$ has
the same power spectrum as $g_{1}(t)$, then $g_{2}(t)$ must be a
result of passing $g_{1}(t)$ through an all-pass filter. Moreover, it
means that there are infinitely many (an uncountable set of) functions
having the same power spectrum, because any choice of $\phi(\omega)$ will
lead to some function
\begin{equation}
  \label{eq:31}
  g_{2}(t) =
  \frac{1}{2\pi}\int_{-\infty}^{\infty}\hat{g}_{1}(\omega)e^{j\phi(\omega)}
  e^{j\omega t}\, \mathrm{d} \omega \ .
\end{equation}
This observation, however, is not particularly helpful as we are not
interested in \textit{arbitrary} signals. Our interest is limited to
signals that are physically feasible. Such signals, for example,
must have finite support and finite energy. That is, we assume that
\begin{equation}
  \label{eq:7}
  g(t) = 0, \quad \text{for } |t|> \frac{T}{2}\,,
\end{equation}
and
\begin{equation}
  \label{eq:10}
  \int^{T/2}_{-T/2}|g(t)|^{2}\,\mathrm{d}(t) < \infty \,. 
\end{equation}
Under these restrictions it is not clear anymore that there still exist
infinitely many signals having the same power spectrum.\footnote{Here
  we do not count the trivial multiplication by a phase factor.} Nevertheless,
it was shown that even under these conditions the number of signals
having the same power spectrum can be infinite. A fairly complete
treatment of this problem is given
by~\shortciteA{hofstetter64construction} who considered the problem from a
different point of view: finding of all possible time-limited signals
having the same autocorrelation function as a given signal. However,
it is exactly the same problem due to the direct connection between
the autocorrelation and the power spectrum.  Although Hofstetter's
work was done for continuous-time (analogue) signals and not for
discrete signals that we encounter in computer algorithms, we believe it is
more instructive to start with this case and proceed then to the
discrete version of the problem. Hence, we present next the main
results and derivations found in~\shortcite{hofstetter64construction}.

Under the conditions~\eqref{eq:7} and~\eqref{eq:10}, $g(t)$ is
absolutely integrable
\begin{equation}
  \label{eq:11}
  \int^{T/2}_{-T/2}|g(t)|\,\mathrm{d}(t) < \infty \ .
\end{equation}
and its Laplace transform,
\begin{equation}
  \label{eq:12}
  G(s) \equiv \mathcal{L}[g(t)] =
  \int^{T/2}_{-T/2}g(t)e^{-st}\,\mathrm{d}(t)\,, 
\end{equation}
converges for all complex $s = \sigma + j\omega$. The function $G(s)$
is thus analytic in the entire $s$ plane and, because of
Equation~\eqref{eq:10}, square integrable. The signal $g(t)$ can
always be recovered from $G(s)$ by means of the inverse
transform
\begin{equation}
  \label{eq:13}
  g(t) = \frac{1}{2\pi}\int_{-\infty}^{\infty}G(j\omega)e^{j\omega
    t}\, \mathrm{d}\omega \,. 
\end{equation}
From the definition of the Laplace transform, it is obvious that it is
a generalization of the Fourier transform, and the relation between
the two is very simple:
\begin{equation}
  \label{eq:95}
  \hat{g}(\omega) = \mathcal{F}[g(t)] =
  \mathcal{L}[g(t)]\Big\vert_{s=j\omega}=G(j\omega)\,. 
\end{equation}

With these preliminaries in hand we proceed to the autocorrelation
function of $g(t)$ defined as
\begin{equation}
  \label{eq:14}
  r(\tau) \equiv g(t)\star g(t) =
  \int^{T/2}_{-T/2}\bar{g}(t)g(t+\tau)\,\mathrm{d}\tau\ .
\end{equation}
It can be readily shown that $r(\tau)$ also has finite support: it
vanishes outside the interval $[-T, T]$. Moreover, similarly to
$g(t)$, $r(\tau)$ is square and absolutely integrable. Therefore, its
Laplace transform,
\begin{equation}
  \label{eq:15}
  R(s) = \int^{T}_{-T}r(\tau) e^{-s\tau}\, \mathrm{d}\tau \,, 
\end{equation}
exists, and  is analytic in the entire $s$ plane. Of course, it and can be used to
recover $r(\tau)$ by the means of the inverse
transform
\begin{equation}
  \label{eq:16}
  r(\tau) =
  \frac{1}{2\pi}\int^{\infty}_{-\infty}R(j\omega)e^{j\omega\tau}\,
  \mathrm{d}\omega\ .
\end{equation}
Since our main goal is to design signals with the same autocorrelation
function, we need to find the relation between the Laplace transform of
a signal and the Laplace transform of its autocorrelation
function. The development is straightforward
\begin{align}
  \label{eq:19}
  R(s)
  & = \int^{\infty}_{-\infty} (g(t)\star g(t)) e^{-s\tau}\,\mathrm{d}\tau \\
  & = \int^{\infty}_{-\infty}
  \left[
    \int^{\infty}_{-\infty} \bar{g}(t) g(t+\tau)\, \mathrm{d}t
  \right]  e^{-s\tau} \,\mathrm{d}\tau\\
  & =  \int^{\infty}_{-\infty} \bar{g}(t) e^{st}\, \mathrm{d} t
  \int^{\infty}_{-\infty}g(t+\tau) e^{-s(t+\tau)}\, \mathrm{d} \tau\label{eq:21}\\
  & =  \overline{G(-\bar{s})}G(s)\label{eq:22} \ .
\end{align}
In the transition from \eqref{eq:21} to \eqref{eq:22} we used the fact
that
\begin{equation}
  \label{eq:23}
  \mathcal{L}[\bar{g}(t)] = \overline{G(\bar{s})}\ .
\end{equation}
Hence, if two signals $g_{1}(t)$, and $g_{2}(t)$ have the same
autocorrelation function they must obey the following equality for all $s$:
\begin{equation}
  \label{eq:18}
  \overline{G_{1}(-\bar{s})}G_{1}(s) = \overline{G_{2}(-\bar{s})}G_{2}(s) \ .
\end{equation}
This includes  $s = j\omega$, which gives the equality between their
power spectra
\begin{equation}
  \label{eq:24}
  \begin{split}
    \overline{G_{1}(-\overline{(j\omega)})}G_{1}(j\omega) & =
    \overline{G_{2}(-\overline{(j\omega)})}G_{2}(j\omega) \\
    \overline{G_{1}(j\omega)}G_{1}(j\omega) & =
    \overline{G_{2}(j\omega)}G_{2}(j\omega)\\
    |G_{1}(j\omega)|^{2} & =  |G_{2}(j\omega)|^{2} \ ,
  \end{split}
\end{equation}
as expected.

The necessity to introduce the Laplace transform will become clear
after we present a family of all-pass filters that preserve the
support bounds of $g(t)$.  It is shown
in~\shortcite{hofstetter64construction} that if $s_{0}$ is a zero of
$G_{1}(s)$, that is, if 
\begin{equation}
  \label{eq:17}
  G_{1}(s_{0}) = \int_{-\infty}^{\infty}g_{1}(t)e^{-s_{0}t}\,
  \mathrm{d}t = 0 \ ,
\end{equation}
then the all-pass filter $h(t)$ whose transfer function $H(s)$ is
given by
\begin{equation}
  \label{eq:20}
  H(s) = \frac{s+\bar{s}_{0}}{s - s_{0}}
\end{equation}
will not spread a given signal $g_{1}(t)$ outside the original
interval $[-T/2, T/2]$. It is easy to verify that
\begin{equation}
  \label{eq:32}
  |H(j\omega)|^{2} = H(j\omega)\overline{H(j\omega)} =
  \left(
    \frac{j\omega+\bar s_{0}}{j\omega - s_{0}}
  \right)
  \left(
    \frac{-j\omega + s_{0}}{-j\omega - \bar s_{0}}
  \right)  = 1 \ .
\end{equation}
Hence, $h(t)$ is indeed an all-pass filter because $|H(j\omega)| = 1$. To
demonstrate that $h(t)$ preserves the support, let us find its
explicit representation. To this end, it is more convenient to split
it into the following two filters
\begin{equation}
  \label{eq:33}
  H(s) = 1 + \frac{s_{0} + \bar s_{0}}{s - s_{0}}.
\end{equation}
In this form, it is easy to see that the impulse response $h(t)$ that
corresponds to $H(s)$ is given by
\begin{align}
  \label{eq:34}
  h(t)
  & = \frac{1}{2\pi}\int_{-\infty}^{\infty}
  \left(
    1 + \frac{s_{0} + \bar s_{0}}{j\omega - s_{0}}
  \right) e^{j\omega t}\, \mathrm{d} \omega\\
  & = \delta(t) + \frac{s_{0} + \bar s_{0}}{2\pi}
  e^{s_{0}t}\int_{-\infty}^{\infty}
  \frac{e^{j\omega t - s_{0}t}}{j\omega  - s_{0}}\, \mathrm{d}
  \omega\\
  & = \delta(t) + (s_{0} + \bar s_{0})e^{s_{0}t}U(t)\, ,
\end{align}
where $U(t)$ denotes the Heaviside step function. Hence, if a
time-limited input $g_{1}(t)$ were provided to the system described by
$H(s)$, the output $g_{2}(t)$ would be $g_{1}(t)$ plus the convolution
of $g_{1}(t)$ with the filter $h_{1}(t) = (s_{0} +
\bar s_{0})e^{s_{0}t}U(t)$. To show that the output $g_{2}(t)$ vanishes
outside the interval $[-T/2,T/2]$ we must show that the latter term
(convolution) vanishes outside this interval. The proof is
straightforward:
\begin{align}
  g_{21}(t)
  & = g_{1}(t)\otimes h_{1}(t)\\
  & = \int_{-T/2}^{T/2}g_{1}(\tau)h_{1}(t-\tau)\,\mathrm{d}\tau\\
  & =
  \begin{dcases}
    0 & \text{if $t < - \frac{T}{2}$}\,, \\
    (s_{0}+\bar s_{0}) \int_{-T/2}^{t}
    g_{1}(\tau)e^{s_{0}(t-\tau)}\,\mathrm{d}\tau & \text{if $-\frac{T}{2}\leq t
    \leq \frac{T}{2}$}\,,  \\
    (s_{0}+\bar s_{0})\int_{-T/2}^{T/2}
    g_{1}(\tau)e^{s_{0}(t-\tau)}\,\mathrm{d}\tau &  \text{if $t >
      \frac{T}{2}$}\,, 
  \end{dcases}
\end{align}
where $\otimes$ denotes convolution.
Note that $g_{21}(t)$ vanishes for $t > T/2$, because the last line in
the above equation reads
\begin{equation}
  \begin{split}
    (s_{0}+\bar s_{0})\int_{-T/2}^{T/2}
    g_{1}(\tau)e^{s_{0}(t-\tau)}\,\mathrm{d}\tau
    & =   (s_{0}+\bar s_{0}) e^{s_{0}t}\int_{-T/2}^{T/2}
    g_{1}(\tau)e^{s_{0}\tau}\,\mathrm{d}\tau \\
    & = (s_{0}+\bar s_{0}) e^{s_{0}t} G_{1}(s_{0}) \,, 
  \end{split}
\end{equation}
and $s_{0}$ is a root of $G_{1}(s)$. Hence, the support of $g_{2}$ is
not wider than the support of $g_{1}$,
that is, it vanishes outside the interval $[-T/2, T/2]$. It is important
to emphasize the way $G_{2}(s)$ (equivalently,  $g_{2}(t)$) is
obtained from $G_{1}(s)$ (equivalently,  $g_{1}(t)$):
\begin{equation}
  \label{eq:25}
  G_{2}(s) = G_{1}(s)\frac{s+\bar s_{0}}{s-s_{0}}\,,
\end{equation}
where $s_{0}$ is a root of $G_{1}(s)$. This formula gives an obvious
way to obtain a new time-limited signal $g_{2}(t)$ from a given
time-limited signal $g_{1}(t)$ such that the autocorrelation of the
two is equal. The algorithm is really simple, indeed:
\begin{enumerate}
\item Laplace transform $g_{1}(t)$ to obtain $G_{1}(s)$
\item Choose $k$ non-zero roots of $G_{1}(s)$:
  $\left\{s_{i}\right\}_{i=1}^{k}$ and replace them with their
  negative conjugates, to obtain $G_{2}(s)$
  \begin{equation}
    \label{eq:26}
     G_{2}(s)=G_{1}(s)\prod_{i=1}^{k}\frac{s+\bar s_{i}}{s-s_{i}}
  \end{equation}
\item Inverse Laplace transform $G_{2}(s)$ to obtain $g_{2}(s)$
\end{enumerate}
The fact that the new signal $g_{2}(t)$ vanishes outside the interval
$[-T/2, T/2]$ was proven above. The fact that the two signals have
the same autocorrelation function and, hence, the same power spectrum
is readily obtained by simple calculations:
\begin{equation}
  \label{eq:96}
  \begin{split}
    \mathcal{L}[g_{2}(t)\star g_{2}(t)]
    & = \overline{ G_{2}(-\bar s)}G_{2}(s)\\
    & =
    \overline{\left(
        G_{1}(-\bar s)\prod_{i=1}^{k}\frac{-\bar s+\bar s_{i}}{-\bar s-s_{i}}
      \right)} G_{1}(s)\prod_{i=1}^{k}\frac{s+\bar s_{i}}{s-s_{i}}\\
    & = \overline{G_{1}(-\bar s)}\prod_{i=1}^{k}\frac{-s+s_{i}}{-s-\bar s_{i}}
    G_{1}(s)\prod_{i=1}^{k}\frac{s+\bar s_{i}}{s-s_{i}}\\
    & = \overline{G_{1}(-\bar s)}G_{1}(s)
    \prod_{i=1}^{k}\frac{s-s_{i}}{s+\bar s_{i}}
    \prod_{i=1}^{k}\frac{s+\bar s_{i}}{s-s_{i}} \\
    & =  \overline{G_{1}(-\bar s)}G_{1}(s)\\
    & = \mathcal{L}[g_{1}(t)\star g_{1}(t)] \,. 
  \end{split}
\end{equation}
This, actually, leads us to an algorithm for generating new time-limited 
signals, all having the same autocorrelation function. The algorithm
simply replaces one or more non-zero roots of the Laplace transform of a
given signal with their negative conjugates. Moreover, it can be
proven that this approach can generate \emph{all possible} signals
with provided time support and autocorrelation function. A rigorous
proof of the last statement actually expands the current result to an
infinite set of roots. The proof is not difficult but we do not
present it here. The interested reader can find it in
\shortcite{hofstetter64construction}. For our discussion it is more
important to note that a set of $k$ non-zero roots of $G_{1}(s)$ gives
rise to $2^{k}$ new signals with the same time support and autocorrelation
function. Of course, depending on additional constraints, some of
these signals may not be feasible. For example, if our attention is
restricted to real-valued signals, the zeros of $G_{1}(s)$ must occur in
conjugate pairs. Hence, to generate a new real-valued signal we must
replace the corresponding pair and not a single root.

Below we present a number of examples of this technique (from
\shortcite{hofstetter64construction})
\begin{description}
\item[Example 1]
  Assume that $a < 0$ and let $g_{1}(t)$ be given by
  \begin{equation}
    \label{eq:27}
    g_{1}(t) =
    \begin{dcases}
      0 & \text{if $|t| >1$}\,,\\
      -e^{at} & \text{if $-1\leq t < 0$}\,,\\
      e^{at} & \text{if $0 < t \leq 1$}\,.
    \end{dcases}
  \end{equation}
  Calculating its Laplace transform gives us
  \begin{equation}
    \label{eq:113}
    \begin{split}
      G_{1}(s)
      & = \int_{-1}^{0}-e^{(a-s)t}\,\mathrm{d}t +
      \int_{0}^{1}e^{(a-s)t}\,\mathrm{d}t \\
      & = 2\frac{1-\cosh(s-a)}{s-a}\,.
    \end{split}
  \end{equation}
  Since $G_{1}(a)=0$, we can create a new signal $g_{2}(t)$ by passing
  $g_{1}(t)$ through the all-pass filter given by
  \begin{equation}
    \label{eq:36}
    H(s) = \frac{s+a}{s-a}\,.
  \end{equation}
  The impulse response of this filter is
  \begin{equation}
    \label{eq:37}
    h(t) = \delta(t) + 2aU(t)e^{at}\,.
  \end{equation}
  Hence, the output of the filter is
  \begin{equation}
    \label{eq:38}
    g_{2}(t) =
    \begin{dcases}
      0  & \text{if $|t| > 1$}\,,\\
      -e^{at}(2at+2a+1) & \text{if $-1\leq t < 0$}\,,\\
      e^{at}(2at-2a+1) & \text{if $ 0 < t \leq 1$}
    \end{dcases}
  \end{equation}
  Figure~\ref{fig:math-example1} below depicts the two signals
  $g_{1}(t)$, $g_{2}(t)$, and their common autocorrelation function.
  \begin{figure}[H]
    \centering
    \subfloat[]{
      \includegraphics[width=0.45\textwidth{}]{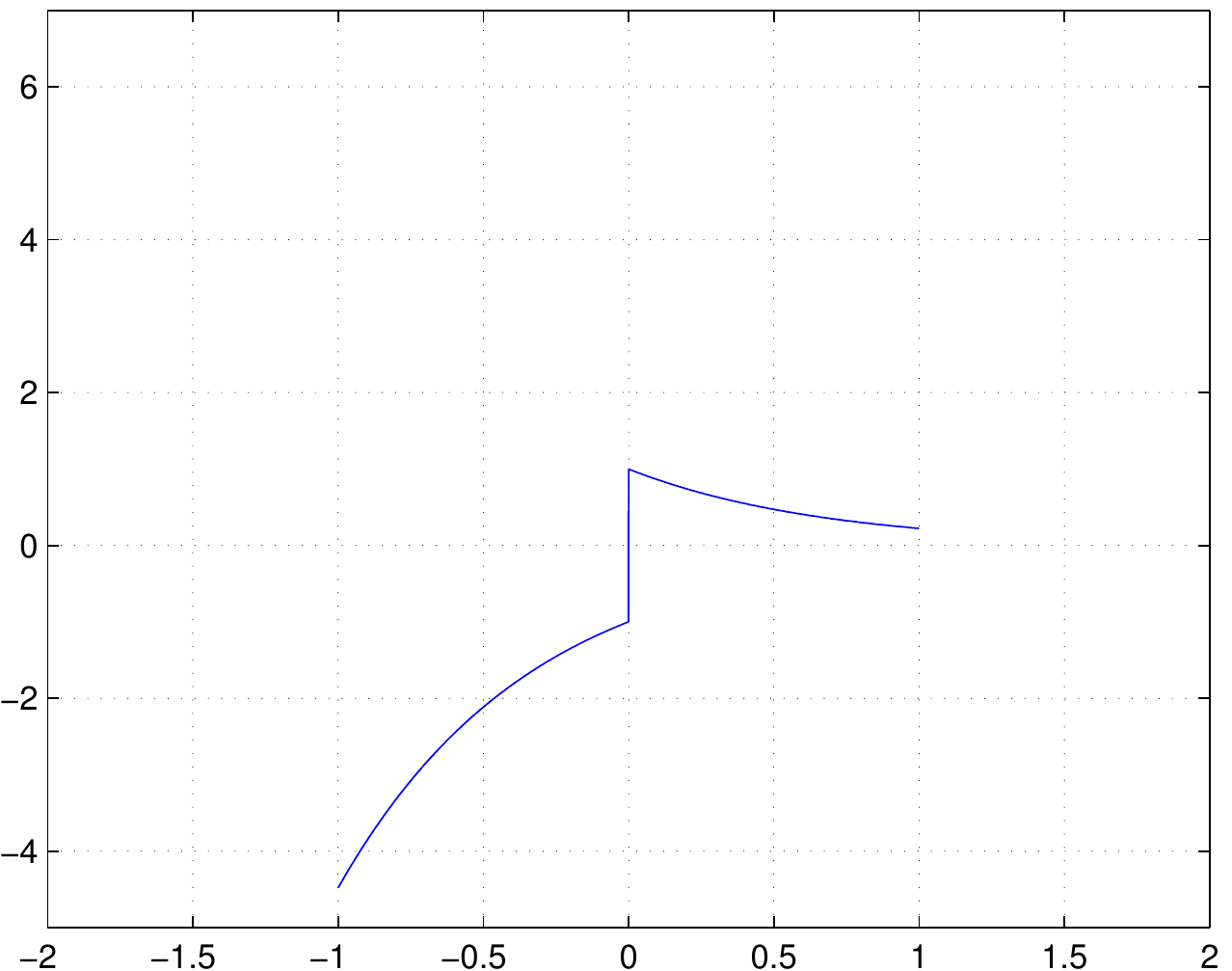}
    }\qquad{}
    \subfloat[]{
      \includegraphics[width=0.45\textwidth{}]{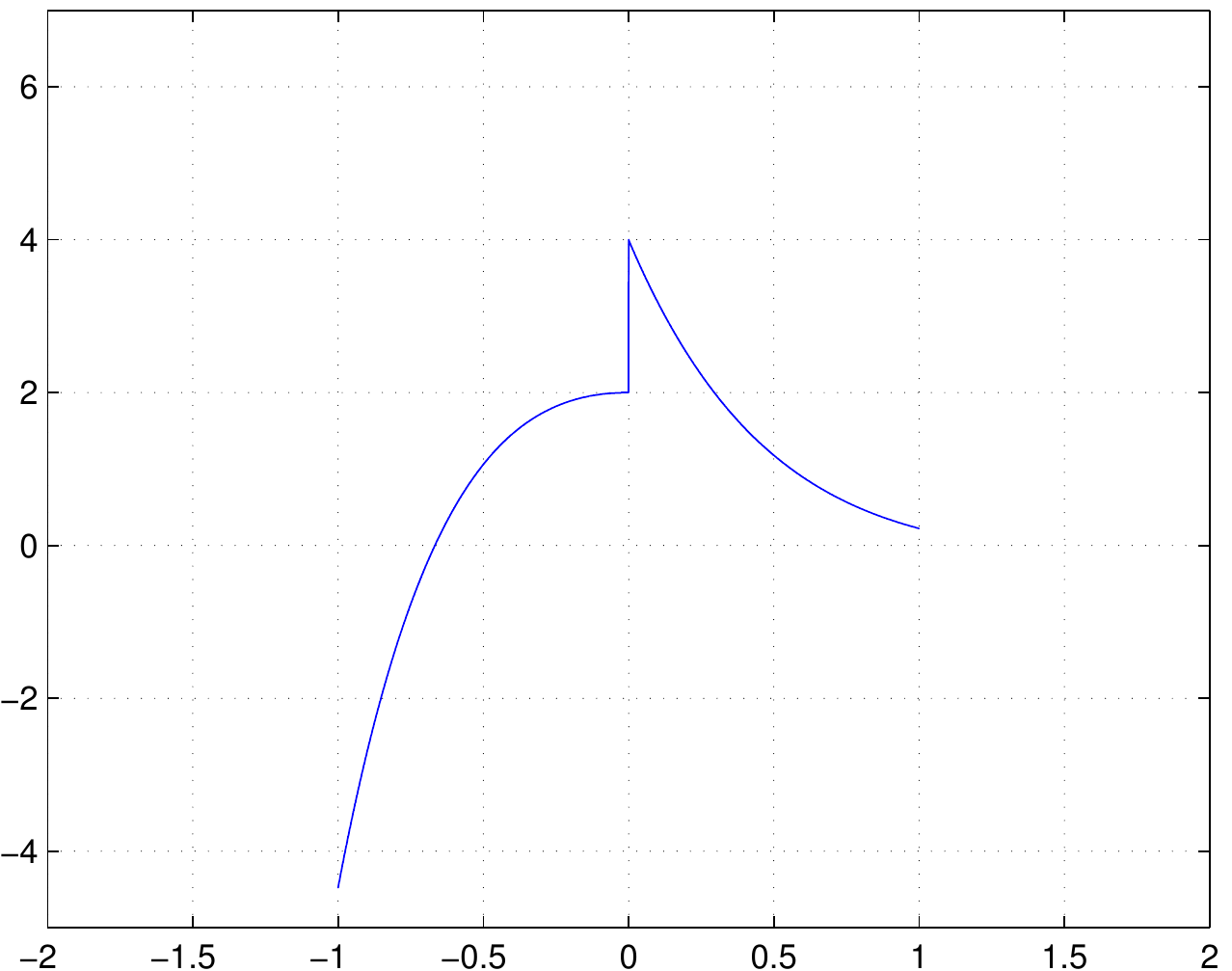}
    }\\
    \subfloat[]{
      \includegraphics[width=0.45\textwidth{}]{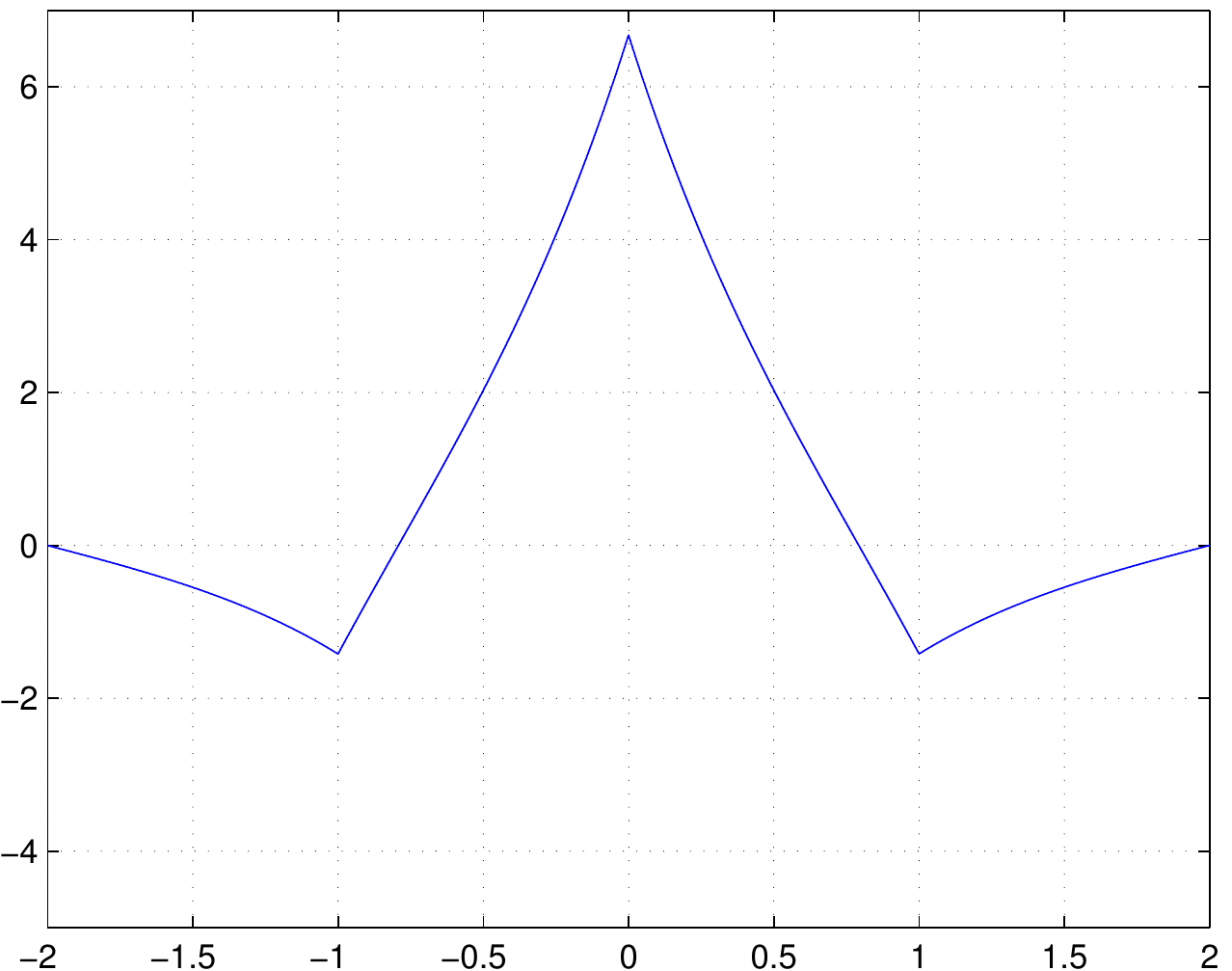}
    }
    \caption[Two pulses with the same autocorrelation]{Example 1: two pulses (a), and (b), and their common
      autocorrelation function (c).}
    \label{fig:math-example1}
  \end{figure}
\item[Example 2]
  Let $g_{1}(t)$ be given by
  \begin{equation}
    \label{eq:35}
    g_{1}(t) =
    \begin{dcases}
      0 & \text{if $|t| > 1$},\\
      e^{at} & \text{if $|t| \leq 1$}.
    \end{dcases}
  \end{equation}
  Then
  \begin{equation}
    \label{eq:39}
    G_{1}(t) = \int_{-1}^{1} e^{(a-s)t}\,\mathrm{d}t = 2\frac{\sinh(s-a)}{s-a}.
  \end{equation}
  The zeros of $G_{1}$ are located at the points $s_{k}$ given by
  \begin{equation}
    \label{eq:40}
    s_{k}=a + jk\pi, \qquad k = \pm 1, \pm 2.
  \end{equation}
  We can construct a real-valued signal having the same support and
  autocorrelation function as $g_{1}(t)$ by letting
  \begin{equation}
    \label{eq:41}
    G_{2}(s)=G_{1}(s)H(s),
  \end{equation}
  where
  \begin{equation}
    \label{eq:114}
    \begin{split}
      H(s)
      & = \frac{s + a - j\pi}{s-a-j\pi}\cdot\frac{s+a-j\pi}{s-a+j\pi}\\
      & = \frac{(s+a)^{2}+\pi^{2}}{(s-a)^{2}+\pi^{2}}.
    \end{split}
  \end{equation}
  The impulse response $h(t)$ of this all-pass filter is easily found
  to be
  \begin{equation}
    \label{eq:43}
    h(t) =  \delta(t) + 4aU(t)e^{at}
    \left(
      \cos(\pi t)+\frac{a}{\pi}\sin(\pi t) \,, 
    \right)
  \end{equation}
  and the convolution between $g_{1}(t)$ and $h(t)$ yields the result
  \begin{equation}
    \label{eq:44}
    g_{2}(t) =
    \begin{dcases}
      0 & \text{if $|t| > 1$}\,, \\
      \frac{4a}{\pi}e^{at}
      \left(
        \frac{a}{\pi}\cos(\pi t) - \sin(\pi t) + \frac{a}{\pi}
      \right) + e^{at} & \text{if $|t| \leq 1$} \,. 
    \end{dcases}
  \end{equation}
  Figure~\ref{fig:math-example2} depicts the signals $g_{1}(t)$,
  $g_{2}(t)$; and their common autocorrelation function.
   \begin{figure}[H]
     \centering
     \subfloat[]{
      \includegraphics[width=0.45\textwidth{}]{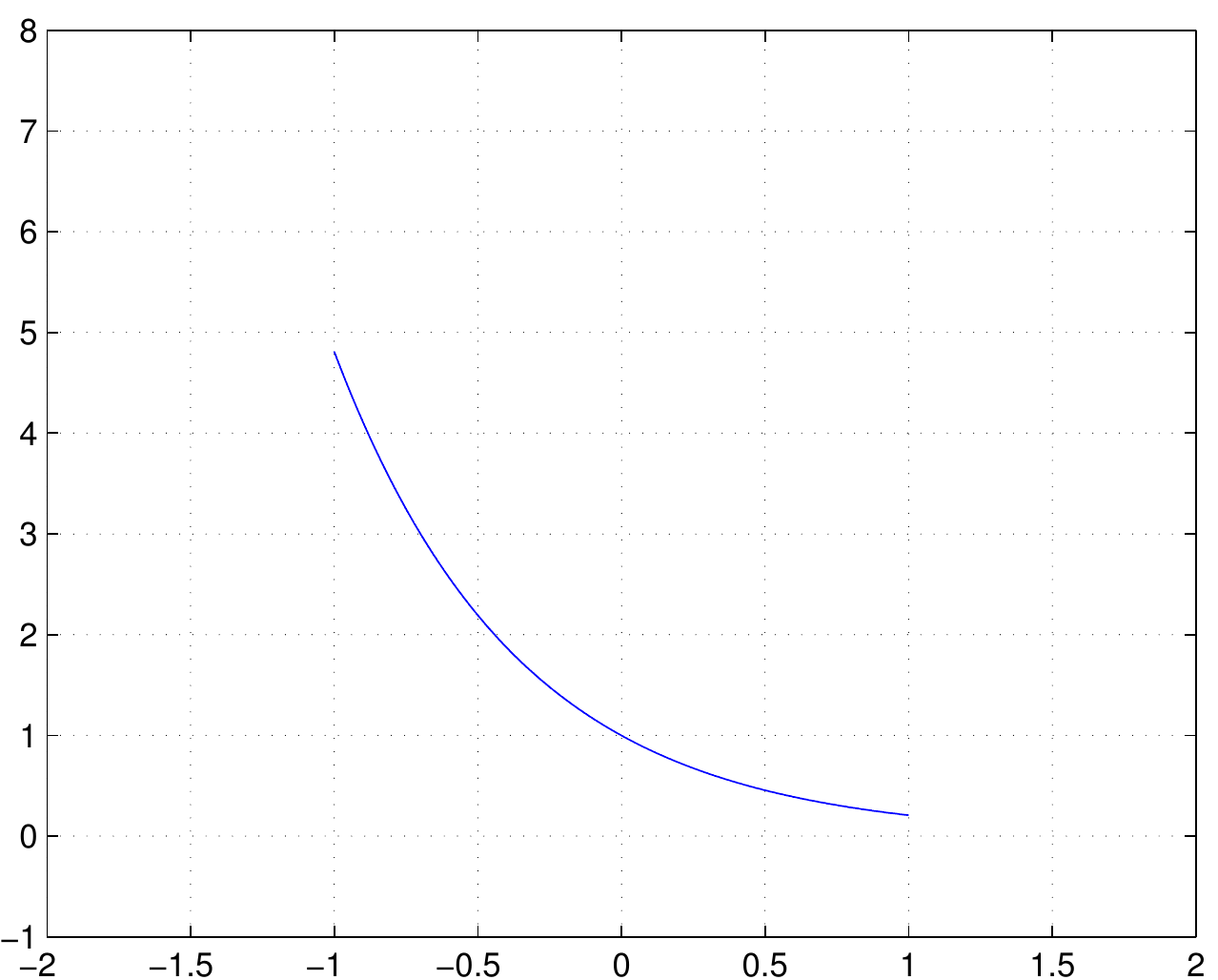}
    }\qquad{}
    \subfloat[]{
      \includegraphics[width=0.45\textwidth{}]{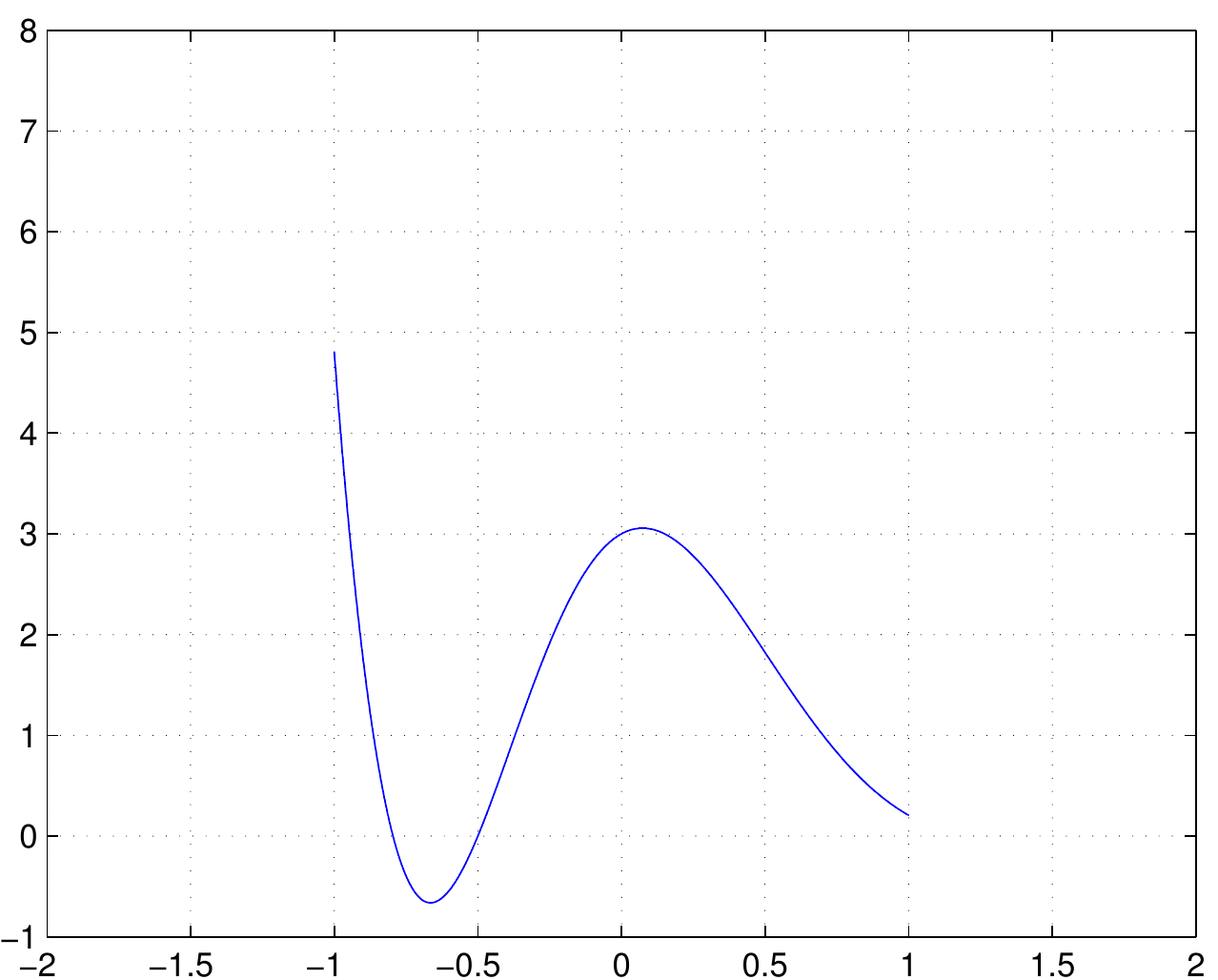}
    }\\
    \subfloat[]{
      \includegraphics[width=0.45\textwidth{}]{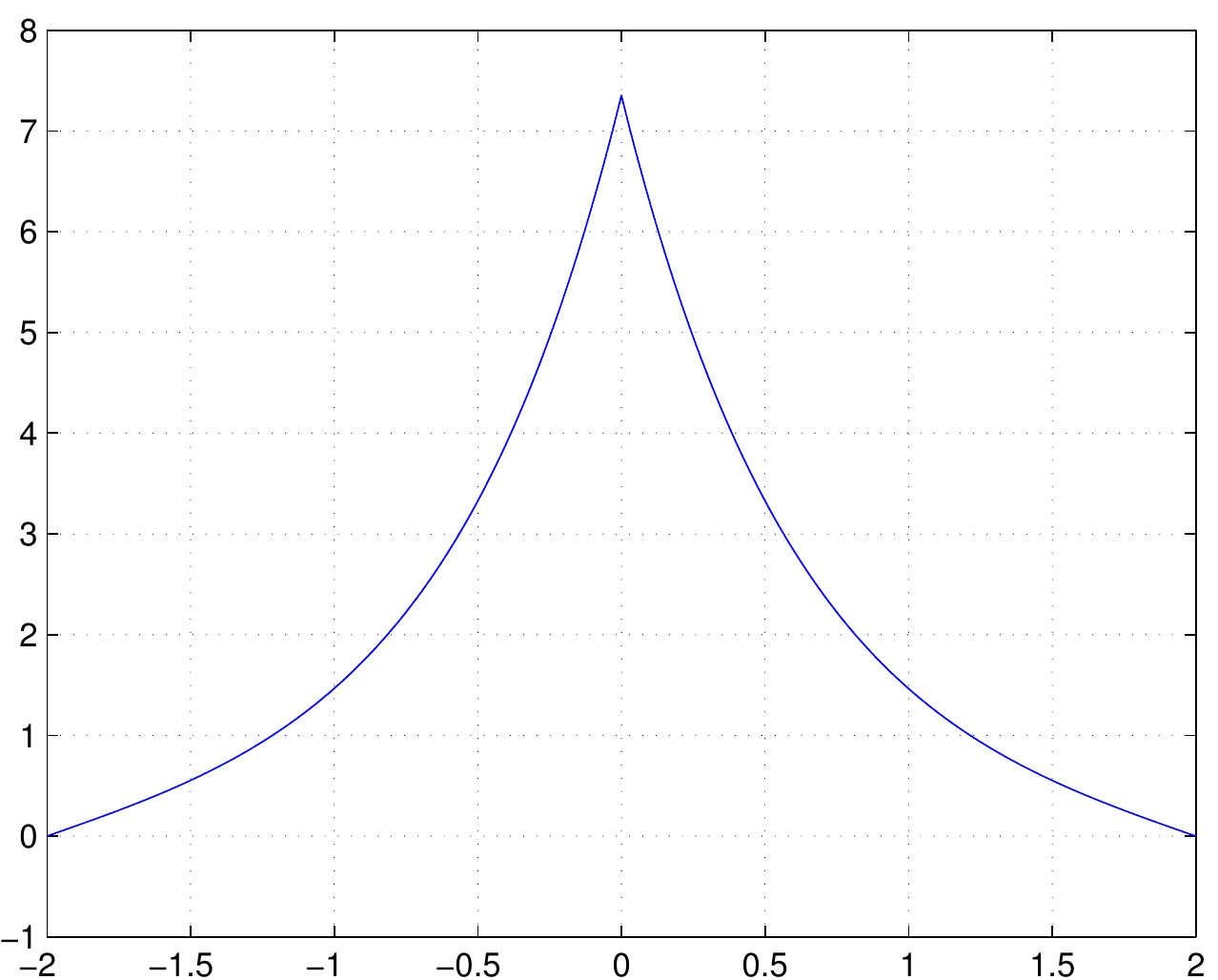}
    }
    \caption[Two pulses with the same autocorrelation]{Example 2:  two pulses (a), and (b), and their common
      autocorrelation function (c).}
    \label{fig:math-example2}
  \end{figure}
\end{description}

\subsection{Discrete-time case}
\label{sec:discrete-case}
The derivations in the previous section describe completely the theory
of continuous-time one-dimensional signals. However, for algorithms
that run on a digital computer, we need a similar theory for
discrete-time signals, that is, for $g(t)$ that is specified on a
finite set of points
\begin{equation}
  \label{eq:1}
  g(t) = \sum_{n}g_{n}\delta(t-t_{n})\,,
\end{equation}
where $t_{n} = n\Delta$. The corresponding Fourier
transform is, of course, the Discrete-Time Fourier Transform (DTFT)
\begin{equation}
  \label{eq:2}
  G(\omega) = \sum_{-\infty}^{\infty}g[n]e^{-j\omega n} \,. 
\end{equation}
By replacing $e^{-jw}$ with $z$ we obtain
\begin{equation}
  \label{eq:42}
  G(z) = \sum_{-\infty}^{\infty}g[n]z^{n}\,, 
\end{equation}
which is, of course, the usual $z$-transform of $g[n]$.\footnote{The notation
  where $G(z)$ is a polynomial in $z$ is common in
  geophysics. Electrical engineers, usually use $z^{-1}$ instead of
  $z$.} As before, $g[n]$ is assumed to have finite support. However,
this time it is more convenient to assume that the support is bounded
by $0$ and $N$, therefore, the summing limits in the
above formula should be replaced as follows
\begin{equation}
  \label{eq:45}
  G(z) = \sum_{n=0}^{N}g[n]z^{n}\,. 
\end{equation}
In this form the discrete $z$-transform looks almost exactly like the
continuous Laplace transform (see Equation~\eqref{eq:12}). Similarly
to Equation~\eqref{eq:18}, it is easy
to show that the $z$-transform of $g[n]$'s autocorrelation (denoted by $R(z)$) reads
\begin{equation}
  \label{eq:46}
  R(z) = \overline{G(1/\bar z)}G(z) =
  \bar G(z^{-1}) G(z)\,. 
\end{equation}
Note the different notation: $\overline{G(z)}$, and $\bar
G(z)$, the former means that $G(z)$ is computed and then conjugated,
the latter means that the coefficients of the polynomial $G(z)$ are
conjugated. That is
\begin{equation}
  \label{eq:118}
  \overline{G(z)} = \overline{
    \left(
      \sum_{n=0}^{N}g[n]z^{n}
    \right)} \,,
  \quad
  \bar G(z) =  \sum_{n=0}^{N}\bar g[n]z^{n} \,. 
\end{equation}
In other words, because
$G(z)$ is a polynomial in $z$, it can always be presented as a product
of simple factors
\begin{equation}
  \label{eq:116}
  G(z) = A\prod_{k=1}^{N}(z-z_{k}) \,, 
\end{equation}
where $A$ is a scalar, and $z_{k}$'s are the roots of
$G(z)$. Similarly, 
\begin{equation}
  \label{eq:119}
  \bar G(z^{-1}) = \bar A \prod_{k=1}^{N}
  \left(
    z^{-1}-\bar z_{k}
  \right) \,.
\end{equation}
This
yields
\begin{equation}
  \label{eq:117}
  R(z) = |A|^{2}\prod_{k=1}^{N}(z-z_{k})
  \left(
    z^{-1} - \bar z_{k}
  \right) \,. 
\end{equation}
Obviously, both $z_{k}$, and $1/\bar z_{k}$ are roots of $R(z)$.
Hence, if we consider a
new signal $g_{2}[n]$ whose $z$-transform is given by
\begin{equation}
  \label{eq:115}
  G_{2}(z)= G(z)\frac{z - 1/\bar {z}_{k}}{z - z_{k}} |z_{k}| \,, 
\end{equation}
where $z_{k}$ is a non-zero root of $G(z)$,
then the auto-correlations of $g_{2}[n]$ and $g[n]$ will be
equal, and so will be their power spectra. To show this, consider the
$z$-transform of $g_{2}[n]$'s autocorrelation (denoted by $R_{2}(z)$):
\begin{equation}
  \label{eq:120}
  \begin{split}
    R_{2}(z)
    & =  \overline{G_{2}(1/\bar z)}G_{2}(z)\\
    & =  \overline{
      \left(
        G(1/\bar z)\frac{1/\bar z - 1/\bar z_{k}}{1/\bar
          z - z_{k}}
      \right)|z_{k}|}
    G(z)\frac{z - 1/\bar {z}_{k}}{z - z_{k}}|z_{k}|\\
    & = R(z)
    \left(
      \frac{1/ z - 1/z_{k}} {1/z - \bar z_{k}}
    \right)
    \left(
      \frac{z - 1/\bar {z}_{k}}{z - z_{k}}
    \right)|z_{k}|^{2}\\
    & = R(z) \,. 
  \end{split}
\end{equation}
By applying the inverse $z$-transform we immediately conclude that
$g_{2}[n]$ and $g[n]$ have the same auto-correlation:
\begin{equation}
  \label{eq:121}
  r_{2}[n]\equiv g_{2}[n]\star g_{2}[n]= g[n]\star g[n]\equiv r[n]
  \,. 
\end{equation}

This analysis leads to a simple conclusion: each non-zero and
non-unitary ($|z_{k}|\not = 1$) root of $G(z)$ gives rise to two
possible solutions whose auto-correlation is the same. Hence, for a
general one-dimensional signal, whose support spreads over $N+1$
samples, there can be up to $2^{N}$ different signals\footnote{Here we
  do not count the trivial solutions that are obtained by multiplying
  by a constant phase factor.}  within the same support and same
auto-correlation. Of course, similarly, to the continuous-time case
that we considered in the previous section, in the case of real (and,
probably, non-negative) signal $g[n]$ the number of possible solutions
may be smaller because not all roots of $G(z)$ can be exchanged
freely---some will result in complex (or, probably, negative) signals.

Another important observation is that if a solution $x[n]$ has been bound,
then \emph{all} other solutions can be obtained from it by
systematically replacing the roots of $X[n]$, as described in
Equation~\eqref{eq:117}.\footnotemark[1]{}

Finally, recall that the sampling is done in the Fourier
domain. Hence, to capture $R(z)$ (the $z$-transform of $g[n]$'s
auto-correlation), we must sample it at $2N+1$ points. This requirement
directly follows from the fact that $R(z)$ is ``almost a polynomial'',
that is, 
\begin{equation}
  \label{eq:122}
  R(z) = \frac{P_{2N}(z)}{z^{N}} \,, 
\end{equation}
where $P_{2N}(z)$ is a polynomial of degree $2N$. From
Equation~\eqref{eq:117}, we have
\begin{equation}
  \label{eq:123}
  P_{2N} =  |A|^{2}\prod_{k=1}^{N}(z-z_{k})
  \left(
    1 - z\bar z_{k}
  \right) \,.
\end{equation}
Due to the uniqueness of the interpolation polynomial (see, for
example~\shortcite{dahlquist08numerical}), it is sufficient to sample
$P_{2N}$ at $2N+1$ points to fully determine its coefficients, and, thus,
to determine $R(z)$. For practical reasons the samples should be
performed at the points that correspond to the DFT frequencies. Hereon
we finish our treatment of the one-dimensional case and switch to
multi-dimensional signals.

\section{Multi-dimensional signals}
\label{sec:multi-dimens-case}
The analysis in the two- or higher-dimensional case
is very similar to what we have done in the one-dimensional case. The
main result is a straightforward generalization of
Equation~\eqref{eq:46}. That is, given a two-dimensional time-discrete
signal $g[n_{1}, n_{2}]$, whose support is given by
$[0,N_{1}]\times[0,N_{2}]$, the $z$-transform of its autocorrelation is
given by
\begin{equation}
  \label{eq:124}
  R(z_{1}, z_{2}) = \overline{G(1/\bar z_{1}, 1/\bar z_{2})}G(z_{1}, z_{2}) =
  \bar G(z_{1}^{-1},z_{2}^{-1} ) G(z_{1}, z_{2})\,,
\end{equation}
where $G(z_{1}, z_{2})$ is the two-dimensional $z$-transform of
$g[n_{1}, n_{2}]$. From this formula we can easily find a way to
generate a signal whose autocorrelation function is equal to that
of $g[n_{1}, n_{2}]$. Let us assume that $G(z_{1}, z_{2})$ can be
represented as a product of two polynomials of lower degree:
\begin{equation}
  \label{eq:125}
  G(z_{1}, z_{2}) = P(z_{1}, z_{2})Q(z_{1}, z_{2}) \,. 
\end{equation}
Assume further that the degree of $Q(z_{1}, z_{2})$ in $z_{1}$ and
$z_{2}$ is $d_{1}$ and $d_{2}$, respectively.
Now we have
\begin{equation}
  \label{eq:126}
  \begin{split}
    R(z_{1}, z_{2})
    & = \bar G(z_{1}^{-1},z_{2}^{-1} ) G(z_{1}, z_{2})\\
    & = \bar P(1/z_{1}, 1/z_{2}) \bar Q(1/z_{1}, 1/z_{2}) P(z_{1},
    z_{2})Q(z_{1}, z_{2})\\
    & = \bar P(1/z_{1}, 1/z_{2})Q(z_{1}, z_{2})
    P(z_{1},z_{2}) \bar Q(1/z_{1}, 1/z_{2})\\
    & = \bar G_{2}(1/z_{1}, 1/z_{2}) G_{2}(z_{1}, z_{2})\,, 
  \end{split}
\end{equation}
where
\begin{equation}
  \label{eq:127}
  G_{2}(z_{1}, z_{2}) = P(z_{1},z_{2}) \bar Q(1/z_{1}, 1/z_{2})z_{1}^{d_{1}}z_{2}^{d_{2}}.
\end{equation}
Now, by applying the inverse $z$-transform to  $G_{2}(z_{1},z_{2})$, we
obtain a new signal $g_{2}[n_{1}, n_{2}]$ whose autocorrelation is
equal to the autocorrelation of $g[n_{1}, n_{2}]$. Note that the
multiplicative factor $z_{1}^{d_{1}}z_{2}^{d_{2}}$ makes $G_{2}(z_{1},
z_{2})$ a proper polynomial in $z_{1}$ and $z_{2}$ whose degrees vary
from $0$ to $N_{1}$, and from $0$ to $N_{2}$, respectively. Hence, it ``shifts''
$g_{2}[n_{1}, n_{2}]$ to the same support region $[0,N_{1}]\times
[0,N_{2}]$ as occupied by $g_[n_{1}, n_{2}]$. Obviously
$g_{2}[n_{1}, n_{2}]\not=g_{1}[n_{1}, n_{2}]$ whenever
\begin{equation}
  \label{eq:128}
  \bar Q(1/z_{1}, 1/z_{2})z_{1}^{d_{1}}z_{2}^{d_{2}}
  \not = Q(z_{1}, z_{2})\,. 
\end{equation}
A similar result was obtained in~\shortcite{hayes82reconstruction},
however, the authors there considered only real-valued signals, and
their approach was slightly different from ours.

So far, the development is essentially the same as we have seen in the
one-dimensional case. The main result is also very similar: each
factor of $G(z_{1},z_{2})$ can give rise to two different solutions
(when inequality \eqref{eq:128} hold). The main difference however,
stems from the fact that multi-variate polynomials are, usually,
\emph{irreducible}, that is they cannot be factorized.
More specifically, the set of reducible
multi-variate polynomials is of measure zero. This fact was proved
in~\shortcite{hayes82reconstruction} for polynomials with real coefficients,
however, its generalization to polynomials with complex coefficient
is straightforward. In practical terms, this means that the
chances of getting a reducible two- or three-dimensional polynomial are
zero. This, in turn, means that the phase retrieval problem in the
multi-dimensional case has, usually, a unique solution (up to the
trivial transformations: lateral shifts, axis reversal, and constant
phase factor).

Despite this ``almost always'' guaranteed uniqueness one must apply a
critical judgment for every specific case, because physical signals may
not be considered ``purely random''. For example, the reducibility in
the $z$-space has a clear physical meaning: if
$G(z_{1},z_{2})=P(z_{1},z_{2})Q(z_{1},z_{2})$ then, $g[n_{1},n_{2}] =
p[n_{1},n_{2}]\otimes q[n_{1},n_{2}]$, where $\otimes$ denotes
convolution. Hence, if the sought signal is a result of a convolution of
some signal $p$ with a non-symmetric kernel $q$, the reconstruction
will not be unique (even without counting the trivial
transformations).


\chapter{Current reconstruction methods}
\label{cha:curr-reconstr-meth}
Current reconstruction methods date back to the pioneering work of
Gerchberg and Saxton (GS)~\shortcite{gerchberg72practical}. Their original
method was later improved significantly by Fienup~\shortcite{fienup82phase},
who introduced the Hybrid Input-Output (HIO) algorithm. The latter
algorithm, to the best of our knowledge, is the prevailing
numerical method today for phase retrieval. The HIO method will be
presented in Section~\ref{sec:fienup-algor-phase}. However, to better
understand it, we start with its progenitor---GS, which is a classical
example of optimization techniques known today as ``alternating
projections''.  Details of the methods are give in
Section~\ref{sec:gerchb-saxt-meth} below. Before we proceed further,
we need to define two basic terms:
\begin{defn}
  \label{def:current-1}
  The distance between a point $x$ and a closed set $\mathcal{S}$ is
  defined as
  \begin{equation}
    \label{eq:current-1}
    d(x,\mathcal{S}) = \min_{y\in\mathcal{S}}\|x - y\|\,. 
  \end{equation}
\end{defn}

\begin{defn}
  \label{def:current-2}
  For a given point $x$ and a closed set $\mathcal{S}$ we say
  $y\in{\mathcal{S}}$ is a projection of $x$ onto $\mathcal{S}$ if:
  \begin{equation}
    \label{eq:current-2}
    \|x - y\| = d(x, \mathcal{S})\,. 
  \end{equation}
  That is, $y$ is a solution of the following minimization problem:
  \begin{equation}
  \label{eq:current-3}
  \begin{split}
    \min_{y} &\quad \|x-y\|\,,  \\
    \mathrm{subject\ to} &\quad y\in\mathcal{S} \,. 
  \end{split}
\end{equation}
\end{defn}
It is important to note that a projection always exists, and
furthermore it is unique if $\mathcal{S}$ is convex. Otherwise, there
may be several solutions to Equation~\eqref{eq:current-3}. As we shall
see later, the constraints that appear in the phase retrieval problem
are not convex. Nevertheless, the projection is still well defined
(unique) in
all cases, except when the current estimate has zeros in the Fourier
domain.  The non-convexity of the constraints along with existence of
computationally cheap projections is the reason why current
reconstruction methods are based on projections and why continuous
optimization techniques, like gradient descent or Newton-type methods,
are not capable of successful phase retrieval. More details on that
will follow in Chapter~\ref{cha:appr-four-phase-explanation}.
Meanwhile we proceed to the current reconstruction methods.

\section{Gerchberg-Saxton method}
\label{sec:gerchb-saxt-meth}
Probably the first successful reconstruction method for phase
retrieval was suggested by Gerchberg and Saxton for a slightly
different problem---reconstruction of signals from two intensity
measurements \shortcite{gerchberg72practical}.  The authors considered a
situation that arises in electronic microscopes, where the intensity of
the sought signal\footnote{The signal is complex-valued, of course,
  otherwise the reconstruction would be trivial.}  can be measured
along with its diffraction pattern (Fourier domain intensity). For
this scenario, the authors suggested a reconstruction method that is
based on projections (hereinafter the method will be referred
to as GS method or GS algorithm). The algorithm is iterative, and each
iteration consists of the following four steps:
\begin{description}
\item[Step 1:] Fourier transform the current estimate of the signal.
\item[Step 2:] Replace the magnitude of the resulting computed Fourier
  transform with the measured Fourier magnitude to form a new estimate of
  the Fourier transform.
\item[Step 3:] Inverse Fourier transform the estimate of the Fourier transform.
\item[Step 4:] Replace the magnitude of the resulting computed signal with the
  measured signal modulus to form a new estimate of the signal.
\end{description}
It is a trivial exercise in basic calculus to show that the Steps
2 and 4 are, indeed, projections.

Of course, this algorithm can be (and, in fact, has been)
generalized to a large number of situations, where  the
constraints in both domains are such that lead to a well defined and,
preferably, computationally efficient projection. For example,
this happens in the phase retrieval problem, where the object domain
constraints are: limited support, that is, some parts of the signal
are known to be zero; and, often, non-negativity---the signal in the
support area is known to be real non-negative. A generalized version
of GS is depicted in Figure~\ref{fig:current-gs-method} below.

\begin{figure}[H]
  \centering
  \includegraphics{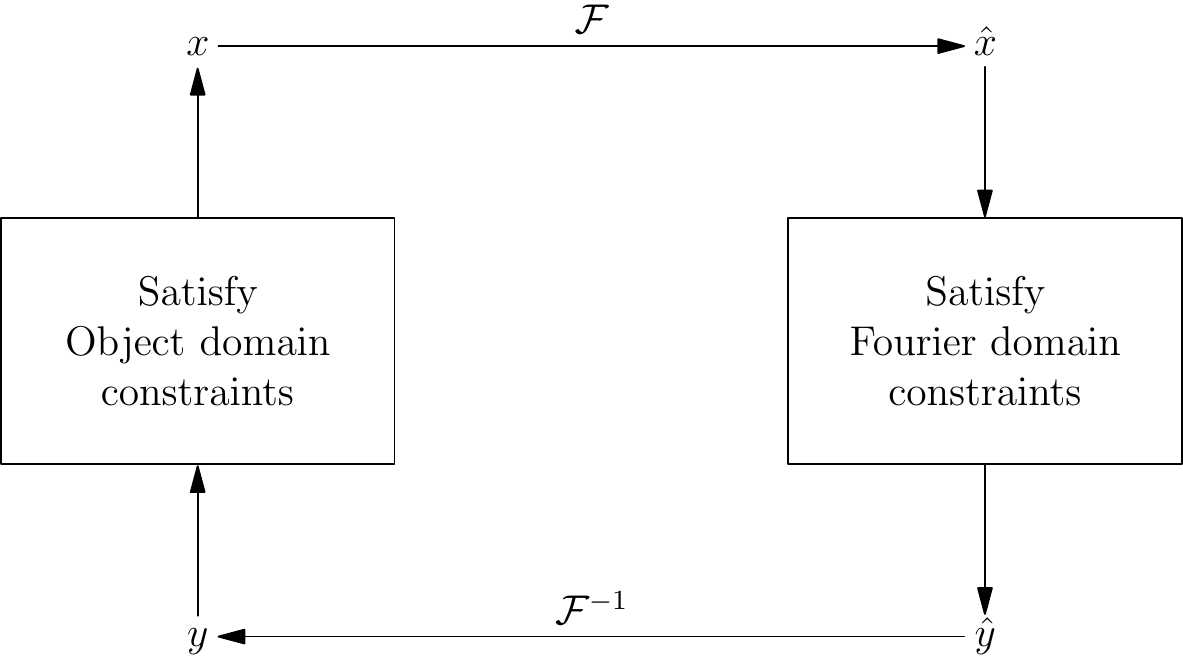}
  \caption{Schematic description of the Gerchberg-Saxton algorithm.}
\label{fig:current-gs-method}
\end{figure}
It is important to stress that the imposition of the constraints (both
in the Fourier and object domain) is performed via a
projection. Projections, unlike general transformations, guarantee
convergence of the algorithm as we prove below\footnote{A similar
  theorem was proved by Fienup for a specific set of constraints
  \shortcite{fienup82phase}. Our proof is much more general.}. Before
proceeding to the proof, note that convergence here means the lack of
progress of the algorithm.  It can happen in two different situations:
first, arriving at some stationary point (a solution is always a
stationary point, but not vice-versa); second, the algorithm can enter
into an endless loop, jumping from point to point without decreasing
the error.
\begin{thm}
  \label{thm:alternating-projections}
  Let $\mathcal{S}_{\mathcal{O}}$ and $\mathcal{S_{\mathcal{F}}}$ be
  the sets of feasible signals as defined by the constraints in the
  object and Fourier domains, respectively. Furthermore, assume
  that these sets are equipped with the corresponding  projection
  operators $P_{\mathcal{O}}$, and $P_{\mathcal{F}}$. Let
  $\left\{x^{k}\right\}_{k=0}^{\infty}$ and
  $\left\{y^{k}\right\}_{k=0}^{\infty}$ be the two sequences generated
  by the generalized GS method using the two projections:
  \begin{equation}
    \label{eq:current-4}
    y^{k} = P_{\mathcal{F}}[x^{k}], \quad
    x^{k+1}=P_{\mathcal{O}}[y^{k}]\,. 
  \end{equation}
  Then, the sequences
  $\left\{E_{\mathcal{F}}^{k}\right\}_{k=0}^{\infty}$, and
  $\left\{E_{\mathcal{O}}^{k}\right\}_{k=0}^{\infty}$, defined as
  \begin{equation}
    \label{eq:current-5}
    E_{\mathcal{F}}^{k} = d(x^{k}, S_\mathcal{F}),\quad
    E_{\mathcal{O}}^{k} = d(y^{k}, S_\mathcal{O})\,,
  \end{equation}
  are monotonically decreasing, that is
  \begin{equation}
    \label{eq:current-6}
    E_{\mathcal{F}}^{k+1} \leq E_{\mathcal{F}}^{k}
  \end{equation}
  \begin{equation}
    \label{eq:current-7}
    E_{\mathcal{O}}^{k+1} \leq E_{\mathcal{O}}^{k}
  \end{equation}
\end{thm}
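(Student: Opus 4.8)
The plan is to build a single interleaved chain of inequalities linking the four quantities $E_{\mathcal{F}}^{k}$, $E_{\mathcal{O}}^{k}$, $E_{\mathcal{F}}^{k+1}$, $E_{\mathcal{O}}^{k+1}$, from which both monotonicity claims drop out at once. Two elementary facts drive everything. First, a projection realizes the set-distance exactly: since $y^{k}=P_{\mathcal{F}}[x^{k}]$ and $x^{k+1}=P_{\mathcal{O}}[y^{k}]$, Definition~\ref{def:current-2} gives
\begin{equation}
  \|x^{k}-y^{k}\| = d(x^{k},\mathcal{S}_{\mathcal{F}}) = E_{\mathcal{F}}^{k}, \qquad
  \|y^{k}-x^{k+1}\| = d(y^{k},\mathcal{S}_{\mathcal{O}}) = E_{\mathcal{O}}^{k}.
\end{equation}
Second, the set-distance of Definition~\ref{def:current-1} is a minimum over the set, so for any point $p$ and any member $q\in\mathcal{S}$ one has $d(p,\mathcal{S})\leq\|p-q\|$; the projected iterates supply convenient competitors $q$, because $y^{k}\in\mathcal{S}_{\mathcal{F}}$ and $x^{k+1}\in\mathcal{S}_{\mathcal{O}}$ by construction.

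First I would combine these to obtain the two crossing inequalities. Using $y^{k}\in\mathcal{S}_{\mathcal{F}}$ as a competitor for $x^{k+1}$,
\begin{equation}
  E_{\mathcal{F}}^{k+1} = d(x^{k+1},\mathcal{S}_{\mathcal{F}}) \leq \|x^{k+1}-y^{k}\| = E_{\mathcal{O}}^{k},
\end{equation}
and using $x^{k}\in\mathcal{S}_{\mathcal{O}}$ as a competitor for $y^{k}$,
\begin{equation}
  E_{\mathcal{O}}^{k} = d(y^{k},\mathcal{S}_{\mathcal{O}}) \leq \|y^{k}-x^{k}\| = E_{\mathcal{F}}^{k}.
\end{equation}
The same argument applied one step later gives $E_{\mathcal{O}}^{k+1}\leq E_{\mathcal{F}}^{k+1}$. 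Concatenating yields the chain
\begin{equation}
  E_{\mathcal{O}}^{k+1} \leq E_{\mathcal{F}}^{k+1} \leq E_{\mathcal{O}}^{k} \leq E_{\mathcal{F}}^{k},
\end{equation}
whose outer links are exactly \eqref{eq:current-6} and \eqref{eq:current-7}. Thus both error sequences descend in lockstep, interlacing with one another.

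The step needing genuine care is not any single inequality but the bookkeeping of which iterate lies in which set, since that membership is what licenses each use of $d(p,\mathcal{S})\leq\|p-q\|$. In particular the link $E_{\mathcal{O}}^{k}\leq E_{\mathcal{F}}^{k}$ requires $x^{k}\in\mathcal{S}_{\mathcal{O}}$, which holds for $k\geq 1$ because $x^{k}=P_{\mathcal{O}}[y^{k-1}]$, but need not hold for an arbitrary initial guess $x^{0}$; I would therefore either assume $x^{0}$ is already object-feasible or simply start the monotonicity statement at $k\geq 1$. One should also record the caveat flagged just before the theorem: the operators $P_{\mathcal{O}}$ and $P_{\mathcal{F}}$ are single-valued only away from the degenerate case of zeros in the Fourier estimate. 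The argument, however, needs merely that $y^{k}$ and $x^{k+1}$ are \emph{some} genuine minimizers of the problem~\eqref{eq:current-3}, so selecting any minimizer when the projection is set-valued leaves every inequality intact.
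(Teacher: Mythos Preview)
Your proof is correct and follows essentially the same approach as the paper: both arguments use the two facts that (i) a projection realizes the set-distance exactly, and (ii) the set-distance to $\mathcal{S}$ is bounded above by the distance to any particular member of $\mathcal{S}$, with the previous iterate serving as the competitor. The paper derives the chain $d(x^{k+1},\mathcal{S}_{\mathcal{F}})\leq\|x^{k+1}-y^{k}\|\leq d(x^{k},\mathcal{S}_{\mathcal{F}})$ and then reads off the $E_{\mathcal{O}}$ monotonicity from the middle terms; your version is the same argument packaged as the single interleaved chain $E_{\mathcal{O}}^{k+1}\leq E_{\mathcal{F}}^{k+1}\leq E_{\mathcal{O}}^{k}\leq E_{\mathcal{F}}^{k}$. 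Your remarks about the need for $x^{0}\in\mathcal{S}_{\mathcal{O}}$ and about non-unique projections are well taken---the paper's proof silently uses $x^{k}\in\mathcal{S}_{\mathcal{O}}$ as well, so the same caveat applies there.
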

\begin{proof}[Proof]
  Because $y^{k}$ is a projection of $x^{k}$ onto
  $\mathcal{S}_{\mathcal{F}}$ we have, by
  Definition~\ref{def:current-2},
  $d(x^{k}, \mathcal{S}_{\mathcal{F}})=\|x^{k}-y^{k}\|$. Furthermore,
  \begin{equation}
    \label{eq:current-8}
    d(x^{k}, \mathcal{S}_{\mathcal{F}}) = \|x^{k} - y^{k}\| \geq
    \|x^{k+1} - y^{k}||\,.  
  \end{equation}
  Note that the inequality in the above equation follows from the fact
  that both $x^{k}$, and $x^{k+1}$ belong to $\mathcal{S}_{\mathcal{O}}$
  and $x^{k+1}$ is a projection of $y^{k}$ onto
  $\mathcal{S}_\mathcal{O}$. Hence $\|y^{k}-x^{k+1}\| \leq
  \|y^{k}-x^{k}\|$. Similarly,
  \begin{equation}
    \label{eq:current-9}
    d(x^{k+1}, \mathcal{S}_{\mathcal{F}}) = \|x^{k+1} - y^{k+1}\| \leq
    \|x^{k+1} - y^{k}\|\,.  
  \end{equation}
  Again, the inequality follows from the fact that both $y^{k}$ and
  $y^{k+1}$ belong to $\mathcal{S}_{\mathcal{F}}$ and $y^{k+1}$ is a
  projection of $x^{k+1}$ onto $\mathcal{S}_{\mathcal{F}}$. By combining
  Equations~\eqref{eq:current-8} and \eqref{eq:current-9} we obtain
  \begin{equation}
    \label{eq:current-10}
    d(x^{k+1}, \mathcal{S}_{\mathcal{F}}) \leq  \|x^{k+1} - y^{k}\|
    \leq  d(x^{k}, \mathcal{S}_{\mathcal{F}})\,. 
  \end{equation}
  Hence
  \begin{equation}
    \label{eq:current-11}
     d(x^{k+1}, \mathcal{S}_{\mathcal{F}}) \leq  d(x^{k},
     \mathcal{S}_{\mathcal{F}})\,.
   \end{equation}
   The proof for the second claim follows immediately if we write down
   Equation~\eqref{eq:current-10} for the iterations $k$ and $k+1$:
   \begin{equation}
     \label{eq:current-12}
     d(x^{k+2}, \mathcal{S}_{\mathcal{F}})  \leq \|x^{k+2} - y^{k+1}\|
     \leq  d(x^{k+1}, \mathcal{S}_{\mathcal{F}}) \leq  \|x^{k+1} - y^{k}\|
     \leq  d(x^{k}, \mathcal{S}_{\mathcal{F}}) \,, 
   \end{equation}
   and note that
   \begin{equation}
     \label{eq:current-13}
     \|y^{k} - x^{k+1}\| = d(y^{k}, \mathcal{S}_{\mathcal{O}}),\quad
     \|y^{k+1} - x^{k+2}\| = d(y^{k+1}, \mathcal{S}_{\mathcal{O}})\,.
   \end{equation}
   Thus, we obtain
   \begin{equation}
     \label{eq:current-14}
     d(y^{k+1}, \mathcal{S}_{\mathcal{O}}) \leq d(y^{k},
     \mathcal{S}_{\mathcal{O}})\,.
   \end{equation}
\end{proof}
Note that all $x^{k}$ satisfy the object domain constraints and their
discrepancy with the Fourier domain constraints is ever
decreasing\footnote{Strictly speaking, it is a non-increasing
  sequence, however, in practice most algorithms terminate after the
  decrease in the current step is below some threshold. Hence, the
  sequence is strictly decreasing during the algorithm execution.}
with $k$. Similarly, all $y^{k}$ satisfy the Fourier domain
constraints and their discrepancy with the object domain constraints
is also ever decreasing. Hence,
Theorem~\ref{thm:alternating-projections} may suggest that the GS
method converges to a solution. This is true if the constraints are
convex. In our case, however, the Fourier domain constraints are
non-convex. Thus, the convergence to a solution is not guaranteed: the
decrease in the functions $d(x^{k}, \mathcal{S}_{\mathcal{F}})$ and
$d(y^{k}, \mathcal{S}_{\mathcal{O}})$ can be arbitrary small and even
zero if the algorithm gets stuck as some stationary point (usually, a
local minimum). Moreover, extensive experiments confirm that GS is not
suitable for the standard phase retrieval from a single intensity
measurement and support information (even for non-negative signals):
the algorithm typically stagnates at some point that is nowhere near
a solution.  In the next section we will review the Hybrid
Input-Output algorithm that was invented by Fienup to overcome the
stagnation problem of GS.

\section{Fienup's algorithms for phase retrieval }
\label{sec:fienup-algor-phase}
In 1982, Fineup suggested a family of iterative algorithms that are
based on a different interpretation of the GS method
\shortcite{fienup82phase}. These algorithms keep intact the right-hand side
(Fourier domain) of the diagram depicted in
Figure~\ref{fig:current-gs-method}, that is, the first three
operations of each iteration remains the same:
\begin{description}
\item[Step 1:] Fourier transforming
  $x^{k}\stackrel{\mathcal{F}}{\rightarrow}\hat{x}^{k}$.
\item[Step 2:] Satisfying the Fourier domain constraints
  $\hat{x}^{k} \rightarrow \hat{y}^{k}$.
\item[Step 3:] Inverse Fourier transforming the result $\hat{y}^{k} \stackrel{\mathcal{F}^{-1}}{\rightarrow} y^{k}$.
\end{description}
However, the further treatment (in the object domain) is
different. Fienup's insight was to group together the three steps
above into a non-linear system having an input $x$ and an output $y$
as depicted in Figure~\ref{fig:current-fienup-alg}. The useful
property of this system is that the output $y$ is always a signal
having a Fourier transform that satisfies the Fourier domain
constraints. Therefore, if the output also satisfies the object domain
constraints, it is a solution of the problem. Unlike in the GS
algorithm, the input $x$ need no longer be thought of as the
current estimate of the signal. Instead, it can be considered as a
driving function for the next output $y$. Hence, the input $x$ need
not necessarily satisfy the object domain constraints.
\begin{figure}[H]
  \centering
  \includegraphics{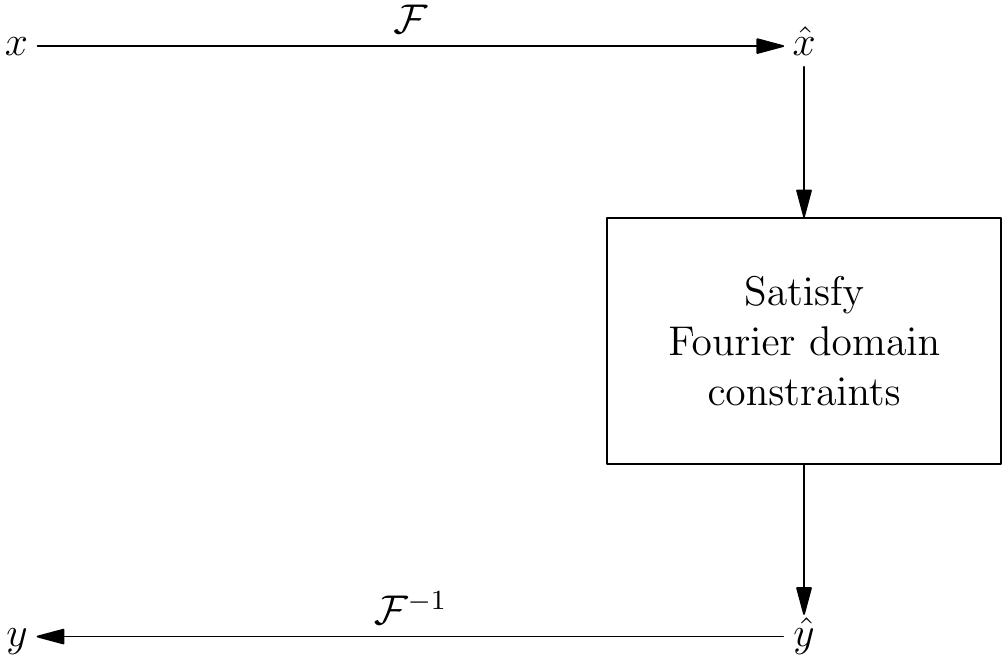}
  \caption[Projection in the Fourier domain]{Fienup's interpretation of the Fourier domain constraints
    imposing procedure as a non-linear system.}
  \label{fig:current-fienup-alg}
\end{figure}

Based on this novel interpretation, Fienup suggested three algorithms
for the phase retrieval problem from a single intensity and a priori
knowledge that the signal $x$ is non-negative
everywhere.\footnote{Originally, the algorithms were developed for
  real non-negative signals such as those that arise in
  astronomy. Later, the method was applied to complex-valued signals
  where it was discovered that precise support information is crucial
  for successful reconstruction \shortcite{fienup87reconstruction}.}

\begin{description}
\item [{input-output}] This algorithm is based on a claim (see
  \shortcite{fienup80iterative}) that a small change of the input results
  in a change of the output that is a constant $\alpha$ times the
  change in the input. Hence, if a change $\Delta x$ is desired in the
  output, a logical choice for the change of the input to achieve that
  change in the output would be $\beta\Delta x$, where $\beta$ is a
  constant, ideally equal to $\alpha^{-1}$. For the problem of phase
  retrieval from a single intensity measurement the desired change of
  the output is
  \begin{equation}
    \label{eq:57}
    \Delta x^{k}(t)=\begin{cases}
      0, & t\not\in\nu \,, \\
      -y^{k}(t), & t\in\nu \,, 
  \end{cases}
  \end{equation}
  where $\nu$ is the set of points at which $y(t)$ violates the object
  domain constraints. That is, at points where the constraints are
  satisfied, one does not require a change of the output. On the other
  hand; at points where the constraints are violated, the desired change of the
  output, required to satisfy the support and non-negativity
  constraints, is one that drives it towards the value of zero, and
  therefore, the desired change is the negated output at those
  points. Hence, the logical choice for the next input is
  \begin{align}
    x^{k+1}(t) & =x^{k}(t)+\beta\Delta x^{k}(t)\nonumber \\
    & =\begin{cases}
      x^{k}(t), & t\not\in\nu,\\
      x^{k}(t)-\beta y^{k}(t), & t\in\nu.
    \end{cases}\label{eq:ioalg}
  \end{align}

\item [{output-output}] This algorithm is based on the following
  observation with respect to the non-linear system depicted in
  Figure~\ref{fig:current-fienup-alg}.  If the output $y$ is used as an
  input, the resulting output will be $y$ itself, because it already
  satisfies the Fourier domain constraints.  Therefore, irrespective
  of what input actually resulted in the output $y$, the output $y$
  can be considered to have been resulted from itself as an input. From
  this point of view another logical choice for the next input is
  \begin{align}
    x^{k+1}(t) & =y^{k}(t)+\beta\Delta x^{k}(t)\nonumber \\
    & =\begin{cases}
      y^{k}(t), & t\not\in\nu,\\
      y^{k}(t)-\beta y^{k}(t), & t\in\nu.
    \end{cases}\label{eq:ooalg}
  \end{align}
  Note that if $\beta=1$ the output-output algorithm becomes the GS
  algorithm. Because best results are, in general, obtained with
  $\beta\not=1$ the GS algorithm can be viewed as a
  sub-optimal version of the input-output algorithm.
\item [{hybrid input-output}] Finally, we consider the third algorithm
  suggested by Fienup. This time the next input is formed by a
  combination of the upper line of Equation~\eqref{eq:ooalg} with the
  lower line of Equation~\eqref{eq:ioalg}:
  \begin{align}
    x^{k+1}(t) & =y^{k}(t)+\beta\Delta x^{k}(t)\nonumber \\
    & =\begin{cases}
      y^{k}(t), & t\not\in\nu,\\
      x^{k}(t)-\beta y^{k}(t), & t\in\nu.
    \end{cases}\label{eq:hioalg}
  \end{align}
\end{description}
The last algorithm, known as the Hybrid Input-Output (HIO) algorithm,
is currently the most widely used algorithm in industry due to
its simplicity and (usually) best convergence rate amongst the above
three algorithms.

In contrast to the GS method, there is no proof of convergence for the
HIO method. However, a large body of experiments indicates that the
algorithm is often successful in phase retrieval of real-valued
non-negative signals. Still, stagnation is possible in certain
situations
\shortcite{fienup86phase-retrieval,wackerman89avoiding}. Besides, it turns
out that phase retrieval of complex-valued objects whose support is
not known precisely poses a much more severe problem. In this case, HIO
is not capable of successful reconstruction
\shortcite{fienup87reconstruction}.

Several examples of reconstruction by HIO will be presented in the
following sections. We do not present results obtained by the GS
method because this method is not suitable for the phase retrieval
problem---it stagnates very fast and its results are not even close to
the sought signal.


\chapter{Fundamental developments in optimization
  methods\footnotemark{}}
\label{cha:found-optim-meth}

\footnotetext{The material presented in this section is currently in
  preparation for submission to a journal.}

Our research actually started with an idea to develop an efficient
phase retrieval method based on continuous optimization
technique. Unlike the current methods, described in the previous
chapter, the continuous optimization approach can potentially provide significantly
faster convergence rates and, probably even more important, can allow
easy introduction of additional knowledge/assumptions into the
computational scheme. The latter is especially difficult in the
projection-based framework that is used in the current methods.

However, continuous optimization techniques, such as gradient descent
or Newton-type methods, cannot be applied directly
because the objective function $f(z)$ is a real-valued function of
complex variables: $f:\mathbb{C}^{n} \mapsto \mathbb{R}$. Therefore, its
derivatives (of any order) with respect to $z$ are
not defined, as we show below. This can be circumvented by treating the
real and the imaginary parts of $z$ separately, that is, by looking at
the function $g:\mathbb R^{n}\times\mathbb R^{n} \mapsto \mathbb R$,
where $z = x+jy$, and $f(z) = f(z(x,y)) = g(x,y)$. This approach is
viable and widely applied, though it may be more convenient to work
with the original variable $z$ (and its complex conjugate $\bar{z}$)
rather than with its real and imaginary parts: $x$, and $y$. Moreover,
because most modern computer languages provide native support for
complex variables, this approach may be more efficient as well. Hence,
in the subsequent sections we shall develop an alternative definition
of the gradient and Hessian. Before that, let us demonstrate that
$f(z)$ is not differentiable with respect to $z$, except in the trivial
case where $f(z)$ is constant.
 
\begin{lem}
  \label{lem:nondiffer}
  Let $f(z)$ be a real function of complex argument $z$, then
  $f(z)$ cannot be holomorphic unless it is constant.
\end{lem}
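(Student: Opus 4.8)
The plan is to invoke the Cauchy--Riemann equations. Write $f(z) = u(x,y) + i\,v(x,y)$ with $z = x + jy$, and observe that since $f$ is real-valued by hypothesis, the imaginary part vanishes identically: $v(x,y) \equiv 0$. If $f$ were holomorphic, it would have to satisfy the Cauchy--Riemann equations, namely $u_x = v_y$ and $u_y = -v_x$.

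First I would substitute $v \equiv 0$ into both Cauchy--Riemann equations. The right-hand sides $v_y$ and $v_x$ are then identically zero, which forces $u_x = 0$ and $u_y = 0$ throughout the domain. Thus all first-order partial derivatives of $u$ vanish.

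Next I would conclude that $u$ is constant. On a connected domain, a function whose gradient vanishes everywhere is constant; since $f = u$ (as $v \equiv 0$), the function $f$ itself is constant. This completes the argument: holomorphicity together with real-valuedness leaves no freedom beyond a constant.

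The argument is essentially a one-line application of Cauchy--Riemann, so there is no serious obstacle in the calculation itself. The only point requiring a word of care is the passage from vanishing partials to constancy, which strictly speaking needs the domain to be connected (or at least one argues componentwise on connected pieces); for a function defined on all of $\mathbb{C}^n$ this is immediate. I would state the standard complex-analytic fact as known rather than reprove it.
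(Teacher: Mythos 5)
Your proof is correct and follows essentially the same route as the paper's own proof: write $f = u + jv$, note $v \equiv 0$ by real-valuedness, apply the Cauchy--Riemann equations to get $u_x = u_y = 0$, and conclude constancy. Your added remark about connectedness of the domain is a slight refinement the paper glosses over, but the argument is otherwise identical.
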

\begin{proof}[Proof]
  Let us denote the complex argument $z=x+jy$ and $f(z) =
  u(z)+jv(z) = u(x,y) + jv(x,y)$, where $x$, $y$, $u$, and $v$ are
  real. If $f(z)$ is holomorphic it must satisfy the
  Cauchy-Riemann equations
  \begin{equation}
    \label{eq:math-1111}
    \frac{\partial u}{\partial x} = \frac{\partial v}{\partial y}\,,\qquad
    \frac{\partial u}{\partial y} = -\frac{\partial v}{\partial x}\,. 
  \end{equation}
  However, $f(z)$ is real, hence $v(x,y) = 0$ which, in
  turn means that
  \begin{equation}
    \label{eq:math-2222}
    \frac{\partial u}{\partial x} = \frac{\partial u}{\partial y} = 0
    \,. 
  \end{equation}
  Thus $u(x,y)$ is a constant and so is $f(z)$.
\end{proof}
Before proceeding to the next section it is pertinent to define the
following partial derivatives: $\partial f/\partial z$, and  $\partial
f/\partial \bar{z}$. Let $f(z) = h(z, \bar{z}) = u(x,y) + jv(x,y)$. If
$f(z)$ is real, that is, if $v(x,y) = 0$, then
\begin{equation}
  \label{eq:61}
  \begin{split}
    \frac{\partial f}{\partial z} & = \frac{1}{2}
    \left(
      \frac{\partial u}{\partial x} - j\frac{\partial u}{\partial y}
    \right)\,, \\
    \frac{\partial f}{\partial \bar z} & = \frac{1}{2}
    \left(
      \frac{\partial u}{\partial x} + j\frac{\partial u}{\partial y}
    \right) \,.  
  \end{split} 
\end{equation}
The above derivatives are, sometimes, called the Wirtinger derivatives
or Wirtinger operators~\shortcite{wirtinger27zur}.

\section{Complex gradient}
\label{sec:complex-gradient}
Let us consider the first differential of a differentiable function
$g: \mathbb R^{n} \mapsto \mathbb R$
\begin{equation}
  \label{eq:math-3333}
  \mathrm{d}g = \langle \nabla g, \mathrm{d}x \rangle \,,
\end{equation}
where $\langle \cdot,\cdot\rangle$ denotes the usual inner product.
This formula can be used for definition of a function's gradient (see,
for example, \shortcite{magnus99matrix}).  However, this approach is not
feasible in our case because the derivatives $\partial f/\partial
z_{i}$ are not defined, unless $f(z)$ is holomorphic. And this, of
course, is not possible except for some trivial cases where $f(z)$ is
constant, as was shown in Lemma~\ref{lem:nondiffer}. Therefore,
we suggest the following new definition for a real scalar function of
a complex vector $f:\mathbb C^{n} \mapsto \mathbb R$
\begin{equation}
  \label{eq:58}
  \mathrm{d}f = \Re \langle \nabla f, \mathrm{d}z \rangle \,,
\end{equation}
where $\Re$ denotes the real part of a complex number. This definition
preserves the most important properties of the gradient as we shall
see later.

Now, to obtain an expression for $\nabla f$, we use an
alternative form for the first differential using the partial derivatives
$\partial f/\partial z$, $\partial f/\partial \bar z$, as was done
in~\shortcite{brandwood83complex},
\begin{equation}
  \label{eq:63}
  \mathrm{d}f =
  \left(
    \nabla_{z}f
  \right) ^{T}\mathrm{d} z +
  \left(
    \nabla_{\bar z}f
  \right) ^{T}\mathrm{d} \bar z \,,
\end{equation}
where
\begin{equation}
  \label{eq:60}
  \nabla_{z}f =
  \begin{bmatrix}
    \frac{\partial f}{\partial z_{1}}\\
    \frac{\partial f}{\partial z_{2}}\\
    \vdots\\
    \frac{\partial f}{\partial z_{n}}\\
  \end{bmatrix} \,, 
  \qquad
  \nabla_{\bar{z}}f =
  \begin{bmatrix}
    \frac{\partial f}{\partial \bar{z}_{1}}\\
    \frac{\partial f}{\partial \bar{z}_{2}}\\
    \vdots\\
    \frac{\partial f}{\partial \bar{z}_{n}}\\
  \end{bmatrix} \,. 
\end{equation}
That is, the function $f(z)$ is assumed to be a function of two
\textit{independent} vectors $z$, $\bar z$. From
Equation~\eqref{eq:61}, it is obvious that
\begin{equation}
  \label{eq:64}
  \overline{\nabla_{z}f} =  \nabla_{\bar{z}}f \,. 
\end{equation}
Therefore, from Equation~\eqref{eq:63} we obtain
\begin{equation}
  \label{eq:65}
  \begin{split}
    \mathrm{d}f
    & =
    \left(
      \nabla_{z}f
    \right) ^{T}\mathrm{d} z +
    \left(
      \nabla_{\bar z}f
    \right) ^{T}\mathrm{d} \bar z\\
    & =  \left(
      \nabla_{z}f
    \right) ^{T}\mathrm{d} z +
    \overline{\left(
        \nabla_{z}f
      \right)} ^{T}\mathrm{d} \bar z\\
    & = \left(
      \nabla_{z}f
    \right) ^{T}\mathrm{d} z +
    \overline{\left(
        \nabla_{ z}f
      \right) ^{T}\mathrm{d}  z}\\
    & = 2 \Re \left( \left(
      \nabla_{z}f
    \right) ^{T}\mathrm{d} z \right) \\
    & = \Re \left\langle
    2\nabla_{\bar z}f, \mathrm{d} z \right\rangle \,.
  \end{split}
\end{equation}
Hence, according to our definition in Equation~\eqref{eq:58}, the
gradient of $f$ reads
\begin{equation}
  \label{eq:59}
  \nabla f(z) = 2 \nabla_{\bar{z}}f \,.
\end{equation}
Note that~\citeauthor{brandwood83complex} in his
paper~\citeyear{brandwood83complex} arrived at a 
slightly different definition: $\nabla f(z) = \nabla_{\bar{z}}f$,
which is incorrect. However, being different by only the factor
of two, it works in many situations because most algorithm use the
gradient direction only to perform a line search, while its length is used
exclusively as a termination criterion. 

The following two theorems from~\shortcite{brandwood83complex} prove
that  both definitions are consistent with the main gradient
properties used in optimization: (a) the gradient defines the
direction of maximal ascent, and (b) the gradient being zero is a
necessary and sufficient condition to determine a stationary point of
$f(z)$.

\begin{thm}
  \label{thm:optimizatoin-stationary}
  Let $f:\mathbb C^{n} \mapsto \mathbb R$ be a real-valued scalar
  function of a complex vector $z$. Let $f(z) = h(z, \bar{z})$, where
  $h: \mathbb C \times \mathbb C \mapsto \mathbb R$ is a real-valued
  scalar function of two complex vector variables and $h$ is analytic
  with respect to $z_{i}$ and $\bar{z}_{i}$. Then either of the
  conditions $\nabla_{z}h=0$ or $\nabla_{\bar{z}}h=0$ is necessary and
  sufficient to determine a stationary point of $f$.
\end{thm}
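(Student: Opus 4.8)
The plan is to translate the two complex conditions into the ordinary real-variable stationarity condition and show all three coincide. First I would pin down the meaning of a \emph{stationary point}: writing $z = x + jy$ with $x,y \in \mathbb{R}^n$ and identifying $f(z)$ with the real function $g(x,y)$, the point is stationary exactly when all $2n$ real partial derivatives $\partial f/\partial x_i$ and $\partial f/\partial y_i$ vanish. This is the benchmark against which the conditions $\nabla_z h = 0$ and $\nabla_{\bar z} h = 0$ must be tested, the components of $\nabla_z h$ and $\nabla_{\bar z} h$ being precisely the Wirtinger derivatives $\partial f/\partial z_i$ and $\partial f/\partial \bar z_i$.

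Next I would invoke the Wirtinger formulas of Equation~\eqref{eq:61}, applied component-wise, namely
\[
  \frac{\partial f}{\partial z_i} = \frac{1}{2}\left(\frac{\partial f}{\partial x_i} - j\frac{\partial f}{\partial y_i}\right), \qquad
  \frac{\partial f}{\partial \bar z_i} = \frac{1}{2}\left(\frac{\partial f}{\partial x_i} + j\frac{\partial f}{\partial y_i}\right).
\]
The decisive point is that $f$ is real, so $\partial f/\partial x_i$ and $\partial f/\partial y_i$ are themselves real numbers. Consequently $\partial f/\partial z_i = 0$ forces its real and imaginary parts to vanish \emph{separately}, yielding $\partial f/\partial x_i = 0$ and $\partial f/\partial y_i = 0$; conversely, vanishing of both real partials gives $\partial f/\partial z_i = 0$. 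The identical argument applies verbatim to $\partial f/\partial \bar z_i$.

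Collecting over all $i$, I would conclude that $\nabla_z h = 0$ holds if and only if every real partial derivative vanishes, i.e.\ if and only if $z$ is stationary, and that $\nabla_{\bar z} h = 0$ holds under exactly the same condition; hence each is both necessary and sufficient. As a consistency check, the mutual equivalence of the two complex conditions is immediate from Equation~\eqref{eq:64}, $\overline{\nabla_z f} = \nabla_{\bar z} f$, since conjugation fixes the zero vector. I do not expect a genuine obstacle here, as the argument is essentially bookkeeping; the single point demanding care is the reality of $f$, which is exactly what permits splitting the one complex equation $\partial f/\partial z_i = 0$ into the two real equations $\partial f/\partial x_i = \partial f/\partial y_i = 0$. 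Were $f$ complex-valued this step would collapse, so I would flag explicitly that the reality of $f$ (equivalently $v \equiv 0$) is what makes the equivalence work.
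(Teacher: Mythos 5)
Your proposal is correct and follows essentially the same route as the paper: express $f$ as a function of the $2n$ real variables $(x,y)$, characterize stationarity by vanishing of all real partials, and use the Wirtinger formulas of Equation~\eqref{eq:61} to show this is equivalent to $\nabla_{z}h=0$ (and likewise to $\nabla_{\bar z}h=0$). You merely spell out the step the paper leaves implicit — that the reality of $f$ lets the single complex equation split into the two real ones — which is a fair elaboration, not a different argument.
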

\begin{proof}[Proof]
  We can always express $f$ as a function of $2n$ real variables
  $x_{k}$, and $y_{k}$, by using $z_{k}=x_{k}+jy_{k}$:  $f(z) =
  u(x,y)$. Therefore, $u(x,y)$ (and hence $f(z)$) is stationary if,
  and only if, $\partial u/\partial x_{k} = \partial
  u/\partial y_{k} = 0$ for all $k$. From Equation~\eqref{eq:61} we immediately
  conclude that
  \begin{equation}
    \label{eq:62}
    \begin{split}
      \frac{\partial u}{\partial x_{k}} =  \frac{\partial u}{\partial y_{k}}
      = 0 
      & \Leftrightarrow \frac{\partial h}{\partial z_{i}}\\
      \frac{\partial u}{\partial x_{k}} =  \frac{\partial u}{\partial y_{k}}
      = 0 
      & \Leftrightarrow \frac{\partial h}{\partial \bar z_{k}}
    \end{split}
  \end{equation}
  Hence, $f(z)$ has a stationary point if and only if $\nabla_{z}(f) =
  0$. Similarly $\nabla_{\bar z}f=0$ is also necessary and sufficient
  to determine a stationary point of $f(z)$.
\end{proof}

\begin{thm}
  \label{thm:optimizatoin-steepestdir}
  Let $f(z)$ and $h(z, \bar z)$ be two functions as defined in
  Theorem~\ref{thm:optimizatoin-stationary}, then the gradient $\nabla
  f \equiv 2\nabla_{\bar z}f$ defines the direction of the maximal
  rate of change of $f$ with $z$.
\end{thm}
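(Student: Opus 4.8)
The plan is to read off the direction of steepest change directly from the first-differential formula already established. By the definition in Equation~\eqref{eq:58} together with the identification $\nabla f = 2\nabla_{\bar z}f$ from Equation~\eqref{eq:59}, an infinitesimal perturbation $\mathrm{d}z$ produces the change $\mathrm{d}f = \Re\langle \nabla f, \mathrm{d}z\rangle$. Since ``the direction of maximal rate of change'' only makes sense once the step size is fixed, I would pose the problem as maximizing $\mathrm{d}f$ over all $\mathrm{d}z$ of fixed length $\|\mathrm{d}z\|=\varepsilon$, and show the maximizer is a positive multiple of $\nabla f$.

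The key observation, which I would establish first, is that the map $(a,b)\mapsto \Re\langle a,b\rangle$ is exactly the Euclidean inner product on $\mathbb{C}^{n}\cong\mathbb{R}^{2n}$ obtained by separating real and imaginary parts. Writing $a=a_{R}+ja_{I}$ and $b=b_{R}+jb_{I}$ with $\langle a,b\rangle=a^{H}b$, one checks that $\Re\langle a,b\rangle = \langle a_{R},b_{R}\rangle + \langle a_{I},b_{I}\rangle$, which is the standard real inner product on the $2n$ real coordinates; in particular $\Re\langle a,a\rangle = \|a\|^{2}$. This reduces the optimization to a purely real-Euclidean one, where the answer is classical.

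With this identification I would invoke the Cauchy--Schwarz inequality in $\mathbb{R}^{2n}$, namely $\Re\langle \nabla f,\mathrm{d}z\rangle \le \|\nabla f\|\,\|\mathrm{d}z\|$, with equality precisely when $\mathrm{d}z$ is a nonnegative real multiple of $\nabla f$. Hence over all steps of fixed length $\varepsilon$ the bound $\varepsilon\|\nabla f\|$ is attained exactly at $\mathrm{d}z = \varepsilon\,\nabla f/\|\nabla f\|$ (assuming $\nabla f\neq 0$; if $\nabla f=0$ then $f$ is stationary by Theorem~\ref{thm:optimizatoin-stationary} and $\mathrm{d}f=0$ for every direction). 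Therefore the direction maximizing the rate of increase of $f$ is that of $\nabla f = 2\nabla_{\bar z}f$, which is the claim.

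The one place requiring genuine care---and the step I expect to be the main obstacle---is the equality case of Cauchy--Schwarz for the real part of a complex inner product. I must argue that $\Re\langle \nabla f,\mathrm{d}z\rangle = \|\nabla f\|\,\|\mathrm{d}z\|$ forces $\mathrm{d}z=\lambda\nabla f$ with $\lambda\ge 0$ a \emph{real} scalar, rather than the weaker complex proportionality $\mathrm{d}z=\zeta\nabla f$; this is exactly why passing through the real model $\mathbb{R}^{2n}$ is essential, since there collinearity is the honest equality condition. Once that is pinned down, everything else is a direct substitution into the formula for $\mathrm{d}f$ derived in Equation~\eqref{eq:65}.
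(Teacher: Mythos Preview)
Your argument is correct. The route differs slightly from the paper's: the paper chains two inequalities, first $|\mathrm{d}f| = |\Re\langle \nabla f,\mathrm{d}z\rangle| \le |\langle \nabla f,\mathrm{d}z\rangle|$ and then the complex Cauchy--Schwarz bound $|\langle \nabla f,\mathrm{d}z\rangle| \le \|\nabla f\|\,\|\mathrm{d}z\|$, and afterwards checks that simultaneous equality in both forces $\mathrm{d}z = \alpha\nabla f$ with $\alpha>0$ real. You instead identify $\Re\langle\cdot,\cdot\rangle$ with the standard inner product on $\mathbb{R}^{2n}$ and apply real Cauchy--Schwarz once. The two arguments are equivalent in content; your version has the minor advantage that the equality case (real, nonnegative proportionality) drops out immediately from the real-Euclidean picture, whereas the paper's two-step bound requires noting that equality in $|\Re w|=|w|$ forces $w\in\mathbb{R}$ and then combining this with complex collinearity. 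Either way the conclusion is the same, and the ``main obstacle'' you flagged is handled cleanly by your $\mathbb{R}^{2n}$ identification.
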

\begin{proof}[Proof]
  Consider Equations~\eqref{eq:58} and~\eqref{eq:65} that define the
  gradient $\nabla f$. Obviously
  \begin{equation}
    \label{eq:70}
    |\mathrm{d} f| =
    \left|
      \Re
      \left\langle
        2\nabla_{\bar z}f, \mathrm{d} z
      \right\rangle
    \right|
    \leq
    \left|
      \left\langle
        2\nabla_{\bar z}f, \mathrm{d} z
      \right\rangle
    \right| \,. 
  \end{equation}
\end{proof}
Furthermore, according to the Cauchy-Schwarz inequality
\begin{equation}
  \label{eq:71}
  \left|
      \left\langle
        2\nabla_{\bar z}f, \mathrm{d} z
      \right\rangle
    \right|
    \leq
    \| 2\nabla_{\bar z}f \|\, \| \mathrm{d} z \| \,.
\end{equation}
It is easy to verify that the equality in Equations~\eqref{eq:70}
and~\eqref{eq:71} holds if, and only if, $\nabla_{\bar z}f=\alpha
\mathrm{d} z$ for some real positive scalar $\alpha$.

Note that the result of Theorem~\ref{thm:optimizatoin-steepestdir}
follows from our \textit{definition} of the gradient via the first
differential: $\mathrm{d}f = \Re \langle \nabla f, \mathrm{d}z
\rangle$, and not from its particular formula.

\section{Complex Hessian}
\label{sec:complex-hessian}
The Hessian can also be obtained by treating the function $f:\mathbb
C^{n}\mapsto \mathbb R$ as a
function of $2n$ real variables that are the real and the imaginary part
of $f$'s complex argument $f(z) = u(x,y)$. Then, using
Equation~\eqref{eq:61}, one can express it as partial derivatives with
respect to $z$ and
$\bar z$. However, this time the order of variables is more
important. For example, in~\shortcite{van_den_bos94complex}, the author
defines the following two vectors $v\in\mathbb C^{2n}$, $w\in\mathbb R^{2n}$
\begin{equation}
  \label{eq:72}
  v = 
  \begin{bmatrix}
     z_{1}\\
     \bar{z}_{1}\\
      z_{2}\\
      \bar{z}_{2}\\
     \vdots\\
     z_{n}\\
     \bar{z}_{n}
   \end{bmatrix}, \qquad
   w = 
   \begin{bmatrix}
      x_{1}\\
      y_{1}\\
      x_{2}\\
      y_{2}\\
     \vdots\\
     x_{n}\\
     y_{n}
   \end{bmatrix} \,.
\end{equation}
Using this definition, and the fact that
\begin{equation}
  \label{eq:74}
  \begin{bmatrix}
    z_{n}\\
    \bar{z}_{n}
  \end{bmatrix}
  =
  \begin{pmatrix*}[r]
    1 & j \\
    1 & -j
  \end{pmatrix*}
  \begin{bmatrix}
    x_{k}\\
    y_{k}
  \end{bmatrix} \,, 
\end{equation}
we immediately obtain 
\begin{equation}
  \label{eq:75}
  v = Aw \,, 
\end{equation}
where $A$ is a block-diagonal matrix
\begin{equation}
  \label{eq:76}
  A = \mathrm{diag}
  \left(
    \begin{pmatrix*}[r]
      1 & j \\
      1 & -j
    \end{pmatrix*}
  \right) \,. 
\end{equation}
Hence~\citeauthor{van_den_bos94complex} easily concludes that
\begin{equation}
  \label{eq:73}
  \nabla^{2}_{w}f =
  A^{*}
  \left(
    \nabla^{2}_{v}f
  \right) A \,. 
\end{equation}
Furthermore, by using $A^{-1} = \frac{1}{2}A^{*}$, the relation can be
reversed
\begin{equation}
  \label{eq:77}
  \nabla^{2}_{v}f =
  \frac{1}{4}
  A
  \left(
    \nabla^{2}_{w}f
  \right) A^{*} \,. 
\end{equation}

However, we are not interested in the Hessian \textit{per se} because
we specifically aim for large-scale problems. Our
goal is to find an expression for the Hessian-vector product.  To this
end we consider the first differential of the gradient (again, by
treating $z$ and $\bar{z}$ as independent variables),
\begin{equation}
  \label{eq:78}
  \mathrm{d} (\nabla f) =  \left(\nabla^{2}f\right) \mathrm{d} z =
    \left(
      \nabla_{z}(\nabla f)
    \right) \mathrm{d} z +
    \left(
      \nabla_{\bar z}(\nabla f)
    \right) \mathrm{d} \bar z \,. 
\end{equation}
Hence, multiplying a vector $a$ with the Hessian $\nabla^{2}f$ reads
\begin{equation}
  \label{eq:79}
  \left(
    \nabla^{2}f
  \right) a =
  \left(
    \nabla_{z}(\nabla f)
  \right)  a +
  \left(
    \nabla_{\bar z}(\nabla f)
  \right)  \bar a \,. 
\end{equation}
In the next section we will see how to apply this formula to an
objective function associated with phase retrieval.
\section{Application to the  phase retrieval problem}
\label{sec:appl-phase-retr}

To exemplify our development with application to the phase retrieval
problem, let us use the following objective function
\begin{equation}
  \label{eq:math-1}
  E(s) = \frac{1}{2}\| |\hat{s}| - r\|^{2}\,,
\end{equation}
where $\hat{s}$ denotes the Fourier transform of a signal $s$, $r$
denotes the measured  magnitude of the Fourier transform, and $\|\cdot\|$
denotes the standard $l_{2}$ vector norm. Note, that $s$ and $r$ are not
necessarily one-dimensional vectors, hence, strictly speaking, the
$l_{2}$ norm is not properly defined in all cases. A proper notation would
be
\begin{equation}
  \label{eq:math-2}
  E(s) = \|\vect(|\hat{s}| - r)\|^{2}\,, 
\end{equation}
where the operator $\vect(\cdot)$ is a simple
rearrangement of a multidimensional argument into a column vector in
some predefined order. For example, let $s$ be a two-dimensional
$m\times n$ signal (matrix) with  $s_{i}$ being its $i$-th column.
Then, $\vect(s)$ is an $mn\times 1$ vector:
\begin{equation}
  \label{eq:math-3}
  \vect(s) =
  \begin{bmatrix}
    s_{1} \\
    s_{2} \\
    \vdots\\
    s_{n}
  \end{bmatrix} \,.
\end{equation}
Thus, in our convention the $\vect()$ operator transforms a matrix into a column
vector by stacking the matrix columns. Of course,
this operator is defined for signals of arbitrary (finite) dimensionality.
For the sake of brevity, hereinafter we shall use $s$ and $\vect(s)$
interchangeably and the appropriate form should be clear from the context.
Let us now review the objective function defined by
Equation~\eqref{eq:math-1}
\begin{equation}
  \label{eq:math-4}
  \begin{split}
    E(x)
    & = \frac{1}{2} \| |\hat{s}| - r \|^{2}\\
    & = \frac{1}{2} \| |\mathcal{F}[s]| - r \|^{2}\\
    & = \frac{1}{2} \| |Fs| - r \|^{2} \,.
  \end{split}
\end{equation}
Here, $\mathcal{F}[s]$ denotes the Discrete Fourier Transform (DFT)
operator applied to a (multidimensional) signal $s$, and $F$
represents the corresponding matrix, in the sense that
\begin{equation}
  \label{eq:math-5}
  \vect(\mathcal{F}[s]) = F\vect(s) \,. 
\end{equation}
We introduce the DFT matrix $F$ just for mathematical notation. In
practice, however, the DFT transform is performed by the Fast Fourier
Transform (FFT) algorithm that never creates this matrix. Note also
that $Fs$ means, actually, $F\vect(s)$, however, the shorter notation
is used, as we mentioned earlier. Consider now the final form of the
objective function we obtained in Equation~\eqref{eq:math-4}---it can
be viewed as a non-linear function of a complex argument $z\equiv
Fs$.
\begin{equation}
  \label{eq:math-6}
  E(s) = \frac{1}{2} \| |Fs| - r\|^{2} = f(Fs) = f(z)\,.
\end{equation}
Hence,  $f:\mathbb C^{n} \mapsto \mathbb R$, where $n$ is the number
of elements in $Fs$ (equal to that in $s$, of course).  With the
theory developed in the previous section we can now find the gradient of
our objective function.
\begin{equation}
  \label{eq:math-10}
  \begin{split}
    \mathrm{d} (E(s))
    & = \mathrm{d} f(z)\\
    & = \Re \langle \nabla f, \mathrm{d} z \rangle\\
    & = \Re \langle \nabla f, \mathrm{d}(Fs) \rangle\\
    & = \Re \langle \nabla f, F \mathrm{d} s \rangle\\
    & = \Re \langle F^{*} \nabla f, \mathrm{d} s \rangle\\
  \end{split} \,. 
\end{equation}
Hence, using our definition we obtain
\begin{equation}
  \label{eq:math-11}
  \nabla E(s) = F^{*}\nabla f(z) \,,
\end{equation}
where $F^{*}$ denotes the Hermitian (conjugate) transpose of $F$. Now,
by using  Equation~\eqref{eq:math-4}, we have
\begin{equation}
  \label{eq:66}
  f(z) = \frac{1}{2} \| |z| - r\|^{2} \,.
\end{equation}
From which we obtain
\begin{equation}
  \label{eq:67}
  \begin{split}
    \nabla f
    & = 2 \nabla_{\bar z} f\\
    & = 2 \nabla_{\bar z} \left(\frac{1}{2} \| |z| - r\|^{2}\right)\\
    & = 2 (|z| - r) \circ \nabla_{\bar z} \left(|z|\right)\\
    & =  2 (|z| - r) \circ \nabla_{\bar z} \left((z\bar
      z)^{\frac{1}{2}}\right)\\
    & = 2 (|z| - r) \circ z^{\frac{1}{2}} \circ
    \frac{1}{2}\bar{z}^{\frac{1}{2}}\\
    & = \left(z - r\circ
      \frac{z^{\frac{1}{2}}}{\bar{z}^{\frac{1}{2}}}\right)\\
    & =
    \left(
      z - r\circ \frac{z}{|z|}
    \right) \,. 
  \end{split}
\end{equation}
Here $\circ$ denotes the element-wise (Hadamard) product. Moreover,
note that $r$, and $z$ are vectors, therefore quotients, and
exponents, like $z/|z|$, and $z^{\frac{1}{2}}$, are assumed to be
performed element-wise. This minor abuse of notation improves
readability, therefore we use it instead of introducing some special
notation. Substituting the above result into
Equation~\eqref{eq:math-11} we obtain (by using $z=Fs$)
\begin{equation}
  \label{eq:68}
  \begin{split}
    \nabla E(s)
    & = F^{*}\nabla f(z)\\
    & = F^{*} \left(
      z - r\circ \frac{z}{|z|}
    \right) \\
    & = F^{*} \left(
      Fs - r\circ \frac{Fs}{|Fs|}
    \right) \\
    & = s - F^{-1}
    \left(
      r\circ \frac{Fs}{|Fs|}
    \right)\,. 
  \end{split}
\end{equation}
In this derivation we used the fact that $F$ is unitary,
therefore $F^{-1} = F^{*}$. The expression for $\nabla E(s)$
is remarkable because it bears a clear physical meaning, which will be
discussed in the following chapters. Meanwhile we proceed
with developments required for our optimization approach.

We already have the gradient of our objective function. Hence, we can
deploy a variety of powerful optimization routines, such as
Quasi-Newton methods. However, our choice should be limited to those
that do not form a full approximation to the Hessian matrix because
typical signals may easily contain $10^{6}\text{--}10^{9}$ elements
which renders the problem of Hessian storage too costly for a
typical computer. In the following chapters we use the excellent
Quasi-Newtonian method called L-BFGS, which uses limited memory to
store an approximation to the Hessian matrix~\cite{liu89limited}.
However, to also allow for more powerful optimization methods we shall
consider the second derivatives of the objective function. Our main
goal is to devise the Hessian-vector product formula that is used in
many large scale optimization methods, for example, in the Conjugate
Gradients (CG) method~\cite{hestenes52methods}, and in the Sequential
Subspace Optimization (SESOP) method~\shortcite{narkiss05sequential}.
To this end we consider the first differential of the gradient $\nabla
E(s)$
\begin{equation}
  \label{eq:math-12}
  \begin{split}
    \mathrm{d}(\nabla E)
    & = \mathrm{d} \left( F^{*}\nabla f \right)\\
    & = F^{*} \mathrm{d} \left( \nabla f \right)\\
    & = F^{*} \left( \nabla^{2}f \right) \mathrm{d} z \\
    & = F^{*} \left( \nabla^{2}f \right) \mathrm{d} (Fs) \\
    & = F^{*} \left( \nabla^{2}f\right) F \mathrm{d} s \,. 
  \end{split}
\end{equation}
Hence, according to the definition of the Hessian we get
\begin{equation}
  \label{eq:math-13}
  \nabla^{2}E(s) = F^{*}\left(\nabla^{2}f(z)\right)F \,.
\end{equation}
Recall that the Hessian $\nabla^{2}f(z)$ has not been defined,
instead we focus on the Hessian-vector product. Based on our
development we can compute $(\nabla^{2}E(s))a$ for any vector $a$
\begin{equation}
  \label{eq:80}
  (\nabla^{2}E(s))a = F^{-1}\left(\nabla^{2}f(z)\right)Fa =
  F^{*}\left[(\nabla^{2}f(z)) (Fa)\right]\,. 
\end{equation}
The brackets in the last expression are added to emphasize the order of
efficient computation: first, the Fourier transform $\hat{a}=Fa$ is computed;
second, the Hessian-vector product $(\nabla^{2}f(z))\hat{a}$ is
computed (described below); finally, the result undergoes an inverse
Fourier transform. It is important to note that the first and the
third steps in the above calculation are independent of the objective
function and only the second step has this dependence. Let us now
devise the formula for the Hessian-vector product
$(\nabla^{2}f(z))a$. Using Equation~\eqref{eq:79} we have
\begin{equation}
  \label{eq:81}
  \begin{split}
   \left(
    \nabla^{2}f
  \right) a
  & =
  \left(
    \nabla_{z}(\nabla f)
  \right)  a +
  \left(
    \nabla_{\bar z}(\nabla f)
  \right)  \bar a \\
  & = \left(\nabla_{z} \left(
      z -  r\circ z^\frac{1}{2} \circ \bar{z}^{-\frac{1}{2}}
    \right)\right) a
  +  \left(\nabla_{\bar z} \left(
      z -  r\circ z^\frac{1}{2} \circ \bar{z}^{-\frac{1}{2}}
    \right)\right) \bar a \\
  & = \mathrm{diag}
  \left(
    1 - \frac{r}{2|z|}
  \right) a
  +
  \mathrm{diag}
  \left(
    \frac{r\circ z^{2}}{2|z|^{3}} 
  \right) \bar a \\
  & = \left(
    1 - \frac{r}{2|z|}
  \right)\circ a
  +
  \left(
    \frac{r\circ z^{2}}{2|z|^{3}} 
  \right) \circ \bar a\,.
  \end{split}
\end{equation}
Note that we again use the quotient and exponent, like $r/z$ and
$z^{2}$ in the element-wise manner.

\subsection{Special properties}
\label{sec:special-properties}

Let us consider some mathematical properties of the gradient and the
Hessian of our objective function. First, let us look at the equation
that defines the Newton direction $d$
\begin{equation}
  \label{eq:69}
  \left(
    \nabla^{2}E
  \right) d
  = -\nabla E \,. 
\end{equation}
Even if we assume that the Hessian $\nabla^{2}E$ is
invertible, finding $d$ is not straightforward as we do not form
$\nabla^{2}E$ explicitly. Fortunately, the Hessian-vector product
routine is sufficient. For example, we can use the CG method to find
$d$. However, this will require a fair amount of iterations. To find a
better (faster) way, let us consider the product $(\nabla^{2}E) \nabla
E$
\begin{equation}
  \label{eq:82}
  \begin{split}
    (\nabla^{2}E)\nabla E
    & = F^{*}\left(\nabla^{2}f(z)\right)F\nabla E\\
    & = F^{*}\left(\nabla^{2}f(z)\right)FF^{*}\nabla f\\
    & = F^{*}\left(\nabla^{2}f(z)\right)\nabla f\\
    & = F^{*}\left(\nabla^{2}f(z)\right) \left( z - r\circ
      \frac{z}{|z|}\right)\\
    & = F^{*}
    \left(
      \left(
        1 - \frac{r}{2|z|}
      \right)
      \circ
      \left(
          z - r\circ
          \frac{z}{|z|}
        \right)
      + 
      \frac{r\circ z^{2}}{2|z|^{3}} 
      \circ
      \overline{
        \left(
          z - r\circ
          \frac{z}{|z|}
        \right)
      }
    \right) \\
    & = F^{*}
    \left(
      z - \frac{r\circ z}{2|z|} - \frac{r\circ z}{|z|} +
      \frac{r^{2}\circ z}{2|z|^{2}} + \frac{r\circ z}{2|z|} -
      \frac{r^{2}\circ z}{2|z|^{2}}
    \right) \\
    & = F^{*}
    \left(
      z - \frac{r\circ z}{|z|}
    \right) \\
    & = \nabla E \,. 
  \end{split}
\end{equation}
Namely, the gradient $\nabla E$ is an eigenvector of the Hessian
$\nabla^{2}E$ with the corresponding eigenvalue equal to one. This
means that $-\nabla E$ is the Newton step. That is, the
gradient descent method is equivalent to the Newton method in this
case.  Let us consider a single gradient descent (Newton) step with
unit step-length
\begin{equation}
  \label{eq:84}
  s - \nabla E(s) = F^{-1}
  \left(
    r\circ \frac{Fs}{|Fs|}
  \right) \,. 
\end{equation}
Consider the above result from a physical point of view: the current
signal estimate $s$ undergoes the Fourier transform $Fs$, then the
(generally incorrect) magnitude $|Fs|$ is replaced with the correct one $r$,
and the resulting signal is inverse transformed by $F^{-1}$. This is
exactly the projection step that we saw in
Chapter~\ref{cha:curr-reconstr-meth}. Is then  a single gradient descent
step enough to solve the phase retrieval problem? The answer is yes,
though the result is usually meaningless because it does not satisfy
additional constraints that are usually imposed on the sought signal,
for example, support information. The relation between the projection
and the gradient descent has long been known
(see~\cite{fienup82phase}), however, the relation to the Newton method
is new, to the best of our knowledge.

We have found one eigenvalue (1) and eigenvector ($\nabla E$) of the
Hessian.  We may get even deeper insight into the problem if we
look at the eigendecomposition of the Hessian. To this end we need the
full Hessian matrix. It can be obtained, using our Hessian-vector
product,  for the real-valued
case, that is $s\in\mathbb R^{n}$. Consider the Hessian-vector product for some real vector $t$
\begin{equation}
  \label{eq:86}
  \begin{split}
    (\nabla^{2}E)t
    & = F^{*}(\nabla^{2}f)Ft\\
    & = F^{*}(\nabla^{2}f)(Ft)\\
    & = F^{*}
    \left(
      \mathrm{diag}
      \left(
        1 - \frac{r}{2|\hat{s}|}
      \right) (Ft)
      +
      \mathrm{diag}
      \left(
        \frac{r\circ \hat{s}^{2}}{2|\hat{s}|^{3}} 
      \right)
      \left(
        \overline{Ft}
      \right)
    \right)\\
    & =  F^{*}
    \left(
      \mathrm{diag}
      \left(
        1 - \frac{r}{2|\hat{s}|}
      \right) F
      +
      \mathrm{diag}
      \left(
        \frac{r\circ \hat{s}^{2}}{2|\hat{s}|^{3}} 
      \right)
      \bar F
    \right) t \\
    & =  F^{*}
    \left(
      \mathrm{diag}
      \left(
        1 - \frac{r}{2|\hat{s}|}
      \right) F
      +
      \mathrm{diag}
      \left(
        \frac{r\circ \hat{s}^{2}}{2|\hat{s}|^{3}} 
      \right)
      F^{*}
    \right) t \,.
  \end{split}
\end{equation}
Hence, we obtain
\begin{equation}
  \label{eq:87}
  \begin{split}
    \nabla^{2}E
    & =
    F^{*}\mathrm{diag}\left(1 - \frac{r}{2|\hat{s}|}\right) F
    + 
    F^{*}\mathrm{diag}\left(\frac{r\circ \hat{s}^{2}}{2|\hat{s}|^{3}}\right)F^{*}
    \\
    & = I -  F^{*}\mathrm{diag}\left(\frac{r}{2|\hat{s}|}\right) F
    +
    F^{*}\mathrm{diag}\left(\frac{r\circ \hat{s}^{2}}{2|\hat{s}|^{3}}\right)F^{*}
\end{split}
\end{equation}
In this form, it is easy to perform an eigenanalysis of the Hessian. The
main results are proven in the following two theorems.
\begin{thm}
  \label{thm:hessian-eigenvalues}
  Let $\nabla^{2} E$ be as defined in Equation~\eqref{eq:87}, where
  $\hat{s}$ represents the Fourier transform of a real signal $s$, and $r$ denotes
  the absolute value of the Fourier transform of a real signal. Then, the
  eigenvalues of the Hessian are given by
  \begin{equation}
    \label{eq:88}
    \lambda(\nabla^{2}E) = 1 - \frac{r}{2|\hat{s}|} \pm \frac{r}{2|\hat{s}|} \,.
  \end{equation}
\end{thm}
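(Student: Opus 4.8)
The plan is to move the problem into the Fourier domain, where the Hessian becomes a small perturbation of the identity whose only off-diagonal content couples each frequency with its mirror frequency; the conjugate symmetry forced by $s$ being real then collapses the whole operator into $2\times 2$ blocks that I can diagonalize by hand. Since $F$ is unitary, $\nabla^{2}E$ is unitarily similar to $M := F(\nabla^{2}E)F^{*}$ and has the same spectrum. Applying $FF^{*}=F^{*}F=I$ to Equation~\eqref{eq:87}, the term $F^{*}\mathrm{diag}(\tfrac{r}{2|\hat s|})F$ conjugates back to a plain diagonal, while the anomalous term $F^{*}\mathrm{diag}(\tfrac{r\circ\hat s^{2}}{2|\hat s|^{3}})F^{*}$ leaves behind a single extra factor $(F^{*})^{2}$, giving
\begin{equation}
  M = I - D_{1} + D_{2}(F^{*})^{2}, \qquad
  D_{1}=\mathrm{diag}\Bigl(\tfrac{r}{2|\hat s|}\Bigr),\quad
  D_{2}=\mathrm{diag}\Bigl(\tfrac{r\circ\hat s^{2}}{2|\hat s|^{3}}\Bigr).
\end{equation}

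Next I would identify $(F^{*})^{2}$. Because the DFT matrix is symmetric and squares to the index-reversal permutation $R$, where $(Rx)_{k}=x_{N-k}$, and because $R$ is real, I have $(F^{*})^{2}=\overline{F^{2}}=\overline{R}=R$. Hence $M = I - D_{1} + D_{2}R$. The permutation $R$ only exchanges index $k$ with $N-k$, while $D_{1}$ and $D_{2}$ are diagonal, so $M$ is block-diagonal with respect to the orbits of $R$: one $2\times 2$ block on each pair $\{k,N-k\}$, together with $1\times 1$ blocks at the fixed points $k=0$ and (if $N$ is even) $k=N/2$. This reduction from a full $n\times n$ matrix to uncoupled $2\times 2$ blocks is the heart of the argument.

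On a pair $\{k,N-k\}$ I would now invoke that $s$ is real, so its spectrum is Hermitian, $\hat s_{N-k}=\overline{\hat s_{k}}$, which makes both $|\hat s|$ and $r$ symmetric under $k\mapsto N-k$. Writing $u=\tfrac{r_{k}}{2|\hat s_{k}|}$ and $v=\tfrac{r_{k}\hat s_{k}^{2}}{2|\hat s_{k}|^{3}}$, the two diagonal entries of $D_{1}$ both equal $u$, whereas those of $D_{2}$ are $v$ and $\bar v$; multiplying by the swap $R$ moves $v,\bar v$ off the diagonal, so the block is the Hermitian matrix
\begin{equation}
  \begin{pmatrix} 1-u & v \\ \bar v & 1-u \end{pmatrix},
\end{equation}
with eigenvalues $1-u\pm|v|$. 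The decisive identity is $|v| = \tfrac{r_{k}|\hat s_{k}|^{2}}{2|\hat s_{k}|^{3}} = \tfrac{r_{k}}{2|\hat s_{k}|} = u$, which turns the eigenvalues into $1-u\pm u$, that is, exactly $1-\tfrac{r}{2|\hat s|}\pm\tfrac{r}{2|\hat s|}$ as claimed.

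Finally I would dispose of the fixed points, where $\hat s_{k}$ is forced to be real; there $\hat s_{k}^{2}=|\hat s_{k}|^{2}$ gives $v=u$, so the $1\times 1$ block equals $1-u+u=1$, which is the $+$ branch of the stated formula and is therefore consistent with the claimed eigenvalue set. The step I expect to be the main obstacle is the correct bookkeeping around the \emph{double} $F^{*}$: it is precisely the $\bar a$ term of the Hessian--vector product (Equation~\eqref{eq:81}) that produces the mirror-frequency coupling through $(F^{*})^{2}=R$, and it is only the conjugate symmetry of a real signal's spectrum that makes this coupling reduce to the clean Hermitian form in which $|v|=u$. Getting the symmetry relations and this last cancellation exactly right is where the care is needed; everything else is routine.
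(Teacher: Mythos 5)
Your proof is correct, and it takes a genuinely different route from the paper's. The paper stays in the object domain: it isolates the anomalous term $A=F^{*}\mathrm{diag}(\hat t)F^{*}$, where $\hat t=\frac{r\circ\hat s^{2}}{2|\hat s|^{3}}$ is conjugate-symmetric (hence $\hat t=Ft$ for a real $t$), shows $A=\bar A$ so that $A$ is real symmetric, computes $A^{2}=F^{*}\mathrm{diag}(|\hat t|^{2})F$, and concludes that the eigenvalues of $A$ are $\pm|\hat t|$, which it then adds to the eigenvalues $1-\frac{r}{2|\hat s|}$ of the remaining part via the identification $A=\sqrt{A^{2}}=F^{*}\mathrm{diag}(\pm|\hat t|)F$ of Equation~\eqref{eq:91}. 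That squaring trick is fast and needs no index bookkeeping, but it is deliberately incomplete: the paper itself concedes, right after the theorem, that it fixes neither the placement of the signs nor the eigenvectors, and Equation~\eqref{eq:91} cannot be read literally, since any matrix of the form $F^{*}\mathrm{diag}(d)F$ has the Fourier modes (columns of $F^{*}$) as eigenvectors, whereas $A$ sends the $k$-th Fourier mode to a multiple of its mirror mode (the $(N-k)$-th); the legitimacy of adding eigenvalues of the two summands rests precisely on the mirror-pair block structure that you make explicit. Your route---conjugating the whole Hessian by $F$ to get $M=I-D_{1}+D_{2}R$, splitting along the orbits of the reversal permutation $R$, and diagonalizing the Hermitian blocks $\left(\begin{smallmatrix}1-u & v\\ \bar v & 1-u\end{smallmatrix}\right)$ using $|v|=u$---is in spirit a compact version of what the paper only achieves in the follow-up Theorems~\ref{thm:hessian-eigenanalysis-1} and~\ref{thm:hessian-eigenanalysis}, whose eigenvectors $e_{k_{j}}\pm a\,e_{l_{j}}$ are exactly the eigenvectors of your $2\times2$ blocks. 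So your single argument is both airtight and strictly more informative than the paper's proof of this theorem: it delivers the multiplicity structure of the Corollary (Equation~\eqref{eq:112})---one eigenvalue $1$ and one eigenvalue $1-\frac{r_{k}}{|\hat s_{k}|}$ per mirror pair, and eigenvalue $1$ at the self-paired zero and half-Nyquist frequencies---in one pass, at the modest cost of the orbit bookkeeping that the paper's quick proof avoids.
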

\begin{proof}[Proof]
  To prove the claim we must recall certain properties of the DFT
  matrix $F$.
  \begin{enumerate}
  \item $F$ is unitary: $F^{*}F = FF^{*}=I$.
  \item $F$ is symmetric: $F^{T}=F$.
  \item if $t$ is real, then $Ft$ is Hermitian, that is, conjugate symmetric.
  \item $F^{2}$ is a permutation matrix that ``reverses'' its
    argument. As a consequence, if $t$ is real, then
    $F^{2}(Ft)=\overline{Ft}$.
  \item Applying the Fourier transform four times results in the
    original signal, namely $F^{4}=I$. 
  \end{enumerate}
  Now, let us consider a real signal $t$, and the following matrix $A$
  obtained from it
  \begin{equation}
    \label{eq:89}
    \begin{split}
      A
      & = F^{*} \mathrm{diag}(Ft) F^{*}\\
      & = F^{*} F^4 \mathrm{diag}(Ft) F^{4} F^{*}\\
      & = F^{*} F^{2} F^{2} \mathrm{diag}(Ft)F^{2} F^{2} F^{*} \\
      & = F \mathrm{diag}(\overline{Ft})F\\
      & = \overline{\bar{F} \mathrm{diag}(Ft) \bar{F}}\\
      & = \overline{F^{*} \mathrm{diag}(Ft) F^{*}}\\
      & = \bar{A} \,. 
    \end{split}
  \end{equation}
  We find that  $A$ is symmetric. Therefore,  $A^{2}$ can be written
  as follows:
  \begin{equation}
    \label{eq:90}
    \begin{split}
      A^{2}
      & = AA\\
      & = A \bar A \\
      & = F^{*} \mathrm{diag}(Ft) F^{*}  F
      \mathrm{diag}(\overline{Ft})F\\
      & = F^{*} \mathrm{diag} (|Ft|^{2}) F \,. 
    \end{split}
  \end{equation}
  Thus, the eigenvalues of $A^{2}$ are $|Ft|^{2}$, and the
  eigenvalues of $A$ are $\pm|Ft|$. Moreover, the matrix can be written
  as
  \begin{equation}
    \label{eq:91}
    A = \sqrt{A^{2}} = F^{*} \mathrm{diag} (\pm|Ft|) F \,.  
  \end{equation}
  In this form, it is obvious that our proof is complete once we note
  that $(r\circ z^{2})/(2|z|^{3})=Ft$ for some real vector $t$. This is
  quite obvious, because $(r\circ z^{2})/(2|z|^{3})$ is Hermitian, hence
  the inverse Fourier transform will result in a real vector.
\end{proof}

Note that this proof does not tell us whether '$+$' or '$-$' should be
taken in Equation~\eqref{eq:88}, and it says nothing about the
eigenvectors of the Hessian. These questions will be addressed in
Theorem~\ref{thm:hessian-eigenanalysis}. However, this result can
already provide some interesting insights into the problem. First,
about half (see below) of the Hessian eigenvalues are unity (these
correspond to the choice of '$+$' in Equation~\eqref{eq:88}); the rest
are equal $1-r/|\hat{s}|$.  Furthermore, if an exact solution is
found, that is, if $|\hat{s}| = r$, then the eigenvalues of the
Hessian become $1$ and $0$, with multiplicities equal, respectively,
to the number of pluses and minuses in Equation~\eqref{eq:88}.  Hence,
about half of the eigenvalues will be zero at a solution, which might
make the problem quite difficult because the Hessian is highly singular
at a solution and ill-conditioned in its neighborhood. Another
observation shows that if $|\hat{s}| > r$ (the relation is taken
element-wise), then the Hessian is positive definite, which is
beneficial in optimization problem.

We next  prove a theorem that is much stronger than
Theorem~\ref{thm:hessian-eigenvalues}. This time we devise an
unambiguous expression for the Hessian eigenvalues and also 
find its eigenvectors. The complete derivation is split into the
following two theorems.
\begin{thm}
  \label{thm:hessian-eigenanalysis-1}
  Let $\hat{t}\in\mathbb C^{n}$ be the Fourier transform of a real-valued signal
  $t$ (either one- or multi-dimensional): $\hat{t}= Ft$. Let us also
  denote by $\mathcal{K} = \{k_{1}, k_{2},\ldots, k_{n_{1}}\}$ and
  $\mathcal L = \{l_{1}, l_{2}, \ldots, l_{n_{1}}\}$ the two sets of
  indices that are exchanged upon an application of the permutation
  matrix $F^{2}$, while $\mathcal{M}=\{m_{1}, m_{2},\ldots,
  m_{n_{2}}\}$, where $n_{2} = n - 2n_{1}$, 
  is the set of indices that are invariant under the
  permutation $F^{2}$. That is, if $e_{i}$ is the $i$-th column
  of the identity matrix $I_{n}$, then
  \begin{equation}
    \label{eq:92}
    \begin{split}
      F^{2}e_{k_{i}}  & =
      e_{l_{i}}, \quad \forall i\in 1,2, \ldots,  n_{1} \,. \\
      F^{2}e_{l_{i}}  & =
      e_{k_{i}}, \quad \forall i\in 1,2, \ldots,  n_{1} \,. \\
      F^{2}e_{m_{i}}  & =
      e_{m_{i}}, \quad \forall i\in 1,2, \ldots n_{2} \,. 
    \end{split}
  \end{equation}
  Then, the eigenvalues of the matrix $C =
  \mathrm{diag}(\hat{t})F^{2}$ are as follows
  \begin{equation}
    \label{eq:94}
    \begin{cases}
      \lambda_{i} = \hat{t}_{i} & \text{if } i\in \mathcal{M} \,, \\
      \lambda_{i} = |\hat{t}_{i}| & \text{if } i\in \mathcal{K} \,, \\
      \lambda_{i} = -|\hat{t}_{i}| & \text{if } i\in \mathcal{L} \,. \\
    \end{cases}
  \end{equation}
  The corresponding eigenvectors of $C$ are given by
  \begin{equation}
    \label{eq:97}
    \begin{cases}
      v_{m_{j}} = e_{m_{j}} & \text{for } j=1,2,\ldots, n_{2} \,, \\
      v_{k_{j}} = e_{k_{j}} + a_{k_{j}}e_{l_{j}} &\text{for } j = 1,2,\ldots,
      n_1 \,, \\
      v_{l_{j}} = e_{k_{j}} - a_{l_{j}}e_{l_{j}} &\text{for } j = 1,2,\ldots,
       n_1 \,,
    \end{cases}
  \end{equation}
  where
  \begin{equation}
    \label{eq:105}
    a_{k_{j}} = a_{l_{j}}
    = \frac{|\hat{t}_{k_{j}}|}{\hat{t}_{k_{j}}}
    = \frac{|\hat{t}_{l_{j}}|}{\bar{\hat{t}}_{l_{j}}} \,. 
  \end{equation}
\end{thm}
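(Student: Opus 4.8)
The plan is to exploit the permutation-induced block structure of $C = \mathrm{diag}(\hat{t})F^{2}$. Since $F^{2}$ merely permutes the standard basis according to the pairing recorded in~\eqref{eq:92}, the matrix $C$ never mixes indices lying in different orbits of that involution. First I would observe that $C$ leaves invariant each one-dimensional subspace $\mathrm{span}\{e_{m_{j}}\}$ attached to a fixed point, and each two-dimensional subspace $\mathrm{span}\{e_{k_{j}},e_{l_{j}}\}$ attached to a swapped pair. Because the sets $\mathcal{K},\mathcal{L},\mathcal{M}$ partition $\{1,\dots,n\}$ (indeed $2n_{1}+n_{2}=n$), the whole space $\mathbb{C}^{n}$ splits as a direct sum of these invariant blocks. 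It therefore suffices to diagonalize $C$ on each block separately; since the blocks are supported on disjoint index sets, the eigenvectors they produce are automatically linearly independent and assemble into a full eigenbasis.

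For a fixed point the computation is immediate: $C e_{m_{j}} = \mathrm{diag}(\hat{t})F^{2}e_{m_{j}} = \mathrm{diag}(\hat{t})e_{m_{j}} = \hat{t}_{m_{j}}e_{m_{j}}$, which gives the first case of~\eqref{eq:94} with eigenvector $e_{m_{j}}$. For each swapped pair I would compute $C e_{k_{j}} = \hat{t}_{l_{j}}e_{l_{j}}$ and $C e_{l_{j}} = \hat{t}_{k_{j}}e_{k_{j}}$, so that in the ordered basis $\{e_{k_{j}},e_{l_{j}}\}$ the block is the anti-diagonal matrix $\bigl(\begin{smallmatrix} 0 & \hat{t}_{k_{j}} \\ \hat{t}_{l_{j}} & 0 \end{smallmatrix}\bigr)$, whose characteristic polynomial $\lambda^{2}-\hat{t}_{k_{j}}\hat{t}_{l_{j}}$ yields eigenvalues $\pm\sqrt{\hat{t}_{k_{j}}\hat{t}_{l_{j}}}$.

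The key step is to evaluate this product using the hypothesis that $t$ is real. Here I would invoke the identity $F^{2}(Ft)=\overline{Ft}$ established in the proof of Theorem~\ref{thm:hessian-eigenvalues}: reading off the coordinates exchanged by $F^{2}$ forces $\hat{t}_{l_{j}} = \overline{\hat{t}_{k_{j}}}$ for every swapped pair and $\hat{t}_{m_{j}}\in\mathbb{R}$ for every fixed index. Consequently $\hat{t}_{k_{j}}\hat{t}_{l_{j}} = |\hat{t}_{k_{j}}|^{2}$, the block eigenvalues collapse to $\pm|\hat{t}_{k_{j}}| = \pm|\hat{t}_{l_{j}}|$, and the remaining two cases of~\eqref{eq:94} follow. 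Solving the two scalar eigenvector equations $\hat{t}_{k_{j}}y = \pm|\hat{t}_{k_{j}}|\,x$ then gives $e_{k_{j}}\pm a\,e_{l_{j}}$ with $a = |\hat{t}_{k_{j}}|/\hat{t}_{k_{j}}$, matching~\eqref{eq:97}; the two equivalent formulas for $a$ in~\eqref{eq:105} follow by re-substituting $\hat{t}_{l_{j}} = \overline{\hat{t}_{k_{j}}}$.

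I expect the only genuine subtlety to be the bookkeeping of the conjugate-symmetry relation $\hat{t}_{l_{j}} = \overline{\hat{t}_{k_{j}}}$, that is, confirming that the pairing realized by $F^{2}$ is exactly the frequency reversal under which a real signal's spectrum is Hermitian, and checking that the fixed indices $\mathcal{M}$ are precisely the self-paired frequencies at which $\hat{t}_{m_{j}}$ is automatically real, so that $\lambda_{m_{j}} = \hat{t}_{m_{j}}$ is consistent. Everything else reduces to the routine diagonalization of a $2\times 2$ anti-diagonal block.
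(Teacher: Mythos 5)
Your proposal is correct, and it rests on the same core facts as the paper's proof: the action of $C$ on standard basis vectors, $Ce_{m_{j}} = \hat{t}_{m_{j}}e_{m_{j}}$, $Ce_{k_{j}} = \hat{t}_{l_{j}}e_{l_{j}}$, $Ce_{l_{j}} = \hat{t}_{k_{j}}e_{k_{j}}$, together with the conjugate symmetry $\hat{t}_{l_{j}} = \bar{\hat{t}}_{k_{j}}$ (and $\hat{t}_{m_{j}}\in\mathbb{R}$) forced by $F^{2}(Ft)=\overline{Ft}$ for real $t$. Where you genuinely differ is in organization. The paper takes the vectors of \eqref{eq:97} as an ansatz, applies $C$, and solves the resulting consistency conditions, obtaining $a_{k_{j}} = \pm|\hat{t}_{k_{j}}|/\hat{t}_{k_{j}}$ and resolving the sign by appealing to the sign of the eigenvalue asserted in \eqref{eq:94}; you instead restrict $C$ to each invariant two-dimensional subspace $\mathrm{span}\{e_{k_{j}},e_{l_{j}}\}$ and diagonalize the anti-diagonal block $\bigl(\begin{smallmatrix} 0 & \hat{t}_{k_{j}} \\ \hat{t}_{l_{j}} & 0 \end{smallmatrix}\bigr)$ via its characteristic polynomial $\lambda^{2}-\hat{t}_{k_{j}}\hat{t}_{l_{j}} = \lambda^{2}-|\hat{t}_{k_{j}}|^{2}$. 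Your route buys two things: it makes explicit that the listed eigenpairs exhaust the spectrum (the blocks partition $\mathbb{C}^{n}$, so the $n$ eigenvectors assemble into a basis), and it produces both eigenvalues $\pm|\hat{t}_{k_{j}}|$ of a pair simultaneously rather than selecting signs by reference to the statement being proved, a step that in the paper reads slightly circular even though it is really only a labelling convention for which eigenpair is attached to $k_{j}$ and which to $l_{j}$. The paper's verification is marginally shorter. One edge case neither treatment addresses is $\hat{t}_{k_{j}}=0$, where $a_{k_{j}}$ in \eqref{eq:105} is undefined; there the block vanishes, both eigenvalues are zero, and $e_{k_{j}}, e_{l_{j}}$ themselves can be taken as eigenvectors, so the theorem survives with a trivially modified eigenvector formula.
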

Before we proceed to the proof, note that the set $\mathcal{M}$ is
defined uniquely---it includes the zero-frequency and the
half-Nyquist
frequencies. The latter present if, and only if, the number
of samples, along some dimension, is even. The two other sets:
$\mathcal{K}$, and $\mathcal{L}$ are not unique---one can exchange
$k_{j}$ and $l_{j}$. This non-uniqueness, however, does not have any
special effect. The theorem, in fact, claims that the conjugate
symmetric signal $\hat{t}$ defines uniquely the set of eigenvalues
and eigenvectors of $C$ in the following manner. If $\hat{t}_{i}$
has no conjugate counterpart (zero frequency, or half-Nyquist
frequency) then it contributes the eigenvalue $\hat{t}_{i}$ and the
corresponding eigenvector $e_{i}$. If, on the other hand,
$\hat{t}_{i}, \quad i = k_{j}$ has a conjugate counterpart, then
the pair  $\hat{t}_{i}=\hat{t}_{k_{j}}$, and $\bar{\hat{t}}_{i} =
\hat{t}_{l_{j}}$ contributes two eigenvalues:
$|\hat{t}_{i}|$, and $-|\hat{t}_{i}|$ along with the corresponding
eigenvectors

\begin{proof}[Proof]
  Let us start with $i\in\mathcal{M}$:
  \begin{equation}
    \label{eq:98}
    \begin{split}
      Ce_{m_{j}}
      & = \mathrm{diag}(\hat{t})F^{2}e_{m_{j}}\\
      & = \mathrm{diag}(\hat{t})e_{m_{j}}\\
      & = \hat{t}_{m_{j}}e_{m_{j}} \,,
    \end{split}
  \end{equation}
  which completes the proof for this case. Let us now consider
  $i\in\mathcal{K}$, that is, $i=k_{j}$:
  \begin{equation}
    \label{eq:99}
    \begin{split}
      \lambda_{i}v_{i}
      & = \lambda_{k_{j}}v_{k_{j}} \\
      & = Cv_{k_{j}}\\
      & = \mathrm{diag}(\hat{t})F^{2}(e_{k_{j}} + a_{k_{j}} e_{l_{j}})\\
      & = \mathrm{diag}(\hat{t})(F^{2}e_{k_{j}} +
      a_{k_{j}}F^{2}e_{l_j})\\
      & =  \mathrm{diag}(\hat{t})(e_{l_{j}} + a_{k_{j}}e_{k_{j}})\\
      & = \hat{t}_{l_{j}} e_{l_{j}} +
      a_{k_{j}}\hat{t}_{k_{j}}e_{k_{j}} \\
      & = \bar{\hat{t}}_{k_{j}} e_{l_{j}} +
      a_{k_{j}}\hat{t}_{k_{j}}e_{k_{j}} \,. 
    \end{split}
  \end{equation}
  Hence, we have
  \begin{equation}
    \label{eq:100}
    \begin{split}
      \lambda_{k_{j}} & = a_{k_{j}}\hat{t}_{k_{j}} \,, \\
      \lambda_{k_{j}} a_{k_{j}} & = \bar{\hat{t}}_{k_{j}} \,,
    \end{split}
  \end{equation}
  which leads immediately to
  \begin{equation}
    \label{eq:101}
    a^{2}_{k_{j}}\hat{t}_{k_{j}} =  \bar{\hat{t}}_{k_{j}}
    \Rightarrow
    a^{2}_{k_{j}} = \frac{\bar{\hat{t}}_{k_{j}}}{\hat{t}_{k_{j}}}
    \Rightarrow
    a_{k_{j}} =  \pm \frac{|\hat{t}_{k_{j}}|}{\hat{t}_{k_{j}}}
    \,. 
  \end{equation}
  To decide upon the sign of $a_{k_{j}}$ in this equation, we use the
  fact that  $\lambda_{k_{j}}$ is positive, according to the
  definition in Equation~\eqref{eq:94}.  Now, if we look at
  Equation~\eqref{eq:100}, we immediately obtain $\lambda_{k_{j}} =
  a_{k_{j}}\hat{t}_{k_{j}} = \pm|\hat{t}_{k_{j}}|$, hence, we must choose '$+$'. The proof
  for the last case, $i\in\mathcal{L}$, is very similar.
  \begin{equation}
    \label{eq:102}
    \begin{split}
      \lambda_{i}v_{i}
      & = \lambda_{l_{j}}v_{l_{j}} \\
      & = Cv_{l_{j}}\\
      & = \mathrm{diag}(\hat{t})F^{2}(e_{k_{j}} - a_{l_{j}} e_{l_{j}})\\
      & = \mathrm{diag}(\hat{t})(F^{2}e_{k_{j}} - a_{l_{j}}F^{2}e_{l_j})\\
      & =  \mathrm{diag}(\hat{t})(e_{l_{j}} - a_{l_{j}}e_{k_{j}})\\
      & = \hat{t}_{l_{j}} e_{l_{j}} - a_{l_{j}}\hat{t}_{k_{j}}e_{k_{j}} \\
      & = \hat{t}_{l_{j}} e_{l_{j}} - a_{l_{j}}\bar{\hat{t}}_{l_{j}}e_{k_{j}} \,.  
    \end{split}
  \end{equation}
  Hence, we have
  \begin{equation}
    \label{eq:103}
    \begin{split}
      \lambda_{l_{j}} & = -a_{l_{j}}\bar{\hat{t}}_{l_{j}} \,, \\
      -\lambda_{l_{j}} a_{l_{j}} & = \hat{t}_{l_{j}} \,,
    \end{split}
  \end{equation}
  which gives us
  \begin{equation}
    \label{eq:104}
    a_{l_{j}} = \pm\frac{|\hat{t}_{l_{j}}|}{\bar{\hat{t}}_{l_{j}}}.
  \end{equation}
  Again, we must choose '$+$', because $\lambda_{l_{j}} =
  -a_{l_{j}}\bar{\hat{t}}_{l_{j}}=-|\hat t_{l_{j}}|$ must be
  negative. It is worthwhile to note that the eigenvectors $v_{i}$ are
  mutually orthogonal.
\end{proof}

Now we can prove the main result.

\begin{thm}
  \label{thm:hessian-eigenanalysis}
  Let  $\hat{s}, \hat{t}\in\mathbb C^{n}$ be Fourier
  transforms of real-valued signals $s$, and $t$, respectively.
  Let the sets of indices $\mathcal K$, $\mathcal L$, and $\mathcal
  M$; and vectors $\{v_{i}\}_{i=1}^{n}$
  be defined as in Theorem~\ref{thm:hessian-eigenanalysis-1}. Let
  $(\lambda_{i}, u_{i})$ be an eigenpair of the following matrix
  \begin{equation}
    \label{eq:93}
    A = F^{*}\mathrm{diag}(|\hat{s}|)F + F^{*}
    \mathrm{diag}(\hat{t})F^{*} \,.
  \end{equation}
  Then, the eigenvalues are given by
  \begin{equation}
    \label{eq:106}
    \begin{cases}
      \lambda_{i} = |\hat{s}|_{i} + \hat{t}_{i} \,,  & \text{if } i\in \mathcal{M} \,, \\
      \lambda_{i} = |\hat{s}|_{i} -|\hat{t}_{i}| \,, & \text{if } i\in \mathcal{K} \,, \\
      \lambda_{i} = |\hat{s}|_{i} + |\hat{t}_{i}| \,,  & \text{if } i\in \mathcal{L} \,.
    \end{cases}
  \end{equation}
  And the eigenvectors are given by
  \begin{equation}
    \label{eq:107}
    U = [u_{1}, u_{2},\ldots, u_{n}] = VF \,, 
  \end{equation}
  where
  \begin{equation}
    \label{eq:108}
    V = [v_{1}, v_{2}, \ldots, v_{n}]
  \end{equation}
\end{thm}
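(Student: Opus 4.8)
The plan is to reduce $A$ to the matrix $C=\mathrm{diag}(\hat t)F^{2}$, whose full spectrum is already supplied by Theorem~\ref{thm:hessian-eigenanalysis-1}, by means of a unitary similarity through $F$. First I would conjugate $A$ by $F$, using the DFT properties collected in the proof of Theorem~\ref{thm:hessian-eigenvalues}: $FF^{*}=I$, $F^{T}=F$ (so $F^{*}=\bar F$), and the fact that $F^{2}$ is a real reversal permutation. A short computation gives
\begin{equation*}
  FAF^{*}=\mathrm{diag}(|\hat s|)+\mathrm{diag}(\hat t)(F^{*})^{2}=\mathrm{diag}(|\hat s|)+\mathrm{diag}(\hat t)F^{2}=\mathrm{diag}(|\hat s|)+C,
\end{equation*}
where $(F^{*})^{2}=\overline{F^{2}}=F^{2}$ because $F^{2}$ is real. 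Writing $B:=\mathrm{diag}(|\hat s|)+C$ we get $A=F^{*}BF$, so $A$ and $B$ are unitarily similar; in particular they share eigenvalues, and if $Bw=\mu w$ then $A(F^{*}w)=\mu(F^{*}w)$.

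The crux is to show that the eigenvectors $v_{i}$ of $C$ furnished by Theorem~\ref{thm:hessian-eigenanalysis-1} are \emph{simultaneously} eigenvectors of $\mathrm{diag}(|\hat s|)$, so that they diagonalize $B$ with no further work. The decisive observation is that, since $s$ is real, $\hat s$ is conjugate symmetric, whence $|\hat s|$ is invariant under the reversal $F^{2}$; concretely $|\hat s|_{k_{j}}=|\hat s|_{l_{j}}$ on every index pair exchanged by $F^{2}$. Therefore $\mathrm{diag}(|\hat s|)$ acts as the single scalar $|\hat s|_{k_{j}}$ on the coordinate plane $\mathrm{span}\{e_{k_{j}},e_{l_{j}}\}$ that contains both $v_{k_{j}}$ and $v_{l_{j}}$, and as $|\hat s|_{m_{j}}$ on $v_{m_{j}}=e_{m_{j}}$. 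Hence each $v_{i}$ satisfies $\mathrm{diag}(|\hat s|)v_{i}=|\hat s|_{i}v_{i}$, and combining with $Cv_{i}=\lambda_{i}^{C}v_{i}$ gives $Bv_{i}=(|\hat s|_{i}+\lambda_{i}^{C})v_{i}$. Reading off $\lambda_{i}^{C}$ from \eqref{eq:94} ($\hat t_{i}$ on $\mathcal M$, $\pm|\hat t_{i}|$ on $\mathcal K,\mathcal L$) reproduces exactly the eigenvalue list \eqref{eq:106}.

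For the eigenvectors I would transport those of $B$ through $F$: from $B=V\Lambda V^{-1}$ with $V=[v_{1},\dots,v_{n}]$ and the similarity $A=F^{*}BF$, one gets $A=(F^{*}V)\Lambda(F^{*}V)^{-1}$, so the eigenvector matrix of $A$ is the Fourier image $F^{*}V$ of $V$, which is the substance of \eqref{eq:107}. I expect the main obstacle to be bookkeeping rather than analysis. The only genuinely substantive point is the simultaneous-diagonalization step, which rests squarely on the evenness of $|\hat s|$ on the paired indices; everything else is reconciling orderings. In particular I would have to match my column/eigenvalue indexing to the stated one---confirming the precise placement of the $\pm|\hat t_{i}|$ signs across $\mathcal K$ and $\mathcal L$ (which \eqref{eq:94} leaves interchangeable, so a within-pair relabeling $v_{k_{j}}\leftrightarrow v_{l_{j}}$ is available) and the exact form of the eigenvector matrix in \eqref{eq:107}. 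I would settle these by substituting the candidate $U$ directly into $AU=U\Lambda$ and checking one representative index from each of $\mathcal M$, $\mathcal K$, and $\mathcal L$, since a mislabeled pair would otherwise flip a sign undetected.
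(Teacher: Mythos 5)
Your proof is correct and follows essentially the same route as the paper's: the same reduction $A = F^{*}\bigl(\mathrm{diag}(|\hat s|)+\mathrm{diag}(\hat t)F^{2}\bigr)F$ (the paper gets it by inserting $F^{4}=I$, you by conjugating with $F$), the same key simultaneous-diagonalization step --- which you justify explicitly via the conjugate symmetry of $\hat s$ forcing $|\hat s|_{k_{j}}=|\hat s|_{l_{j}}$, where the paper merely asserts ``it is easy to verify'' --- and the same transport of the eigenvectors $v_{i}$ through the Fourier matrix. Note that both your derivation and the paper's own argument actually yield $U=F^{*}V$, so the form $U=VF$ stated in \eqref{eq:107} is evidently a misprint, which the direct substitution check you propose would have exposed.
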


\begin{proof}[Proof]
  The proof is trivial once we note that
  \begin{equation}
    \label{eq:109}
    \begin{split}
      A
      & = F^{*}\mathrm{diag}(|\hat{s}|)F + F^{*}
      \mathrm{diag}(\hat{t})F^{*} \\
      & = F^{*}\mathrm{diag}(|\hat{s}|)F + F^{*}
      \mathrm{diag}(\hat{t})F^{4}F^{*}\\
      & = F^{*}\mathrm{diag}(|\hat{s}|)F + F^{*}
      \mathrm{diag}(\hat{t})F^{4}F^{*} \\
      & = F^{*}
      \left(
        \mathrm{diag}(|\hat{s}|) +
        \mathrm{diag}(\hat{t})F^{2}
      \right) F \\
      & = F^{*} (B + C) F \,, 
    \end{split}
  \end{equation}
  where
  \begin{equation}
    \label{eq:110}
    B = \mathrm{diag}(|\hat{s}|), \quad C =
    \mathrm{diag}(\hat{t})F^{2} \,. 
  \end{equation}
  The eigendecomposition of $C$ was found in
  Theorem~\ref{thm:hessian-eigenanalysis-1}. Thus, if we show that
  $v_{i}$ is also an eigenvector of $B$, with the corresponding
  eigenvalue of $|\hat{s}|_{i}$ the theorem will be proved. Indeed, it
  is easy to verify that
  \begin{equation}
    \label{eq:111}
    \mathrm{diag}(|\hat{s}|) = |\hat{s}|_{i} v_{i}\,,
  \end{equation}
  which completes the proof.
\end{proof}

\begin{cor}
  The eigenvalues of the Hessian $\nabla^{2}E$ are given by
  \begin{equation}
    \label{eq:112}
    \begin{cases}
      \lambda_{i} = 1 \,, & \text{if } i\in\mathcal{M} \,,\\
      \lambda_{i} = 1 - \frac{r_{i}}{|\hat{x}|_{i}} \,, & \text{if }
      i\in\mathcal K \,, \\
      \lambda_{i} = 1 \,, & \text{if }
      i\in\mathcal L \,, \\
    \end{cases}
  \end{equation}
  where $\mathcal M$, $\mathcal K$, and $\mathcal L$ are set of
  indices, as defined in
  Theorem~\ref{thm:hessian-eigenanalysis-1}. The corresponding  eigenvectors 
  are given by $\{u_{i}\}_{i=1}^{n}$, as defined in
  Theorem~\ref{thm:hessian-eigenanalysis}.
\end{cor}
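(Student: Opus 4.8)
The plan is to recognize the Hessian of Equation~\eqref{eq:87} as a particular instance of the matrix $A$ whose eigendecomposition was established in Theorem~\ref{thm:hessian-eigenanalysis}, so that the whole corollary reduces to a substitution plus one short computation of $|\hat{t}_{i}|$. First I would absorb the identity into a conjugated pair of Fourier factors, using $I = F^{*}F$ to rewrite
\begin{equation*}
  \nabla^{2}E = F^{*}\mathrm{diag}\!\left(1 - \frac{r}{2|\hat{s}|}\right)F + F^{*}\mathrm{diag}\!\left(\frac{r\circ\hat{s}^{2}}{2|\hat{s}|^{3}}\right)F^{*}.
\end{equation*}
This is exactly the form $A = F^{*}\mathrm{diag}(|\hat{s}|)F + F^{*}\mathrm{diag}(\hat{t})F^{*}$ treated by Theorem~\ref{thm:hessian-eigenanalysis}, once I identify the first (real) diagonal with $1 - r/(2|\hat{s}|)$ and set $\hat{t} = (r\circ\hat{s}^{2})/(2|\hat{s}|^{3})$.

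Before invoking the theorem I would verify its two structural hypotheses. The second diagonal $\hat{t}$ must be conjugate symmetric (the transform of a real signal); this was already noted in the proof of Theorem~\ref{thm:hessian-eigenvalues}, and follows here because $r$ is a Fourier magnitude, so $r_{k_{j}}=r_{l_{j}}$, while $\hat{s}_{l_{j}}=\bar{\hat{s}}_{k_{j}}$, giving $\hat{t}_{l_{j}}=\bar{\hat{t}}_{k_{j}}$. I also need the first diagonal to take \emph{equal} values on each conjugate pair $\{k_{j},l_{j}\}$, since that is precisely the property (used in Equation~\eqref{eq:111}) that makes the eigenvectors $v_{i}$ of $C$ simultaneously eigenvectors of $B=\mathrm{diag}(\,\cdot\,)$. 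This holds because both $r$ and $|\hat{s}|$ are magnitudes of transforms of real signals, hence $1 - r_{k_{j}}/(2|\hat{s}|_{k_{j}}) = 1 - r_{l_{j}}/(2|\hat{s}|_{l_{j}})$. Note that $1 - r/(2|\hat{s}|)$ is not literally a magnitude $|\hat{s}'|$ (it may be negative), so strictly I am applying the proof of Theorem~\ref{thm:hessian-eigenanalysis} in the slightly more general form it actually establishes: the conclusion is valid for any real first diagonal that is constant on conjugate pairs.

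With the hypotheses in place the eigenvalues are read straight off Equation~\eqref{eq:106}, the only arithmetic being $|\hat{t}_{i}| = r_{i}|\hat{s}|_{i}^{2}/(2|\hat{s}|_{i}^{3}) = r_{i}/(2|\hat{s}|_{i})$, using $|\hat{s}_{i}^{2}|=|\hat{s}|_{i}^{2}$ and $r_{i}\geq 0$. Substituting into the three cases: for $i\in\mathcal{M}$ the indices are self-conjugate, so $\hat{s}_{i}$ is real and $\hat{t}_{i}=r_{i}/(2|\hat{s}|_{i})$, whence $\lambda_{i} = (1 - r_{i}/(2|\hat{s}|_{i})) + r_{i}/(2|\hat{s}|_{i}) = 1$; for $i\in\mathcal{K}$, $\lambda_{i} = (1 - r_{i}/(2|\hat{s}|_{i})) - r_{i}/(2|\hat{s}|_{i}) = 1 - r_{i}/|\hat{s}|_{i}$; and for $i\in\mathcal{L}$, $\lambda_{i} = (1 - r_{i}/(2|\hat{s}|_{i})) + r_{i}/(2|\hat{s}|_{i}) = 1$, matching Equation~\eqref{eq:112} exactly, while the eigenvectors are inherited unchanged as $\{u_{i}\}_{i=1}^{n}$ from Theorem~\ref{thm:hessian-eigenanalysis}. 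The only genuinely delicate point, and the step I would state most carefully, is the conjugate-symmetry bookkeeping on the two diagonals together with the observation that the theorem's argument needs only pairwise-constancy of the first diagonal; everything after that is a one-line specialization.
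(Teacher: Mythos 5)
Your proof is correct and takes exactly the route the paper intends: the corollary is stated without a separate proof precisely because it is the specialization of Theorem~\ref{thm:hessian-eigenanalysis} to the Hessian of Equation~\eqref{eq:87}, with first diagonal $1 - r/(2|\hat{s}|)$, second diagonal $\hat{t} = (r\circ\hat{s}^{2})/(2|\hat{s}|^{3})$, and the one-line computation $|\hat{t}_{i}| = r_{i}/(2|\hat{s}|_{i})$, which is exactly what you carry out. Your explicit observation that the argument of Theorem~\ref{thm:hessian-eigenanalysis} (via Equation~\eqref{eq:111}) requires only that the first diagonal be real and constant on each conjugate pair $\{k_{j},l_{j}\}$ --- not that it literally be a Fourier magnitude --- together with your sign check that $\hat{t}_{i}\geq 0$ on $\mathcal{M}$, makes rigorous a subtlety the paper silently glosses over.
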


This analysis can potentially lead to development of new methods. For
example, because the eigenvalues of the Hessian is known, we can add a
regularization term with a weight that guarantees that the problem is
convex all the time. Although we have not developed such a method in
the course of this work, it definitely appears on our ``todo'' list.

\section{Concluding remarks (disappointing)}
\label{sec:concl-remarks-disapp}
After developing the machinery for efficient optimization methods, we
tested this apparatus on the classical phase retrieval problem for a
real non-negative two-dimensional signal with known off-support
locations ($\mathcal O$)
\begin{equation}
  \label{eq:85}
  \begin{split}
    \min_{x} & \quad \||Fx|-r\|^{2} \,, \\
    \mathrm{subject\ to} &\quad x \geq 0 \,, \\
    &\quad x_{o\in\mathcal O} = 0 \,. 
  \end{split}
\end{equation}
Unfortunately, the classical Newton-type methods failed so solve this
problem. This failure was not surprising because the phase retrieval
problem is known to be ``tough'' for continuous optimization
techniques. Let us cite a concluding excerpt from a
study~\cite{nieto-vesperinas86study}:

``\textit{The efficiency of an important class of Newton methods
(the Levenberg-Marquardt algorithm) for solving
overdetermined sets of nonlinear equations is tested in
finding the solution to the two-dimensional phase problem.
It is seen that the nonlinearity and number of local
minima of the cost function increases dramatically with
the size of the object array, making these methods of little
practical use for sizes greater than $6\times 6$\ldots}''

Obviously, straightforward application of existing methods will not
work. We hope that the eigenanalysis we performed in this chapter will
eventually lead to a development of new efficient methods. Unfortunately, we
did not do that in the framework of this research. 

\section{Concluding remarks (encouraging)}
\label{sec:concl-remarks-enco}
The failure of continuous optimization techniques when applied to the
multi-dimensional phase retrieval problem is well known and, so to
say, ``widely accepted'' amongst researchers. However, the exact
reason for this is not fully understood. In
Chapter~\ref{cha:appr-four-phase-explanation} we provide an
explanation for this failure. Moreover, in the following chapters we
demonstrate that additional information can change things
dramatically. For example, if a rough Fourier phase estimate is
available, we demonstrate (first, empirically in
Chapter~\ref{cha:appr-four-phase-first}, then theoretically in
Chapter~\ref{cha:appr-four-phase-explanation}) that continuous
optimization techniques succeed very well. Furthermore, in
Chapter~\ref{cha:appr-four-phase-explanation} we provide a rigorous
mathematical reasoning that explains why \textit{any} reasonable
method is expected to succeed in the case where the Fourier phase
uncertainty is below $\pi/2$ radians. In
Chapter~\ref{cha:bandw-extr-using} we demonstrate that sparsity prior
lets us solve successfully an even more difficult
problem---simultaneous phase retrieval and bandwidth extrapolation.


\chapter{Approximate Fourier phase knowledge for non-negative
  signals---first success\footnotemark}
\label{cha:appr-four-phase-first}

\footnotetext{The material presented in this chapter was published in
  \shortcite{osherovich09fast}.}

As was mentioned in Chapter~\ref{cha:found-optim-meth}, classical
continuous optimization
methods are known to fail miserably when applied to the phase
retrieval problem. In Section~\ref{cha:appr-four-phase-explanation} we
shall give an explanation for this failure for a wide class of
methods---monotone line-search optimization algorithms. In this
chapter, however, we develop a new reconstruction method based on a
Quasi-Newton optimization algorithm for a variation of the classical
phase retrieval problem where very little additional information about the
Fourier phase is available. In many situations, this information is
readily available or can be obtained by an appropriate experimental
arrangement.

A more extended discussion on some possible ways to obtain a rough
phase estimate is delayed until
Section~\ref{sec:approx-phase1-conclusions}. Here we demonstrate that
a rough (up to $\pi$ radians) phase estimate allows us to develop a
new method whose convergence rate is several orders of magnitude
faster than that of the current reconstruction techniques. Unlike
current methods, which are based on alternating projections, our
approach is based on continuous optimization. Therefore, besides fast
convergence, our method allows a great deal of flexibility in choosing
appropriate objective functions as well as introducing additional
information or prior assumptions about the sought signal like, for
example, smoothness. The speed of
convergence is important in many applications. For example, in microscopy a
real-time algorithm would have a clear advantage. The ability to
incorporate additional information may have an even greater effect: starting
from a vast improvement in the reconstruction speed and going all the
way to the very chances of successful reconstruction.

\section{Developing an efficient optimization method}
\label{sec:approx-phase1-devel-cont-optim}

Let us start by formulating the optimization problem for real
non-negative signals. The very common formulation is as follows
\begin{equation}
  \label{eq:approx-phase1-1}
  \begin{split}
    \min_{x} &\quad \||FPx| - r\|^{2} \,,   \\
    \mathrm{subject\ to} &\quad x\geq 0 \,, 
  \end{split}
\end{equation}
where $F$ denotes the Fourier transform operator (a matrix in the discrete case),
$P$ represents zero-padding (note that any support information can be
represented as zero-padding), and $r$ denotes the measured Fourier
magnitude. 

Of
course, there is an endless number of ways to choose the objective
function. The particular choice  may affect the convergence speed
and numerical stability. However, in our view, it is more important to
choose the objective function that properly reflects the underlying
physical phenomena. For example, the choice of
Equation~\eqref{eq:approx-phase1-1} is especially suitable when the
measured quantity is $r$ and the noise in the measurements has
a (close to) Gaussian distribution with zero mean.

As we have already seen, applying a Newton-type method to the problem
in Equation~\eqref{eq:approx-phase1-1} fails. However, we have not
introduced yet the additional information available in the
setup we consider here---the approximate Fourier phase. Let us consider one
pixel in the Fourier domain.  If the phase is known to lie within
a certain interval $[\alpha,\beta]$, the correct complex number must
belong to the arc $\hat{A}\hat{B}$ defined by $\alpha$ and $\beta$ as
depicted in Figure~\ref{fig:phase-interval}. Even with this additional
information, the problem still remains non-convex and cannot directly be solved
efficiently. However, if we perform a \emph{convex relaxation}. That
is, if we relax our requirements on the Fourier modulus and let the
complex number lie in the convex region $\mathcal{C}$ defined by
$\alpha$ and $\beta$ as shown in Figure~\ref{fig:convex-area}, the
problem now becomes convex.
\begin{figure}[H]
  \centering
  \subfloat[]{\label{fig:phase-interval}
    \includegraphics[width=0.4\textwidth]{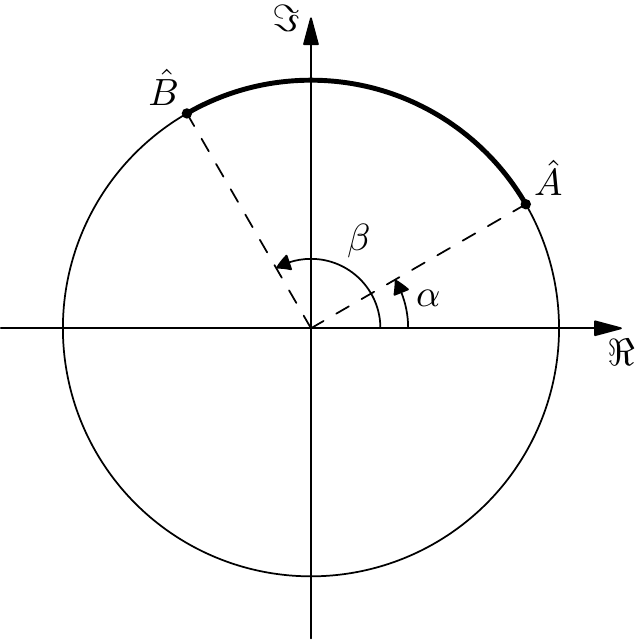}}%
  \qquad%
  \subfloat[]{\label{fig:convex-area}
    \includegraphics[width=0.4\textwidth]{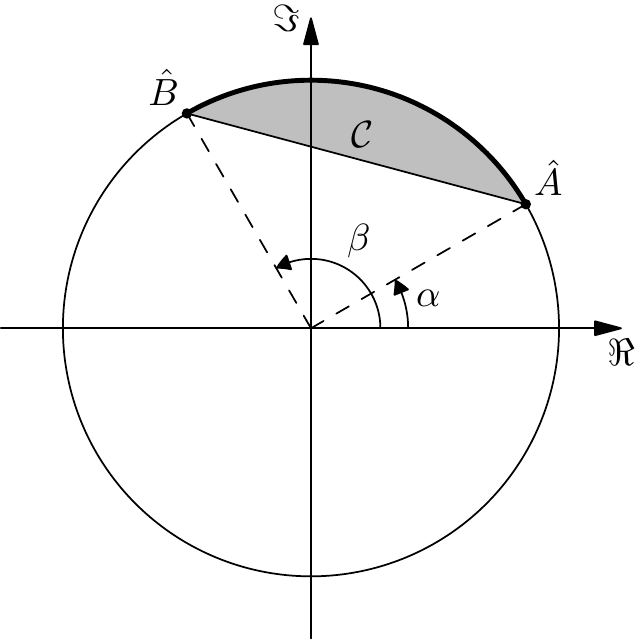}}
  \caption[Convex relaxation of phase bounds]{Convex relaxation: (a)
    the original phase uncertainty interval results in an arc of a
    circle of known radius; (b) after relaxation, the allowed region
    is convex.}
  \label{fig:convex}
\end{figure}

Note that $\mathcal{C}$ is the smallest convex region that contains
the original constraint (the arc $\hat{A}\hat{B}$). The formal
definition of the relaxed problem is as follows:
\begin{equation}
  \label{eq:approx-phase1-convex-func}
  \begin{split}
    \min &\quad d^{2}(FPx, \mathcal{C})\\
    \mathrm{subject\ to} &\quad x\geq 0
  \end{split}
\end{equation}
where $d(a,\mathcal{C})$ denotes the Euclidean distance from point $a$
to the convex set $\mathcal{C}$ (see
Definition~\ref{def:current-1}). From our experience,  a few
dozen iterations are sufficient to solve this convex problem (see
Figure~\ref{fig:stage1}). Of course, the solution does not usually match the
original image because both the phase and the magnitude may vary
significantly. However, we suggest the following method for the
solutions of the original problem.
\begin{description}
\item[Stage 1:] Starting with a random\footnote{In fact, $x^{0}$ can
    be chosen in any reasonable way, as our method is insensitive to
    the choice of the starting point.} $x^{0}$,
solve the problem defined by Equation~(\ref{eq:approx-phase1-convex-func}).
\item[Stage 2:] Use the solution obtained in Stage 1 (denoted $x^{1}$)
  as the starting point for the minimization problem that
  combines both the convex and non-convex parts, as defined below
  \begin{equation}
    \label{eq:approx-phase1-min-func}
    \begin{split}
      \min & \quad \||FPx|-r\|^{2} + d^{2}(FPx, \mathcal{C})\\
      \mathrm{subject\ to} &\quad x \geq 0
    \end{split}
  \end{equation}
\end{description}

More precisely, in our implementation we use the unconstrained
minimization formulation, that is , instead of
Equations~\eqref{eq:approx-phase1-convex-func} and
\eqref{eq:approx-phase1-min-func} we minimize the following
convex, and non-convex functionals, respectively.
\begin{align}
  \label{eq:approx-phase1-4}
  E_c(x) & = d^{2}(FPx, \mathcal{C}) + \|[x]_{-}\|^{2}\,, \\
  E(x) & = \||FPx|-r\|^{2} + \mu_{1}d^{2}(FPx, \mathcal{C}) + \mu_{2}\|[x]_{-}\|^{2}
  \,,  \label{eq:approx-phase1-5}
\end{align}
where $[x]_{-}$ is defined as follows
\begin{equation}
  \label{eq:approx-phase1-6}
  [x]_{-}=
  \begin{cases}
    0,& x\geq 0 \,. \\
    x,& x < 0 \,. 
  \end{cases}
\end{equation}
The weights $\mu_{1}$, and $\mu_{2}$ are usually set to unity.
Results of our simulations are presented in the next section.

\section{Simulations and Results}
\label{sec:approx-phase1-results}
Due to the high dimensionality of the problem (especially in the 3D
case) we  limit our choice to methods that do not require the
Hessian matrix or its approximation.  Hence, in our implementation we
use a modified version of the SESOP
algorithm~\shortcite{narkiss05sequential} and the L-BFGS
method~\shortcite{liu89limited}. Both algorithms demonstrate very similar
results. The main difference is that SESOP guarantees that
there are two Fourier transforms per iteration just like in the GS and
HIO methods. The L-BFGS method, on the other hand, cannot guarantee
that. However, in practice the average number of the Fourier
transforms per iteration is
very close to that of SESOP and HIO.

The method was tested across a variety of data. In this section we
present some of these examples. The first example is a molecule of
caffeine whose 3D model along with a 2D projection of its electron density, and
the corresponding Fourier magnitudes are shown in
Figure~\ref{fig:caffeine}. This information was obtained from a
PDB\footnote{See http://www.pdb.org for more
  information.} (protein database) file. In addition, we use a
``natural'' image which represents a class of images with rich texture
and tight support. Moreover, it may be easier to estimate the visual
reconstruction quality of such images. This image and its Fourier
modulus are given in Figure~\ref{fig:lena}.
\begin{figure}[H]
  \centering
  \subfloat[]{\label{fig:caffeine-pdb}%
    \includegraphics[height=.35\linewidth]{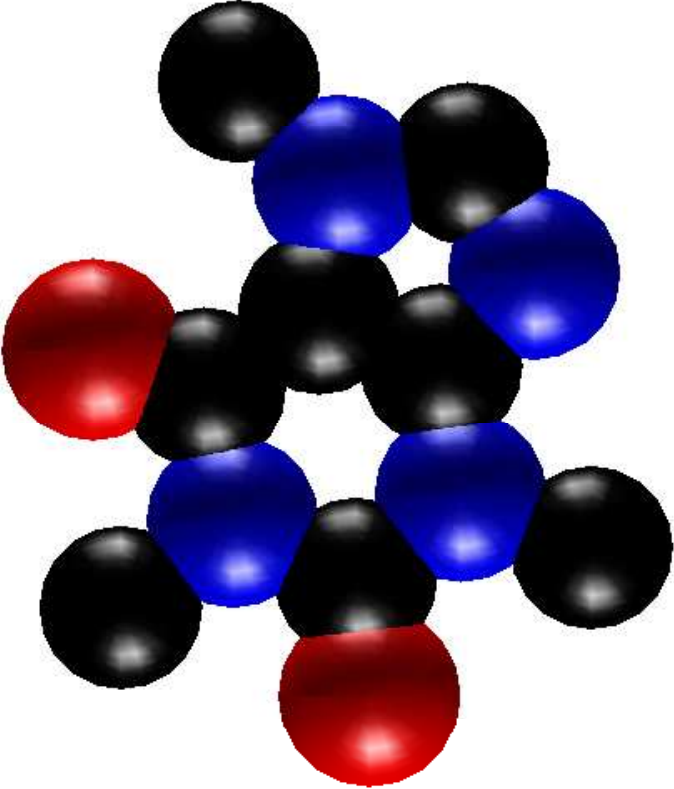}}%
  \hspace{.18\linewidth}
  \subfloat[]{\label{fig:caffeine-2drho}%
    \includegraphics[width=.35\linewidth]{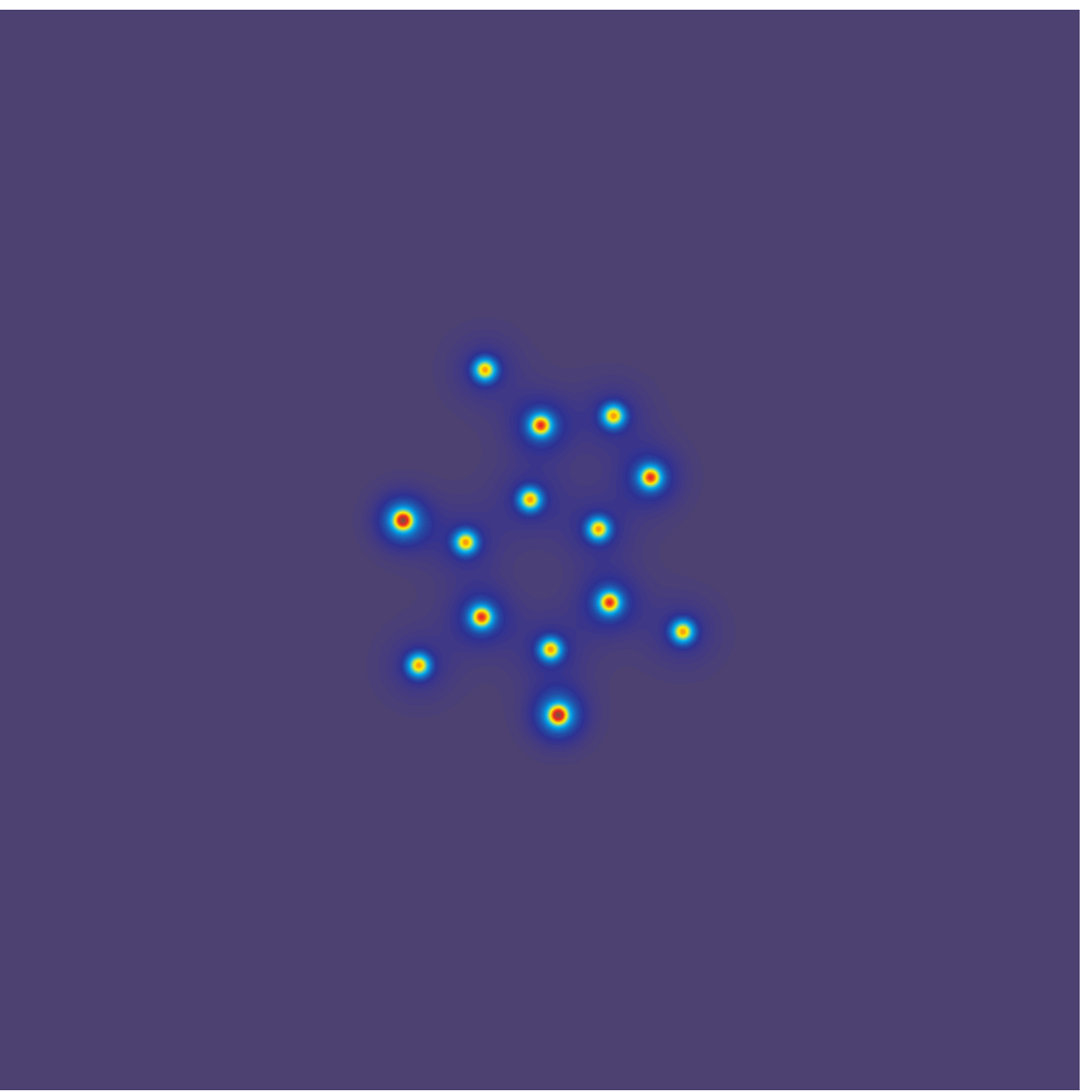}}\\
  \subfloat[]{\label{fig:caffeine-3dfourier}%
    \includegraphics[width=.35\linewidth]{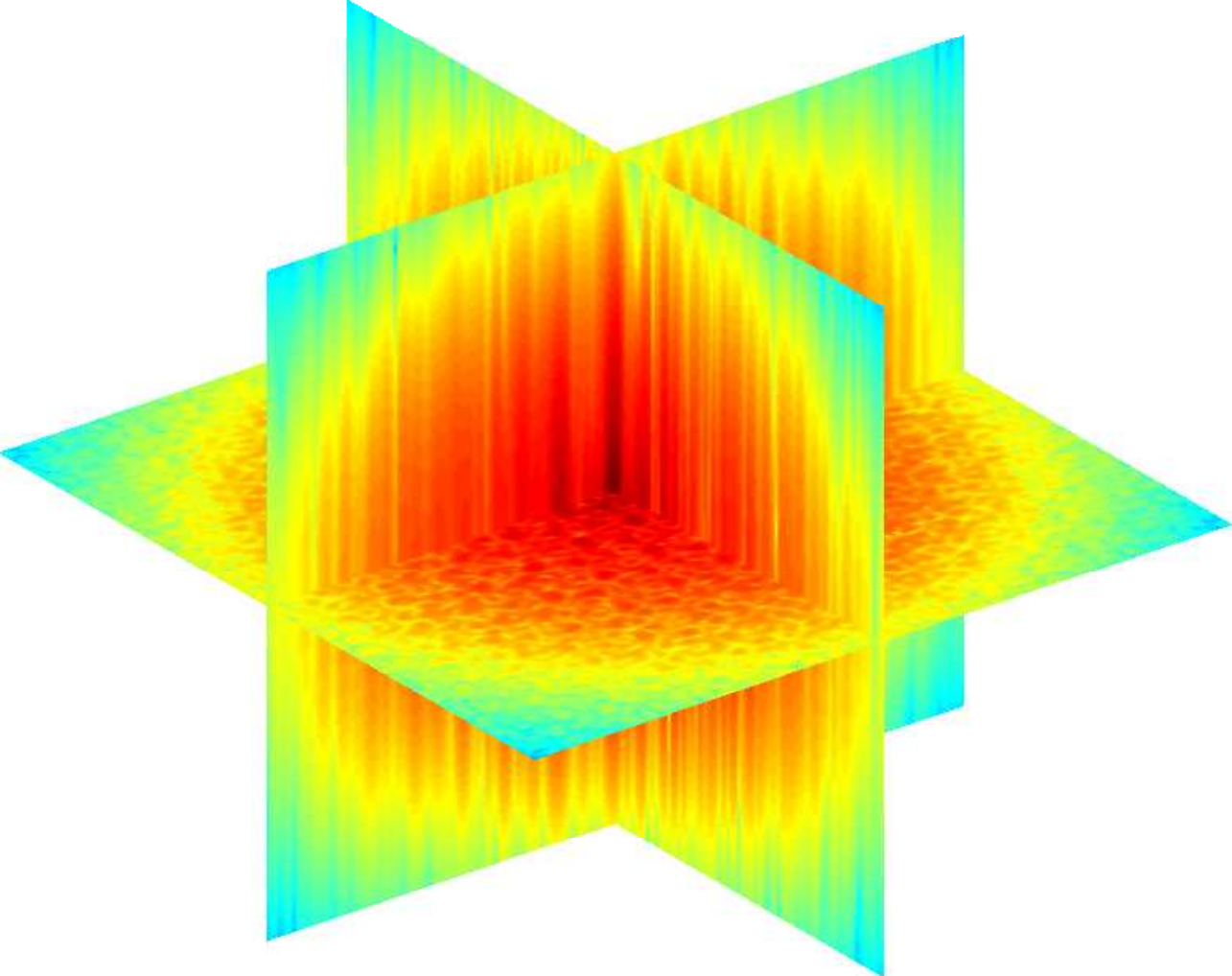}}%
  \hspace{.15\linewidth}
  \subfloat[]{\label{fig:caffeine-2dfourier}%
  \includegraphics[width=.35\linewidth]{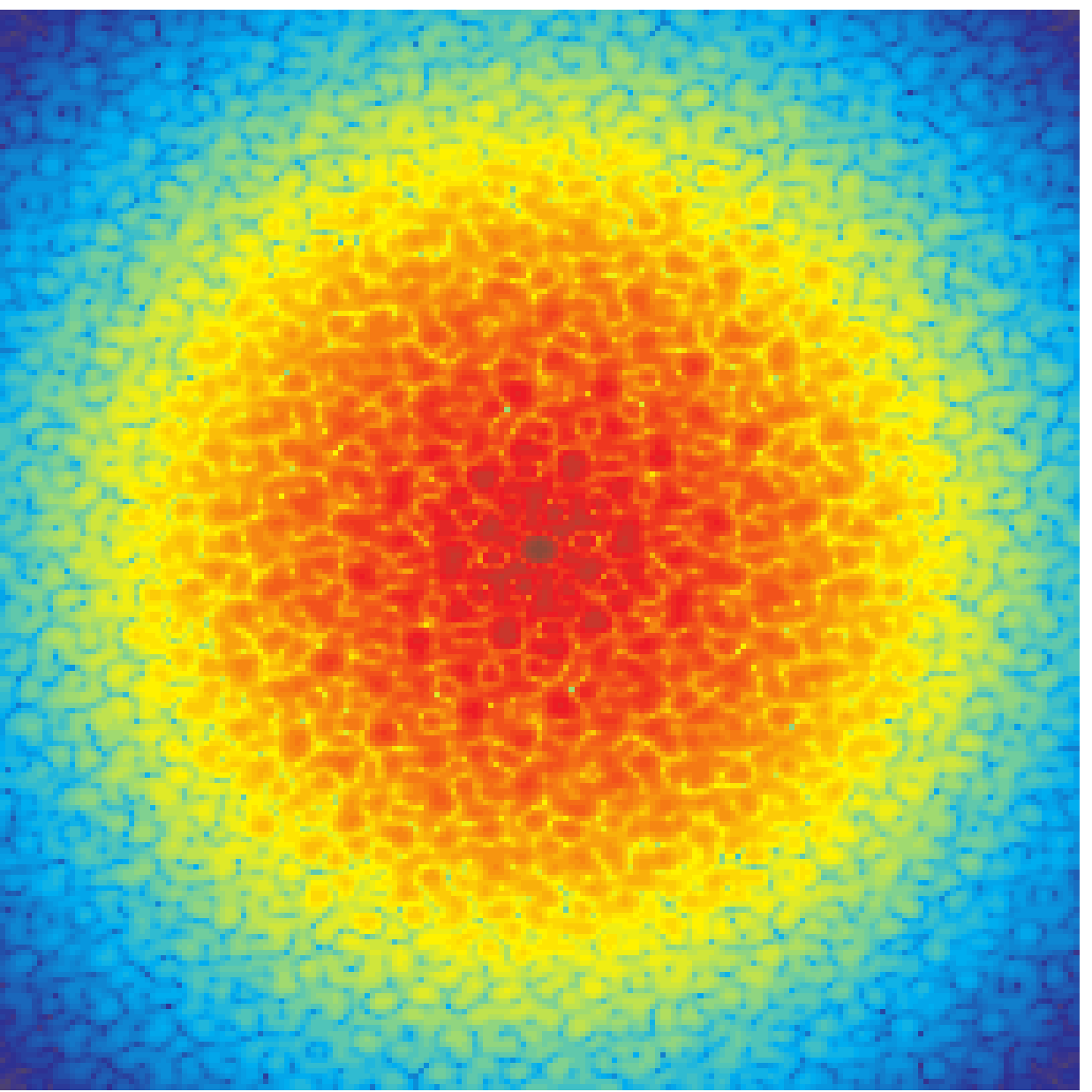}}
\caption[Caffeine molecule]{Caffeine molecule: (a) 3D model (PDB), (b) 2D projection of
  its electron density; and their corresponding Fourier magnitudes:
  (c) 3D, and (d) 2D.  }
  \label{fig:caffeine}
\end{figure}
\begin{figure}[H]
  \centering
  \subfloat[]{\label{fig:lena-image}%
    \includegraphics[width=.35\linewidth]{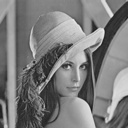}}
  \hspace{.15\linewidth}
  \subfloat[]{\label{fig:lena-fourier}%
    \includegraphics[width=.35\linewidth]{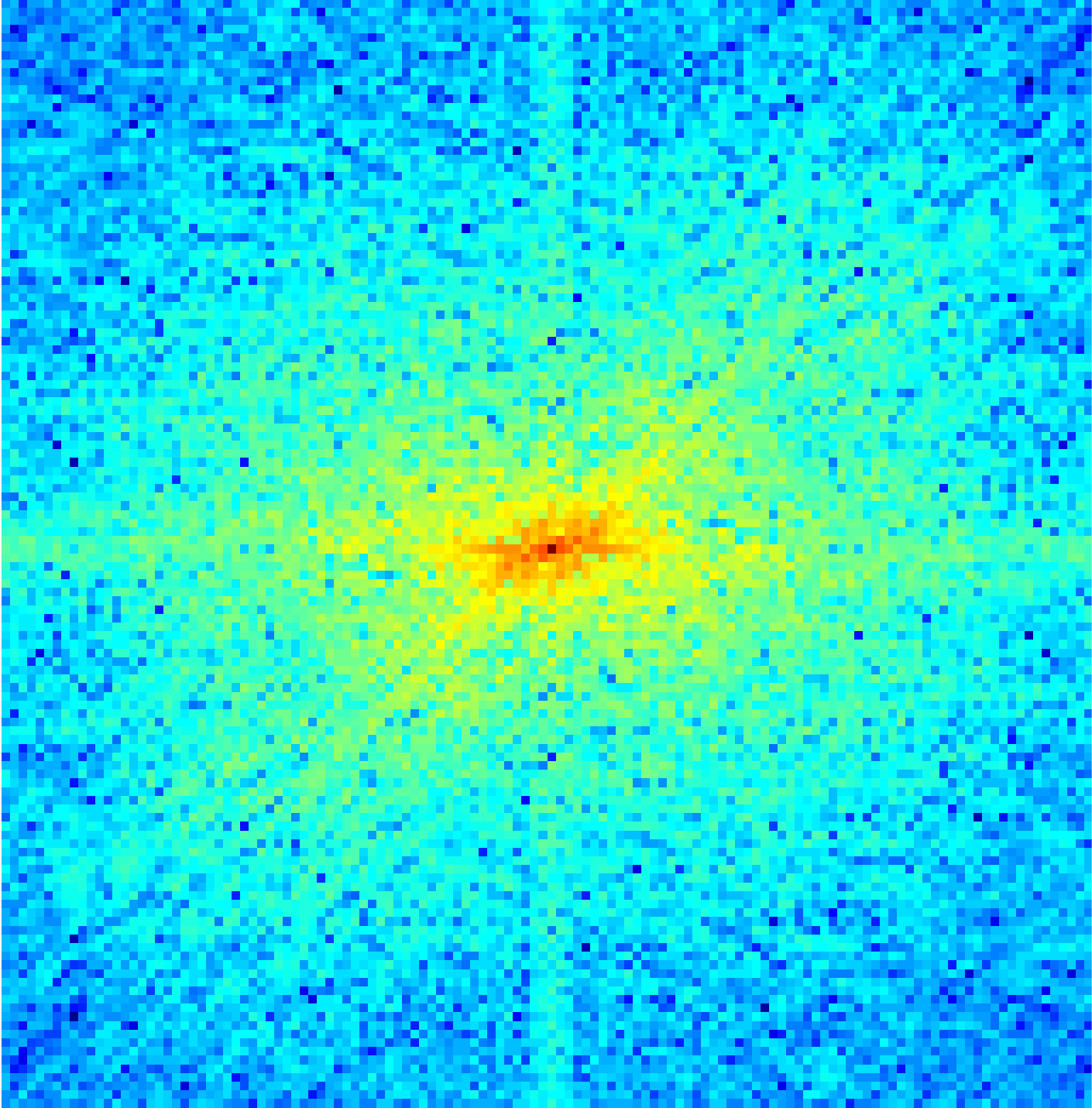}}
  \caption[A natural image (Lena)]{A natural image (Lena): (a) original image, and (b) its
    Fourier magnitude.}
\label{fig:lena}
\end{figure}
Note that we assume that a rectilinear sampling is available in the 3D
case. In practice, however, the sensors measure a two-dimensional
slice of the 3D volume. Provided that a sufficient number of such
slices were measured, an interpolation can be used to form a
rectilinear array of measurements~\shortcite{miao01approach}. However, the
slices can be incorporated directly into our minimization scheme. This
will be addressed in future work.

In our experiments we tested a phase uncertainty of up to 3 radians. The
bounds were chosen at random at every measured pixel (voxel) such that
the true phase had a uniform distribution inside the interval. The
starting point ($x^{0}$) was also chosen randomly. Of course, there is
an obvious way to make a more educated guess: by choosing the middle
of the uncertainty interval, however, this choice will generally violate the
object domain constraints. Fortunately, our experiments indicate that
the starting point has little influence on the reconstruction. In all cases the
reconstructed images obtained with our method were visually
indistinguishable from the original. Therefore, we only present the values
of $E_c(x)$ and $E(x)$ as defined in
Equations~\eqref{eq:approx-phase1-4} and~\eqref{eq:approx-phase1-5}
to visualize the progress of the first and the second stages,
respectively. The second stage is compared with the HIO algorithm for
which the error term is $E(x)$ without the phase bounds constraint,
that is,
\begin{equation}
  \label{eq:1}
  E_{\mathrm{HIO}}=\||FPx|-r\|^2 + \|[x]_{-}\|^{2} \ .
\end{equation}
The first experiment is as follows. First, we run 60 iterations of
Stage 1, that is, the convex problem defined
by~\eqref{eq:approx-phase1-convex-func}. The
progress of different images is shown in Figure~\ref{fig:stage1}. In
the second stage we run 200 iterations of our algorithm (SESOP)
starting at the solution obtained in the previous stage ($x^{1}$). To
compare the convergence rate with current methods, we ran twice
the HIO algorithm: once, starting at $x^{0}$, whereby the algorithm is
unaware of the additional phase information. Another run was started
at $x^{1}$, hence, the phase information was made (indirectly) available
to the algorithm. The results for 2D and 3D reconstruction of the
caffeine molecule are shown in Figures.~\ref{fig:stage2_caff2d}
and~\ref{fig:stage2_caff3d}, respectively. The results of the natural
image are shown in~\ref{fig:stage2_lena}.

It is evident from these results that our method significantly
outperforms the HIO algorithm is all experiments. Moreover, its
superiority for the ``Lena'' image is tremendous. 
\begin{figure}[H]
  \centering
  \subfloat[]{\label{fig:stage1}%
    \includegraphics[width=0.4\linewidth]{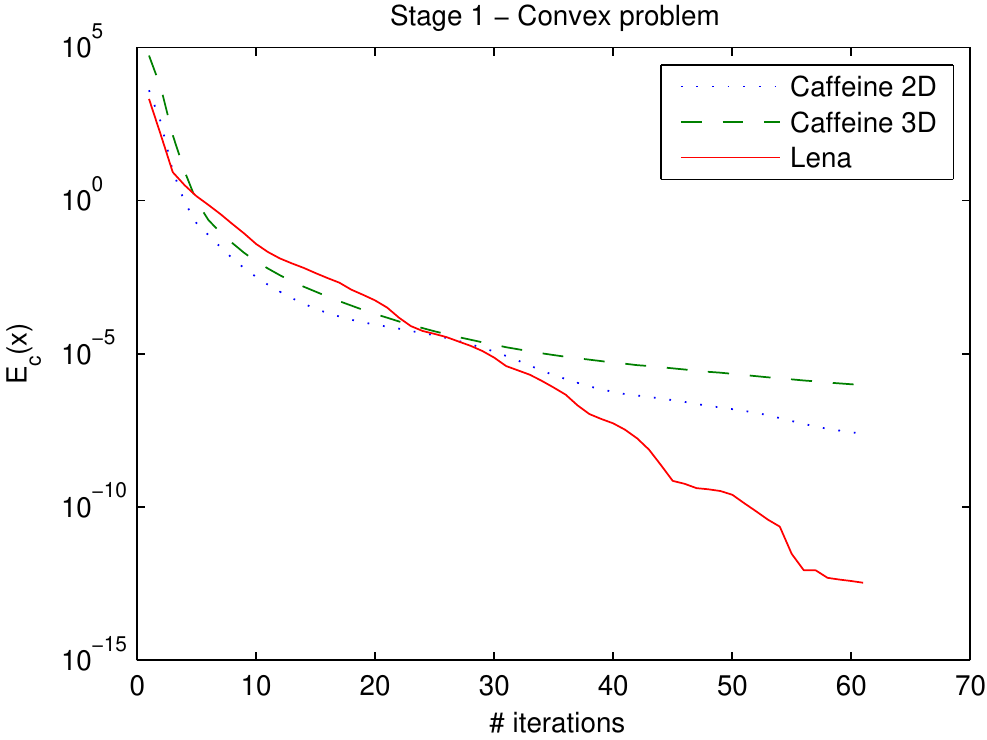}}%
  \hspace{.1\linewidth}
  \subfloat[]{\label{fig:stage2_caff2d}%
    \includegraphics[width=0.4\linewidth]{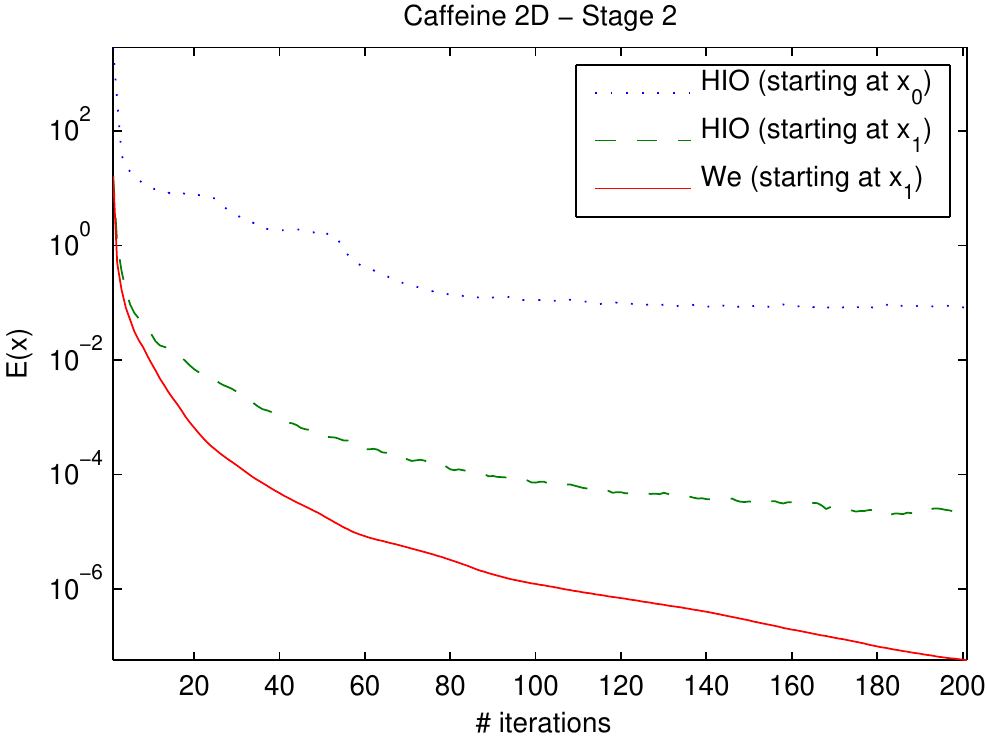}}\\
  \subfloat[]{\label{fig:stage2_caff3d}%
    \includegraphics[width=0.4\linewidth]{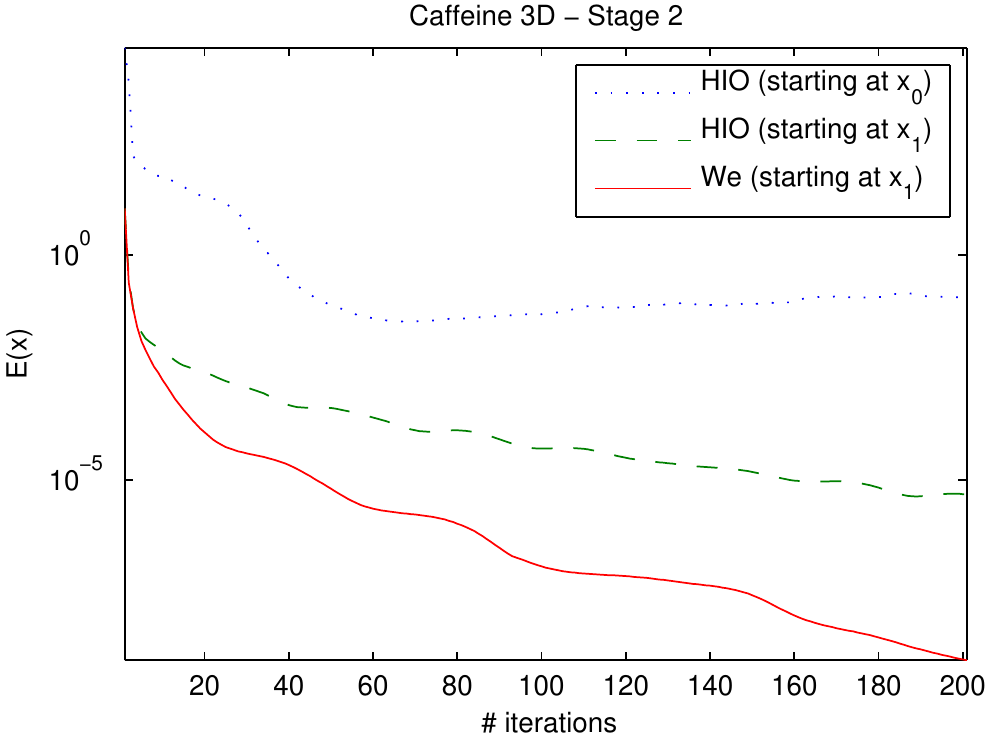}}
  \hspace{.1\linewidth}
  \subfloat[]{\label{fig:stage2_lena}%
    \includegraphics[width=0.4\linewidth]{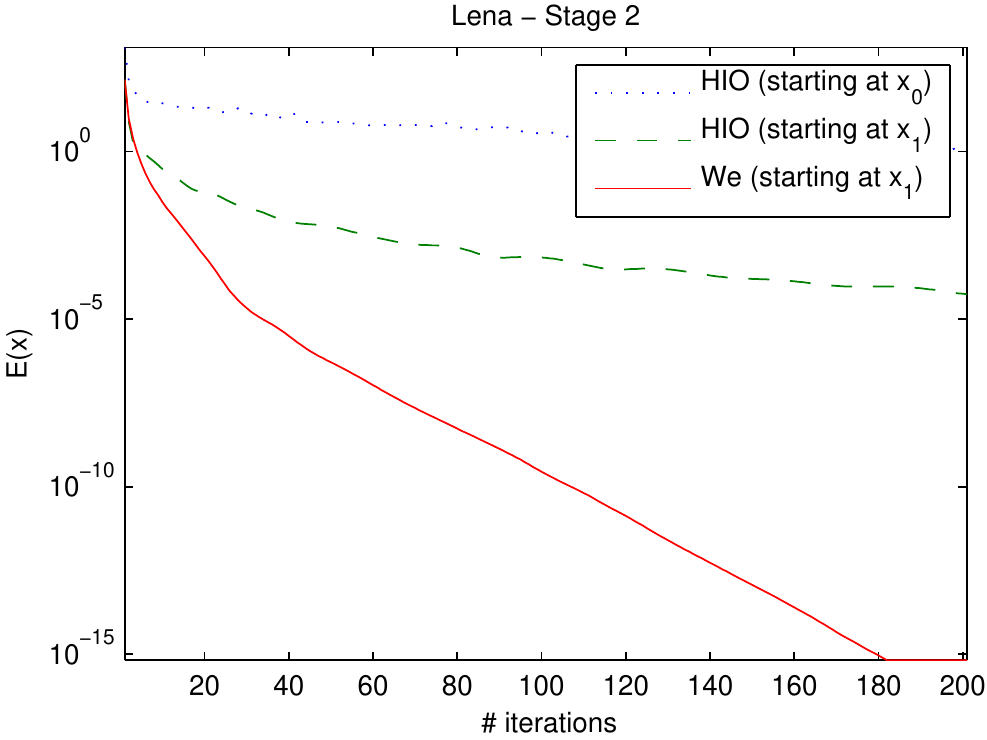}}
  \caption[Reconstruction with approximately known
  phase]{Reconstruction results: (a) stage 1 convergence rate, (b)
    stage 2 convergence rate of the 2D caffeine model, stage 2 of the
    3D caffeine model, and (d) stage 2 of Lena.}
  \label{fig:stage2}
\end{figure}

In addition to the examples shown in this chapter we have studied
a number of other examples. Based on our observations we conclude that
our algorithm demonstrates a significantly better convergence rate so
long as
the interval of phase uncertainty is not too close to $\pi$
radians.

Besides the fast convergence rate, our method allows us to incorporate
additional information about the image or the noise distribution in
the measurements. For example, in practice  we measure
$r^2$ and not $r$, and the noise distribution is Poissonian rather than
Gaussian. In this case the
maximum-likelihood criterion implies the functional for the error
measure in the Fourier domain to be as follows:
\begin{equation}
  \label{eq:approx-phase1-poisson-func}
  E_P(x) = \mathbf{1}^T \left(
    |FPx|^2-r^2\ln\left(|FPx|^2\right)
  \right).
\end{equation}
To demonstrate the performance of our method we contaminated the
measurements ($r^2$) of the ``Lena'' image with Poissonian noise such that
the signal to noise ratio (SNR) was 53.6 dB. The phase uncertainty was
3 radians as before. First, we started by solving the convex problem,
as defined by Equation~(\ref{eq:approx-phase1-convex-func}). The solution obtained
was then used as the starting point for the second stage of our
method using  the non-convex functional defined in
Equation~(\ref{eq:approx-phase1-poisson-func}). The HIO algorithm also started at
this solution. In addition to using the objective function that fits
the noise distribution we also included a regularization term in the
object space. In this example, we used the total variation
functional~\shortcite{rudin92nonlinear}
\begin{equation}
  \label{eq:total-variation}
 \mathrm{TV}(x) = \int |\nabla x|\,, 
\end{equation}
with a small weight. Total variation is a good prior for a broad range
of images, especially for images that are approximately piece-wise
constant. In our case, introduction of this regularization added about
3 dB to the reconstruction SNR. The reconstruction results are shown in
Figure~\ref{fig:reconstruction-noise}. Our method achieved the SNR of
30dB, while the HIO algorithm produced a significantly inferior
result. Its SNR was only 16.7dB.
\begin{figure}[t]
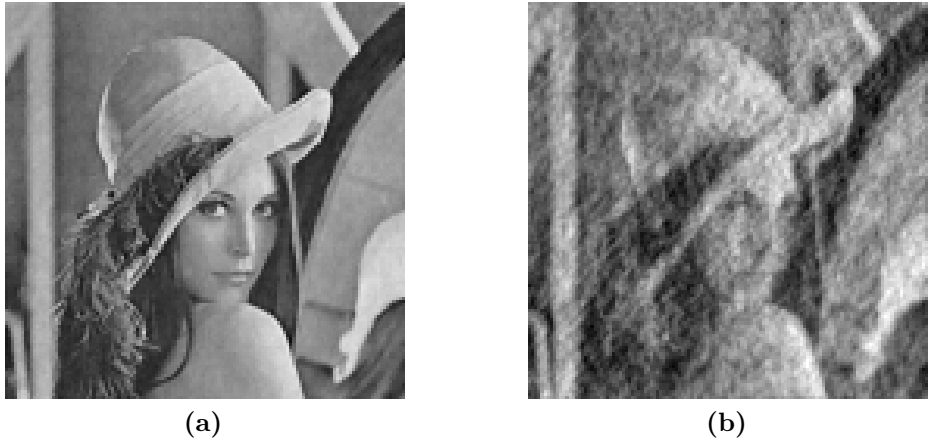

  \centering
  \subfloat[]{\label{fig:lena-reconstruction-us}%
    \includegraphics[width=0.35\linewidth]%
    {approx_phase/lena_reconstruction_noisePoisson_30e5photons.png}}
  \hspace{.1\linewidth}
  \subfloat[]{\label{fig:lena-reconstruction-hio}%
    \includegraphics[width=0.35\linewidth]%
    {approx_phase/lena_HIOreconstruction_noisePoisson_30e5photons.png}}
\caption[Reconstruction from noisy data]{Reconstruction from noisy
  data: (a) our method (30dB), and (b) HIO (16.7dB).}
\label{fig:reconstruction-noise}
\end{figure}
Note that the SNR values given above were obtained using different
measures. The  SNR in the measurements was
obtained with respect to the measured intensity, that is,
\emph{squared} Fourier magnitude, while the SNR values reported for the
reconstruction results was measured with respect to the image
magnitude. More correct would be to measure it with respect to the
image intensity---in this case the reported SNR should be multiplied by
approximately 2. Hence, our reconstruction provides SNR that is better
than the SNR of the measurement. This improvement was achieved with
the aid of the regularization term (TV)---something that is impossible
to incorporate into methods based on alternating projections. 

It is also worthwhile to note that
Poissonian noise of small intensity can be well approximated by
Gaussian noise. However, if one uses the objective function implied by
the Gaussian noise in $r^2$, that is,
\begin{equation}
  \label{eq:3}
  \||FPx|^2 - r^2\|^2 \,, 
\end{equation}
the reconstruction results are a few dB's worse than those we got with
the proper choice of the objective function. 

\section{Concluding remarks}
\label{sec:approx-phase1-conclusions}
In this chapter we presented the first successful method based on
continuous optimization for the phase retrieval problem for
non-negative objects whose phase is known to lie within a certain
interval.  It is important to note, however, that straightforward
incorporation of this information does not lead automatically to a
successful method of reconstruction. Therefore, we designed a
two-stage algorithm.  At the first stage we perform convex relaxation
and solve the resulting convex problem. At the second stage the
original objective function is re-introduced into the scheme and the
reconstruction continues from the solution of the first stage.  The
algorithm demonstrates a significantly better convergence rate compared
to current reconstruction methods. Moreover, in contrast to these
methods, our technique is flexible enough to allow incorporation of
additional information. Practical examples of such information include
measurement noise distribution and  knowledge that the sought image
is piece-wise smooth.

It is worthwhile to discuss possible sources of the approximate
Fourier phase information. Probably, most obvious way to obtain it is
to introduce into the scene an object whose Fourier transform is
known. In this case the recorded data is the modulus of a sum of two
complex numbers, one of which is known, and we are actually required
to perform some sort of holographic/interferometric
reconstruction. However, unlike in the classical
holograph/interferometry, the known object must not be known
precisely. This is a very important property, because calibrating and
maintaining an interferometer so as to keep the reference beam with
high precision during a series of experiments is a time-consuming and
expensive procedure. We shall return to the holographic setup in
Chapters~\ref{cha:phase-retr-holography}
and~\ref{cha:design-bound-phase}.
But there are other sources
of the phase information that do not require a physical modification
of the experiments. For example, as explained
in~\shortcite[Section~5.2]{goodman05introduction}, an ideal lens can perform the
Fourier transform. This property can be used to test lenses by
illuminating them with a known beam and measuring the resulting
intensity in the Fourier plane. By reconstructing the illuminating
beam from this intensity and comparing it with the actual beam, one
can estimate the quality of the lens. Note that the imperfections in
the lens production affect directly the Fourier phase by diverting it
from the known expected values. Another possible way to obtain a phase
estimate is to use current reconstruction methods such as HIO in 
situations where they are known to be able to reconstruct the sought
signal. If a successful reconstruction is guaranteed, at some point
the phase must be close enough to the true value, and if we manage to
identify such a moment, then we can switch to our algorithm and get a faster
reconstruction. As a last resort, the whole interval of $2\pi$ radians
can be split into several, small enough, intervals and each one can be
tested separately. In this case, once a correct partitioning is found
our method will recover the image. This approach is closely related to
combinatorial optimization, as we have to choose a correct combination
of the phase intervals. Thus it is very important that an assessment
of the current hypothesis (a particular choice of the phase intervals)
can be performed very fast.


\chapter{Approximate Fourier phase data for arbitrary
  signals---why does it help?\footnotemark{}}
\label{cha:appr-four-phase-explanation}

\footnotetext{The material presented in this chapter was published in \shortcite{osherovich11approximate}.}

In the previous chapter we presented an algorithm for fast phase
retrieval in situations where a rough Fourier phase estimate is
available. Our experiments demonstrated excellent convergence rates for
phase uncertainty intervals of up to $\pi$ radians. This algorithm is
potentially important, as it is the first successful application of
continuous optimization technique to the phase retrieval problem.
However, these  results were purely empirical---no explanation
has been provided so far about why and when this method is expected to
solve the problem . Moreover, all these results were obtained for
real-valued non-negative signals. When we tried to apply the method to
complex-valued signals the results were less encouraging. The reason
for this gap lies in our experimental setup: the phase uncertainty intervals
were generated independently for each pixel in the Fourier
space, ignoring the conjugate symmetry exhibited by real-valued
signals in the Fourier domain. Hence, the \textit{effective} phase
uncertainty 
in our previous experiments was approximately $\pi/2$ radians and not
$\pi$. After we refined the allowed phase uncertainty interval, the
results in both the real-valued and complex-valued cases returned to be
excellent. This chapter provides an explanation for this phenomenon.

Thus, in this section we continue to evaluate the importance of
approximate phase information in the phase retrieval problem. However,
now we address more theoretical questions, the main one being
when and why a rough phase estimate (up to $\pi/2$ radians) can lead
to a guaranteed reconstruction by the algorithm presented in the
previous section. Our main discovery is that the above phase
uncertainty limit practically guarantees a successful reconstruction
by \textit{any} reasonable algorithm for the reason described
below. This is 
an  important property as it allows development of very
efficient algorithms whose reconstruction time is orders of magnitude
faster than that of the current method of choice---the Hybrid
Input-Output (HIO) algorithm. We have already presented one such
algorithm in the previous section, however, we believe that its
results can be further improved by using more sophisticated interior
point methods. Using the new algorithms we were able to reconstruct
signals that cannot be successfully reconstructed by HIO, namely,
complex-valued signals without tight support information.

Additionally, we provide a heuristic explanation of why continuous
optimization methods such as gradient descent or Newton-type algorithms
fail when applied to the phase retrieval problem and how the
approximate phase information can remedy this situation. We actually
show the reason for the failure for a very large family of optimization
methods---monotone line-search optimization algorithms.
Notwithstanding this failure in the general case, we argue that a
rough phase estimate leads to an important property: local minima of a
functional associated with the phase retrieval problem are likely to
be global minima. This is the reason for our previous claim: chances
are that \textit{any} algorithm capable of finding a local minimum
will successfully reconstruct the image.

Additional numerical simulations are provided in
Section~\ref{sec:approx-phase2-results} to demonstrate the validity of our analysis
and success of our reconstruction method.

\section{Optimization methods: the problem and the remedy}
\label{sec:optim-meth-probl}
Let  $z$ represent the true (unknown) signal, with
$\hat{z}$ being its Fourier transform, in accordance with our
notational conventions. The current estimate (obtained after $k$
iterations) is denoted by $x^{k}$. With this notation, the
classical phase retrieval problem reads: find $x$ such that
$|Fx|=|\hat{z}|$ subject to certain constraints in the object
domain. When the Fourier phase is known to lie within a given interval
$[\alpha,\beta]$ ($\alpha$, and $\beta$ are vectors of appropriate
size) the problem to be solved is: find $x$ such that $|Fx| =
|\hat{z}|$ and $\alpha \leq \angle(Fx) \leq \beta$ subject to certain
constraints in the object domain. Note that such $x$ is not
necessarily equal to $z$; even when the Fourier domain data is
sufficiently ``oversampled'', the reconstruction in the classical
phase retrieval problem is not
unique~\shortcite{hayes82reconstruction}, and it may not be unique
even when the Fourier phase estimates are available. 

Let us now try to understand why the classical Newton-type and
gradient descent methods fail for the phase retrieval
problem. Actually, we can address a very wide family of optimization
methods that includes these two methods---monotone line-search
algorithms. Each iteration of these algorithms has a common 3-step
template:
\begin{description}
\item[Step 1:] Find a descent direction $p^{k}$.
\item[Step 2:] Along that direction find a step-length $\Delta^{k}$
  that sufficiently decreases the objective function value.
\item[Step 3:] Move to the new location: $x^{k+1} = x^{k}+\Delta^{k}p^{k}$.
\end{description}
Descent direction is defined as a vector whose
inner product with the gradient is negative (for obvious reasons). This
guarantees that there always exists a step along that direction that decreases the
objective function value.

To be specific we choose the most popular objective
function for the discrepancy minimization in the Fourier domain:
\begin{equation}
  \label{eq:approx-phase2-1}
  f(x) = \frac{1}{2}
  \left\|
    |Fx| - |\hat{z}|
  \right\|^{2}\ .
\end{equation}
The gradient of $f(x)$ is given by (see
Chapter~\ref{cha:found-optim-meth}):
\begin{equation}
  \label{eq:approx-phase2-2}
  \nabla f(x) = 
  x - F^{-1}
  \left(
    |\hat{z}|\circ\frac{Fx}{|Fx|}
  \right)\,.
\end{equation}
As before, $a\circ b$ and $\frac{a}{b}$ denote the Hadamard (element-wise)
product and quotient, respectively. It is interesting to note that the
signal $F^{-1}\left(|\hat{z}|\circ\frac{Fx}{|Fx|}\right)$ has a clear
physical meaning: it is obtained from $x$ by substituting the (wrong)
Fourier magnitude $|Fx|$ with the correct one $|\hat{z}|$. Thus, it is
nothing but the signal denoted by $y$ in
Figure~\ref{fig:current-fienup-alg}. However, the main observation
about the gradient is that $\nabla f(x)=0$ if and only if
$|Fx|=|\hat{z}|$, that is, if and only if $x$ is a solution.  Of course,
this is true only if there are no additional constraints. In practice,
however, the optimization is done while respecting certain restrictions on
$x$. The constraints are often implemented as penalty functions that
augment the original functional, and the augmented gradient may
vanish when $|Fx|\not=|\hat{z}|$. Also, we may find ourselves in a
situation where no feasible descent direction exists, if the
constraints are kept as ideal barriers. Such situations
usually 
indicate a local minimum and make further progress by such standard
optimization methods impossible.  In this discussion we are being
deliberately vague about the exact nature of the object domain
constraints and enforcement thereof in the optimization process. We
only assume that imposing these constraints on an image estimate will
take that estimate to be closer\footnote{``Closer'' here refers to the
Euclidean distance, however it can be readily generalized to other
metrics.} to the true signal---a natural
assumption for monotone optimization.

Let us now consider a single element in the Fourier
domain. Using our notation, the true value is $\hat{z}_{i}$, whose
magnitude $|\hat{z}_{i}|$ is known and whose phase $\theta_{i}$ is
unknown. We distinguish three possible scenarios where $|\hat{x}_{i}|
\not= \hat{z}_{i}$. First, the
Fourier magnitude of the current estimate is smaller than
$|\hat{z}_{i}|$ and the phase
error $\alpha_{i}$ is greater than $\pi/2$ radians. Second, the current
Fourier magnitude is greater than $|\hat{z}_{i}|$ (phase error is
unimportant in this case). Finally, the third possibility: the
current estimated Fourier magnitude is less than $|\hat{z}_{i}|$ and
the phase 
error $\alpha_{i}$ is less than $\pi/2$ radians. These scenarios are
illustrated in Figures~\ref{fig:wrong-direction},
\ref{fig:large-magnitude}, and~\ref{fig:small-magnitude}, respectively.
\begin{figure}[H]
  \centering
  \subfloat[]{
    \label{fig:wrong-direction}
    \includegraphics[width=0.45\textwidth{}]{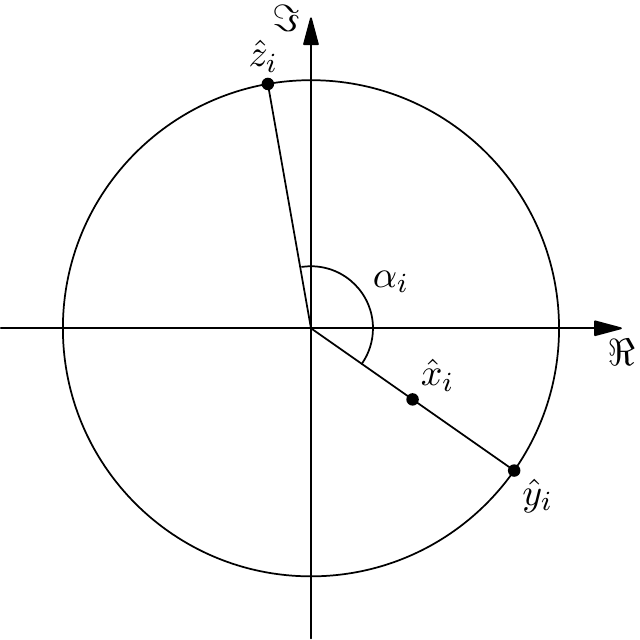}
  }
  \quad{}
  \subfloat[]{%
    \label{fig:large-magnitude}
    \includegraphics[width=0.45\textwidth]{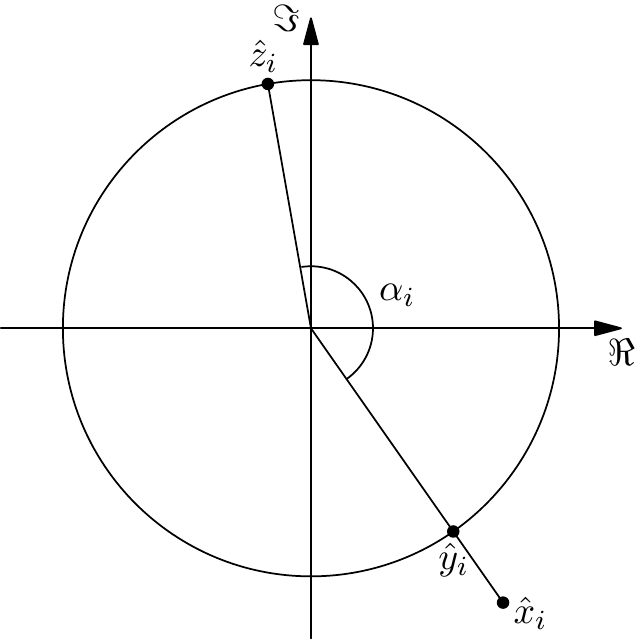}}
  \quad{}
  \subfloat[Small magnitude: good direction.]{%
    \label{fig:small-magnitude}
    \includegraphics[width=0.45\textwidth]{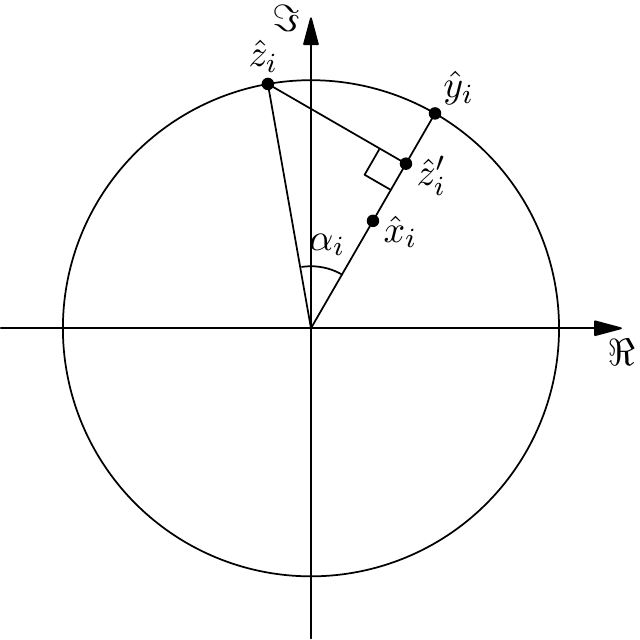}}
  \caption[Three possible scenarios in the Fourier domain]{Three
    possible scenarios in the Fourier domain: (a) small magnitude =
    bad direction, (b) large magnitude = good direction, and (c) small
    magnitude = good direction.}
  \label{fig:possible-scenarios}
\end{figure}
Recall also that, by Equation~\eqref{eq:approx-phase2-2}, we know that the gradient
descent direction is always taking us from $\hat{x}$ towards $\hat{y}$
(see Equation~\eqref{eq:approx-phase2-2} and Figure~\ref{fig:possible-scenarios}).
Let us consider the first case. In the Fourier domain, when we move
from $\hat{x}_{i}$ towards $\hat{y}_{i}$ we are actually moving away
from the correct value $\hat{z}_{i}$ (see
Figure~\ref{fig:wrong-direction}).  Therefore, due to the unitarity of
$F$, $x$ is pulled farther away from $z$. On the other hand, object
domain constraints shall pull us towards the correct value, as assumed
before. Hence, the two forces may cancel each other, resulting thereby
in the stagnation of the algorithm. In numerical tests this stagnation
is observed in all but very tiny problems. Worse still, it
happens at very early stages, long before the reconstruction algorithm
gets close to the correct value. Hence, the results are usually
worthless.  In the second case, where the magnitude of the current
estimate is greater than the correct value $\hat{z}_{i}$, moving from
$\hat{x}_{i}$ towards $\hat{y}_{i}$ will necessarily bring us closer
to the correct value $\hat{z}_{i}$ (see
Figure~\ref{fig:large-magnitude}). The last case is more
interesting. If the estimated Fourier magnitude is sufficiently
smaller than the correct one, and the phase error $\alpha_{i}$ is less
than $\pi/2$ radians, then moving along the gradient descent
direction, that is, from $\hat{x}_{i}$ towards $\hat{y}_{i}$, will get us
closer to $\hat{z}_{i}$, so long as we do not pass the point
$\hat{z}_{i}'$ which is the projection of $\hat{z}_{i}$ onto the line
segment $[0, \hat{y}_{i}]$ as shown in
Figure~\ref{fig:small-magnitude}.  Note that the fact that moving from
$\hat{z}'_{i}$ towards $\hat{y}_{i}$ takes us farther away from the
correct value $\hat{z}_{i}$ keeps us from claiming that any
optimization algorithm will necessarily converge to a
solution. However, in the discussion that follows we prove that the
situation where an optimization algorithm gets stuck at some $x$ such
that all $\hat{x}_{i}$ lie between $\hat{z}'_{i}$ and $\hat{y}_{i}$ is
impossible. In fact, we argue below that in the situation where the Fourier
phase errors are limited by $\pi/2$ radians any local minimum is
likely to be a global one. That is, any algorithm capable of finding a
(constrained) local minimum can be expected to solve the phase
retrieval problem in this case. To give a reasoning behind this claim
we must be more specific about the constraints in the object
domain. As will be evident from the argument below, our assumptions
are very general and fit all commonly encountered cases.

Let us
consider the most frequent object domain constraint: limited support
information.
\begin{equation}
  \label{eq:phase-approx2-3}
  x_{o} = 0, \quad \forall o\in\mathcal{O}\ ,
\end{equation}
where $\mathcal{O}$ denotes the set of off-support locations, where
$z$ is known to be zero. Note that zero-padding is a special case of
such support information. In certain situations the sought signal can
be assumed to be real non-negative. In these situations the above constraint
can be extended by the non-negativity requirement:
\begin{equation}
  \label{eq:phase-approx2-4}
  x_{s} \geq 0, \quad \forall s\not\in\mathcal{O}\ .
\end{equation}
What is important for our discussion is that in both cases the set of
all feasible signals ($\mathcal{Z}$) is a convex set that contains a
proper cone $\mathcal{K}$. That is, if $x$ and $y$ are feasible, then
$\lambda x + \gamma y$ is also feasible for any non-negative scalars
$\lambda, \gamma$. $\mathcal{Z}$ contains 
$\mathcal{K}$ but not necessarily equals it, but this is
unimportant in the following discussion. Let us now consider the
minimization of the objective function from Equation~\eqref{eq:approx-phase2-1}
subject to the following two conditions: first, the set of object
domain constraints is convex and contains the proper cone
$\mathcal{K}$; second, the phase error (of all elements) in the
Fourier domain is bounded by $\pi/2$, that is, $|\angle(\hat{x}) -
\angle(\hat{z})| \leq \pi/2-\epsilon$ for some small positive
$\epsilon$.  Obviously, the Fourier domain constraints define a convex
set which is a proper cone too (a Cartesian product of proper
cones). Hence, the optimization in this case is done over a convex
set. Assume now that the algorithm converges at some local minimum
$x$ which is not a solution (global minimum). Following a basic
theorem from constrained-optimization 
theory we conclude that the following inequality must hold for any
feasible point $w$ (see, for example,~\shortcite{bertsekas99nonlinear}):
\begin{equation}
  \label{eq:approx-phase2-5}
  \langle \nabla f(x), (w-x)\rangle \geq 0\ .
\end{equation}
In words, that means that no feasible descent direction exists at the
point $x$. Let us consider a (small) subset of all feasible points: $w
= \lambda z + \gamma x$, where $\lambda,\gamma\geq 0$ (note that $x$
and $z$ are feasible by definition, hence, $x,z\in\mathcal{K}$ which
implies that $w\in \mathcal {K}$). The choice of this subset of
feasible directions is stipulated by the fact that the phase error of
$w$ in all frequencies is less than or equal to the phase error of
$x$ (it is strictly smaller when $\lambda\not=0$). With $w$ as above,
Equation~\eqref{eq:approx-phase2-5} becomes
\begin{equation}
  \label{eq:approx-phase2-6}
  \langle \nabla f(x), (w-x) \rangle =
  \lambda\langle\nabla f(x), z\rangle +
  (\gamma - 1) \langle\nabla f(x),x\rangle\ .
\end{equation}
If $\langle\nabla f(x),x \rangle < 0$, we can set $\lambda=0$,
$\gamma=2$ and get a feasible descent direction (this is, actually,
scaling up of $x$). If  $\langle\nabla f(x),x \rangle > 0$ we can set
$\lambda =0$, $\gamma=0.5$ and, again, get a feasible descent
direction (scaling down of $x$). Hence, any local minimum must
obey $\langle\nabla f(x),x \rangle = 0$ (we shall call such $x$
\textit{optimally scaled}). In this situation, the sign of the
left-hand size of 
Equation~\eqref{eq:approx-phase2-5} depends solely on the sign of the inner product
$\langle\nabla f(x), z\rangle$. It is more convenient to consider the
above inner product in the Fourier domain. Due to unitarity of $F$ we have:
\begin{align}
  \label{eq:approx-phase2-7}
  \langle\nabla f(x), z\rangle
  & = \langle F\nabla f(x), Fz\rangle \\
  & = \sum_{i} (|\hat{x}_{i}|
  -|\hat{z}_{i}|)\,|\hat{z}_{i}|\cos\alpha_{i}\label{eq:approx-phase2-8}
  \ .
\end{align}
Recall that the above formula is considered in
the context of an optimally scaled $x$, that is:
\begin{align}
  \label{eq:approx-phase2-9}
  0 = \langle   \nabla f(x), x\rangle
  & = \langle F \nabla f(x), Fx\rangle \\
  & = \sum_{i} (|\hat{x}_{i}|
  -|\hat{z}_{i}|)\,|\hat{x}_{i}|\label{eq:approx-phase2-10}
  \ .
\end{align}
For our following discussion, it is convenient to consider
Equations~\eqref{eq:approx-phase2-8} and \eqref{eq:approx-phase2-10} as \textit{weighted sums} of
$|\hat{x_{i}}|-|\hat{z}_{i}|$. Thus, for example, it becomes obvious
that an $x$ that belongs to the convex region $\mathcal{C}$ (see
Figure~\ref{fig:convex-area}), as was required in the 
algorithm in Chapter~\ref{cha:appr-four-phase-first}, cannot be a
local minimum unless $|\hat{x}| = |\hat{z}|$ (which makes it a global
one), because this is the only way to get zero by summing non-positive
numbers associated with strictly positive weights. This explains
the success of the algorithm. However, even without the
restrictions on $|\hat{x}|$ used in the original algorithm, we can
expect the sum in Equation~\eqref{eq:approx-phase2-8} to be negative.  To
understand why let 
us split it into three disjoint sets of indices,
\begin{multline}
  \label{eq:approx-phase2-11}
  \sum_{i} (|\hat{x}_{i}|
  -|\hat{z}_{i}|)\,|\hat{z}_{i}|\cos\alpha_{i} =
  \sum_{i\in\mathcal{B}}(|\hat{x}_{i}|
  -|\hat{z}_{i}|)\,|\hat{z}_{i}|\cos\alpha_{i} +\\
  \sum_{i\in\mathcal{S}}(|\hat{x}_{i}|
  -|\hat{z}_{i}|)\,|\hat{z}_{i}|\cos\alpha_{i} +
  \sum_{i\in\mathcal{A}}(|\hat{x}_{i}|
  -|\hat{z}_{i}|)\,|\hat{z}_{i}|\cos\alpha_{i}
  \ ,
\end{multline}
where $\mathcal{B} = \{i \ |\ |\hat{x}_{i}| > |\hat{z}_{i}|\}$,
$\mathcal{S} =\{i \ |\ |\hat{x}_{i}| \leq |\hat{z}_{i}|\cos\alpha_{i}\}$,
and $\mathcal{A} =\{i \ |\ |\hat{z}_{i}|\cos\alpha_{i} < |\hat{x}_{i}|
<|\hat{z}_{i}|\}$. With this subdivision it is easy to compare the sum
in Equation~\eqref{eq:approx-phase2-8} (for which a negative sign indicates
the presence of a
feasible descent direction) and the sum in Equation~\eqref{eq:approx-phase2-10}
(which is zero). The weight in these weighted sums 
changes from 
$|\hat{x}_{i}|$ in~\eqref{eq:approx-phase2-10} to $|\hat{z}_{i}|\cos\alpha_{i}$ in
\eqref{eq:approx-phase2-8}. Hence, for 
$i\in\mathcal B$, $|\hat{x}_{i}|-|\hat{z}_{i}|$ is positive and its
weight has decreased, thus, pulling the total sum towards a negative
value. For $i\in\mathcal S$, $|\hat{x}_{i}|-|\hat{z}_{i}|$ is
negative, and its weight has increased, thus again contributing to
the negativeness of the result. The only subset of indices that
increases the sum is $i\in\mathcal A$. From our experience it is very
unlikely to encounter a situation where the contribution of
$i\in\mathcal A$ outweighs the joint contribution of $i\in\mathcal B$
and $i\in\mathcal S$. Hence it is very unlikely to get stuck in a local
minimum with no descent direction. Note also that if the phase error
of all frequencies of $x$ is strictly less than $\pi/2$ radians then the
sum in Equation~\eqref{eq:approx-phase2-8} must be zero for $x$ to be a local
minimum,  because if the direction towards $w$ is an ascent direction,
one can simply reverse it to get a descent direction.

This discussion provides a heuristic explanation why a carefully designed
optimization method can be expected to converge to a  global
minimum, that is, to solve the phase retrieval problem when the Fourier
phase is known up to $\pi/2$ radians and the object domain constraints
are given in the form of (possibly loose) support information.

\section{Experimental results}
\label{sec:approx-phase2-results}
The method was tested on many images with consistent results. Here,
for demonstration purposes, we chose a natural image with a lot of
features so as to allow easy perception of the reconstruction quality
by the naked eye. We demonstrate our results on two different cases: one
with loose support information, that is, part of the image is zero and
we do not know that a priori; another with tight support.  Both
images are complex-valued and their original size is $128\times128$
pixels, padded with zeros to the size of $256\times256$
pixels.  The intensity (squared magnitude) of the images (without the
zero-padding) is shown in
Figure~\ref{fig:lena-images}.

\begin{figure}[H]
  \centering
  \subfloat[]{
    \label{fig:lena-loosesupport}
    \includegraphics{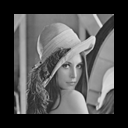}
  }
  \qquad{}
  \subfloat[]{
    \label{fig:lena-tightsupport}
    \includegraphics{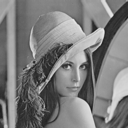}
  }
  \caption[Test images]{Test images (intensity): (a) loose support,
    (b) tight support.}
  \label{fig:lena-images}
\end{figure}
All our experiments show that the phase distribution in the object
domain does not affect the reconstruction. Therefore, the actual phase
distribution was chosen to be random to avoid any possible assumptions
of smoothness.  We compared three
reconstruction methods.  First, a slight modification of the
quasi-Newton method from Chapter~\ref{cha:appr-four-phase-first}.
Second, we created the following
phase-aware modification of the HIO algorithm (PA-HIO). When the
Fourier phase is known to lie in the interval $[\alpha,\beta]$,
PA-HIO's correction in the Fourier domain forces the current estimate
$\hat{x}$ to lie on the arc $\hat{A}\hat{B}$ (see
Figure~\ref{fig:convexrelaxation}), hence $\hat{y}$ (see
Figure~\ref{fig:projection-fourier}) is the point closest to $\hat{x}$
that lies on the arc $\hat{A}\hat{B}$. The third algorithm is the
classical HIO method without any alterations. In fact, we tested also
a phase-aware modification of the GS algorithm, however, its results were
consistently worse than those of PA-HIO, so we omit them.
The current modification of our method uses only one
stage. That is, we abandoned the first stage that was  used
in the original method from Chapter~\ref{cha:appr-four-phase-first} to
find a 
feasible $x$ by solving the convex problem in the original
algorithm. This stage is no longer necessarily because the phase
bounds in the Fourier domain are the main reasons for success. The
additional restrictions on $|\hat{x}|$ used in the original method can
be viewed as heuristic constraint added (with smaller weight) for the
reasons given in the discussion that follows
Equation~\eqref{eq:approx-phase2-10}. This also made the choice of the initial $x$
straightforward: $x^{0}=0$.

We first demonstrate how the phase uncertainty interval affects our
ability to reconstruct the images. A set of 51 uncertainty intervals
was chosen in the range from zero to 2.5 radians. For each interval of
uncertainty its endpoints were chosen at random so that the true phase
was uniformly distributed inside it. We ran our quasi-Newton
optimization algorithm one hundred times (each time generating new
phase bounds), checking at each run whether it was successful
or not.  A run was considered successful if the error in the Fourier
domain (as defined by Equation~\eqref{eq:approx-phase2-1}) was below $0.5\times
10^{-6}$ after 1000 iterations. As is evident from
Figure~\ref{fig:rec-success-rate} the reconstruction always succeeded as
long as the phase uncertainty was below $\pi/2$, in perfect agreement with
our analysis. It is also evident that, for images with tight support
information, successful reconstruction can be expected even for
significantly rougher phase estimation. Moreover, the algorithm
converges very fast and the 
above threshold is usually reached after 250-300 iterations for the
loose-support image and only 80 iterations for the image with tight
support as is apparent from Figure~\ref{fig:rec-speed-losesup} and
\ref{fig:rec-speed-tightsup}.

\begin{figure}[H]
  \centering
  \includegraphics[width=\textwidth{}]{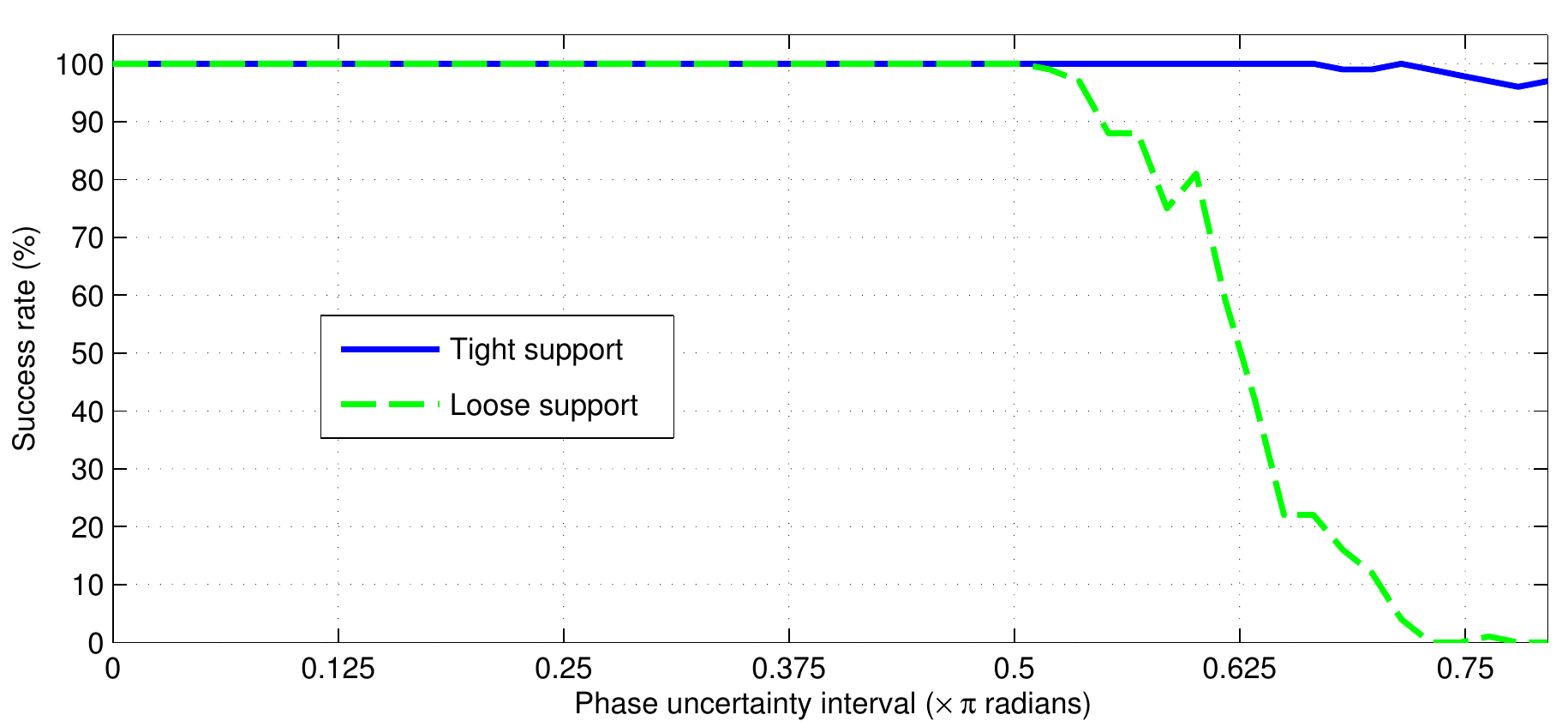}
  \caption[Success rate of our method]{Reconstruction success rate of our method as a function of
    phase uncertainty interval.}
  \label{fig:rec-success-rate}
\end{figure}

Next, we demonstrate in detail the reconstruction results in the case
where the Fourier phase uncertainty interval is 1.57 radians. Note
that without the phase bounds the HIO method cannot reconstruct images
with loose support. Images with tight support information are usually
reconstructed successfully, though they may undergo some trivial
transformation, for example, axis reversal. As is evident from
Figure~\ref{fig:rec-results} our methods (quasi-Newton and PA-HIO)
produce very good visual quality. HIO, on the other hand has problems
with the loose-support image (as expected) but the second case seems
to yield acceptable quality. However, visual assessment does not
provide full insight into the quality of reconstruction and tells
nothing about its speed. Quantitative results are given in
Figure~\ref{fig:rec-speed-losesup}
and \ref{fig:rec-speed-tightsup}, from which
it is evident that our quasi-Newton method significantly outperforms
HIO and PA-HIO
in terms of speed.
\begin{figure}[H]
  \centering
  \subfloat[]{%
    \label{fig:padded-we-rec}
    \includegraphics{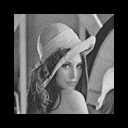}}%
  \qquad{}%
  \subfloat[]{%
    \label{fig:padded-pahio-rec}
    \includegraphics{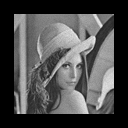}}%
  \qquad{}%
  \subfloat[]{%
    \label{fig:padded-hio-rec}
    \includegraphics{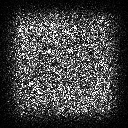}}\\
  \subfloat[]{%
    \label{fig:tight-we-rec}
    \includegraphics{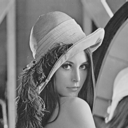}}%
  \qquad{}%
  \subfloat[]{%
    \label{fig:tight-pahio-rec}
    \includegraphics{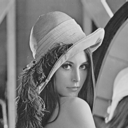}}%
  \qquad{}%
  \subfloat[]{%
    \label{fig:tight-hio-rec}
    \includegraphics{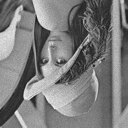}}\\
  \caption[Reconstruction results for phase uncertainty of 1.57
  radians]{Reconstruction results with phase uncertainty interval of
    1.57 radians.

    Upper row---loose support: (a) our method (after 250
    iterations), (b) PA-HIO (after 1000 iterations), and (c) HIO
    (after 1000 iterations).

    Lower row---tight support: (d) our method
    (after 80 iterations), (e) PA-HIO (after 1000 iterations), and (f)
    HIO (after 1000 iterations).}
  \label{fig:rec-results}
\end{figure}
It is important to point out two things before evaluating the
quantitative results presented in Figures~\ref{fig:rec-speed-losesup},
and \ref{fig:rec-speed-tightsup}. First, all presented methods are
iterative by nature and every one of them uses two Fourier transforms
per iteration. Hence, comparing the number of iterations is well
justified and gives a good estimation of the reconstruction speed
because the Fourier transforms constitute the major computational
burden. Second, it is obvious that images with loose support lead to
non-unique solutions. Hence, a small value of the objective function
does not necessarily mean small error in the object domain. This
explains the results in Figure~\ref{fig:rec-speed-losesup}. Another
important observation is that the HIO and PA-HIO methods do not
enforce the off-support areas (padding) to be zero. Hence, these
methods may give a large error in the Fourier domain, while the error
in the object domain (after we discard the off-support parts) may be
small. This phenomena is also evident in
Figure~\ref{fig:rec-speed-losesup}, and \ref{fig:rec-speed-tightsup}.

\begin{figure}[H]
  \centering
  \subfloat[]{
    \includegraphics[width=0.5\textwidth]{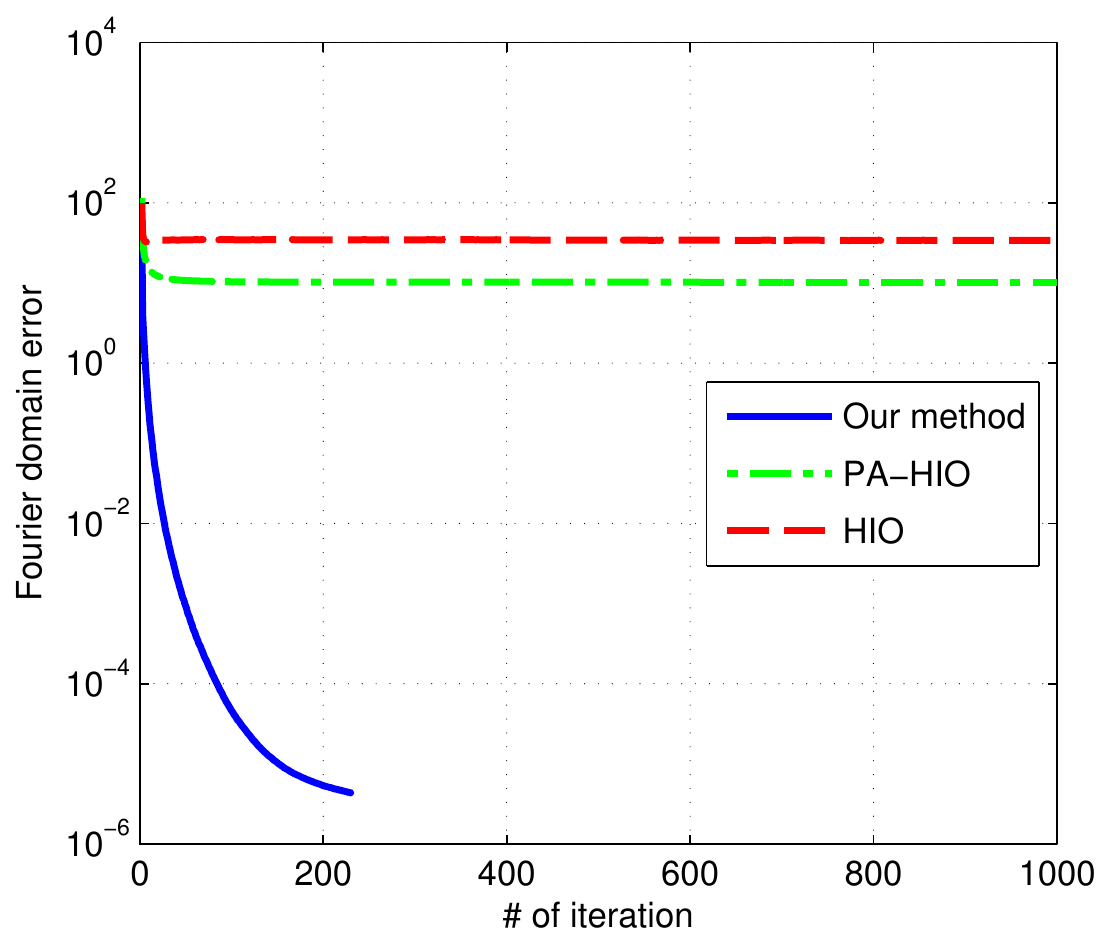}}
  \subfloat[]{
    \includegraphics[width=0.5\textwidth]{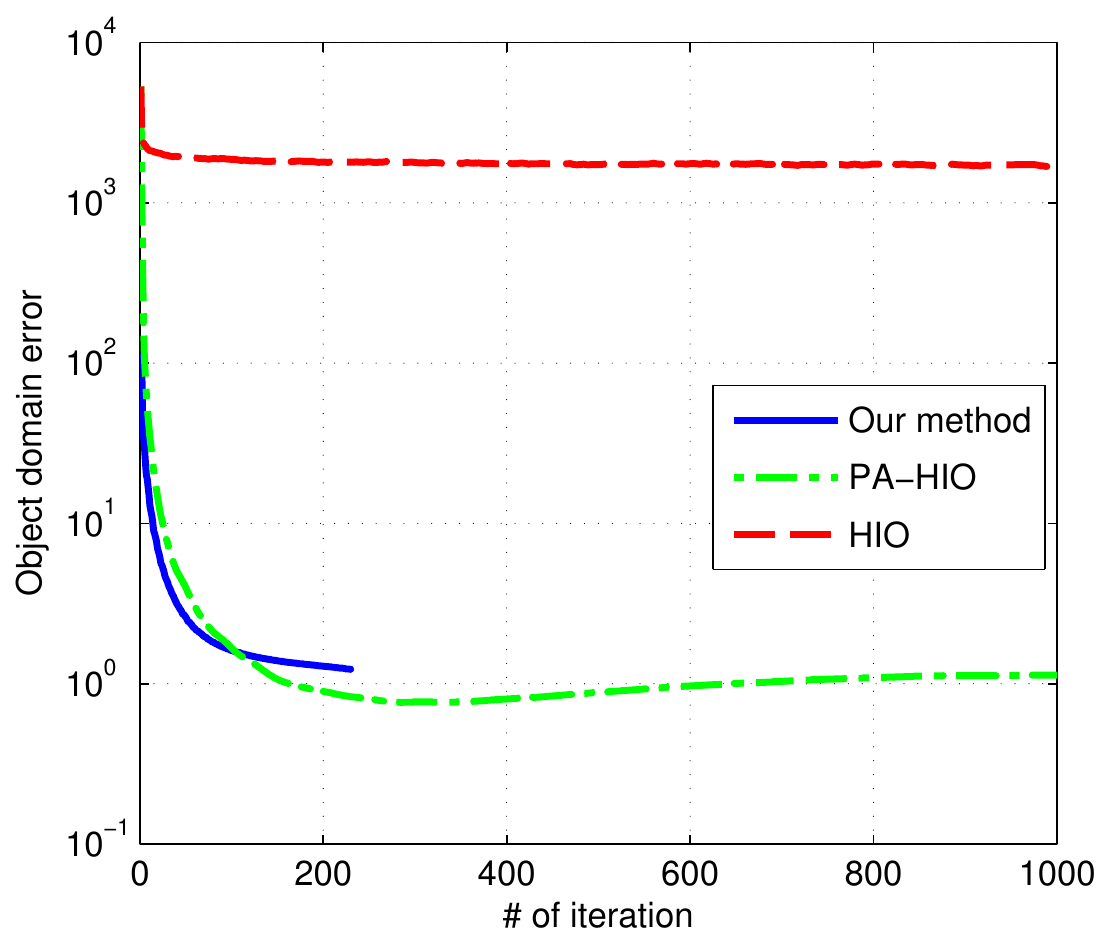}}
  \caption[Error behavior in the case of loose support]{Error behavior in the case of loose support:
    (a) Fourier domain error $\||Fx| - |\hat{z}|\|^{2}$,
    (b) object domain error $\|x| - |z|\|^{2}$.}
  \label{fig:rec-speed-losesup}
\end{figure}

\begin{figure}[H]
  \centering{}
  \subfloat[]{
    \includegraphics[width=0.5\textwidth]{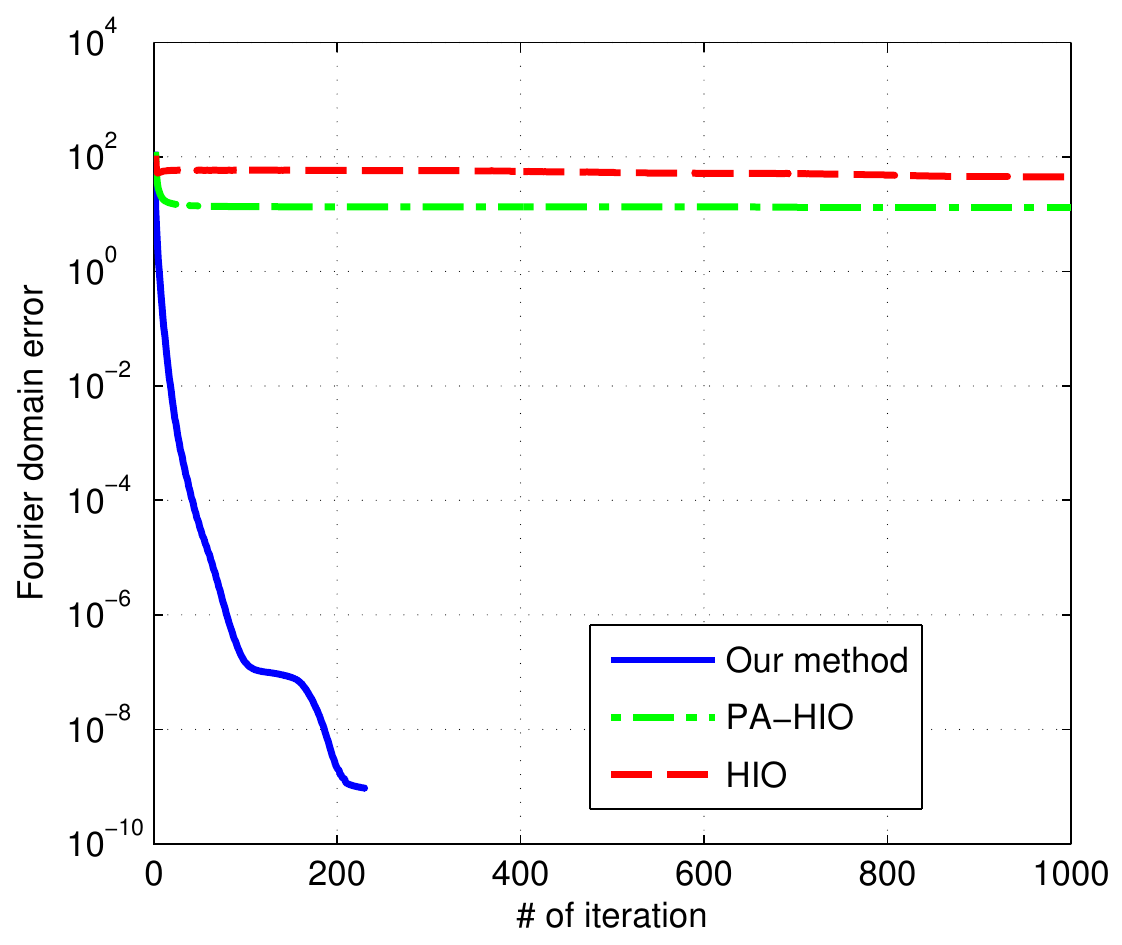}}
   \subfloat[]{
    \includegraphics[width=0.5\textwidth]{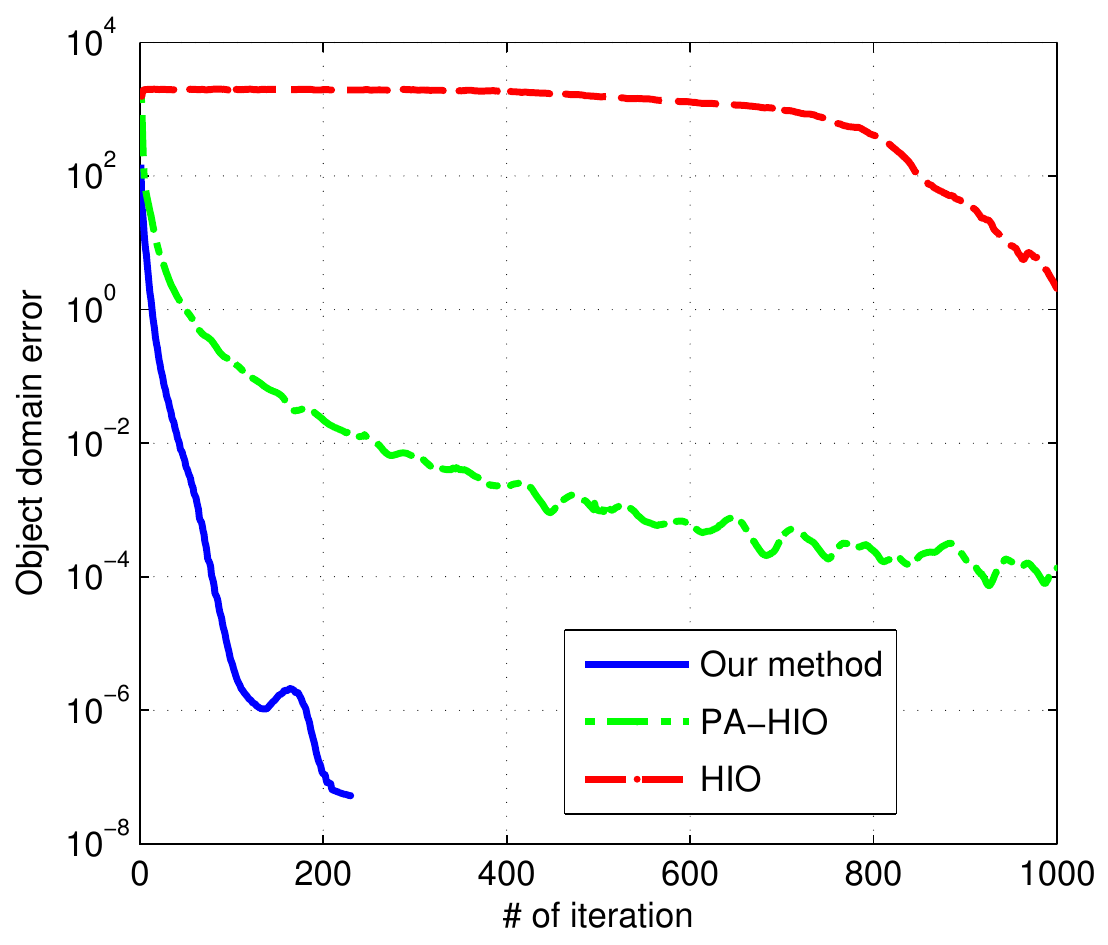}}
  \caption[Error behavior in the case of tight support]{Error behavior in the case of tight support:
    (a) Fourier domain error: $\||Fx| - |\hat{z}|\|^{2}$,
    (b) object domain error: $\|x| - |z|\|^{2}$.}
  \label{fig:rec-speed-tightsup}
\end{figure}

\section{Concluding remarks}
\label{sec:approx-phase2-conclusions}
In this chapter we presented a new analysis explaining
why continuous optimization methods fail when applied to the phase
retrieval problem. On the basis of this observation we gave a
heuristic explanation why local minima of a functional associated with
the phase retrieval problem can be expected to be global ones in the
situation where the Fourier phase error does not surpass $\pi/2$
radians. This, in turn, opens the door for continuous optimization
methods whose rate of convergence and ability to incorporate
additional information in the computational scheme significantly
exceeds those of HIO. We also present such an algorithm and demonstrate
that its reconstruction speed is significantly faster than that of
HIO, even when the phase constraints are also employed in the latter.

The analysis and the methods developed in this chapter are used in the
next chapter to perform phase retrieval in situations where the
Fourier phase uncertainty is greater than $\pi/2$ radians by using
a type of bootstrapping approach.


\chapter{Phase retrieval combined with digital
  holography\footnotemark{}}
\label{cha:phase-retr-holography}

\footnotetext{The material presented in this section is currently in
  preparation for submission to a journal.}

In this chapter we take our algorithm developed in the previous
chapters into a new niche: signal reconstruction from two intensity
measurements made in the Fourier plane. One is the Fourier magnitude
of the sought image, as in classical phase retrieval, and the second
is the intensity pattern resulting from the interference of the
original signal with a known reference beam, as in the Fourier domain
holography. Although either one of these measurements may, in theory,
be sufficient for successful reconstruction of the unknown image, our
method provides significant advantages over such reconstructions. For
example, comparing with reconstruction from the Fourier magnitude
alone by HIO, our method gives a much faster speed and better quality
in case of noisy measurements as we showed earlier. Furthermore,
unlike classical holography methods, our algorithm does not require
any special design of the reference beam. Finally and most
importantly, very good reconstruction quality is obtained even when
the reference beam contains severe errors.

\section{Basic reconstruction algorithm}
\label{sec:basic-reconstr-algor} Let us start with the notation used
throughout the chapter. The unknown two-dimensional signal that we wish
to reconstruct is represented by the complex-valued function
$z(\xi,\eta)=|z(\xi,\eta)|\exp[j\varphi(\xi,\eta)]$. To address the
phase of a complex-valued number we use the angle notation, as before:
$\angle(z(\xi,\eta)) \equiv \varphi(\xi,\eta)$. Our measurements are
done in the Fourier plane $(\xi',\eta')$, hence the transformation
that $z$ undergoes when transforming from the $(\xi,\eta)$ plane to
the $(\xi',\eta')$ plane is simply the unitary Fourier transform
\begin{equation}
  \label{eq:phase-holo-1}
  \hat{z}(\xi',\eta') = \mathcal{F}\{z(\xi,\eta)\}\ .
\end{equation}
Hereinafter, we use  the
usual convention that a pair of symbols like $x$ and $\hat{x}$ denotes a
signal in the $(\xi,\eta)$ plane (also known as the object domain) and
its counterpart in the $(\xi',\eta')$ plane (also referred to as the
Fourier domain), respectively. For the sake of brevity, we may
omit the location designator $(\xi,\eta)$  or $(\xi',\eta')$ and use $x$
or $\hat{x}$ when the entire signal is considered.

The main purpose of our work is to develop a robust reconstruction
method that can tolerate severe errors in the reference beam. To
this end we use the reference beam only for \textit{estimating}
the Fourier phase of the sought image. Once a rough phase estimate
is available we can use the method of phase retrieval with
approximately known Fourier phase that was developed in
the previous chapters.

The two measurements available at our disposal are used as follows.
$I_{1}$ provides the Fourier magnitude of the sought image via the
simple relationship between the two:
\begin{equation}
  \label{eq:phase-holo-2}
  I_{1}(\xi',\eta') = |\hat{z}(\xi',\eta')|^{2}\ .
\end{equation}
The second measurement reads
\begin{equation}
  \label{eq:phase-holo-3}
  I_{2}(\xi',\eta') =|\hat{z}(\xi',\eta') + \hat{R}(\xi',\eta')|^{2}\ ,
\end{equation}
where $\hat{R}(\xi, \eta)$ denotes a known reference beam that is
used to obtain the Fourier phase estimate as described below. One
possible schematic setup that provides these measurements is shown
in the next figure.
\begin{figure}[H]
  \centering
  \includegraphics[width=\textwidth{}]{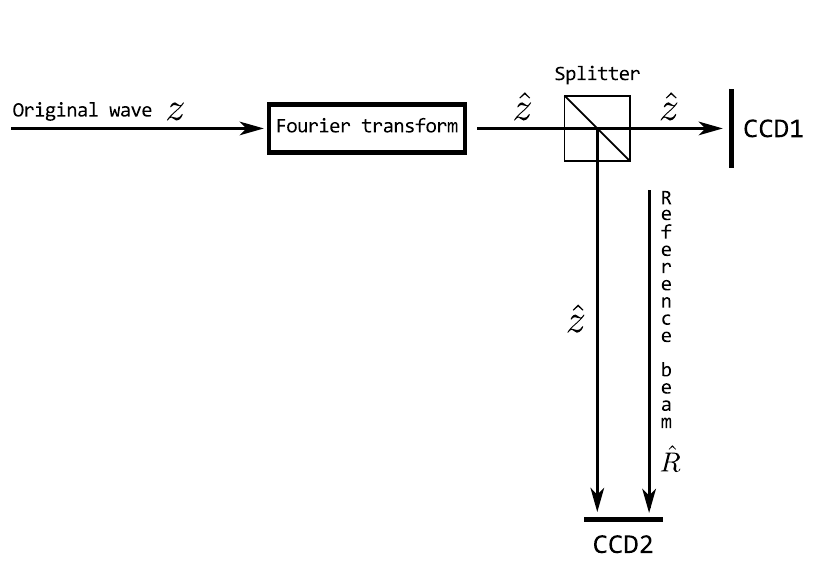}
  \caption{Schematic representation of the experiment.}
  \label{fig:experiment-schematic}
\end{figure}
Note that $\hat{R}$ is not
necessarily a Fourier transform of some physical signal $R$ in the
object domain. This means that $\hat{R}$ can be formed directly in
the Fourier plane without forming first $R$ and applying then an
optical Fourier transform to obtain $\hat{R}$.  Nevertheless,
there exists the mathematical inverse
\begin{equation}
  \label{eq:phase-holo-4}
  R(\xi,\eta) = \mathcal{F}^{-1}\{\hat{R}(\xi',\eta')\}\ ,
\end{equation}
whose properties, such as extent, magnitude, etc. can be
considered. The only requirement of $\hat{R}$ is that it must not
vanish in the region of our measurements. This is an important
point that provides an advantage for our method over the classical
holography techniques. We shall elaborate more on this in
Section~\ref{sec:relation-holography}.

Let us now describe  how $\hat{z}$'s phase information is extracted
from $I_{2}$, and, more importantly, how it is used in our
reconstruction method. Consider the two signals:
\begin{equation}
  \label{eq:phase-holo-5}
  \hat{z}(\xi',\eta') = |\hat{z}(\xi',\eta')|\exp[j\phi(\xi',\eta')]\ ,
\end{equation}
and
\begin{equation}
  \label{eq:phase-holo-6}
  \hat{R}(\xi',\eta') = |\hat{R}(\xi',\eta')|\exp[j\psi(\xi',\eta')]\ .
\end{equation}
The intensity pattern of their interference can be written as:
\begin{equation}
  \label{eq:phase-holo-7}
\begin{split}
  I_{2}(\xi',\eta')
  = &|\hat{z}(\xi',\eta') + \hat{R}(\xi',\eta')|^{2}\\
  = &|\hat{z}(\xi',\eta')|^{2} + |\hat{R}(\xi',\eta')|^{2}
  + \hat{z}^{*}(\xi',\eta')\hat{R}(\xi',\eta')
  + \hat{z}(\xi',\eta')\hat{R}^{*}(\xi',\eta')\\
  = & |\hat{z}(\xi',\eta')|^{2} + |\hat{R}(\xi',\eta')|^{2}
  +
  2|\hat{z}(\xi',\eta')|\,|\hat{R}(\xi',\eta')|\cos[\phi(\xi',\eta')-\psi(\xi',\eta')] 
  \ .
\end{split}
\end{equation}
From this formula we can easily extract the difference between the
unknown phase $\phi(\xi',\eta')$ and the known phase
$\psi(\xi',\eta')$:
\begin{equation}
  \label{eq:phase-holo-8}
  \cos[\phi(\xi',\eta')-\psi(\xi',\eta')] =
  \frac
  {I_{2}(\xi',\eta') -|\hat{z}(\xi',\eta')|^{2} - |\hat{R}(\xi',\eta')|^{2}}
  {2|\hat{z}(\xi',\eta')|\,|\hat{R}(\xi',\eta')|} \ .
\end{equation}
This gives us:
\begin{equation}
  \label{eq:phase-holo-9}
  \phi(\xi',\eta') = \psi(\xi',\eta') \pm \alpha(\xi',\eta')\ ,
\end{equation}
where
\begin{equation}
  \label{eq:phase-holo-10}
  \alpha(\xi',\eta') = \arccos
  \left[
    \frac
    {I_{2}(\xi',\eta') -|\hat{z}(\xi',\eta')|^{2} - |\hat{R}(\xi',\eta')|^{2}}
    {2|\hat{z}(\xi',\eta')|\,|\hat{R}(\xi',\eta')|}
  \right]\ .
\end{equation}
This expression is well defined, as $|\hat{R}(\xi',\eta')|$ is
assumed to be non-zero everywhere in the region of interest, and
the places where $|\hat{z}(\xi',\eta')|=0$ can be simply excluded
from our consideration as there is nothing to be recovered because
their phase has no influence. We assume that $\pm\alpha$, that is,
the difference between the phases $\phi$ and $\psi$, lies within
the interval $[-\pi,\pi]$, hence, no phase unwrapping is
necessary. The phase $\phi(\xi',\eta')$ can assume either one
(rarely) or two possible values at every location. The two
possible situations are shown in
Figure~\ref{fig:holography-possible-situations}.
\begin{figure}[H]
  \centering
  \subfloat[]{
    \label{fig:projection-fourier}
    \includegraphics[width=0.45\textwidth]{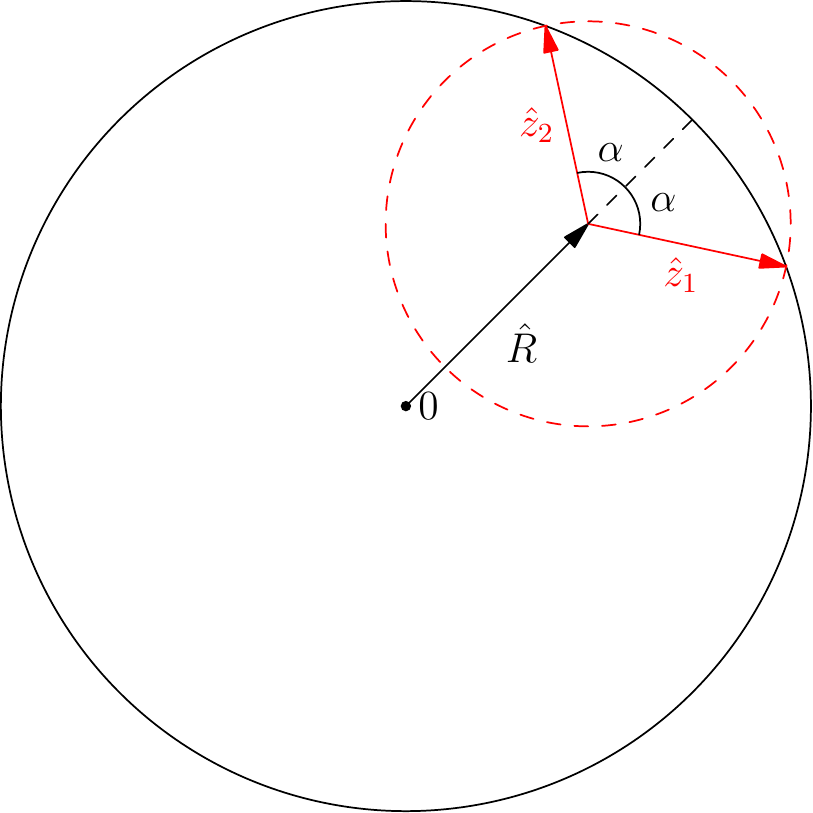}
  }
  \qquad{}
  \subfloat[]{
    \label{fig:convexrelaxation}
    \includegraphics[width=0.45\textwidth]{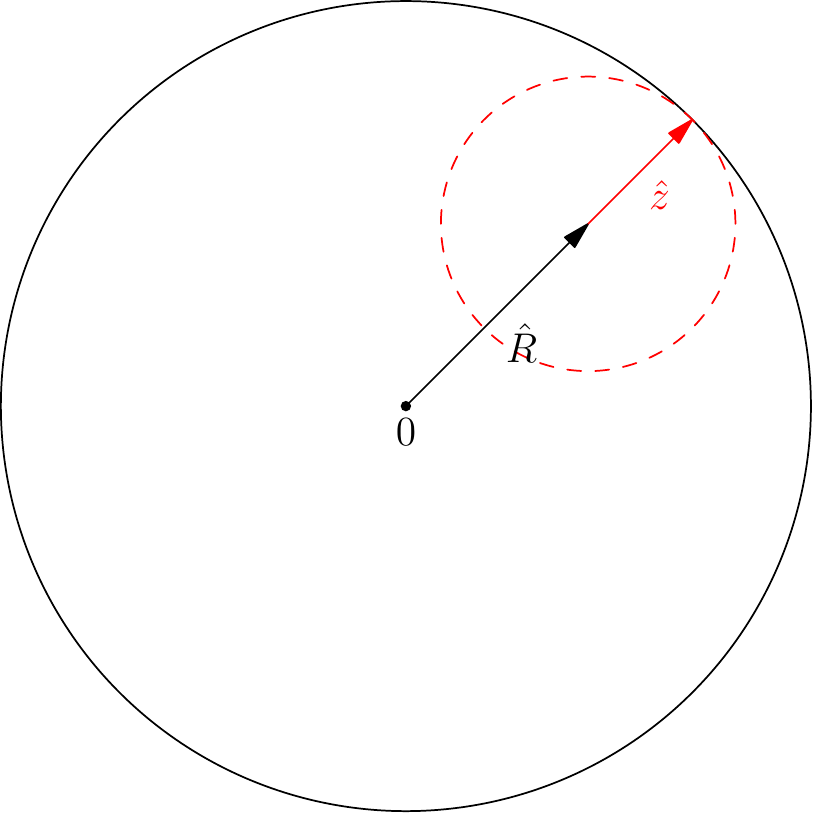}
  }
  \caption[Possible scenarios in Fourier domain holography]{Given a reference beam (black) whose magnitude and phase
    are known, and an unknown signal of known magnitude (dotted
    circle radius), one can try to find the phase of the unknown
    signal by measuring the magnitude of the sum (solid circle
    radius). Evidently, in most cases there are two possible
    solutions (a). However, in certain cases, the is only one
    solution (b).}
  \label{fig:holography-possible-situations}
\end{figure}

Hence, if the
intensity $I_{2}(\xi',\eta')$ is sampled at $N$ points there are
generally $2^{N}$ possible solutions $\hat{z}(\xi',\eta')$ and,
consequently, the same number of possible reconstructions
$z(\xi,\eta)$. (Here we consider the worst case scenario where all
sampled values give rise to two solutions.) To guarantee a unique
and meaningful reconstruction we must use additional information
about the sought signal $z(\xi,\eta)$. In the phase retrieval problem,
as well as in holography, 
it is usually assumed that $z(\xi,\eta)$ has limited support, namely,
part of the image is occupied by zeros.  In practice, it is
usually assumed that, in each direction, half or more pixels of
$z(\xi,\eta)$ are zeros. To capture this information in the Fourier
domain one should ``over-sample'' by a factor of two (or more) in
each dimension. Hence, if the known (not necessarily tight)
support area of $z$ is $n\times n$ pixels, then in the Fourier
plane it must be sampled with a sensor of size $2n\times 2n$
pixels. Such ``over-sampling'' usually guarantees unique (up to
trivial transformations: shifts, constant phase factor, and axis
reversal) reconstruction in the case of the classical phase
retrieval problem, where only $|\hat{z}|$ is available
\shortcite{hayes82reconstruction}. It is not known whether this
two-fold oversampling is absolutely necessary in our case where
two measurements are available. However, our experiments indicate
that for a general complex-valued signal it still seems to be
necessary to over-sample by a factor of two. Hence, the
reconstruction problem reads: find $z(\xi,\eta)$ such that $|\hat{z}|$
is known, $\angle(\hat{z})=\psi\pm\alpha$, and $z(\xi_{O},\eta_{O}) =
0$. Here, $\alpha$ is the known phase difference between $\hat{z}$
and $\hat{R}$ as defined by Equations~(\ref{eq:phase-holo-9})
and~(\ref{eq:phase-holo-10}); $(x_{O},y_{O})$ denotes the known off-support
area in the $(\xi,\eta)$ plane.

The problem is combinatorial in nature, and many different methods
can be applied to find a solution. Our method is based on
replacing the equality $\angle(\hat{z})=\psi\pm\alpha$ with the
less strict inequality
\begin{equation}
  \label{eq:phase-holo-11}
  \psi - \alpha \leq \angle(\hat{z}) \leq \psi + \alpha\ .
\end{equation}
By this relaxation we reduce the original problem into the phase
retrieval problem with approximately known phase.  For this
situation we have developed an efficient Quasi-Newton optimization
method based on convex relaxation. Note that the problem we are facing
here:
\begin{equation}
  \label{eq:phase-holo-12}
  \begin{split}
    \min_{x} &\quad \||\mathcal{F}\{x\}| -  |\hat{z}|\|^{2}\\
    \mathrm{subject\ to} &\quad \psi - \alpha \leq \angle(\mathcal{F}\{x\}) \leq
    \psi+\alpha \, , \\
    &\quad x(x_{O}, y_{O}) = 0 \, ,
  \end{split}
\end{equation}
is exactly the same as the one we solved in
Chapters~\ref{cha:appr-four-phase-first},
and~\ref{cha:appr-four-phase-explanation}. Hence, we use the same
convex relaxation as we did before. Likewise, the solution of the
above minimization problem $x$ is not guaranteed to be equal to the
sought signal $z$. Due to the relaxation we performed, the phase of $\hat{x}$ is
allowed to assume the continuum of values in the interval
$[\psi-\alpha, \psi+\alpha]$ instead of the two discrete values
$\psi\pm\alpha$.  However, due to the uniqueness of the phase
retrieval problem, the phase of $x$ may differ from the phase of $z$
only by a constant.  That is, $x(\xi,\eta)=z(\xi,\eta)\exp[jc]$ for
some scalar $c$ (see~\shortcite{hayes82reconstruction} for details). This
does not pose any problems, as only the relative phase distribution inside
the support area of $z(\xi,\eta)$ is usually of interest. Moreover, in
the case where the absolute phase is required, $c$ can be recovered by
adding a post-processing step that solves the one-dimensional
optimization problem:
\begin{equation}
  \label{eq:phase-holo-13}
  \displaystyle\min_{c} \||\hat{x}\exp[jc] + \hat{R}| -
  I_{2}^{1/2}\|^{2}\ .
\end{equation}
Note that we intentionally do not add a penalty term like
$\||\hat{x} + \hat{R}| - I_{2}^{1/2}\|^{2}$ into the main
minimization scheme as defined by Equation~(\ref{eq:phase-holo-12}).  Adding
such a term would introduce a strong connection between $x$ and
the reference beam $\hat{R}$. This connection will inevitably
deteriorate the quality of reconstruction when the reference beam
$\hat{R}$ contains errors.

There is, however, an important dissimilarity between the current
situation and the one we considered in the previous chapter: the phase
uncertainty interval can be as large as $\pi$ radians. Nevertheless,
our experiments indicate that the reconstruction is stable and its
speed is very fast (see Figure~\ref{fig:phase-holo-we-reconstruction-speed}).
Moreover, it can be further accelerated, and our experiments indicate
that more aggressive oversampling (zero padding in the object domain)
results in faster convergence in terms of the number of iterations. In
fact, use of a special reference beam can, in theory, result in a
trivial non-iterative reconstruction in a way similar to
holography. However, such special reference beams may not be easily
realizable in physical systems and the quality of the reconstructed
signal is strongly influenced by the quality of the reference beam. We
discuss this setup in the next section and compare its sensitivity to
possible errors in $\hat{R}$ against our method in
Section~\ref{sec:phase-holo-results}.

\section{Relation to holography}
\label{sec:relation-holography}

Our method was initially developed for the phase retrieval
problem. However, the use of interference patterns creates a
strong connection with holography. Therefore, it may be pertinent
to discuss the advantages that our method provides over the classical
holographic reconstruction. Note that our discussion is limited to
basic holography only, and no attempt is made to cover all possible
setups and techniques that can be used in digital holography. We
nonetheless believe that this novel approach can compete with or
improve upon existing algorithms used in digital holography.

In classical holography one uses a specially designed reference
beam so as to allow easy non-iterative recovery of the sought
image. This has an obvious advantage over iterative methods,
especially when the speed of the reconstruction is of high
importance. However, reliance on the reference beam means that
reconstruction quality may deteriorate badly when the reference
beam contains errors, that is, when it differs from the ``known''
values. To review the non-iterative reconstruction method used in
holography, recall that the recorded intensity $I_{2}$ is the
result of superimposition of $\hat{z}$ and $\hat{R}$ as defined by
Equation~(\ref{eq:phase-holo-7}). In optical Fourier holography, this
intensity is recorded on optical material. The recorded image is
used then as an amplitude modulator for an illuminating beam
$\hat{A}(\xi',\eta')$, which then undergoes a Fourier transform to
form a new signal $B(x',y')$. In a digital computer we may use the
same technique. Moreover, we are free to use either the forward or
the inverse Fourier transform, as it makes no practical difference
(the resulting images will be reversed conjugate copies of each
other). Here we use the inverse Fourier transform:
\begin{equation}
  \label{eq:phase-holo-14}
  \begin{aligned}
    B(\xi,\eta)
    =&\,  \mathcal{F}^{-1}\{\hat{A}(\xi',\eta')\,I_{2}(\xi',\eta')\}\\
    =&\,   \mathcal{F}^{-1}
    \left\{
      \hat{A}
      \left[
        |\hat{z}|^{2} + |\hat{R}|^{2}
        + \hat{z}^{*}\hat{R} + \hat{z}\hat{R}^{*}
      \right]
    \right\}\\
    = &\,  
    A(\xi,\eta) \otimes z(\xi,\eta) \otimes z^{*}(-x,-y) +
    A(\xi,\eta) \otimes R(\xi,\eta) \otimes R^{*}(-x,-y) +\\
    &\,
    A(\xi,\eta) \otimes z^{*}(-x,-y) \otimes R(\xi,\eta) +
    A(\xi,\eta) \otimes z(\xi,\eta) \otimes R^{*}(-x,-y) \, ,
  \end{aligned}
\end{equation}
where $\otimes$ denotes convolution. Note that in this case the
fourth and the third terms are equal to the sought wavefront
$z(\xi,\eta)$ and its Hermitian counterpart $z^{*}(-x,-y)$ convolved
with $A(\xi,\eta)\otimes R(\xi,\eta)$ and $A(\xi,\eta)\otimes R^{*}(-x,-y)$,
respectively.  The best possible choice is $A(\xi,\eta) = \delta(\xi,\eta)$
and $R(\xi,\eta)= \delta(\xi-\xi_{0},\eta-\eta_{0})$, where $\delta(\xi,\eta)$ is the
Dirac delta function.  In this case the obtained wave becomes:
\begin{equation}
  \label{eq:phase-holo-15}
  B(\xi,\eta) =
  z(\xi,\eta) \otimes z^{*}(-\xi,-\eta) + \delta(\xi,\eta) +
  z^{*}(-\xi+\xi_{0}, -\eta+\eta_{0}) + z(\xi-\xi_{0}, \eta-\eta_{0})\ .
\end{equation}
Hence, if $\xi_{0}$ and/or $\eta_{0}$ are large enough, the four terms
in the above sum will not overlap in the $(\xi,\eta)$ plane. Thus, we
can easily obtain the sought signal $z(\xi,\eta)$, albeit shifted by
$(\xi_{0}, \eta_{0})$, provided that the spatial extent of $z(\xi,\eta)$ is
limited by the box $\xi\in[-L_{\xi}, L_{\xi}]$, $\eta\in[-L_{\eta},L_{\eta}]$.
The spatial extent of the autocorrelation $z(\xi,\eta) \otimes
z^{*}(-\xi,-\eta)$ is twice as large, that is, limited by the box
$\xi\in[-2L_{\xi}, 2L_{\xi}]$, $\eta\in[-2L_{\eta},2L_{\eta}]$. Hence, to avoid
overlapping we must have $\xi_{0} > 3L_{\xi}$, or $\eta_{0}>3L_{\eta}$.
Thus, in theory, one can generate an ideal delta function in the
$(\xi,\eta)$ plane located at a sufficient distance from the support area
of $z(\xi,\eta)$. In the Fourier domain, this delta function
corresponds to a plane wave arriving at a certain angle at the
plane of measurements. If such a construction is possible, then a
simple inverse transform of the intensity obtained in the Fourier
plane is sufficient to obtain the sought signal $z(\xi,\eta)$. However,
as mentioned earlier, this approach has some drawbacks. First, it
is impossible to create an ideal delta function. Any physical
realization will necessarily have a finite spatial extent, and
this will result in a ``blurred'' reconstructed image. Note that
the term ``blurring'' describes well the resulting image in the
case where $z(\xi,\eta)$ is real-valued or has constant phase. In the
more general case, where the phase of $z(\xi,\eta)$ varies at
non-negligible speed, the result appears more distorted (see
Figure~\ref{fig:holography-reconstruction-intensity}).
The second
drawback is the sensitivity of this method to errors in $\hat{R}$.
In Section~\ref{sec:phase-holo-results} we demonstrate how the quality of
reconstruction depends on the error in $\hat{R}$ (see
Figures~\ref{fig:objectdomain-error}, 
\ref{fig:objectdomain-error-corrected}, and
\ref{fig:visual-results-phase-error}). Our method, on the other
hand shows very little sensitivity to the reference beam shape.
Moreover, its modification described in the next section allows
the reference beam to contain severe errors without deteriorating
significantly the quality of reconstruction.

\section{Reconstruction method for imprecise reference beam}
\label{sec:reconstr-meth-impr} Here we consider the situation
where the reference beam is not known precisely, that is, we
assume that the phase of $\hat{R}$ contains some unknown error. It
is easy to verify that if the reference beam phase
$\psi(\xi',\eta')$ has error $\epsilon(\xi',\eta')$ then the sought
phase $\phi(\xi',\eta')$ becomes
\begin{equation}
  \label{eq:phase-holo-15}
  \phi(\xi',\eta') = \psi(\xi',\eta') + \epsilon(\xi',\eta') \pm
  \alpha(\xi',\eta')\ ,
\end{equation}
in a manner similar to Equation~(\ref{eq:phase-holo-9}). We do not consider
errors in the magnitude $|\hat{R}|$ for several reasons. Many
aberrations manifest themselves through phase
distortion~\shortcite{goodman05introduction}. Also, the magnitude of
$\hat{R}$ can
be measured. Moreover, looking at the above equation, it is
obvious that any error in $\psi$ can be viewed as an error in
$\alpha$. That is, the situation would be the same if the reference
beam phase $\psi$ were known precisely, while the difference
$\alpha$ would contain some errors. This observation is relevant
because errors in the phase $\alpha$ can arise from many different
sources, including imperfect measurements and errors in the
reference beam magnitude.

The true error $\epsilon(\xi',\eta')$ is, of course, unknown. Hence,
we assume just an upper bound (assumed known) on the absolute
phase error:
\begin{equation}
  \label{eq:phase-holo-17}
  \psi-\epsilon-\alpha
  \leq\angle(\hat{z})\leq\psi+\epsilon+\alpha\ ,
\end{equation}
as in Equation~(\ref{eq:phase-holo-11}). This time, however, the phase
uncertainty interval may be larger than $\pi$ radians which makes our
method inapplicable. On the other hand, limiting the phase uncertainty
interval by $\pi$ radians will prevent us from reconstructing the
precise image, because the true phase may lie outside this interval. A
possible solution is to measure the intensity of the reference beam
and then to reconstruct its phase using the method presented in
Chapter~\ref{cha:appr-four-phase-explanation}, because this problem
itself can be seen as a phase retrieval with approximately known
phase. However, taking another measurement may be undesirable, and
therefore we developed the following reconstruction method:
\begin{description}
\item[Step 1:] Set the phase uncertainty interval as defined by
  Equation~(\ref{eq:phase-holo-11}) (as if there were no errors in the reference
  beam phase).

\item[Step 2:] Solve the resulting minimization problem, obtaining a
solution $x(\xi,\eta)$).

\item[Step 3:] If not converged, set the phase uncertainty interval to
  $[\angle(\hat{x}) - \pi/2, \angle(\hat{x}) + \pi/2]$. Clip it, if
  applicable, to the limits defined by Equation~(\ref{eq:phase-holo-17}) and go to
  Step 2.
\end{description}

In this algorithm we perform a number of outer iterations, each
time updating the phase uncertainty interval. This approach leads
to a successful reconstruction method even in cases where the
reference beam contains severe errors. The results are much better
than those of non-iterative holographic reconstruction (see
Figures~\ref{fig:objectdomain-error},
and~\ref{fig:objectdomain-error-corrected}).  This improvement is
achieved by decoupling the reconstruction problem (which becomes
the pure phase retrieval with approximately known phase) and the
erroneous interferometric measurements.

\section{Experimental results}
\label{sec:phase-holo-results} 
The method was tested on a variety of images with similar
results. Here we present numerical experiments conducted on one
natural image so as to allow easy perception of the reconstruction
quality under various conditions. The image intensity (squared
magnitude) is shown in Figure~\ref{fig:experimenal-image}.
\begin{figure}[H]
  \centering
  \includegraphics{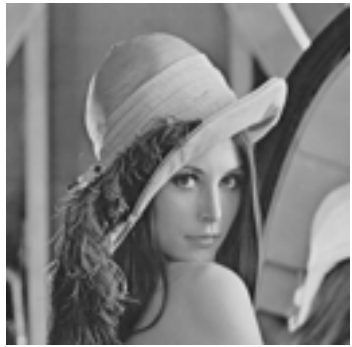}
  \caption[Test image]{Original image (intensity).}
  \label{fig:experimenal-image}
\end{figure}

The technical
details of the image are as follows: the size is $128\times 128$
pixels, and pixel values (amplitude) vary from $0.2915$ to $0.9634$ on
a scale of $0$ to $1$.
with mean value of $0.6835$. These parameters will become important
later, when we will consider the reference beam design, and when we will
assess the reconstruction quality. Since the original image is a
photograph, it does not have any phase information. Hence, we
generated three different phase distributions to account for the
assortment of possible real-world problems where our method can be
applied. The first distribution assumes that the image is non-negative
real-valued, that is, the phase is zero everywhere. The second
distribution is designed to mimic a relatively smooth phase. To this
end, the phase is set to be proportional to the image values (scaled
to the interval $[-\pi,\pi]$). Finally, in the third distribution the
phase is chosen at random, uniformly spread over the interval
$[-\pi,\pi]$. This distribution is designed to show the behavior of
our reconstruction method in cases where the true phase varies
rapidly. We also consider three possible reference beams, again, to
demonstrate the robustness of our method. The first reference beam is
an ideal delta-function in the $(\xi,\eta)$ plane, located at the
coordinate $(256, 256)$ so that the holographic condition is
satisfied. With this reference beam exact reconstruction is obtained
as long as the sampling in the Fourier domain is sufficiently dense
($512\times512$ pixels, or more). We do not present the visual results
of reconstruction for this reference beam as both methods produce
images that are indistinguishable from the true image.  The speed of
convergence of our method is shown in
Figure~\ref{fig:phase-holo-we-reconstruction-speed}.  Later, we show also how the
reconstruction quality of both methods is affected by Fourier phase
errors in the reference beam. Before this, we demonstrate the effect
of departure from the ideal delta function: the second reference beam
is a small square of size $3\times3$ pixels, located at the
coordinates $(256:258,256:258)$.  In this setup the reconstruction
quality of the holographic method is degraded, as evident from
Figure~\ref{fig:holography-reconstruction-intensity}.
\begin{figure}[H]
  \centering
  \subfloat[]{
    \label{fig:holography-real-square}
    \includegraphics{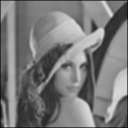}
  }
  \quad{}
  \subfloat[]{
    \label{fig:holography-complexSmooth-square}
    \includegraphics{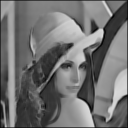}
  }
  \quad{}
  \subfloat[]{
    \label{fig:holography-complexRandom-square}
    \includegraphics{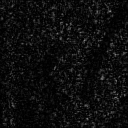}
  }
  \caption[Image reconstructed by the holographic
    technique using a small square as the reference beam]{Image (intensity) reconstructed by the holographic
    technique using a small square as the reference beam: (a) the
    image is real-valued, (b) image phase varies slowly, (c)
    image phase is random (varies rapidly).}
  \label{fig:holography-reconstruction-intensity}
\end{figure}

It is also evident
that faster variations in the object phase result in greater
deterioration in the reconstruction, in agreement with our
expectations. Our method, on the other hand, is insensitive to the
reference beam form. In Figure~\ref{fig:we-reconstruction-intensity} we
demonstrate our reconstruction results for the aforementioned small
square as the reference beam (the first row), and for another
reference beam that was formed in the Fourier plane by combining unit
magnitude with random phase (in the interval $[-\pi, \pi]$). This
beam, of course, is not suitable for holography, as its extent in the
object plane occupies the whole space.
\begin{figure}[H]
  \centering
  \subfloat[]{
    \label{fig:we-real-square}
    \includegraphics{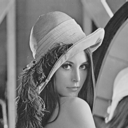}
  }
  \quad{}
  \subfloat[]{
    \label{fig:we-complexSmooth-square}
    \includegraphics{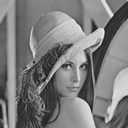}
  }
  \quad{}
  \subfloat[]{
    \label{fig:we-complexRandom-square}
    \includegraphics{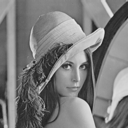}
  }\\
   \subfloat[]{
    \label{fig:we-real-random}
    \includegraphics{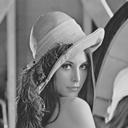}
  }
  \quad{}
  \subfloat[]{
    \label{fig:we-complexSmooth-random}
    \includegraphics{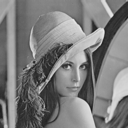}
  }
  \quad{}
  \subfloat[]{
    \label{fig:we-complexRandom-random}
    \includegraphics{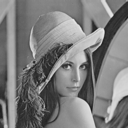}
  }
  \caption[Image reconstructed by our method]{Image reconstructed (intensity) by our method:
    (a), (b), and (c) --- reference beam is a small square and object
    phase is zero (a), smooth (b), random (c).
    (d), (e), and (f) --- reference beam is random and object phase
    is zero (d), smooth (e), random (f).}
  \label{fig:we-reconstruction-intensity}
\end{figure}

Reconstruction is very fast
and, in fact, is almost independent of the sought image and reference
beam type. Figure~\ref{fig:phase-holo-we-reconstruction-speed} demonstrates that
less than 20 iterations are required to solve the minimization problem
as defined by Equation~(\ref{eq:phase-holo-12}).
\begin{figure}[H]
  \centering
  \subfloat[]{
    \label{fig:we-speed-real}
    \includegraphics[height=0.25\textheight]{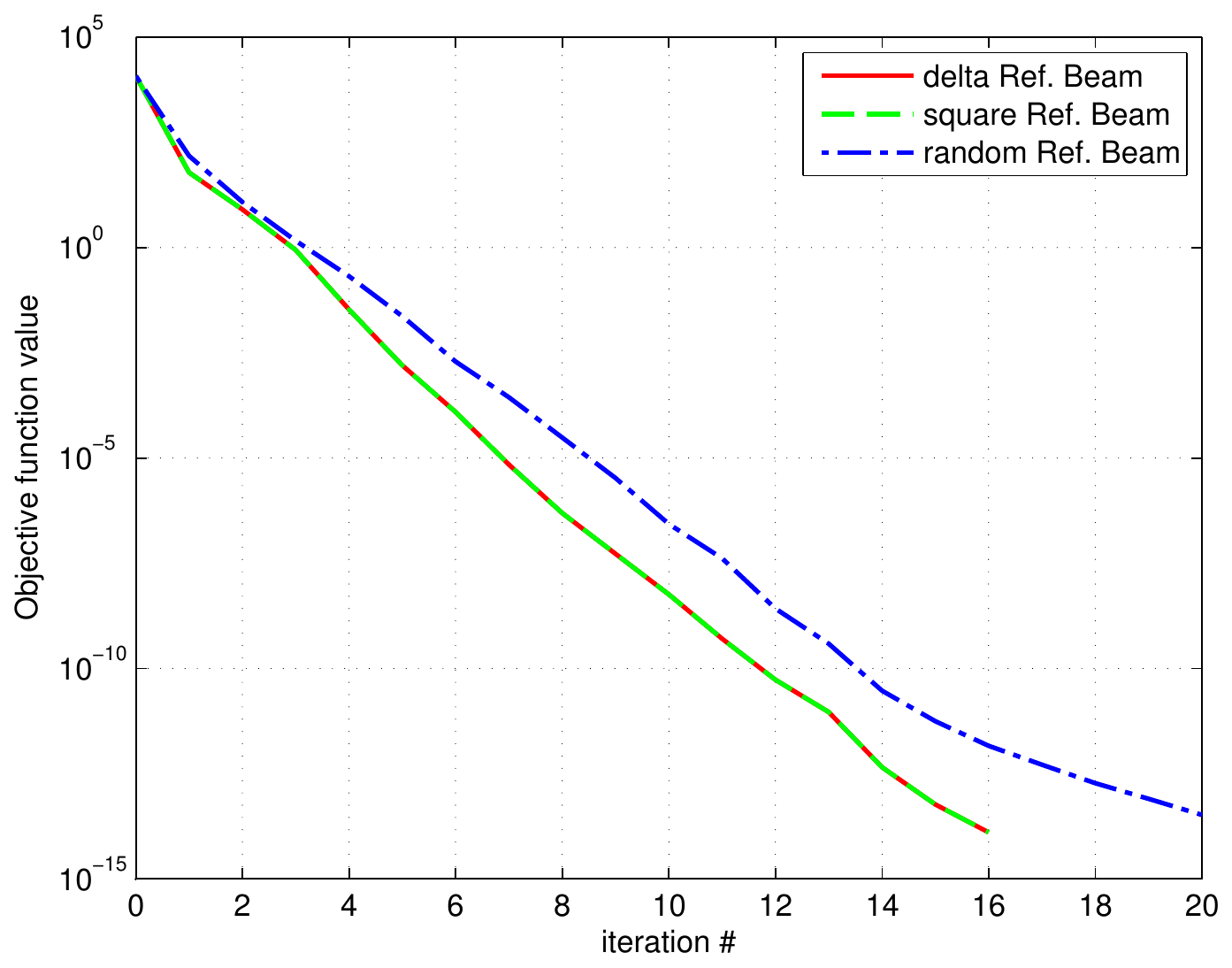}
  }
  \subfloat[]{
    \label{fig:we-speed-complexSmooth}
    \includegraphics[height=0.25\textheight]{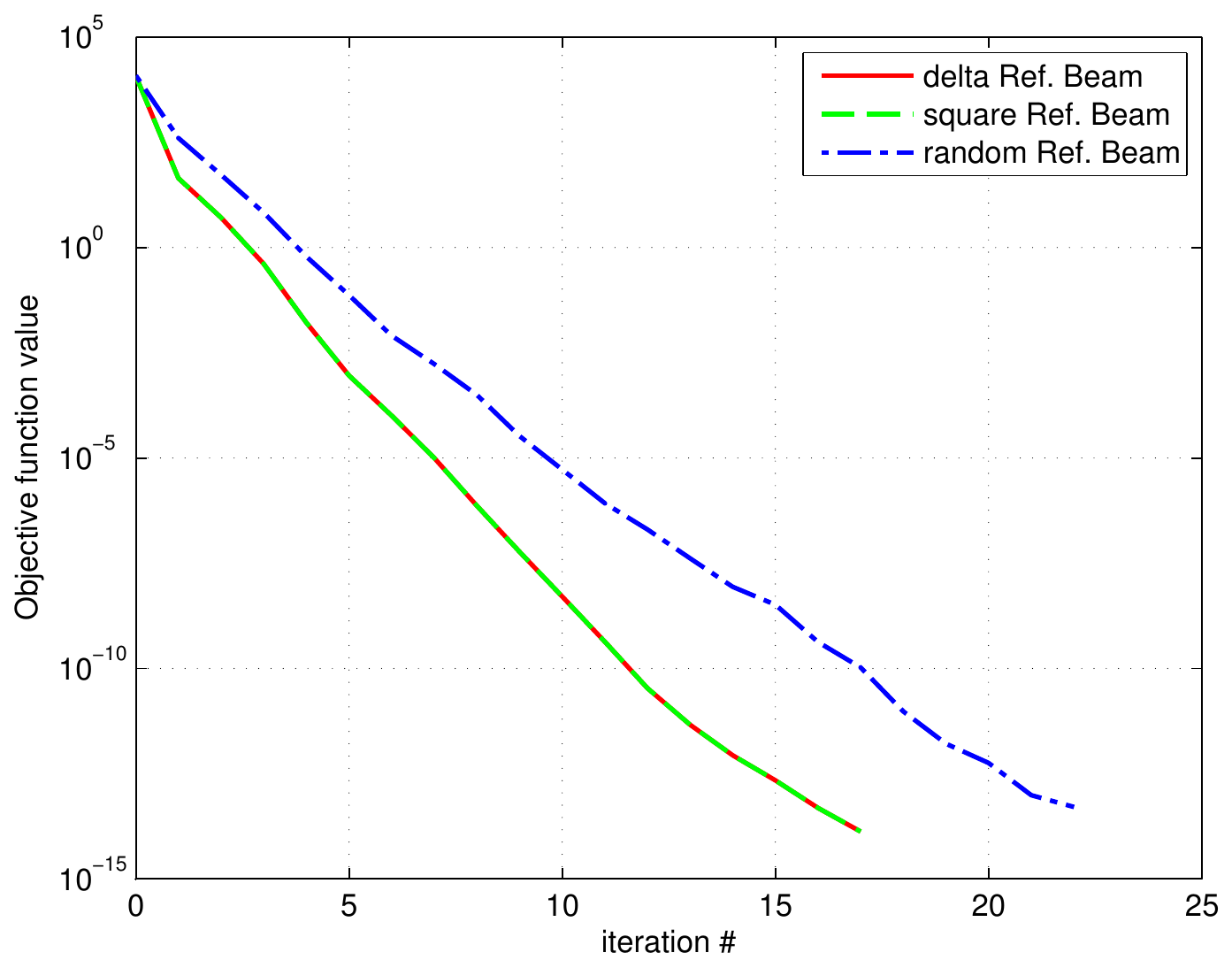}
  }
  \qquad{}
  \subfloat[]{
    \label{fig:we-speed-complexRandom}
    \includegraphics[height=0.25\textheight]{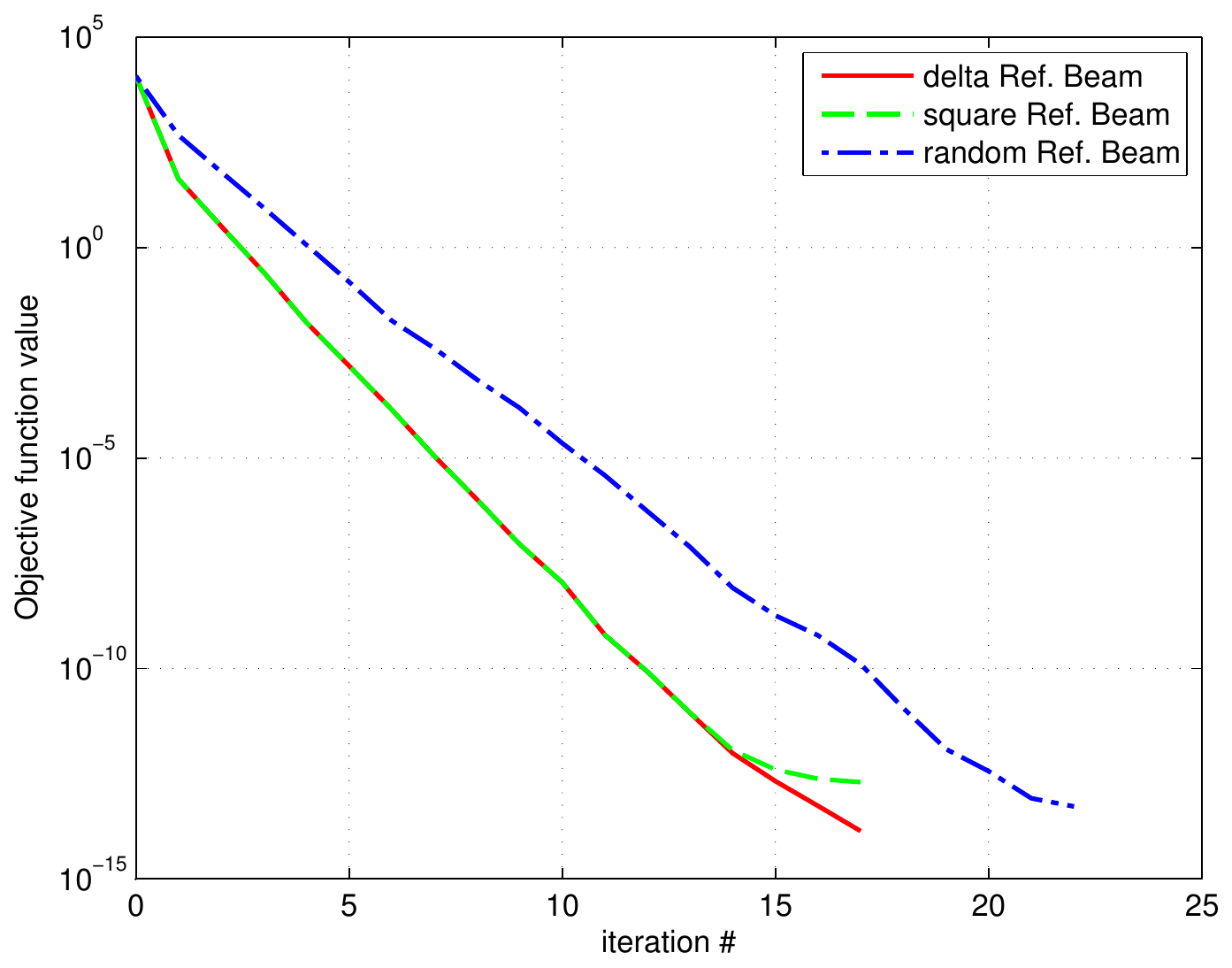}
  }
  \caption[Reconstruction speed of our method]{Reconstruction speed of our method: (a) real valued image,
    (b) image phase is smooth, (c) image phase is random.}
  \label{fig:phase-holo-we-reconstruction-speed}
\end{figure}

In all these examples we assume perfect knowledge of the reference
beam. Next we consider the situation where the actual reference
beam does not match the expected signal in the Fourier plane.
Following our discussion in Section~\ref{sec:reconstr-meth-impr},
we evaluate how the reconstruction quality of the holographic
approach and our method are affected by errors in the reference
beam Fourier phase. We use again the three aforementioned models
of the sought image (real-valued, smooth phase, and random phase)
and the three reference beams (delta function, small square, and
random). From Figure~\ref{fig:we-successrate} it is evident that in
all these cases we were able to solve the minimization problem to
sufficient accuracy as long as the phase error was below 25\%.
That is, our method can tolerate reference
beam Fourier phase errors of up to $\pi/2$ radians. The sharp
discontinuity that happens at this value has a
simple explanation: phase error greater than $\pi/2$ radians can
result in the phase uncertainty interval greater than $2\pi$ radians
(see Equation~(\ref{eq:phase-holo-17})).  Hence, all phase information is
lost.

\begin{figure}[H]
  \centering
  \subfloat[]{
    \label{fig:we-success-real}
    \includegraphics[height=0.25\textheight]{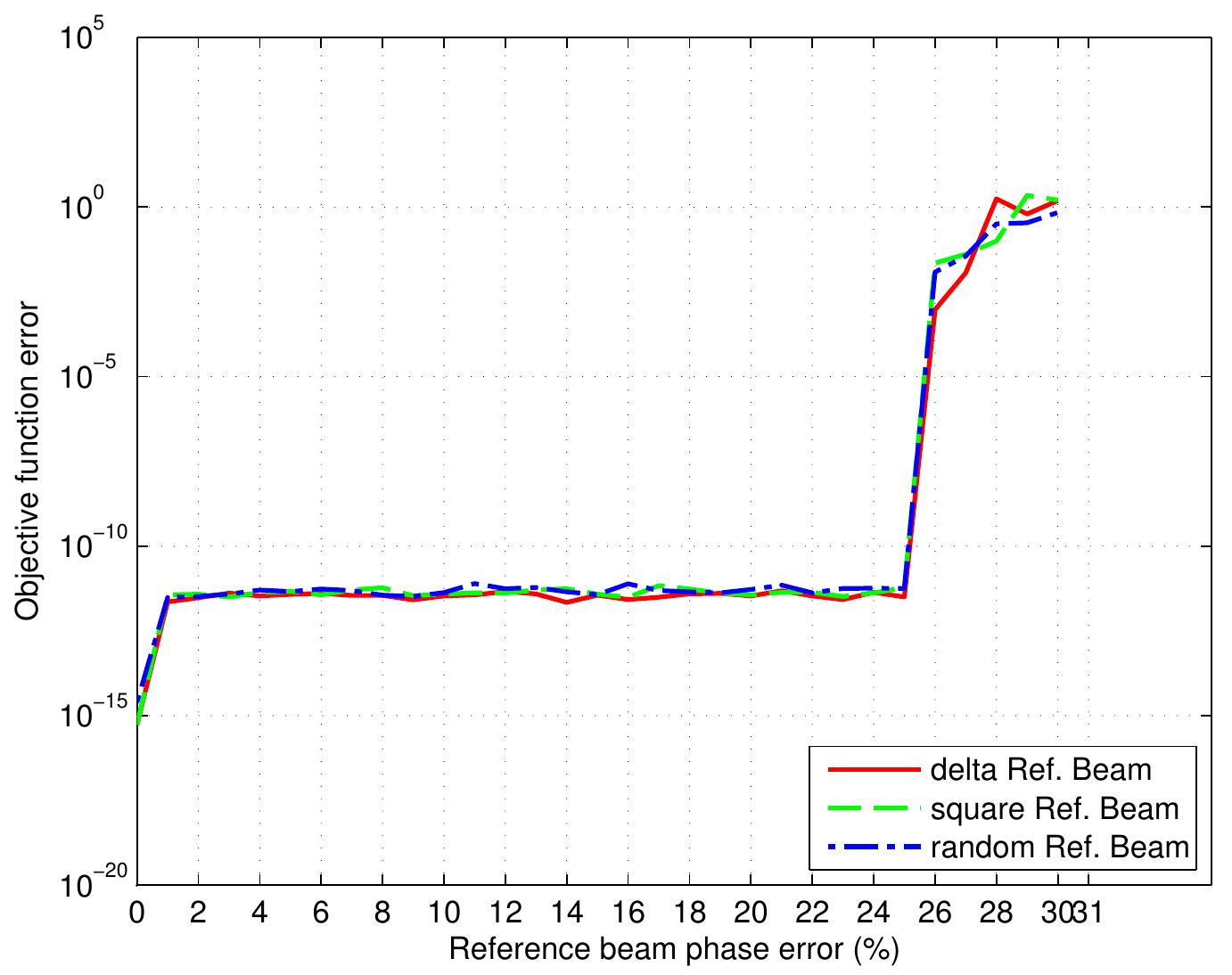}
  }
  \subfloat[]{
    \label{fig:we-success-complexSmooth}
    \includegraphics[height=0.25\textheight]{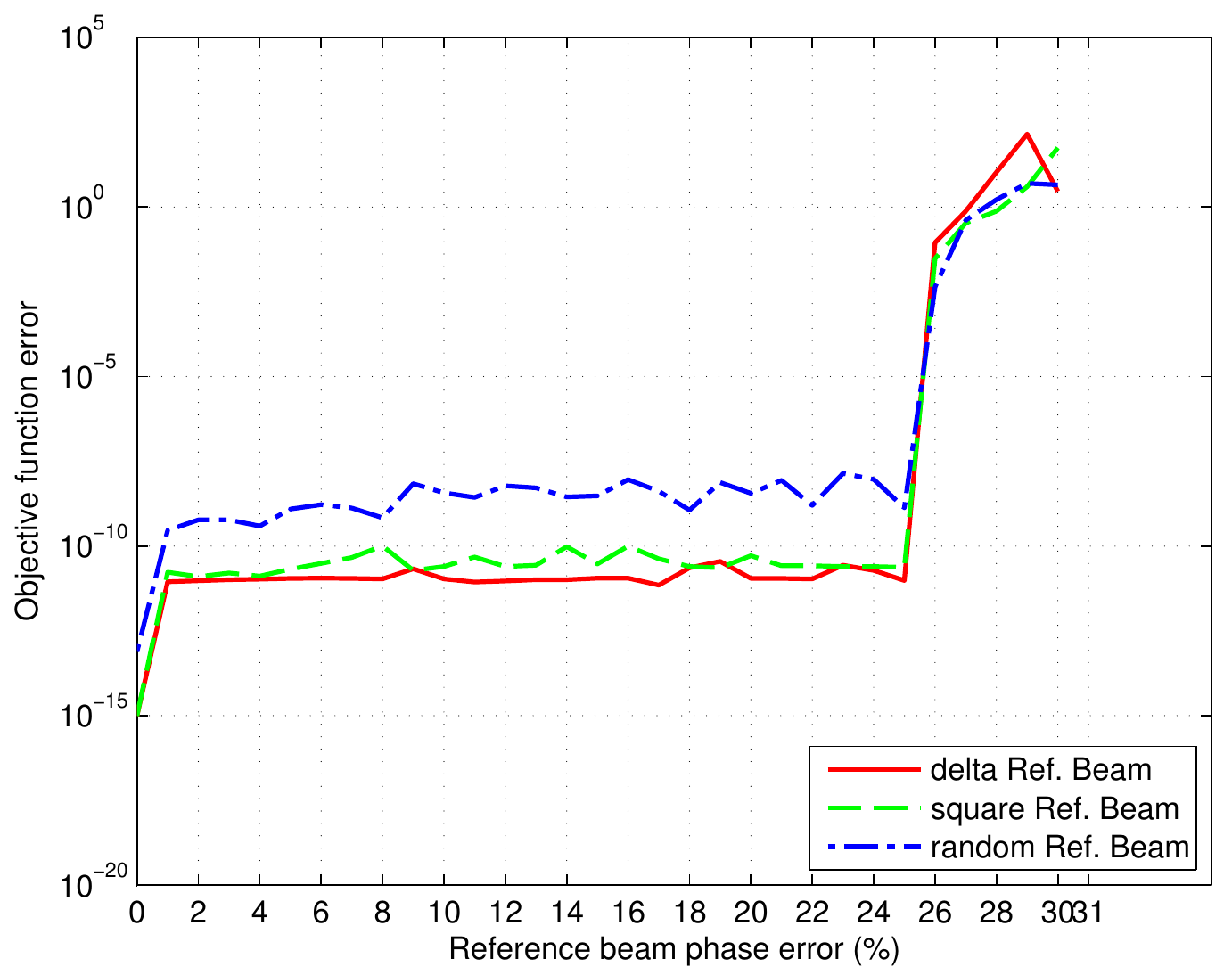}
  }
  \qquad{}
  \subfloat[]{
    \label{fig:we-success-complexRandom}
    \includegraphics[height=0.25\textheight]{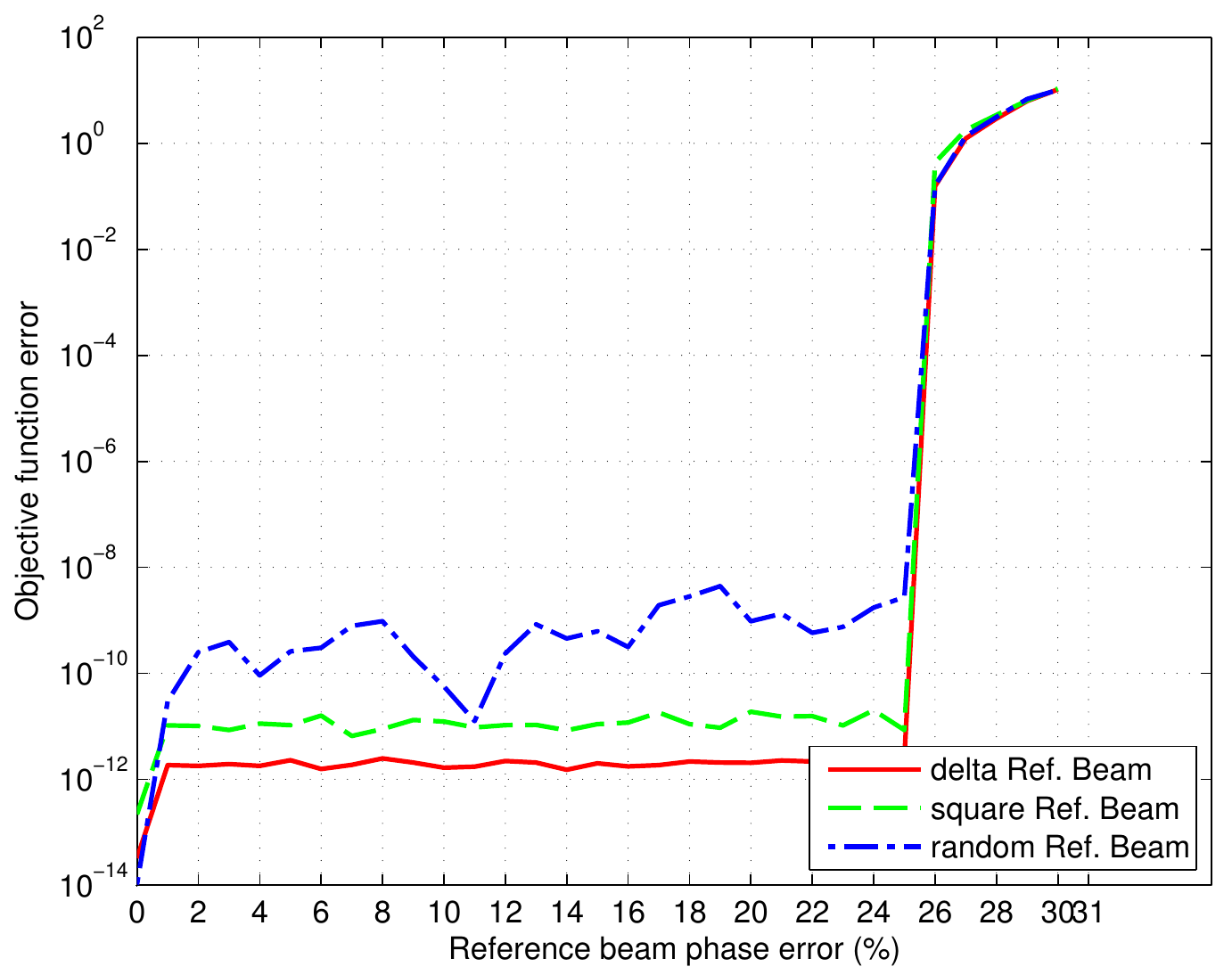}
  }
  \caption[Fourier domain error vs. phase
    error in the reference beam]{Fourier domain ($\||\hat{x}| - |\hat{z}|\|^{2}$) error vs. phase
    error in the reference beam: (a) real valued image,
    (b) image phase is smooth, (c) image phase is random. }
  \label{fig:we-successrate}
\end{figure}

A comparison
with the holographic reconstruction is given in
Figure~\ref{fig:objectdomain-error} where the error norm in the
object domain is depicted. Note that the objective function values
are about $10^{-10}$, hence, one would expect the object domain
error norm to be of order $10^{-5}$ (the difference stems from the
fact that the objective function uses \textit{squared} norm). This
is not so in the case of complex-valued images and the random
reference beam. This effect is due to the relaxation we perform in
the Fourier phase, as discussed in
Section~\ref{sec:basic-reconstr-algor}. It does not change the
relative phase distribution, but all phases can get a constant
addition. This can be corrected by solving the one dimensional
minimization defined by Equation~\eqref{eq:phase-holo-13}.
Figure~\ref{fig:objectdomain-error-corrected} depicts the corrected
values yielded by this process. Visual results comparing the
holographic reconstruction with our method are provided in
Figure~\ref{fig:visual-results-phase-error}.

\begin{figure}[H]
  \centering
  \subfloat[]{
    \label{fig:we-quality-real}
    \includegraphics[height=0.25\textheight]{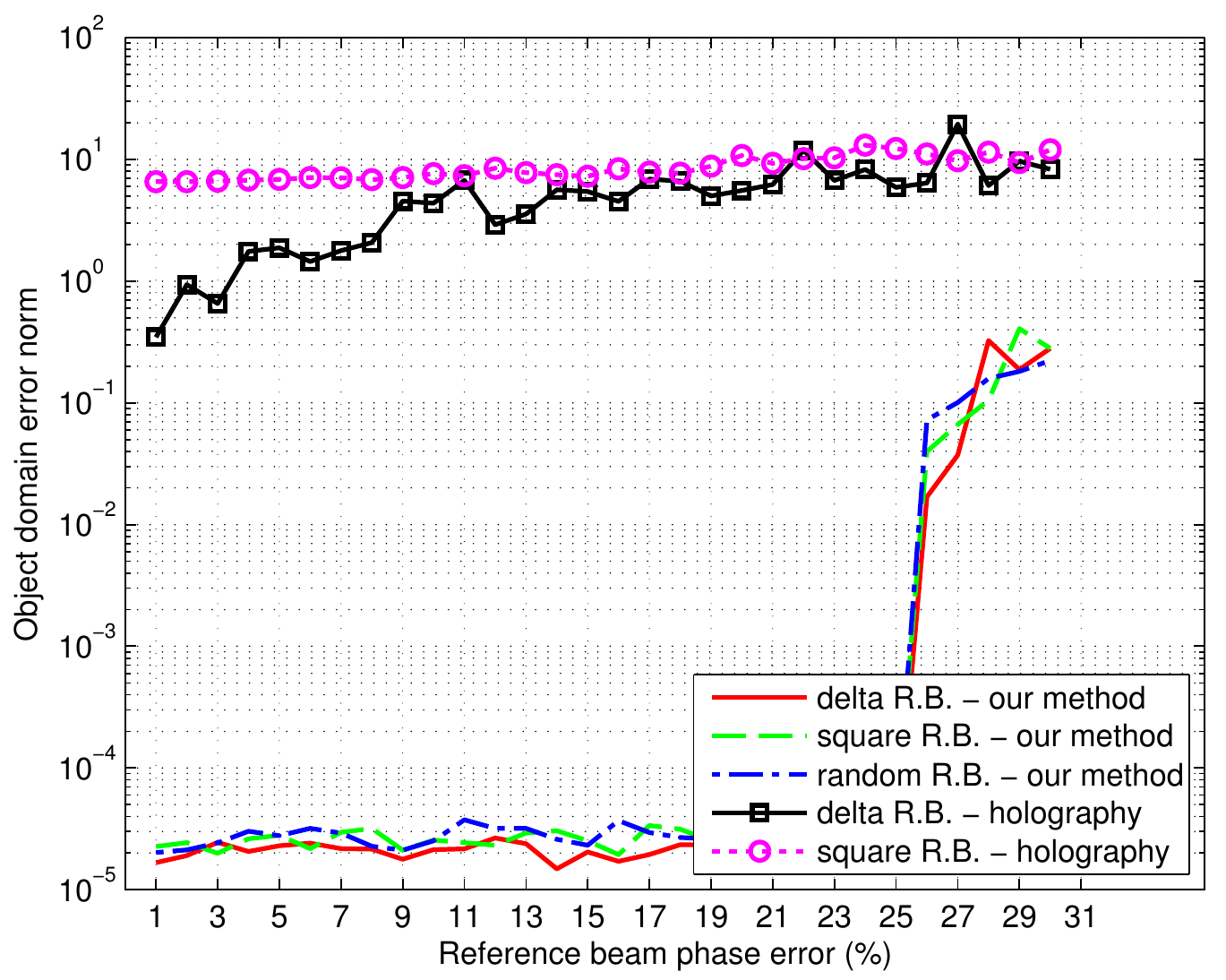}
  }
  \subfloat[]{
    \label{fig:we-quality-complexSmooth}
    \includegraphics[height=0.25\textheight]{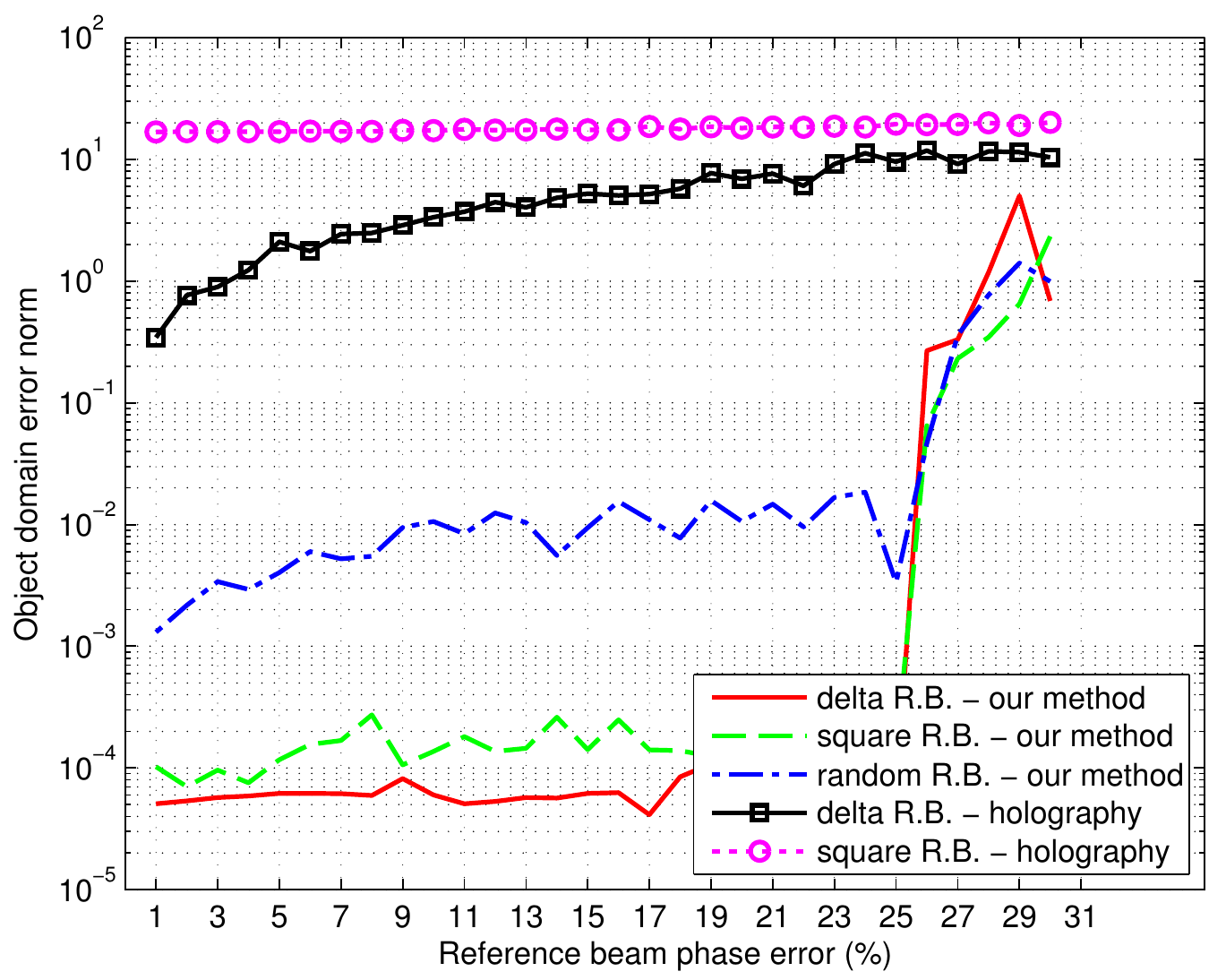}
  }
  \qquad{}
  \subfloat[]{
    \label{fig:we-quality-complexRandom}
    \includegraphics[height=0.25\textheight]{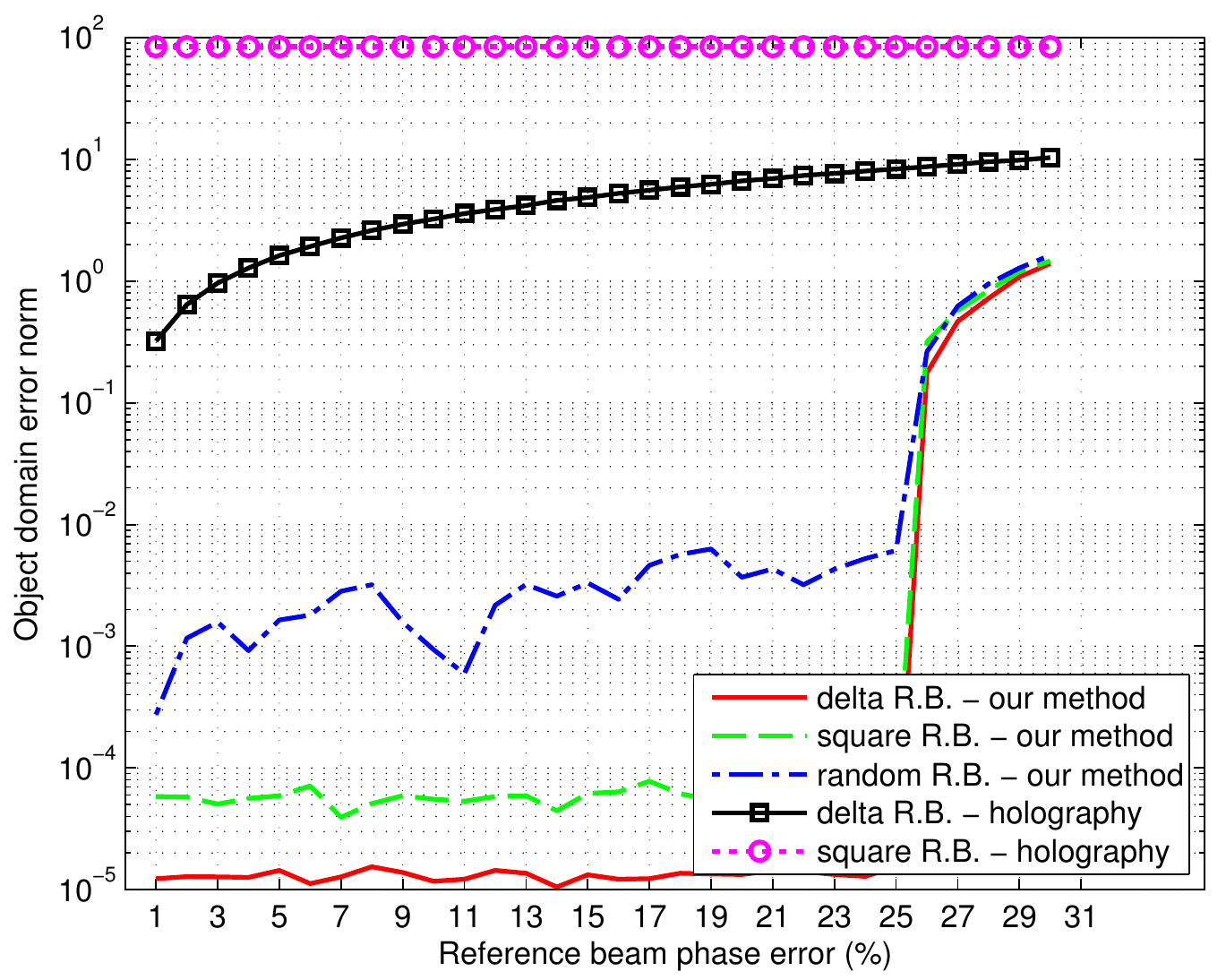}
  }
  \caption[Object domain error vs. phase error in the
    reference beam]{Object domain error ($\|x-z\|$) vs. phase error in the
    reference beam: (a) real valued image,
    (b) image phase is smooth, (c) image phase is random.}
  \label{fig:objectdomain-error}
\end{figure}

\begin{figure}[H]
  \centering
  \subfloat[]{
    \label{fig:we-qualitycorrected-real}
    \includegraphics[height=0.25\textheight]{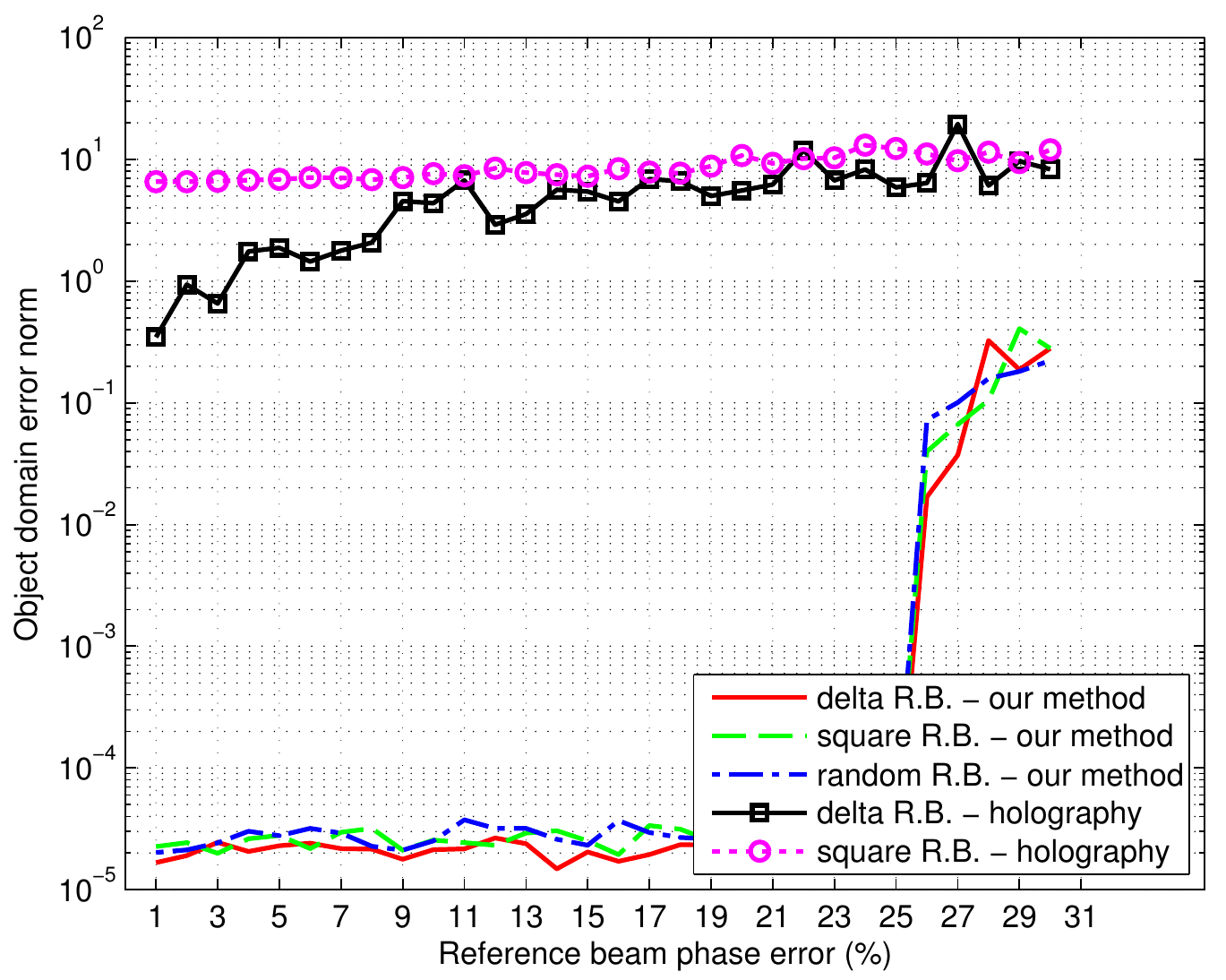}
  }
  \subfloat[]{
    \label{fig:we-qualitycorrected-complexSmooth}
    \includegraphics[height=0.25\textheight]{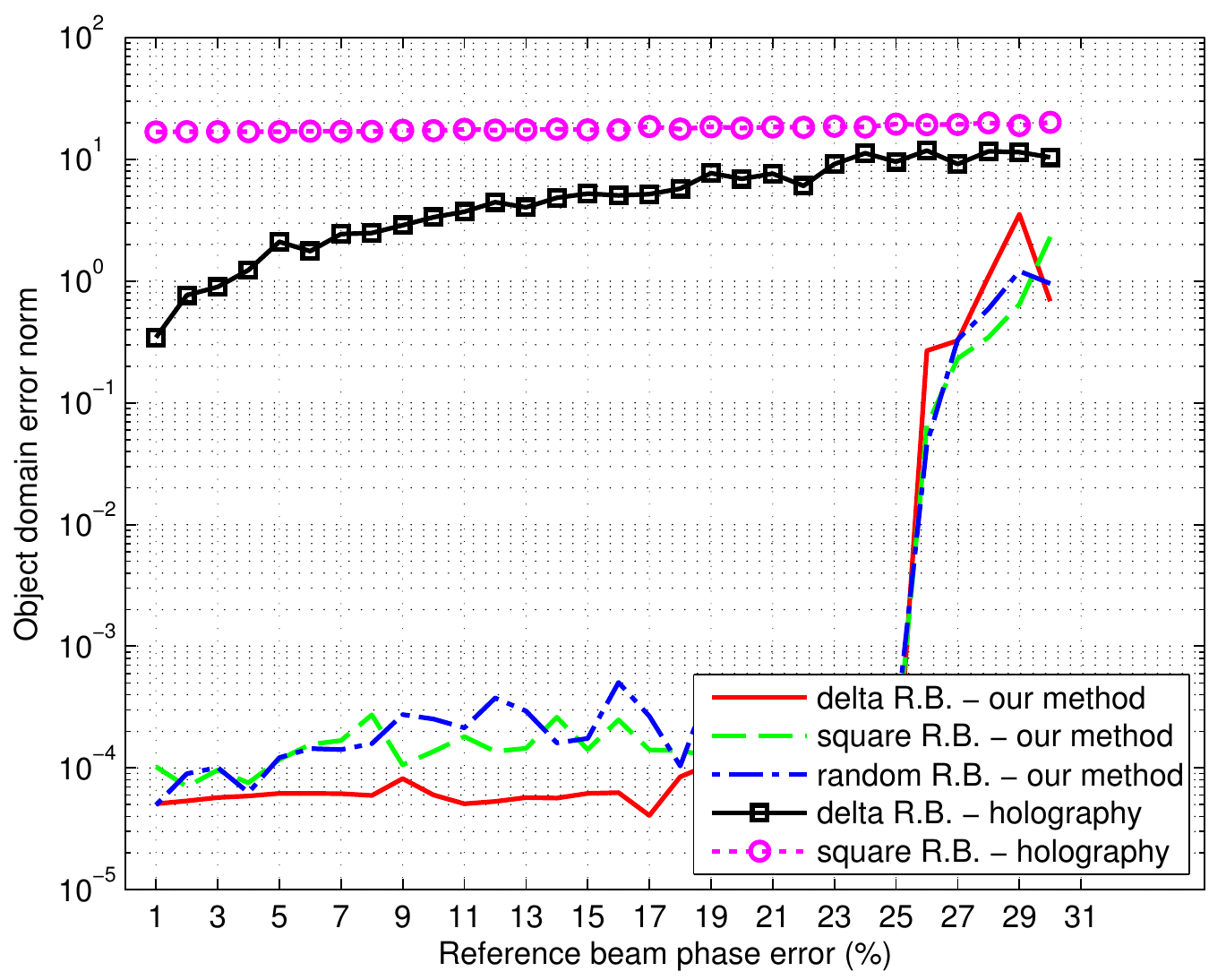}
  }
  \qquad{}
  \subfloat[]{
    \label{fig:we-qualitycorrected-complexRandom}
    \includegraphics[height=0.25\textheight]{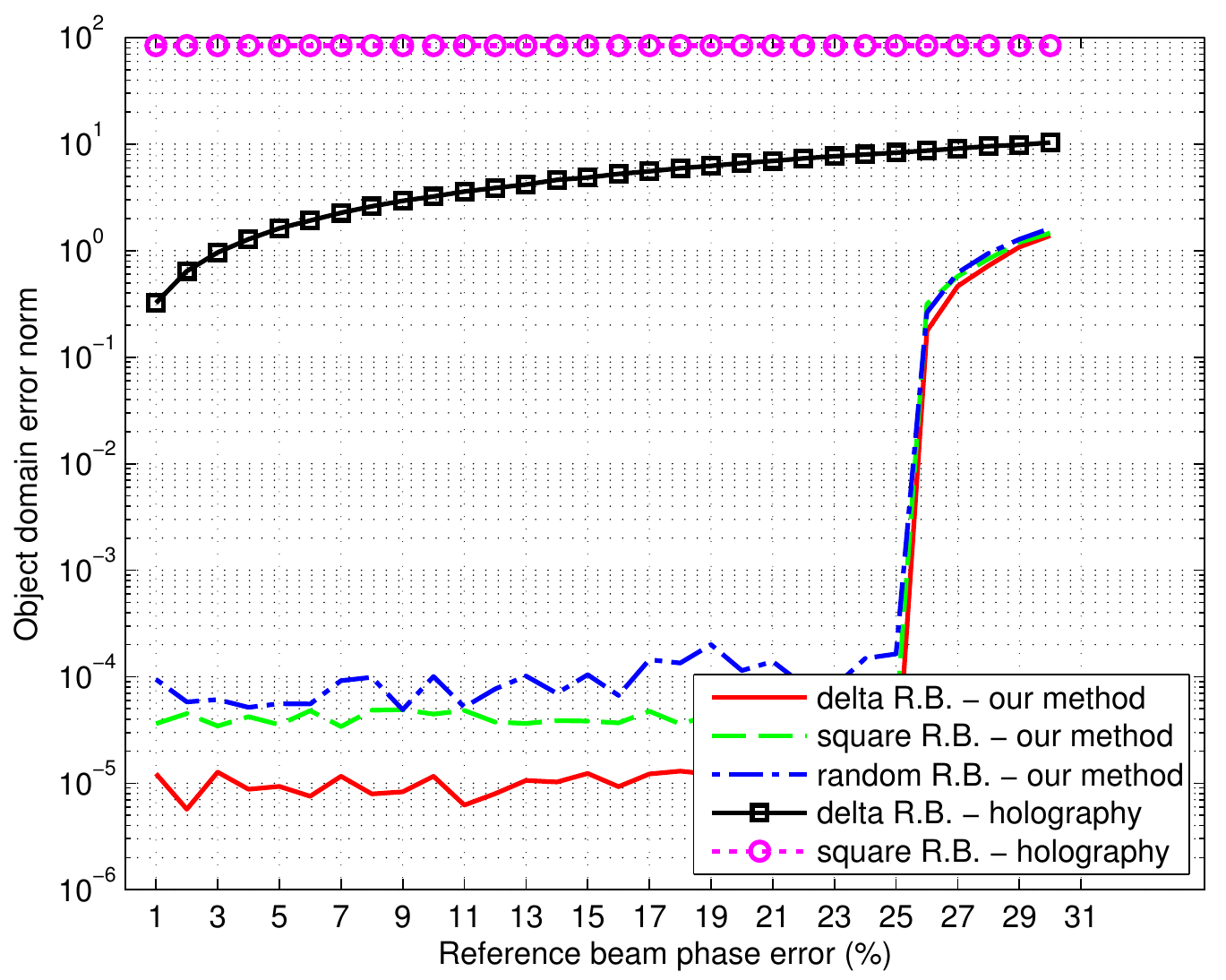}
  }
  \caption[Corrected object domain error  vs. phase error in the
    reference beam]{Corrected object domain error ($\|x-z\|$) vs. phase error in the
    reference beam: (a) real valued image,
    (b) image phase is smooth, (c) image phase is random.}
  \label{fig:objectdomain-error-corrected}
\end{figure}

\begin{figure}[H]
  \centering
  \subfloat[]{
    \includegraphics{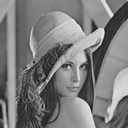}
  }
  \quad{}
  \subfloat[]{
    \includegraphics{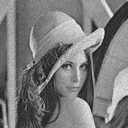}
  }
  \quad{}
  \subfloat[]{
    \includegraphics{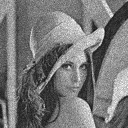}
  }\\
   \subfloat[]{
    \includegraphics{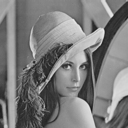}
  }
  \quad{}
  \subfloat[]{
    \includegraphics{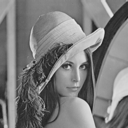}
  }
  \quad{}
  \subfloat[]{
    \includegraphics{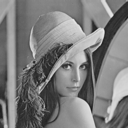}
  }
  \caption[Image reconstructed by the holographic method
    and our method]{Image (intensity) reconstructed by the holographic method
    (upper row) and our method (lower row). Object phase is random,
    and the reference beam is a delta function with Fourier phase
    errors: 
    (a), (b), and (c) --- phase error is 1\%, 10\%, and 20\% respectively
    (d), (e), and (f) --- phase error is 1\%, 10\%, and 20\% respectively.}
  \label{fig:visual-results-phase-error}
\end{figure}

As the results show, our method demonstrates a substantial
advantage over ordinary holographic reconstruction. It is
remarkable that even when the minimization is not particularly
successful (in cases of very large phase errors) our
reconstruction is still closer to the true signal than the
holographic method. This success is due to our approach of
decoupling the phase retrieval from the interferometric
measurements. As mentioned earlier, we deliberately avoid strong
dependence on the reference beam. The interference pattern is only
used to \textit{estimate} the Fourier phase bounds. The results
indicate that this approach is well justified.

\section{Concluding remarks}
\label{sec:phase-holo-conclusions}
In this chapter we presented a new reconstruction method from two
intensities measured in the Fourier plane. One is the magnitude of the
sought signal's Fourier transform, and the other is the intensity
resulting from the superimposition of the original image and an
approximately known reference beam. While the method was originally
developed for the phase retrieval problem, it can be useful in digital
holography, because it poses less stringent requirements on the
reference beam. The method is designed specifically to allow severe
errors in the reference beam without compromising the quality of
reconstruction. Numerical simulations justify our approach, exhibiting
reconstruction that is superior to that of holographic techniques.

It is also important to note that this method can, potentially,  be used in the
classical phase retrieval problem without knowing the Fourier phase
withing the error of $\pi/2$ radians as was done in
Chapters~\ref{cha:appr-four-phase-first},
and~\ref{cha:appr-four-phase-explanation} as it uses some sort of
bootstrapping  technique that allows much rougher phase estimation.


\chapter{Designing boundaries for phase retrieval\footnotemark}
\label{cha:design-bound-phase}

\footnotetext{The material presented in this section is currently in
  preparation for submission to a journal.}

In this chapter we develop an algorithm for signal reconstruction from
the magnitude of its Fourier transform in a situation where some
(non-zero) parts of the sought signal are known. Although our method
does not assume that the known part comprises the boundary of the
sought signal, this is often the case in microscopy: a specimen is
placed inside a known mask, which can be thought of as a known light source
that surrounds the unknown signal. Therefore, in the past, several
algorithms were suggested that solve the phase retrieval problem
assuming known boundary values
\shortcite{hayes82importance,hayes83recursive,fienup83reconstruction,fienup86phase}.
Unlike
our method, these methods do rely on the fact that the known part in
on the boundary.

Besides the reconstruction method we give an explanation of the
phenomena observed in previous work: the reconstruction is much faster
when there is more energy concentrated in the known part
\shortcite{fiddy83enforcing,fienup86phase}. Quite surprisingly, this can be
explained using our previous results on phase retrieval with
approximately known Fourier phase.

\section{Review of existing methods}
\label{sec:revi-exist-meth}
Of course, it is possible to use the methods reviewed/developed in the
previous sections. One can apply the HIO algorithm and hope not to be
entrapped in a situation where the method stagnates. However,
this approach is not optimal, as it does not use the additional
information available in this case.
Note that we cannot  use the
reconstruction method developed in the previous chapter because
in the current setup we assume that only one measurement is
available. Our method will be presented later. At the time being let us start by
reviewing other methods suggested in literature.

In accordance with the convention adopted in the previous chapters,
the unknown signal is denoted by $z(m,n)$\footnote{Here we explicitly
  assume that the signal is discrete.}. The signal is assumed to have
a finite support, specifically, it vanishes outside the box
$[0,M-1]\times[0,N-1]$, and our goal is, as before, to
reconstruct it from the (oversampled) magnitude of its Fourier
transform $|\hat{z}(p,q)|$, where
\begin{equation}
  \label{eq:boundary-1}
  \hat{z}(p,q) =
  |\hat{z}(p,q)|\exp(j\phi(p,q))\equiv\mathcal{F}[z(m,n)]\ .
\end{equation}
In our implementation, $F$ the \textit{unitary} Fourier transform. This
regularization constant is chosen with only one purpose: to make the
distance (norm) in the Fourier domain equal that in the object
domain. Thus, by the discrepancy (error) in the Fourier domain one can
easily estimate the error in the object domain. Hence,
\begin{equation}
  \label{eq:boundary-2}
  \mathcal{F}[z(m,n)]=(PQ)^{-1/2}\sum_{m=0}^{P-1}
  \sum_{n=0}^{Q-1}z(m,n)\exp[-j2\pi (mp/P + nq/Q)]\ ,
\end{equation}
where $m,p=0,1,\ldots,P-1$ and $n,q=0,1,\ldots,Q-1$. In addition, we
require the ``two-fold oversampling'' in the Fourier domain, that is
$P=2M-2$, $Q=2N-2$. The purpose of this oversampling is to capture the
information that $z(m,n)$ has only limited support. With these
definitions, there exists one well-known relation between the squared
magnitude of the Fourier transform and the linear (as opposed to
cyclic) auto-correlation (denoted by $\star$) function of $z(m,n)$:
\begin{equation}
  \label{eq:boundary-3}
  \begin{split}
    r(i,j)
    & \equiv z(m,n) \star z(m,n) \\
    & \equiv \sum_{m=0}^{M-1}\sum_{n=0}^{N-1}z(m,n)z^{*}(m-i,n-j)\\
    & = \mathcal{F}^{-1}(|\hat{z}(p,q)|^{2})\ ,
  \end{split}
\end{equation}
where $r(i,j)$ is of size $(2M-1)\times(2N-1)$ pixels defined over the
region $1-M\leq i \leq M-1$, $1-N\leq j \leq N-1$. This relation
between the magnitude of the Fourier transform and the
auto-correlation function of the sought signal was used by Hayes and
Quatieri to develop an elegant algorithm for finding
$x$\footnote{Again, $x$ is not necessarily equal to $z$ due to
  possible non-uniqueness.} when its boundaries are
known~\shortcite{hayes83recursive,hayes82importance}. The algorithm assumes
that the boundaries, that is, the first row $z(0,:)$\footnote{Here we
  use MATLAB notation, where a colon (:) denotes the entire vector of
  indexes of the corresponding dimension.}, the last row $z(M-1,:)$,
as well as the first and the last columns: $z(:,1)$, and $z(:,N-1)$,
are known. The algorithm is iterative and at each iteration it
recovers two new unknown rows (or columns) of $x$. The authors
demonstrated that the $k$-th iteration of the algorithm reduces to a
simple matrix (pseudo) inverse to solve the following system of
equations
\begin{equation}
  \label{eq:boundary-4}
  \left[
    \begin{array}{c}
      F \ \vert\  L
    \end{array}
  \right] \,
  \left[
    \begin{array}{c}
      x(N-k, :)\\
      x(k, :)
    \end{array}
  \right] =
  \left[    
   \begin{array}{c}
      \tilde{r}(N-k, :)
    \end{array}
  \right] \,, 
\end{equation}
where the matrices $F$ and $L$ correspond to the cross-correlation
with the first and the last rows, respectively. The right-hand side
$\tilde{r}(N-k)$ is obtained from the $(N-k)$-th row of the
auto-correlation function $r(i,j)$ from which the contribution of
already recovered rows $1,2,\ldots,k-1$ and $M-2,M-3,\ldots, M-k+1$
has been subtracted. For a more detailed description we address the
reader to the original articles.  The most appealing property of this
algorithm is that it requires only a small, known in advance, number
of iterations until the whole signal $x(m,n)$ is recovered. The authors
also provided conditions that they claimed were sufficient to
guarantee uniqueness of the reconstruction. The latter, however, were
proven to be incorrect (see \shortcite{fienup86phase}). Nonetheless, the biggest
problem with this algorithm is not the non-uniqueness of
reconstruction, because, as we already mentioned, the two-dimensional phase
retrieval solution is usually unique (up to trivial transformations)
to start with. The main difficulty  
that makes the algorithm impractical for all but tiny problems is its
numerical instability. It can easily be shown that the error grows
exponentially due to the recurrent nature of the algorithm. 
Even if we assume that the measurements are ideal, containing
absolutely no error, each iteration of the algorithm will introduce a
small error due to the finite computer precision. In the next
iteration, the norm of this error will be increased by a factor
proportional to the condition number of the matrix
$[F\,\vert\,L]$. The new error will be further amplified (by the same
factor) in the next iteration, and so on. This will result in
extremely fast (exponential) error growth. This is demonstrated in
Figure~\ref{fig:hayes-quatieri-reconstruction} where a small
($128\times128$ pixels) image was reconstructed by the HQ algorithm in
the horizontal direction, that is, reconstructing column after
column. This exponential growth is observed whenever the condition
number of the matrix $[F\,\vert\,L]$ is greater than one. It can
equal unity in some very special cases, for example, when the known
boundaries contain a single delta function. This situation was
considered in \shortcite{fienup83reconstruction,fienup86phase}, and in
\shortcite{fiddy83enforcing}, although in the latter it was not used
directly for the reconstruction---the authors added this condition to
guarantee the uniqueness of the reconstruction. Moreover, all of them
observed empirically that the reconstruction was faster when the known
part contained more energy. This observation is common, even though
the authors use different reconstruction methods. None of them,
however, provided an explanation for this phenomenon. In the next
section we present our reconstruction routine and explain why a
``strong'' known part leads to a fast and stable reconstruction.
Additionally, we will consider the influence of noise in the
measurement---another issue that has been largely overlooked in 
previous works despite its enormous importance.

\section{Our reconstruction method}
\label{sec:our-reconstr-meth}
First, we must consider the source of the known boundary. In
microscopy, it is often natural to create a mask (for transparencies)
or a bed (for light reflecting objects) whose boundaries are known and
designed in a way that leads to easy image reconstruction. An example of
such a mask is shown in Figure~\ref{fig:mask}.

\begin{figure}[H]
  \centering
  \includegraphics[width=8cm]{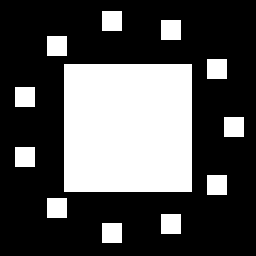}
  \caption{Artificial mask design.}
  \label{fig:mask}
\end{figure}

The object here is
assumed to be transparent, so the white areas correspond to simple
windows in an opaque material (shown in black). The big window in the
center is where the object is placed. The whole
construction is then illuminated by a coherent plane wave such that
the small windows in the mask can be assumed to be of known intensity.
There is nothing special about the plane wave here, the
most important point is that part of the  image is
known. This setup allows us to formulate the following minimization
problem to find the unknown signal $x(m,n)$:
\begin{equation}
  \label{eq:boundary-5}
  \begin{split}
    \min_{x} &\quad \||\mathcal{F}[x + b]| - r\|^{2}\\
    \mathrm{subject\ to} &\quad x(m\in \mathcal{O}_{m}, n\in
    \mathcal{O}_{n})=0 \,, 
  \end{split}
\end{equation}
where $r$ denotes the measured Fourier magnitude of the entire signal,
$b$ represents the known part (boundary) and $(\mathcal{O}_{m},
\mathcal{O}_{n})$ designate the 
location of the off-support parts of $x$ (basically, these are the
locations occupied by the mask except the central window where the
object is located). Note that we, again, use $x$ to denote the
reconstruction result because, in general, it may not be equal the
sought signal $z$.

Following the discussion in Chapter~\ref{cha:phase-retr-holography},
the mask can, in principle, be constructed in a way that makes the reconstruction
trivial: it is sufficient to place only one infinitesimally small
window (a delta function) at a sufficient distance from the object. In
this case the mask would satisfy the holography conditions and the
reconstruction could be as easy as applying a single Fourier
transform.  However, as was mentioned before, generating a delta
function is not possible in practice. Practical mask design must
balance between the production costs/difficulties and how helpful it
is for the reconstruction process. Hence, here we use simple square
windows of relatively  large size, located close to the object. Hence,
a non-iterative reconstruction is not possible. However, using a
quasi-Newton optimization method to solve (\ref{eq:boundary-5}) we
were able to get good results, as demonstrated in
Section~\ref{sec:numerical-results}.

Before we proceed to the
numerical results, it is pertinent to discuss briefly the design of
the mask. It is not known at the moment what is the best way to design
the mask so that the reconstruction would be fast and robust. Our
experience shows that the mask should have a strong ``presence'' in
all frequencies. More precisely, we can explain why this situation
leads to a fast reconstruction. Consider a single element in the
Fourier space. The contribution of the known part (denoted by
$\hat{b}_{i}$) is fully determined, that is, its magnitude and phase
are known. Hence, the full Fourier domain data at this location is the
sum $\hat{x}_{i}+\hat{b}_{i}$ that is located somewhere on the red circle
shown in Figure~\ref{fig:strong-boundary}. Note that the phase
$\angle(\hat{b}_{i} + \hat{x}_{i})$ is known up to $\pi/2$ radians
when $\alpha \leq \pi/4$ (see Figure~\ref{fig:strong-boundary}). That
is, when
\begin{equation}
  \label{eq:boundary-6}
  \arcsin
  \left (
    \frac{|\hat{x}_{i}|}{|\hat{b}_{i}|}
  \right ) \leq \frac{\pi}{4}
  \Longleftrightarrow |\hat{b}_{i}| \geq \sqrt{2} |\hat{x}_{i}| \,. 
\end{equation}

\begin{figure}[H]
  \centering
  \includegraphics{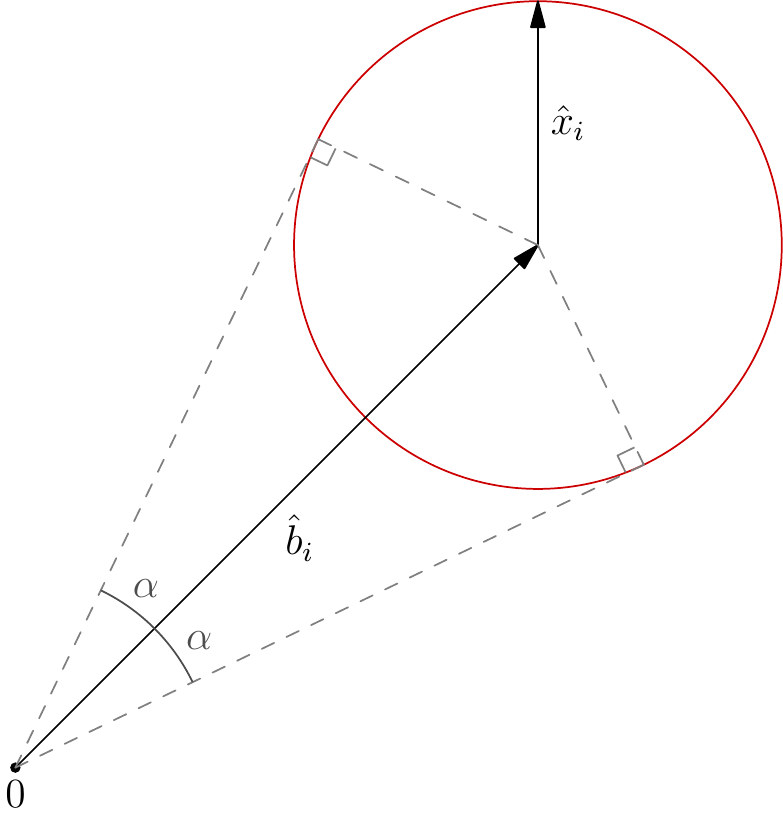}
  \caption{``Strong'' known part leads to the phase retrieval problem
    with approximately known Fourier phase.}
  \label{fig:strong-boundary}
\end{figure}
The relation in Equation~\eqref{eq:boundary-6} is quite interesting:
if the energy concentrated in the known part is at least twice as
large as the energy in the unknown part\footnote{The energy must also
  be distributed ``well'' in the frequency domain.}, the problem
reduces to phase retrieval with Fourier phase known within the limit
of $\pi/2$ radians. Hence, according to our results in Chapter
\ref{cha:appr-four-phase-explanation}, any reasonable algorithm will be
able to reconstruct the unknown part. Moreover, the larger the ratio
$|\hat{b}_{i}|/|\hat{x}_{i}|$, the smaller the Fourier phase
uncertainty is.  However, there is another aspect that is important in
practice---in all physical experiments, the measurements
inevitably contain some noise. In our case, where the measurements are
light intensity, the noise is well approximated by the Poisson
distribution. This means that stronger intensity will result in more
noise (though, the signal to noise ratio (SNR) usually increases with
the intensity growth). Hence, making the known part too strong
compared with the sought signal will make the measurement noise more
dominant with the respect to the unknown part and will eventually
arrive at the level where the reconstruction is not possible. Hence,
one should not just increase the intensity of the known part, as this leads
to poor reconstruction quality in case of noisy
measurements\footnote{Here we assume that the noise grows with the
  signal, as indeed happens with Poissonian noise.}. The most
appealing approach would be to design a mask that would use the
limited power in an efficient way. Without any a priori knowledge
about the sought signal, it seems that the optimal way would be to
create a mask whose Fourier domain power is spread evenly over all
frequencies. Note the relation to the holographic reconstruction where
a single small window (delta function) is used, because the Fourier domain
power of a delta function is the same over all frequencies. Instead of
using a delta function we can obtain a very good approximation to the
uniform power spectrum if some randomness is added to the mask
windows. It can be random shape of the windows or random
values/phases across them. Making random shapes may be more difficult
than adding a diffuser into a square window. Hence, in
Section~\ref{sec:numerical-results} we demonstrate the reconstruction
with mask windows of constant intensity and random intensity. As
expected, adding randomness improves the reconstruction speed and
noise-immunity.

\section{Numerical results}
\label{sec:numerical-results}
We experimented with several images, however, the results provided
below are limited to two images of size $128\times 128$ pixels. These
were chosen to represent two different classes of objects. One is a
natural image ``Lena''  already used in our previous
experiments. Another is the Shepp-Logan phantom.  These images are
very popular in other fields. ``Lena'' is a classical benchmark in the
image processing community, because it has a lot of features and
delicate details. The second image, ``phantom'', is often used as a
benchmark in MRI related algorithms. However, its piece-wise constant
nature can approximate well objects that are often investigated in
microscopy, for example, cells. Besides the above differences these two
images differ by their support. Lena's support is tight, that is, it
occupies all the space and no shifts are possible. The phantom, on the
other hand, has non-tight support. As we saw earlier, this is an
important property for phase retrieval algorithms---complex valued
images with non-tight support are much more difficult for the current
reconstruction methods like HIO. The image intensities (squared
magnitude) is shown in Figure~\ref{fig:boundary-test-images}.
\begin{figure}[H]
  \centering
  \subfloat[]{\includegraphics{lena_128}}
  \qquad{}
  \subfloat[]{\includegraphics{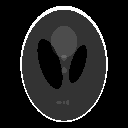}}
  \caption[Test images]{Test images (intensity).}
  \label{fig:boundary-test-images}
\end{figure}
The images were tested in two different scenarios: first, they were
assumed to be real-valued and non-negative; second, they were made
complex-valued by adding a phase distribution. Here we present the
results for the case where the objects' phases were chosen to be
proportional to their intensity (scaled to the interval
$[0,2\pi]$). This choice corresponds to the case where the phase
changes in a relatively smooth manner.  However, all our experiments
indicate that the particular phase distribution has little effect on
the reconstruction, except the case where the object can be assumed
real non-negative. In this case the non-negativity prior can be used
to speed-up the reconstruction and improve its quality in the case of
noisy measurements, as is evident from
Figures~\ref{fig:lena-reconstruction-speed-flat-mask},
and~\ref{fig:phantom-reconstruction-speed-flat-mask}. 

Let us begin with a demonstration of the HQ algorithm
results. Following our discussion in Section~\ref{sec:revi-exist-meth}
we expect the error to grow exponentially as we progress from the
boundaries toward the image center. This expectation is confirmed by
the results shown in Figure~\ref{fig:hayes-quatieri-reconstruction}. Note
that the reconstructed intensity in
Figure~\ref{fig:hayes-quatieri-rec-intensity} cannot convey the true
error because the image storage format clips all values outside the interval
$[0,1]$. The true error is evident from
Figure~\ref{fig:hayes-quatieri-rec-error}.
\begin{figure}[H]
  \centering
  \subfloat[]{
    \includegraphics{lena_128}
  }\qquad{}
  \subfloat[]{
    \label{fig:hayes-quatieri-rec-intensity}
    \includegraphics{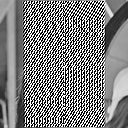}
  }\\
  \subfloat[]{
    \label{fig:hayes-quatieri-rec-error}
    \includegraphics[]{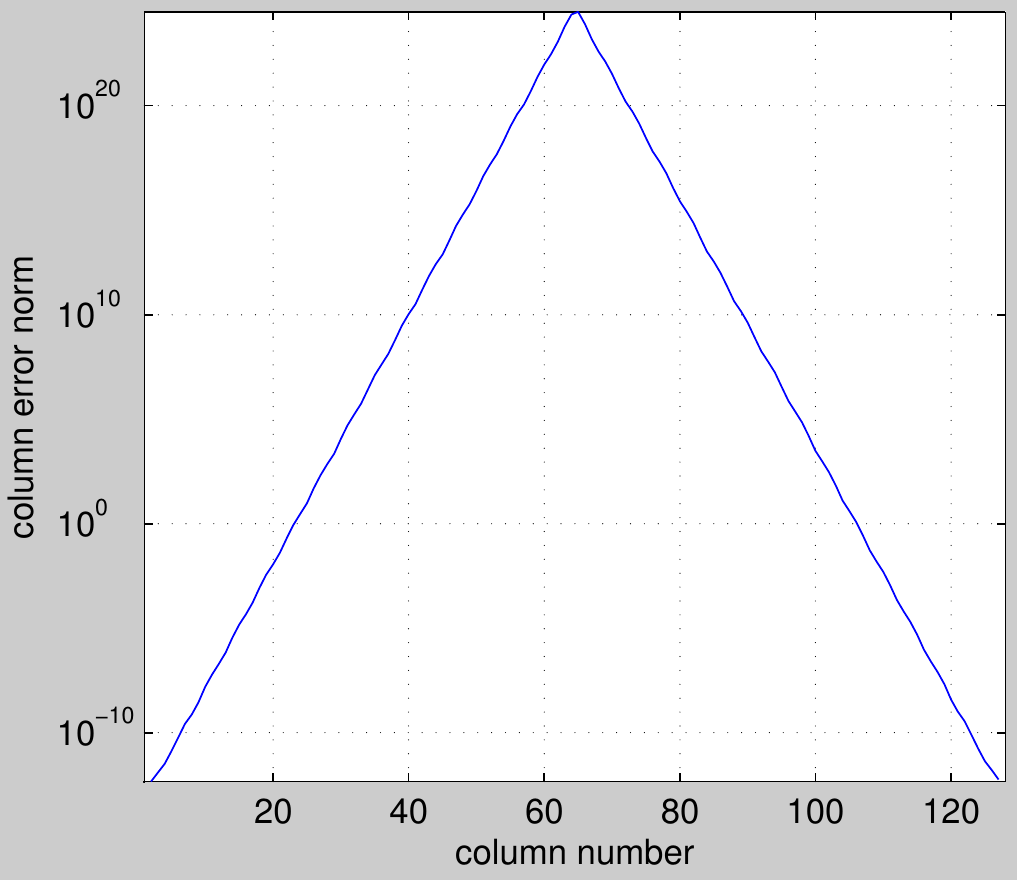}
  }
  \caption[Error growth in the Hayes-Quatieri recursive algorithm]{Error growth in the Hayes-Quatieri recursive algorithm:
  (a) original image, (b) reconstructed by the HQ algorithm, and (c)
  actual error growth in the HQ algorithm.}
  \label{fig:hayes-quatieri-reconstruction}
\end{figure}

In theory, we can use the same boundary conditions as the HQ
algorithm, however, our experiments indicate that the optimization
routine used in our method is prone to stagnation when the known
boundary (or image part in general) carries little energy. This is,
of course, in agreement with our discussion in the previous section:
when the boundary caries little energy it does not provide enough
information about the Fourier phase. This, in turn, causes
line-search optimization algorithms to stagnate (see
Chapter~\ref{cha:appr-four-phase-explanation}). This stagnation could
be addressed by a more sophisticated reconstruction routine similar,
for example, to that in Chapter~\ref{cha:phase-retr-holography}
combined with some techniques from global optimization.  However, this
approach can be expensive in the terms of computational
effort. Moreover, the stagnation can be alleviated by designing a mask
that makes the reconstruction much easier. In our approach the
emphasis is put on the simplicity of the mask design. Hence, we used
the simple mask shown in Figure~\ref{fig:mask}.  The mask is of size
$256\times 256$ pixels with 11 square windows of size $20\times 20$
located approximately on a circle of radius 100 pixels (see
Figure~\ref{fig:mask}). The objects are placed in the middle of the
mask where a special window of size $128\times128$ pixels is
provided. The area outside this special window comprises the known
boundary. When placed in the mask, the test images look as shown in
Figure~\ref{fig:mask-withimages}.
\begin{figure}[H]
  \centering
  \subfloat[]{
    \includegraphics[width=8cm]{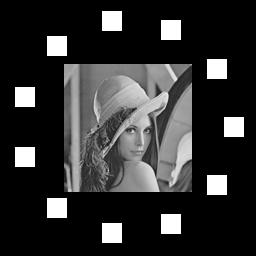}
  }
  \\
  \subfloat[]{
    \includegraphics[width=8cm]{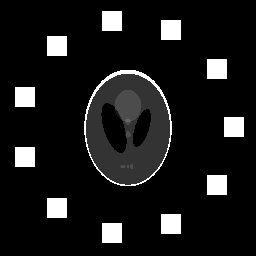}
  }
  \caption{Artificial mask with test images.}
  \label{fig:mask-withimages}
\end{figure}

Note also that the algorithm based on the minimization problem defined
in Equation~\eqref{eq:intro-5} is very naive and does
not try to use the ``approximately known Fourier phase'' in the case
were the energy in the known part is sufficiently large. However, our
main goal is to show the influence of the mask design. This influence
is essentially independent of the algorithm used for reconstruction. 

Let us demonstrate how the energy contained in the known part affects
the reconstruction. Recall that our expectation is that a mask that
has a strong ``presence'' in all Fourier frequencies will better suit
the reconstruction process in the noise-less case. The results shown
in Figures~\ref{fig:lena-reconstruction-speed-flat-mask},
and~\ref{fig:phantom-reconstruction-speed-flat-mask} fully support
this conjecture. These figures present the reconstruction speed (the
objective function minimization rate) for three different magnitudes
of the mask's windows values: 5, 10, and 100. The lowest value (5) was
chosen so as to be close to the theoretical ratio between the energy in
the known and the unknown parts that is required for guaranteed
reconstruction, as defined in Equation~\eqref{eq:boundary-6}. However, as
we can see the value of 5, and even the value of 10 did not result in
perfect reconstruction. This phenomenon has two reasons: first, 
doubling the amount of energy in the known part compared to the unknown part
is not sufficient---this energy must be distributed properly in the
Fourier domain; second, it may be attributed to the simplicity of the
chosen reconstruction algorithm. However, the second reason is less
likely in view of the following experiment. In an attempt to create a
``better'' distribution of the mask's energy in the Fourier domain we
added some ``randomness'' to the mask by modulating (multiplying) the
flat values across the mask's windows with random values in the
interval $[0,1]$. This step improved the speed of the reconstruction
and its robustness, as is evident from 
Figures~\ref{fig:lena-reconstruction-speed-random-mask},
and~\ref{fig:phantom-reconstruction-speed-random-mask}.

\begin{figure}[H]
  \centering
   \subfloat[]{
    \includegraphics[width=0.45\textwidth{}]{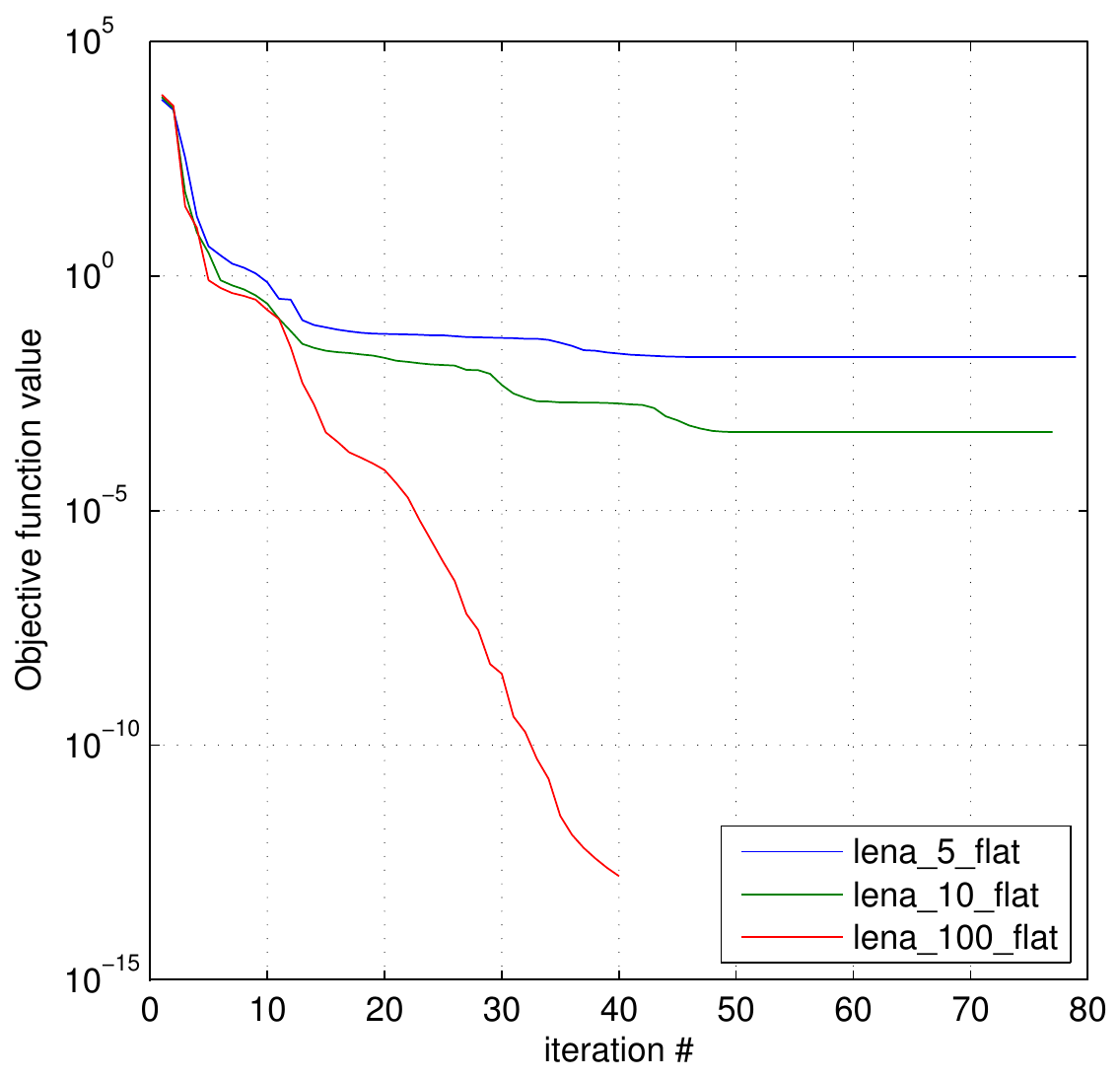}
  }
  \quad{}
  \subfloat[]{
    \includegraphics[width=0.45\textwidth{}]{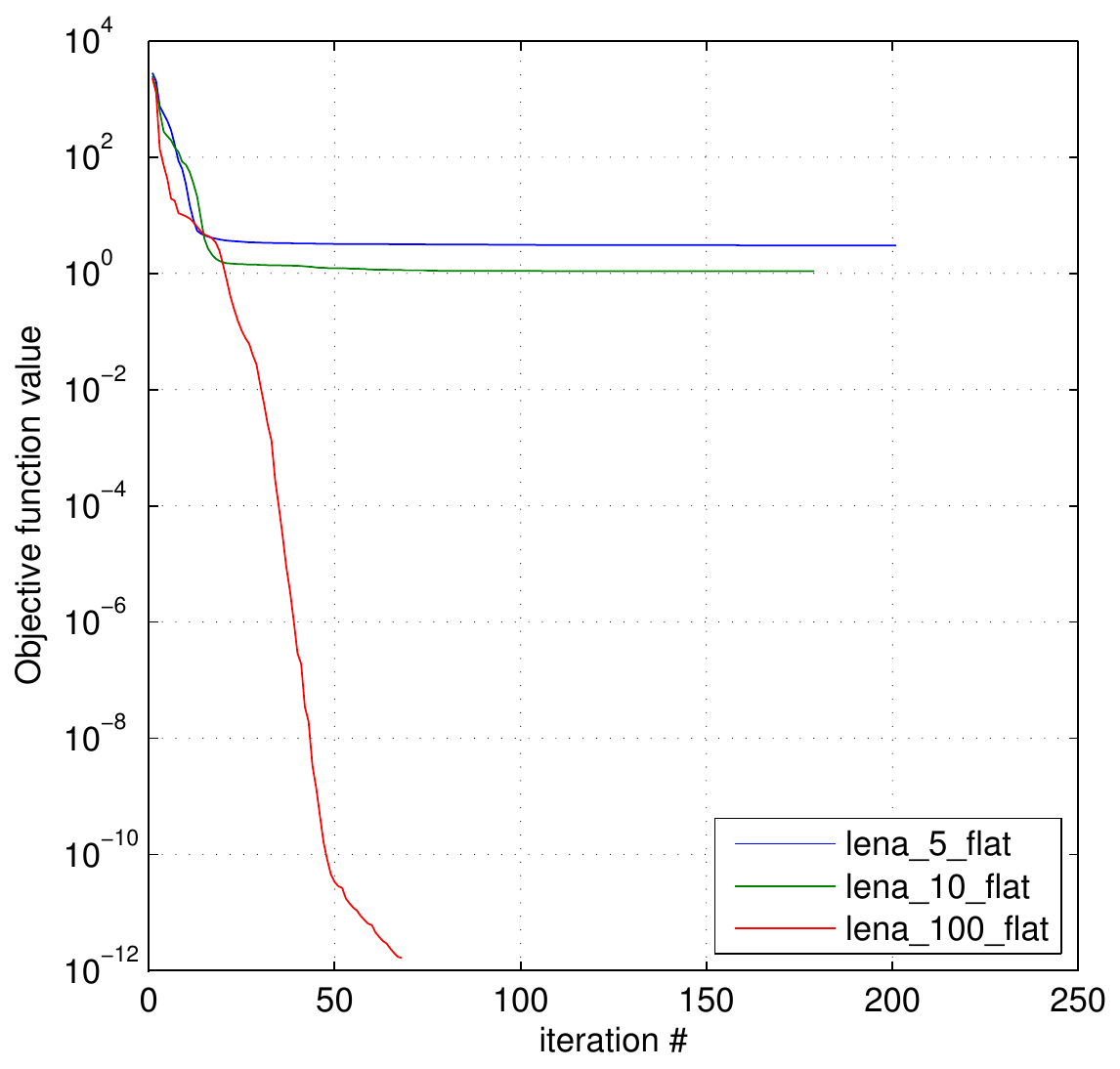}
  }
  \caption[Lena's reconstruction speed with flat mask
  values.]{Lena's reconstruction speed with flat mask values: (a)
    real-valued, (b) complex valued.}
  \label{fig:lena-reconstruction-speed-flat-mask}
\end{figure}

\begin{figure}[H]
  \centering
   \subfloat[]{
    \includegraphics[width=0.45\textwidth{}]{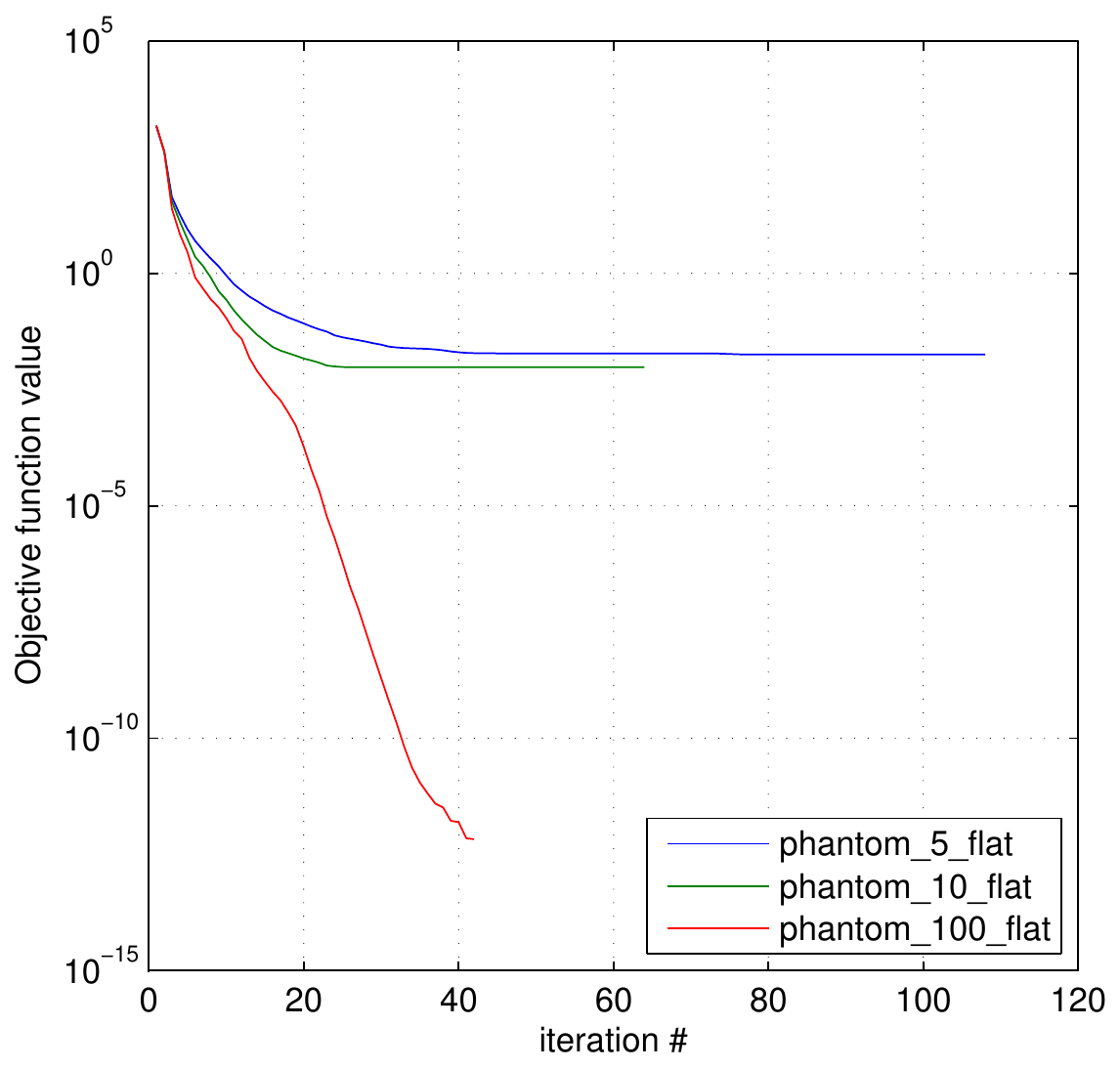}
  }
  \quad{}
  \subfloat[]{
    \includegraphics[width=0.45\textwidth{}]{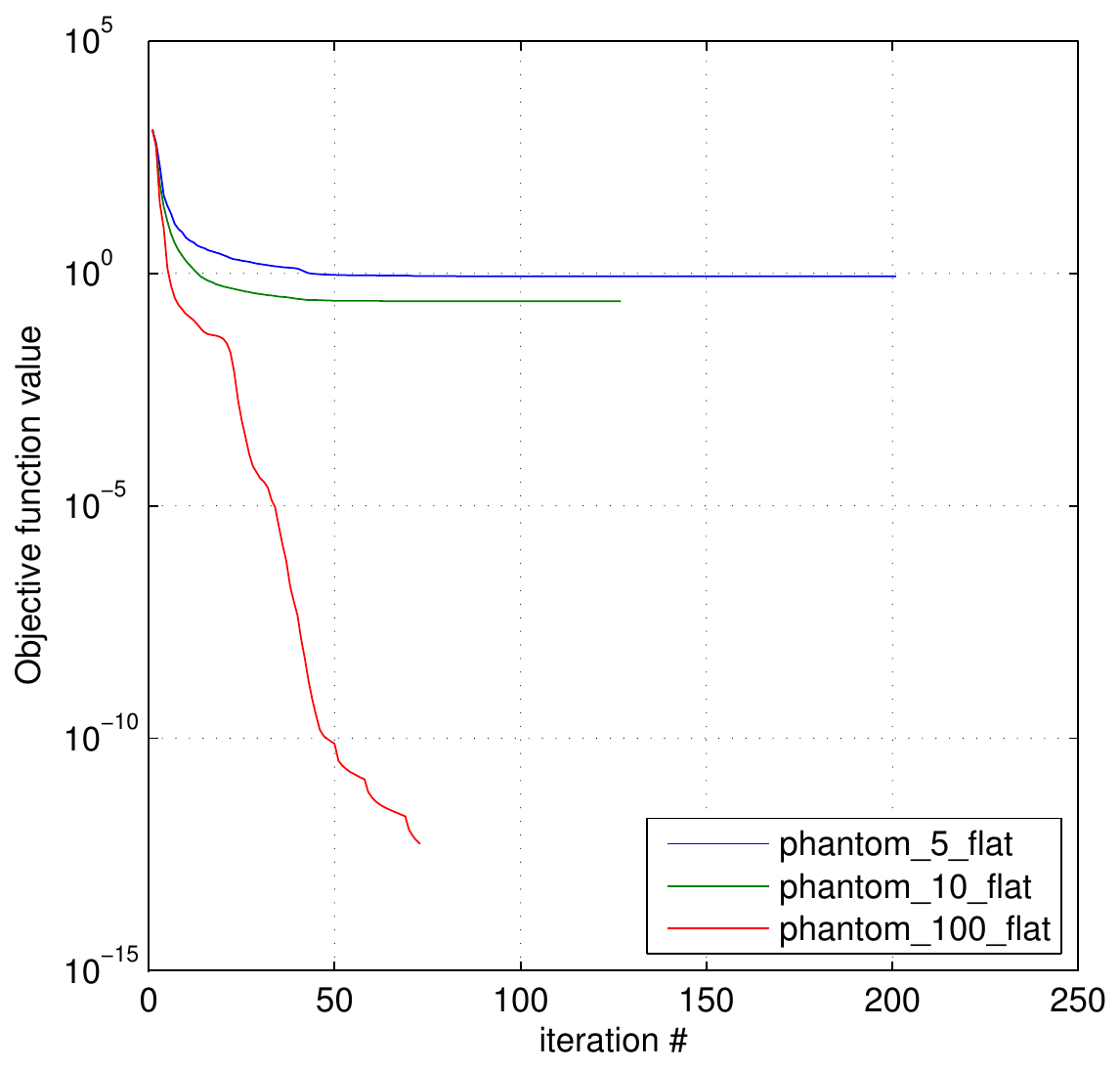}
  }
  \caption[Phantom's reconstruction speed with flat mask
  values.]{Phantom's reconstruction speed with flat mask values: (a)
    real-valued, (b) complex valued.}
  \label{fig:phantom-reconstruction-speed-flat-mask}
\end{figure}

\begin{figure}[H]
  \centering
   \subfloat[]{
     \includegraphics[width=0.45\textwidth{}]{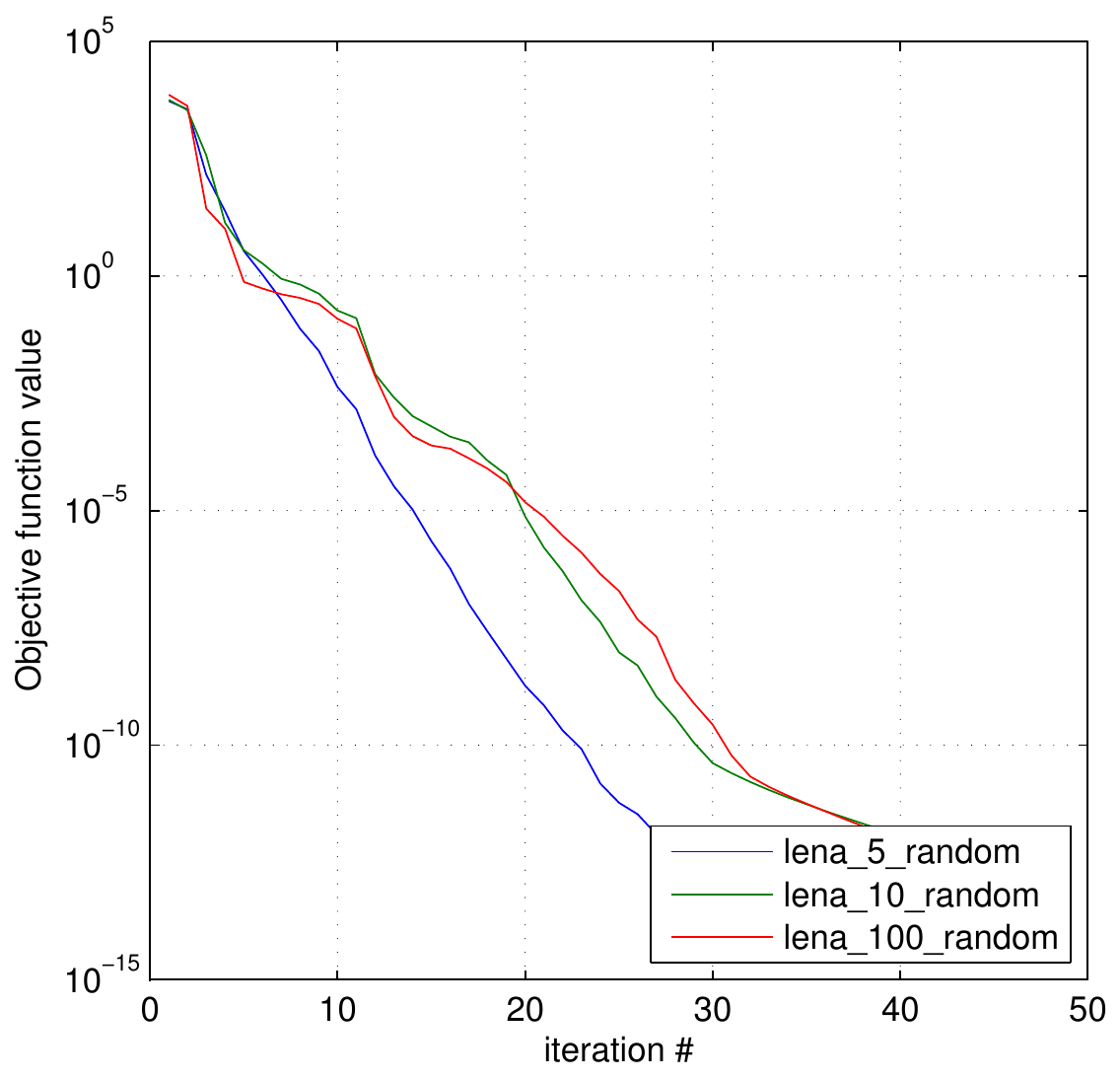}
  }
  \quad
  \subfloat[]{
    \includegraphics[width=0.45\textwidth{}]{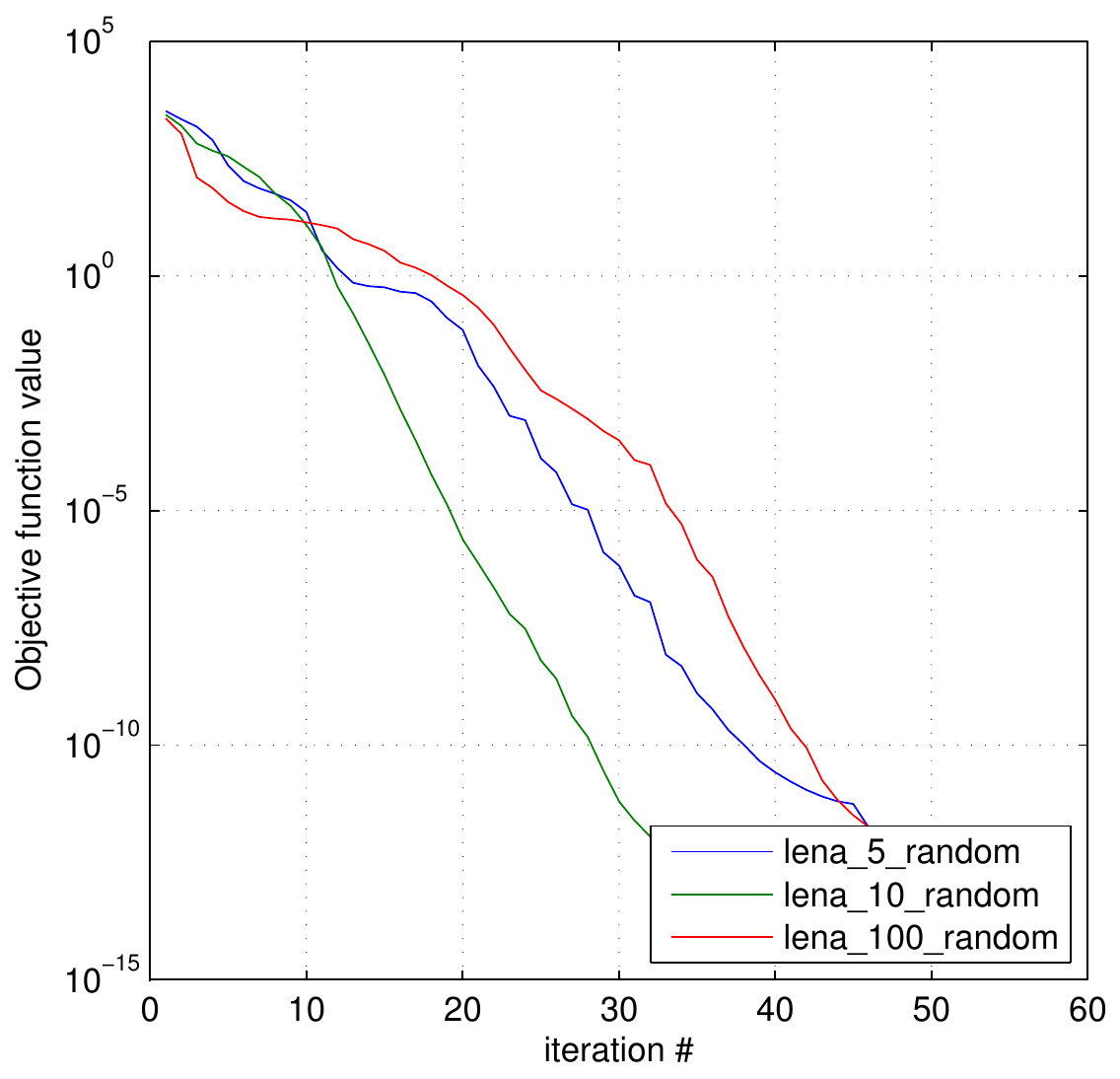}
  }
  \caption[Lena's reconstruction speed with random mask values]{Lena's
    reconstruction speed with random mask values: (a) real-valued, (b)
  complex-valued.}
  \label{fig:lena-reconstruction-speed-random-mask}
\end{figure}

\begin{figure}[H]
  \centering
   \subfloat[]{
     \includegraphics[width=0.45\textwidth{}]{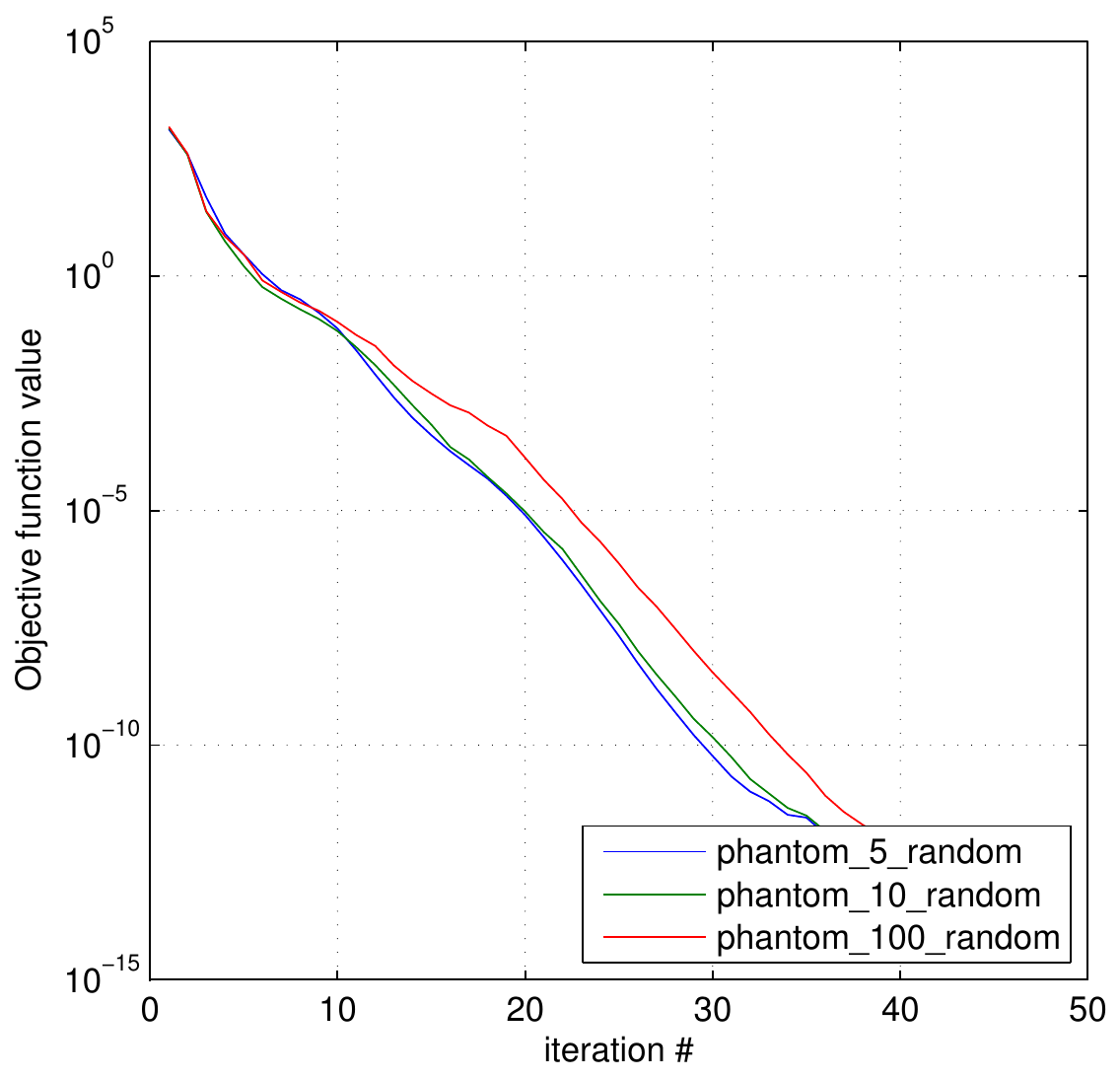}
  }
  \quad
  \subfloat[]{
    \includegraphics[width=0.45\textwidth{}]{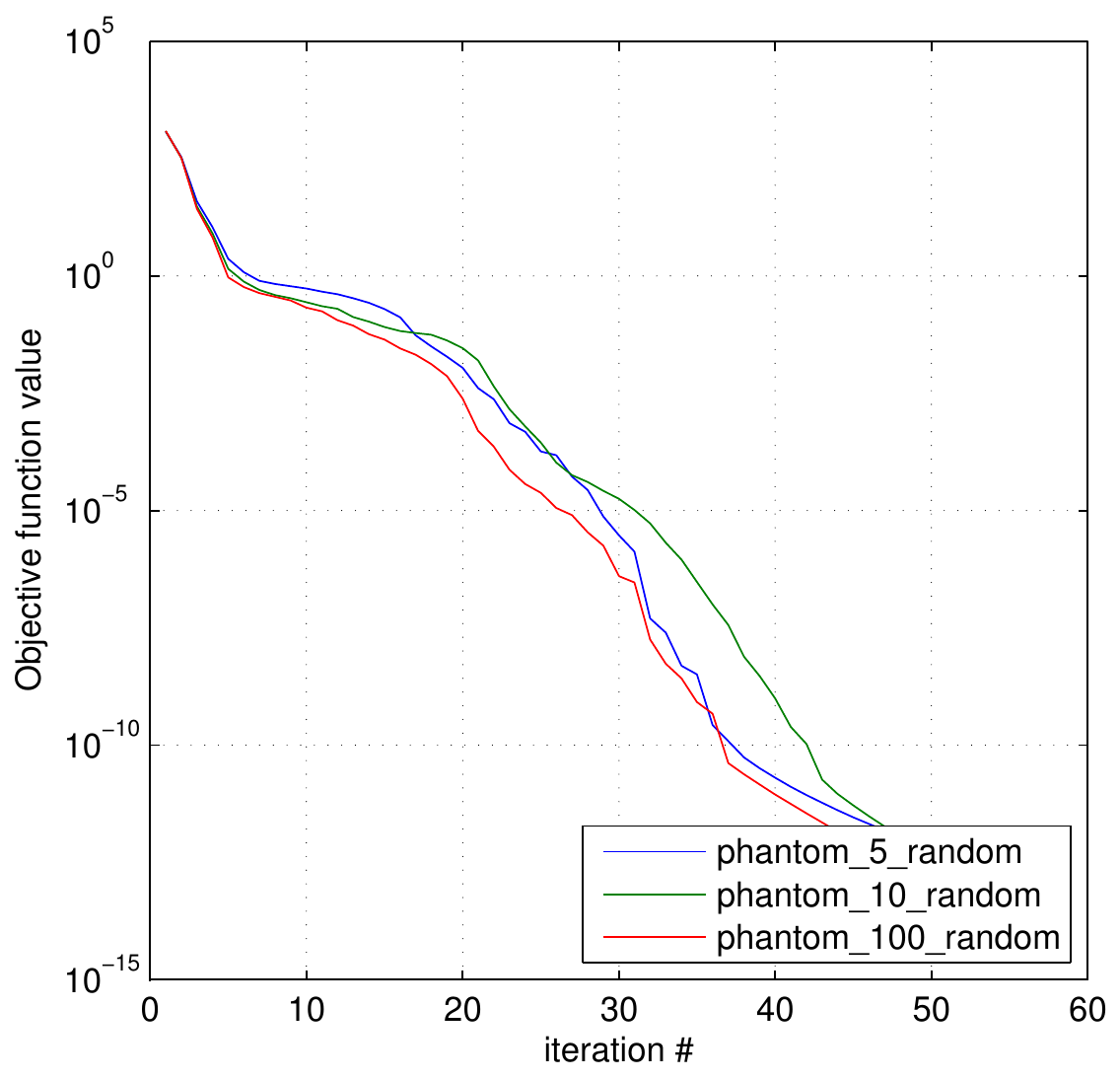}
  }
  \caption[Phantom's reconstruction speed with random mask
  values]{Phantom's reconstruction speed with random mask values: (a)
    real-valued, (b) complex-valued.}
  \label{fig:phantom-reconstruction-speed-random-mask}
\end{figure}

Note that now the reconstruction is successful for all mask values and
is very fast. However, despite this fast convergence, one must be
careful not to put too much energy into the known part. This approach
may harm the reconstruction quality of the unknown part when there is
noise in the measurements, as we demonstrate next.  In these
experiments the measurements (intensity values) were contaminated with
Poisson noise with different SNR ranging from 10 to 60 decibels. As is
evident from Figures~\ref{fig:lena-reconstruction-quality}
and~\ref{fig:phantom-reconstruction-quality}, the more energy is
concentrated in the known part the worse is the reconstruction quality
of the unknown part. This phenomenon is, of course, expected. The
Poisson noise is signal dependent: higher intensity results in more
noise. However, the intensity (energy) of the unknown part remains
constant, hence the noise becomes more and more significant compared
to it. To obtain the best result one would like to design a mask whose
power spectrum will correlate well with the power spectrum of the
sought signal. Unfortunately, this approach cannot be implemented,
because designing such a mask requires a priori
knowledge of the sought signal's Fourier magnitude, which is
unavailable in our case. However, it may be a good approach when the
Fourier magnitude of the sought signal is known \emph{approximately}. 
\begin{figure}[H]
  \centering{}
  \subfloat[]{
    \includegraphics[width=0.45\textwidth{}]{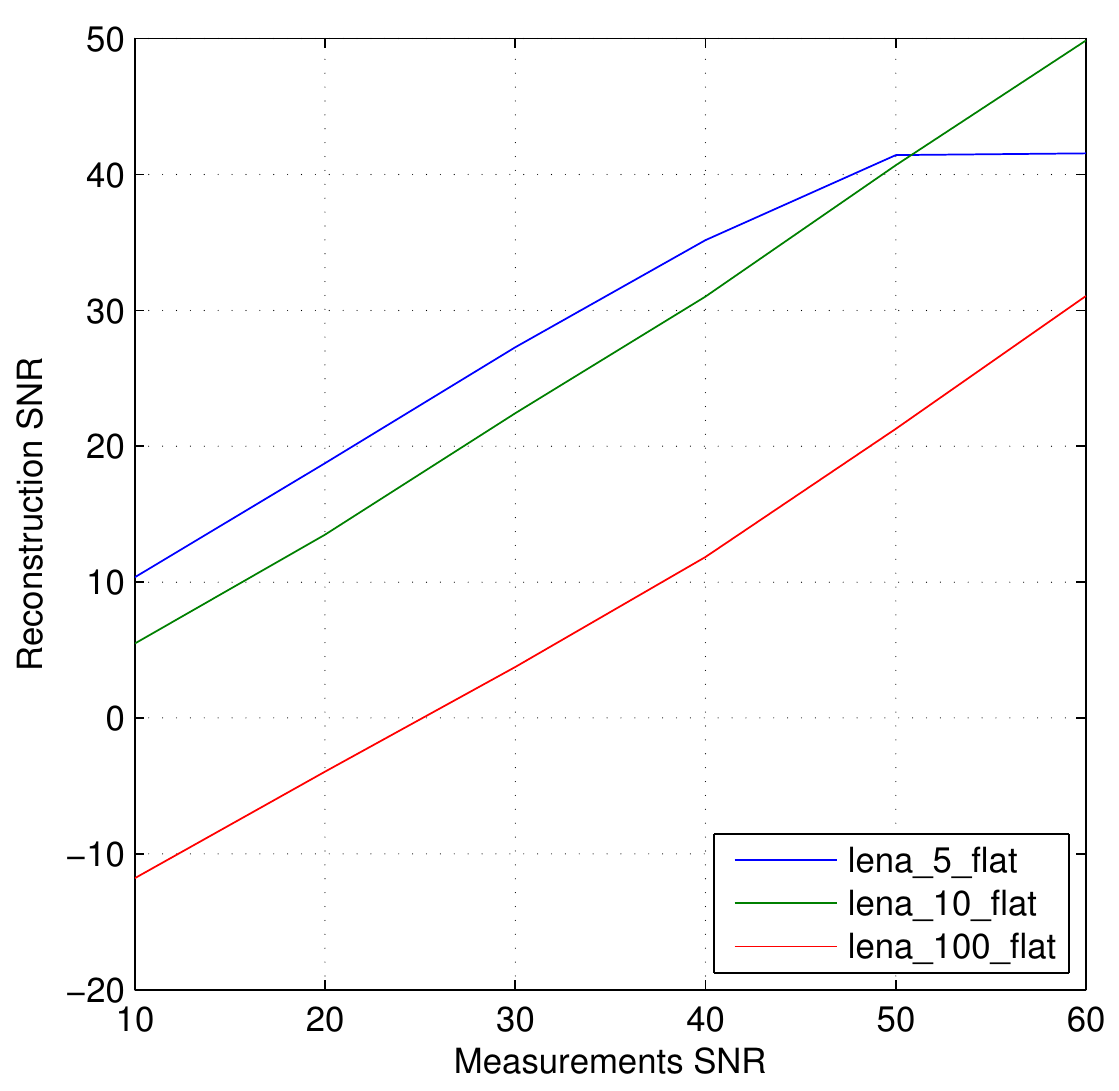}
  }
  \quad{}
  \subfloat[]{
    \includegraphics[width=0.45\textwidth{}]{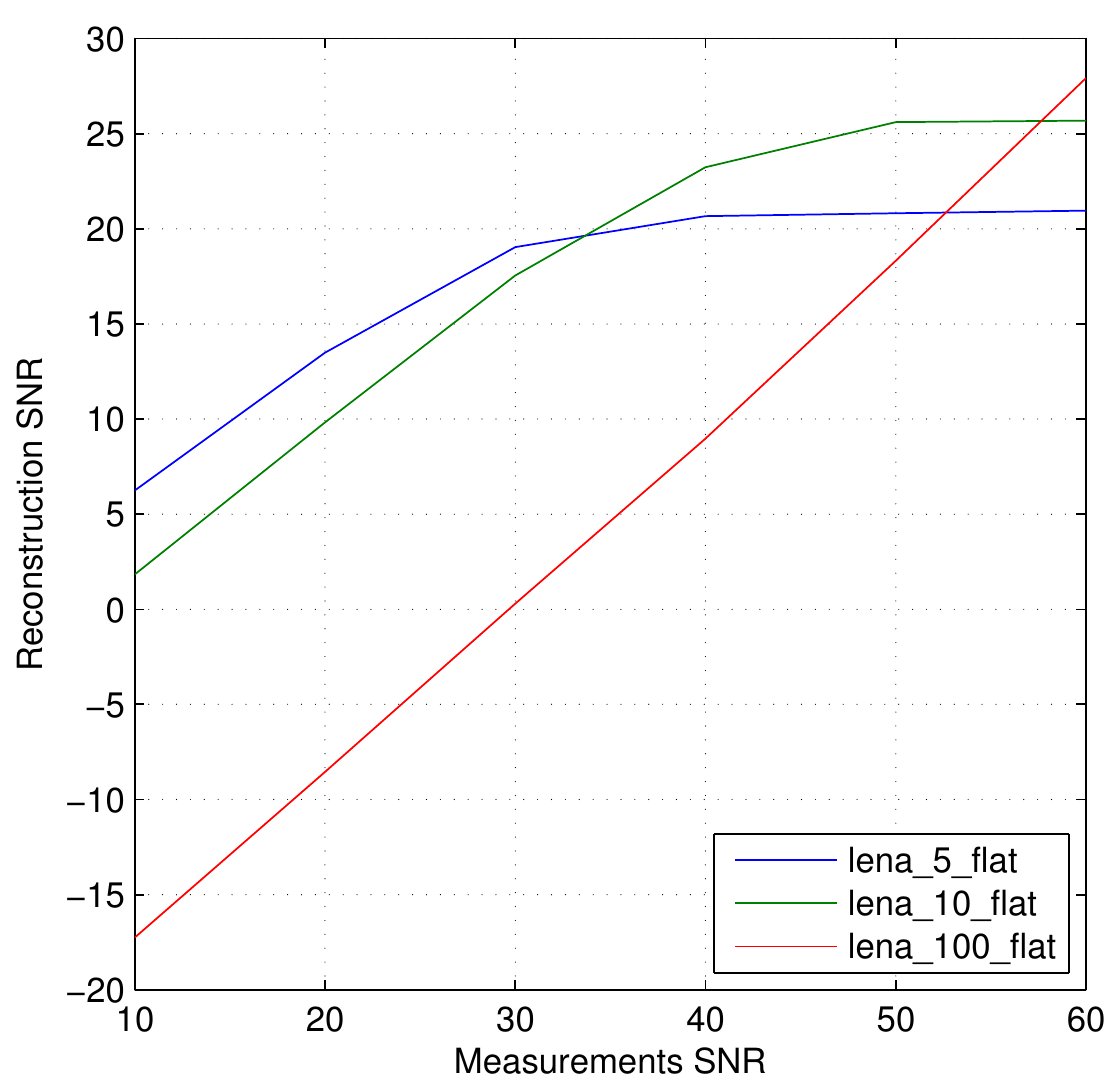}
  }
  \\
  \subfloat[]{
    \includegraphics[width=0.45\textwidth{}]{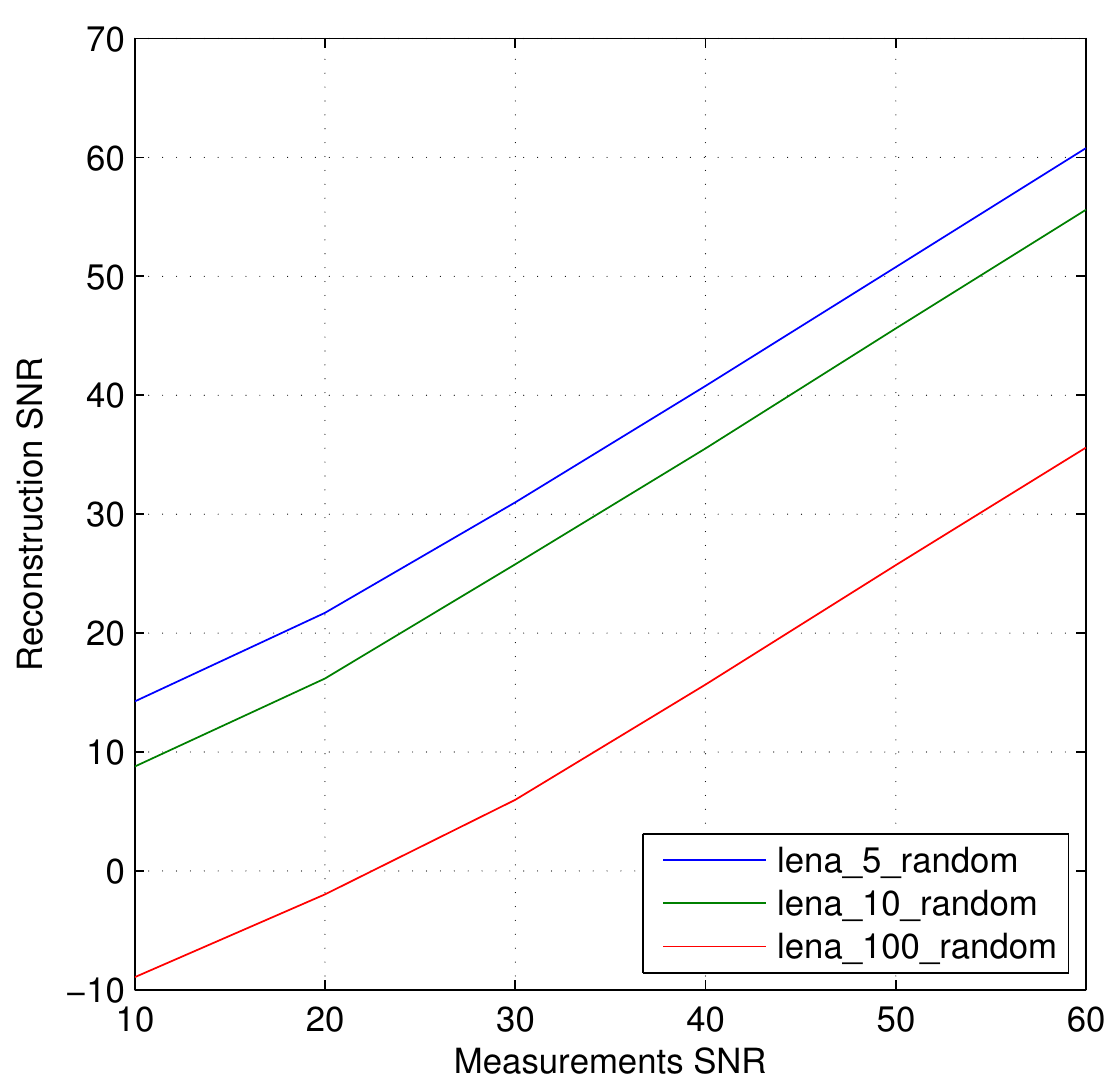}
  }
  \quad{}
  \subfloat[]{
    \includegraphics[width=0.45\textwidth{}]{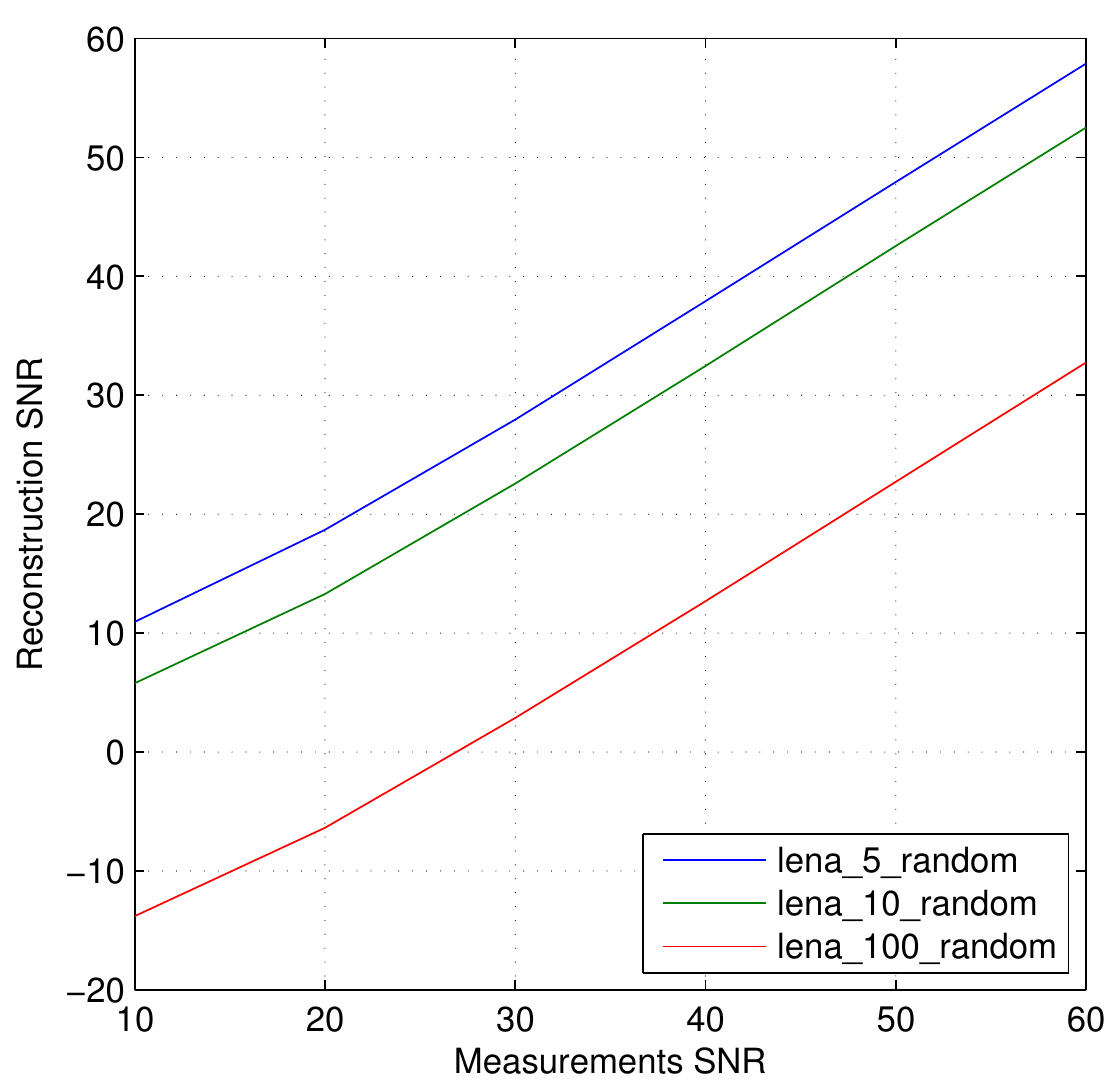}
  }
  \caption[Lena's reconstruction quality]{Lena's reconstruction
    quality: (a) real-valued with flat mask value, (b) complex-valued
    with flat mask value, (c) real-valued with random mask values, (d)
    complex-valued with random mask values.}
  \label{fig:lena-reconstruction-quality}
\end{figure}
\begin{figure}[H]
  \centering{}
  \subfloat[]{
    \includegraphics[width=0.45\textwidth{}]{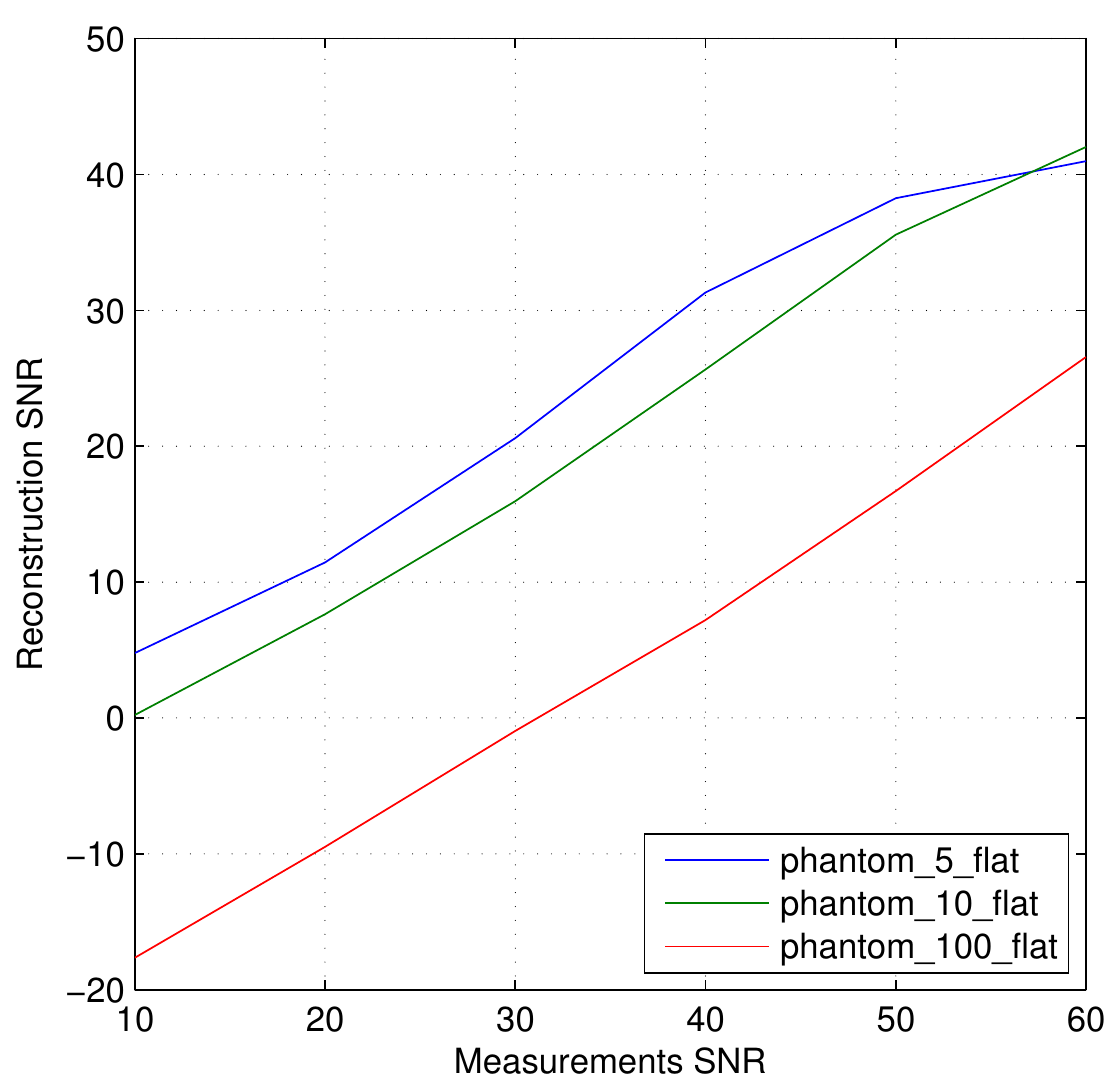}
  }
  \quad{}
  \subfloat[]{
    \includegraphics[width=0.45\textwidth{}]{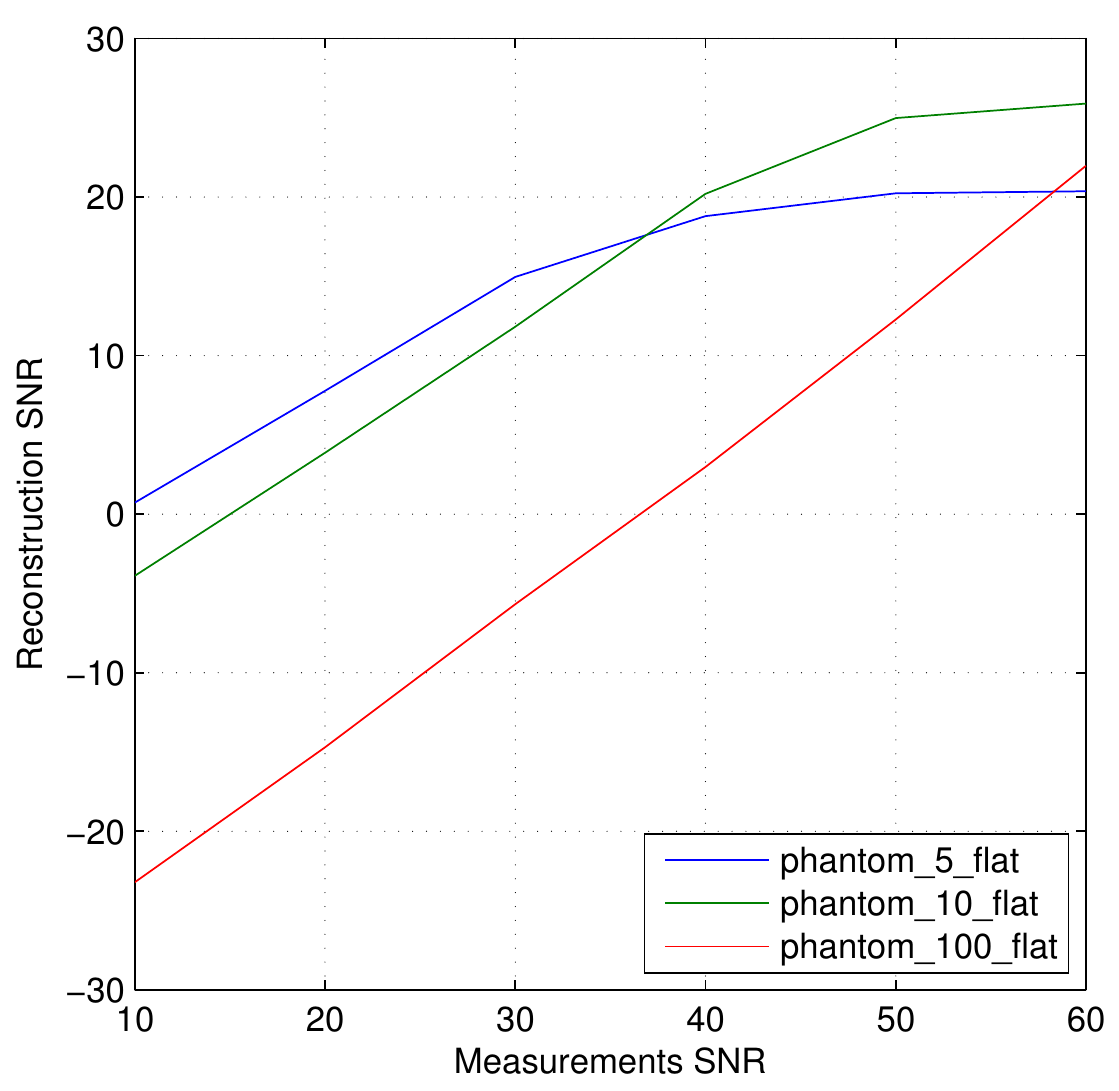}
  }
  \\
  \subfloat[]{
    \includegraphics[width=0.45\textwidth{}]{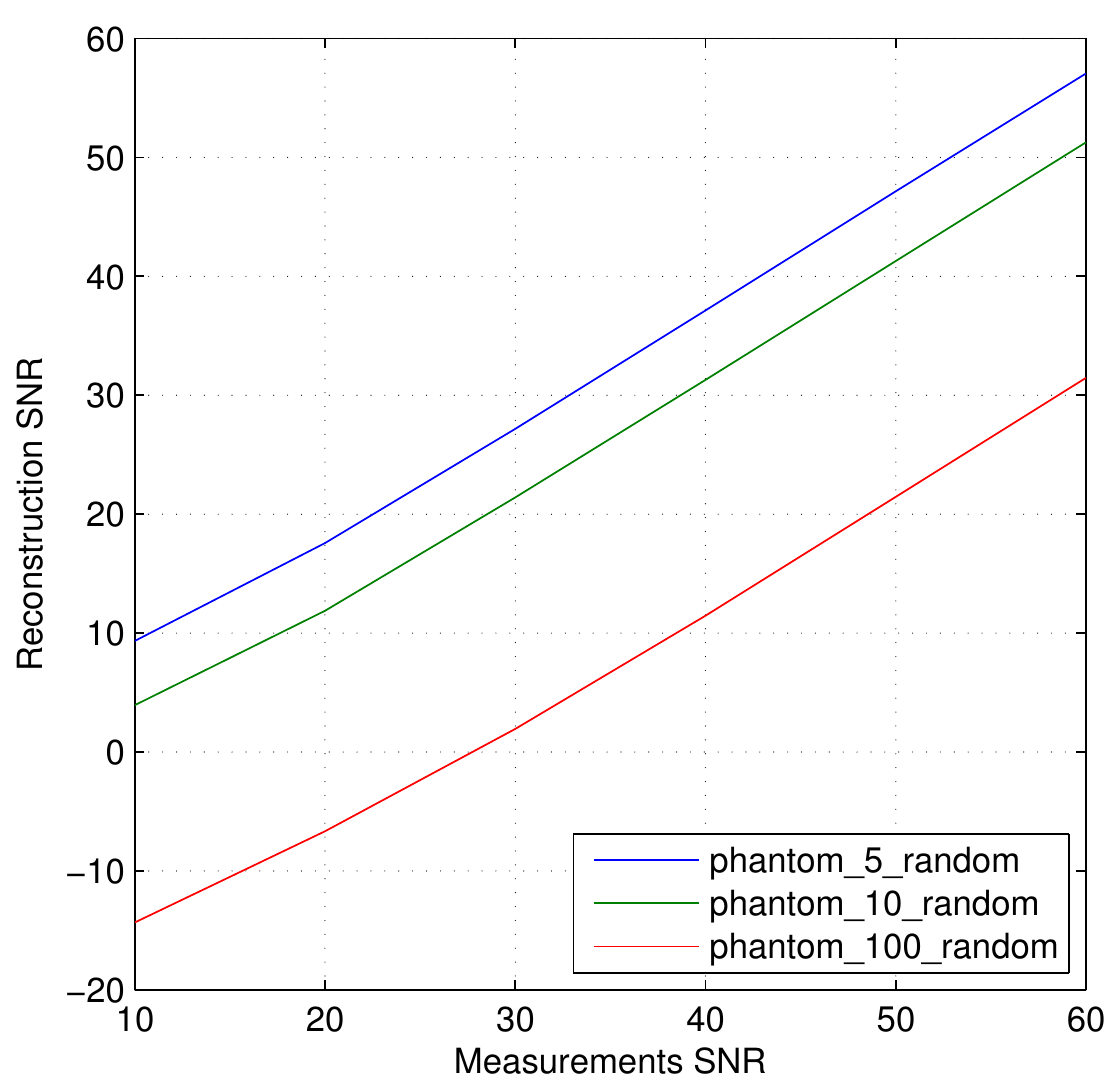}
  }
  \quad{}
  \subfloat[]{
    \includegraphics[width=0.45\textwidth{}]{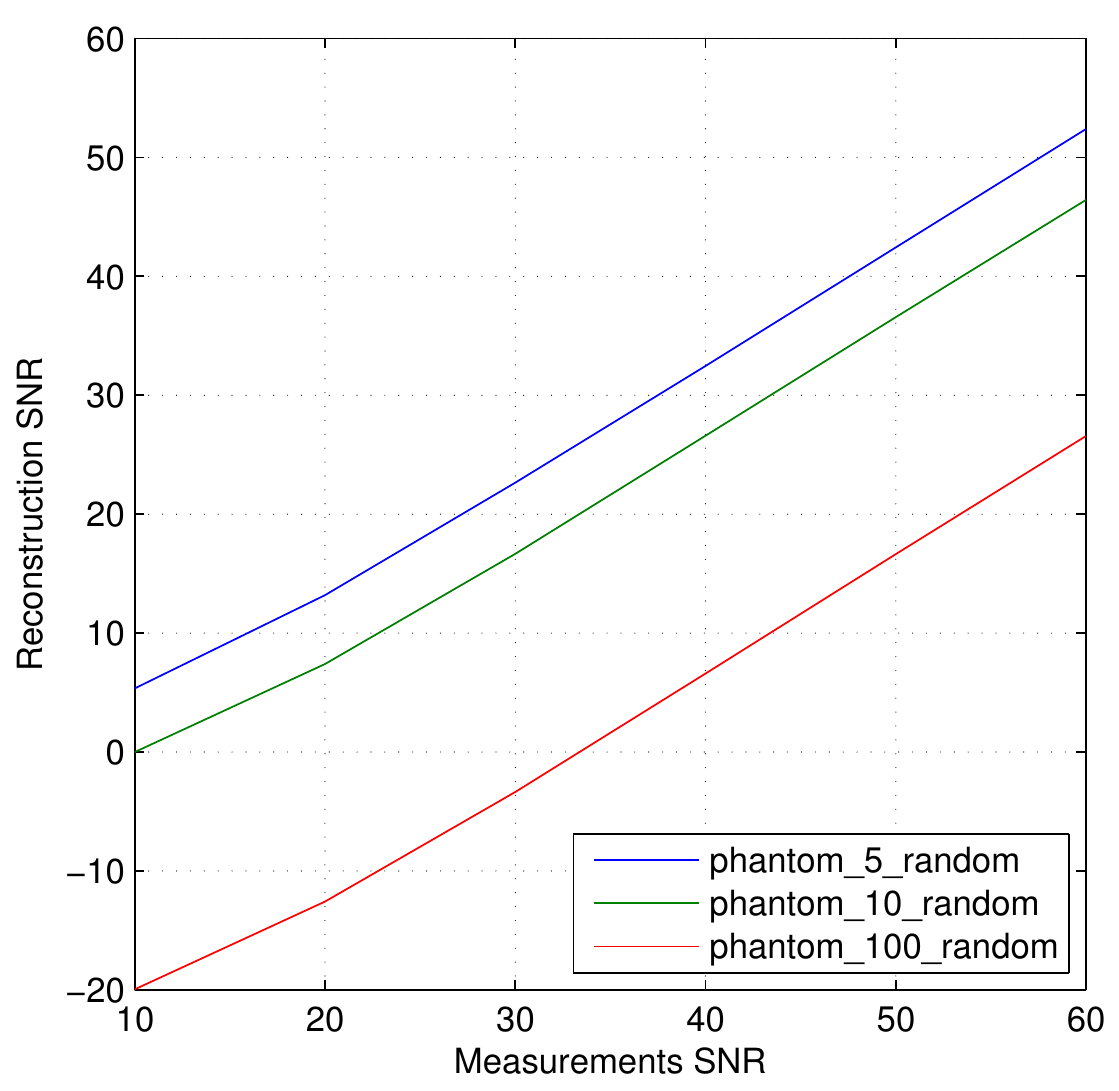}
  }
  \caption[Phantom's reconstruction quality]{Phantom's reconstruction
    quality: (a) real-valued with flat mask value, (b) complex-valued
    with flat mask value, (c) real-valued with random mask values, (d)
    complex-valued with random mask values.}
  \label{fig:phantom-reconstruction-quality}
\end{figure}

\section{Concluding remarks}
\label{sec:boundary-conclusions}

In this chapter we considered the problem often met in the Fourier
domain holography: signal reconstruction form the Fourier magnitude of
the sum of the sought signal and a reference beam. We provided an
explanation to the fact observed in practice: why a strong reference
beam leads to a faster reconstruction for a variety of reconstruction
methods. Based on this explanation we suggested a ``good''  boundary
(reference beam) design. The latter problem (reference beam design)
requires more research, as the optimal reference beam must satisfy at
least two requirements in the presence of signal dependent noise. For
example, in the case of Poissonian noise (or any other noise model in
which the noise level grows with the signal intensity) the optimal
reference beam must be simultaneously ``strong'' (to aid the reconstruction process)
and ``weak'' (to alleviate) the destructive influence of the noise.

In general, when the Fourier magnitude of the sought image is known
approximately, the best mask should have the power spectrum that is
about two times larger than that of the sought signal (in each
frequency). If the power spectrum of the sought images is unknown, the
mask should have a strong presence in all frequencies. In this case,
it seems that the best design would be based on some ``randomness'' in
the mask values or shapes.


\chapter{Bandwidth extrapolation using sparsity
  constraints\footnotemark\footnotemark}
\label{cha:bandw-extr-using}

\footnotetext[1]{The work presented in this chapter was done in
  collaboration with Prof. Segev's group form the Technion Physics
  Department, Solid State Institute.}
\footnotetext[2]{The material presented in this chapter was submitted
  to the Nature Materials journal. Part of it was presented at the
  \textit{Frontiers in Optics 2011} conference. Also, part of it was
  submitted to the \textit{CLEO 2012} conference.}

In this chapter we present our work on the bandwidth extrapolation
(super-resolution) problem with application to Coherent Diffracting
Imaging (CDI). CDI is an algorithmic imaging technique where intricate
features are reconstructed from measurements of the freely-diffracting
intensity pattern 
\shortcite{sayre52implications,miao99extending,miao01approach,quiney10coherent,chapman10coherent}. An important goal of such lensless-imaging methods
is to study the structure of molecules (including many proteins) that
cannot be crystallized
\shortcite{sandberg07lensless,chapman07femtosecond,neutze00potential}. Clearly,
high spatial resolution and very fast measurement are key features for
many applications of CDI. Ideally, one would want to perform CDI at
the highest possible spatial resolution and in a single-shot
measurement---such that the techniques could be applied to imaging at
ultra-fast rates. Undoubtedly, such capabilities would give rise to
unprecedented possibilities. For example, observing molecules while
they dissociate or undergo chemical reactions will considerably expand
the knowledge in physics, chemistry and biology. However, the
resolution of all current CDI techniques is limited by the diffraction
limit, and therefore cannot resolve features smaller than one half the
wavelength of the illuminating light \shortcite{lipson10optical}, which is
considered a fundamental limit in diffractive imaging
\shortcite{abbe73betrage}. Moreover, combining CDI with current
sub-wavelength imaging techniques would not allow for rapid
single-shot measurements that are able to follow ultra-fast dynamics,
because such techniques rely on multiple exposures, either through
mechanical scanning (e.g., Scanning Near-Field Microscope
\shortcite{lewis84development,betzig91breaking}, scanning a sub-wavelength
``hot spot''
\shortcite{di_francia52super-gain,lezec02beaming,huang09super-resolution}),
or by using ensemble-averaging over multiple experiments with
fluorescent particles
\shortcite{yildiz03myosin,hell09diffraction-unlimited}.  Here, we present
sparsity-based single-shot sub-wavelength resolution in coherent
diffraction microscopy: algorithmic reconstruction of sub-wavelength
features from far-field intensity patterns of sparse optical
objects. We experimentally demonstrate imaging of irregular and
ordered arrangements of \unit[100]{nm} features with illumination wavelength of
\unit[532]{nm} (green light), thereby obtaining resolutions several times
better than the diffraction limit. The sparsity-based sub-wavelength
imaging concept relies on minimization of the number of degrees of
freedom, and operates on a single-shot basis
\shortcite{gazit09super-resolution,szameit10sparsity-based,gazit10super-resolution}.
Hence, it is suitable for capturing a series of ultrafast
single-exposure images, and subsequently improving their resolution
considerably beyond the diffraction limit. This work paves the way for
ultrafast sub-wavelength CDI, via phase retrieval at the
sub-wavelength scale. For example, sparsity-based methods could
considerably improve the CDI resolution with x-ray free electron laser
\shortcite{chapman11femtosecond}, without hardware
modification. Conceptually, sparsity-based methods can enhance the
resolution in all imaging systems, optical and non-optical.

\section{Background information}
\label{sec:sparse-background}
Improving the resolution in imaging and microscopy has been a driving
force in natural sciences for centuries. Fundamentally, the
propagation of an electromagnetic field in a linear medium can be
fully described through the propagation of its eigen-modes (a complete
and orthogonal set of functions which do not exchange power during
propagation). In homogeneous, linear and isotropic media, the most
convenient set of eigen-modes are simply plane-waves, each
characterized by its spatial frequency and associated propagation
constant (see~\shortcite[Section~3.10]{goodman05introduction}). However,
it is also known that when light of
wavelength $\lambda$ propagates in media with refractive index $n$,
only spatial frequencies below $n/\lambda$ can propagate, whereas all
frequencies above $n/\lambda$ are rendered evanescent and decay
exponentially (see~\shortcite[Section~6.6]{goodman05introduction}). Hence, for all propagation distances larger than
$\lambda$, diffraction in a homogeneous medium acts as a low-pass
filter. Consequently, optical features of sub-wavelength resolution
appear highly blurred in a microscope, due to the loss of information
carried by their high spatial-frequencies. Over the years, numerous
``hardware'' methods for sub-wavelength imaging have been demonstrated
\shortcite{lewis84development,betzig91breaking,di_francia52super-gain,lezec02beaming,huang09super-resolution,yildiz03myosin,hell09diffraction-unlimited};
however, all of them rely on multiple exposures. Apart from hardware
solutions, several algorithmic approaches for sub-wavelength imaging
have been suggested (see,
e.g. \shortcite{harris64diffraction,papoulis75new,gerchberg74super-resolution}). Basically,
algorithmic sub-wavelength imaging aims to reconstruct the extended
spatial frequency range (amplitudes and phases) of the information
(``signal''), from measurements which are fundamentally limited to the
range $[-n/\lambda, n/\lambda]$ in the plane-wave spectrum. However,
as summarized by~\citeauthor{goodman05introduction} in his
book~\citeyear{goodman05introduction}, ``all methods for extrapolating
bandwidth beyond the diffraction limit are known to be extremely
sensitive to both noise in the measured data and the accuracy of the
assumed a priori knowledge'', such that ``it is generally agreed that
the Rayleigh diffraction limit represents a practical frontier that
cannot be overcome with a conventional imaging system.''

In spite of this commonly held opinion that algorithmic methods for
sub-wavelength imaging are impractical \shortcite{goodman05introduction}, a
recent work proposed a method for reconstructing sub-wavelength
features from the far-field (and/or blurred images) of sparse optical
information \shortcite{gazit09super-resolution}. The concept of
sparsity-based sub-wavelength imaging is related to Compressed Sensing
(CS), which is a relatively new area in information processing
\shortcite{candes06robust,candes06near-optimal,donoho06compressed,candes08introduction,duarte11structured}.
It has been shown that sparsity-based methods work for both coherent
\shortcite{gazit09super-resolution,szameit10sparsity-based} and incoherent
\shortcite{shechtman10super-resolution,shechtman11sparsity} light. An
experimental proof-of-concept was presented in
\shortcite{gazit09super-resolution}: the recovery of fine features that
were cut off by a spatial low-pass filter. Subsequently, these
concepts were taken into the true sub-wavelength domain and
demonstrated experimentally resolutions several times better than the
diffraction limit: the recovery of \unit[100]{nm} features illuminated
by \unit[532]{nm}
wavelength light \shortcite{szameit10far-field}. These ideas were followed
by several groups, most notably the recent demonstration of
sparsity-based super-resolution of biological specimens
\shortcite{babacan11cell}.

Here, we take the sparsity-based concepts into a new domain, and
present the first experimental demonstration of sub-wavelength CDI:
single-shot recovery of sub-wavelength images from far-field intensity
measurements.  That is, we demonstrate sparsity-based sub-wavelength
imaging combined with phase-retrieval at the sub-wavelength level. We
recover the sub-wavelength features without measuring (or assuming)
any phase information whatsoever; the only measured data at our
disposal is the intensity of the diffraction pattern (Fourier power
spectrum) and the support structure of the blurred image. Our
processing scheme combines bandwidth extrapolation and phase
retrieval, considerably departing from classical CS. We therefore
devise a new sparsity-based algorithmic technique which facilitates
robust sub-wavelength CDI under typical experimental conditions.

\section{Sparsity based super-resolution}
\label{sec:sparsity-based-super}

In mathematical terms, the bandwidth extrapolation problem underlying
sub-wavelength imaging corresponds to a non-invertible system of
equations which has an infinite number of solutions, all producing the
same (blurred) image carried by the propagating spatial
frequencies. That is, after measuring the far field, one can add any
information in the evanescent part of the spectrum while still being
consistent with the measured image. Of course, only one choice
corresponds to the correct sub-wavelength information that was cut off
by the diffraction limit. The crucial task is therefore to extract the
one correct solution out of the infinite number of possibilities for
bandwidth extension. This is where sparsity comes into play. Sparsity
presents us with prior information that can be exploited to resolve
the ambiguity resulting from our partial measurements, and identify
the correct bandwidth extrapolation which will yield the correct
recovery of the sub-wavelength image.

Information is said to be sparse when most of its projections onto a
complete set of base functions are zero (or negligibly small). For
example, an optical image is sparse in the near-field when the number
of non-zero pixels is small compared to the entire field of
view. However, sparsity need not necessarily be in a near-field basis;
rather, it can be in any mathematical basis. Many images are indeed
sparse in an appropriate basis. In fact, this is the logic behind many
popular image compression techniques, such as JPEG. In the fields of
signal processing and coding theory, it is known for some time that a
sparse signal can be precisely reconstructed from a subset of
measurements in the Fourier domain, even if the sampling is carried
out entirely in the low-frequency range
\shortcite{vetterli02sampling}. This basic result was extended to the case
of random sampling in the Fourier plane and initiated the area of CS
\shortcite{candes06robust}. An essential result of CS is that, in the
absence of noise, if the ``signal'' (information to be recovered) is
sparse in a basis that is sufficiently uncorrelated with the
measurement basis, then searching for the sparsest solution (that
conforms to the measurements) yields the correct solution. In the
presence of noise (that is not too severe), the error is bounded, and
many existing CS algorithms can recover the signal in a robust fashion
under the same assumptions.

The concept underlying sparsity-based super-resolution imaging and
sparsity-based CDI relies on the advance knowledge that the optical
object is sparse in a known basis. The concept yields a method for
bandwidth extrapolation. Namely, sparsity makes it possible to
identify the continuation of the truncated spatial spectrum that
yields the correct image. As was shown in
\shortcite{gazit09super-resolution}, sparsity-based super-resolution
imaging departs from standard CS, since the measurements are forced to
be strictly in the low-pass regime, and therefore cannot be taken in a
more stable fashion, as generally required by CS. Therefore, a
specialized algorithm was developed, Non Local Hard Thresholding
(NLHT), to reconstruct both amplitude and phase from low-frequency
measurements \shortcite{gazit09super-resolution}. However, NLHT, as well as
other CS techniques necessitate the measurement of the phase in the
spectral domain. In contrast, the current problem of sub-wavelength
CDI combines phase-retrieval with sub-wavelength imaging, aiming to
extrapolate the bandwidth from amplitude measurements
only. Mathematically, this problem can be viewed in principle as a
special case of quadratic CS, introduced in
\shortcite{shechtman11sparsity}. However, the algorithm suggested there
is designed for a more general problem resulting in high computational
complexity. Here we devise a specified algorithm that directly treats
the problem at hand.

\section{Sparsity based sub-wavelength CDI}
\label{sec:sparsity-based-cdi}
For the current case of
sub-wavelength CDI, the phase information in the spectral domain is
not available. Hence, fundamentally, sub-wavelength CDI involves both
bandwidth extrapolation and phase retrieval. However, despite the
missing phase that carries extremely important information, we show
that sparsity-based ideas can still make it possible to identify the
correct extrapolation. Namely, if we know that our signal is
sufficiently sparse in an appropriate basis, then from all the
possible solutions which could create the truncated spectrum the
correct extrapolation is often the one yielding the maximally sparse
signal. Moreover, even under real experimental conditions, i.e., in
the presence of noise, searching for the sparsest solution that is
consistent with the measured data often yields a reconstruction that
is very close to the ideal one.

Our algorithm iteratively reveals the support of the sought image by
sequentially rejecting less likely areas (circles, in the experiments
shown below). Thus, the sparsity of the reconstructed image increases
with each iteration. This process continues as long as the
reconstructed image yields a power-spectrum that remains in good
agreement with the measurements. The process stops when the
reconstructed power spectrum deviates from the measurements by some
threshold value. However, it is important to emphasize that the exact
threshold value and the degree of sparseness of the sought image need
not be known a priori, as our method provides a natural termination
criterion. Namely, the correct reconstruction is identified
automatically. A detailed description of the reconstruction method, as
well as comparison with other methods (that do not exploit sparsity),
are provided in Section~\ref{sec:reconstr-meth}.

\section{Finding suitable basis}
\label{sec:find-suit-basis}
As explained above, sparsity-based CDI relies on the advance knowledge
that the object is sparse in a known basis. In some cases, the
``optimal'' basis---the basis in which the object is represented both
well and sparsely---is known from physical arguments. For example, the
features in Very Large Scale Integration (VLSI) chips are best
described by pixels on a grid, because they obey certain design
rules. In some cases, however, the prior knowledge about the optimal
basis is more loose, namely, it may be known that the object is well
and sparsely described in a basis that belongs to a certain family of
bases. For example, one may know in advance that the object is sparse
in the near field using a rectangular grid, yet the optimal grid
spacing is not known a priori. We address this issue in
Section~\ref{sec:choosing-grid-basis},  where we describe a sparsity-based
method that uses the experimental data to algorithmically find the
optimal grid size (optimal basis) for our sub-wavelength CDI
technique. That section also shows that the choice of basis functions
is not particularly significant in our procedure: we obtain very
reasonable reconstruction with almost any choice of basis functions,
as long as they conform to the optimal grid. Finally, we note that
recent work has shown that it is often possible to find the basis from
a set of low-resolution images, using ``blind
CS''\cite{gleichman10blind}.
Likewise, in
situations where a sufficient number of images of a similar type is
available at high resolution, one can reconstruct the optimal basis
through dictionary learning algorithms~\shortcite{aharon06k-svd:}.

\section{Experiments}
\label{sec:sparse-experiments}
We demonstrate sub-wavelength CDI technique experimentally on
two-dimensional (2D) structures. The optical information is generated
by passing a $\lambda=\unit[532]{nm}$ laser beam through an arrangement of
nano-holes of diameter $100nm$ each. The sample is made of a
\unit[100]{nm} thick
layer of chromium on glass; this thickness is larger than the skin
depth at optical frequencies, such that the sample is opaque except
for the holes. We use a custom microscope (numerical aperture $NA=1$,
magnification $\times 26$)
and a camera to obtain the blurred image. The optical Fourier
transform of the optical information is obtained by translating the
camera to the focal plane of the same microscope.

The optical information is generated by passing a collimated laser
beam ($\lambda = \unit[532]{nm}$) through a mask, whose transmission function
corresponds to the optical information superimposed on the laser
beam. The mask is fabricated as follows: As substrate material we
chose fused silica, because it is a high quality transparent material
at optical frequencies, and because its processing technology is well
developed. In order to to create a mask containing the optical image,
we deposit opaque material on the substrate and make several patterned
holes in it, such that the holes pass the light while the opaque
material blocks it. For this purpose, we sputter a chromium layer onto
the surface of the substrate. Chromium is a metal, which absorbs light
at optical frequencies. Nevertheless, the thickness of the chromium
layer has to be larger than the skin depth at optical frequencies, to
avoid undesired transmission through that layer. Thus we select a
thickness of \unit[100]{nm} as suitable compromise between high quality
optical behavior and fabrication considerations. The structures in the
chromium layer are nano-holes, drilled in the chromium by a beam of
focused gallium ions from a liquid metal ion
source~\shortcite{krohn75ion,prewett80characteristics} (Zeiss Neon
60). With this technology, it is feasible to mill the desired
structures into the chromium layer directly and efficiently, without
any additional lithography process. Utilizing a convenient set of
parameters, it is possible to imprint the designed structures into the
metal layer, without significantly affecting the substrate material,
and with high spatial accuracy.  We fabricated two different samples
yielding a two-dimensional sub-wavelength optical structure: (a) a
Star of David (SOD) image, consisting of 30 holes, with \unit[100]{nm}
diameter each, spaced by \unit[100]{nm}; and (b) a ``random'' image comprised of 12
circular holes of \unit[100]{nm} diameter each, placed in a random order. The
Scanning Electronic Microscope (SEM) images of the samples are shown
in Figure~\ref{fig:sem-images}. Note that the SEM images are not in
proportion as, in reality, the holes are of the same size and their
diameter is equal to the spacing between holes. Generally, we
use this approach throughout the paper: all images are shown in some
abstract units that are, however, proportional to the corresponding
physical quantities. The correspondence can be established using the
fact that all holes are of diameter of \unit[100]{nm}.
\begin{figure}[H]
  \centering
  \subfloat[]{
    \includegraphics[width=0.9\textwidth{}]{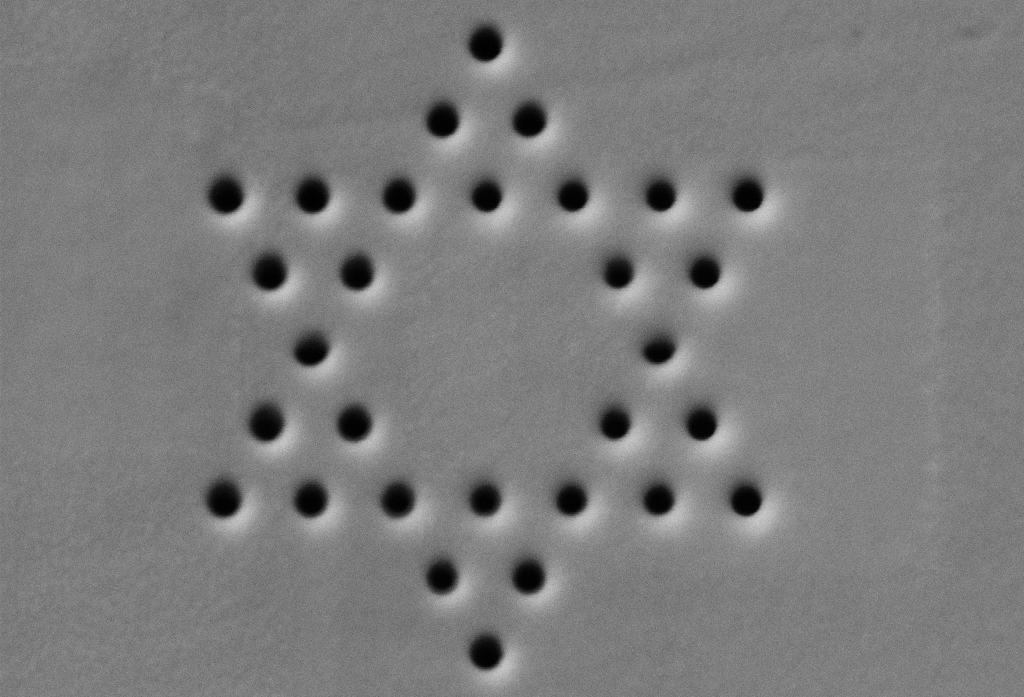}
  }\\
  \subfloat[]{
    \includegraphics[width=0.9\textwidth{}]{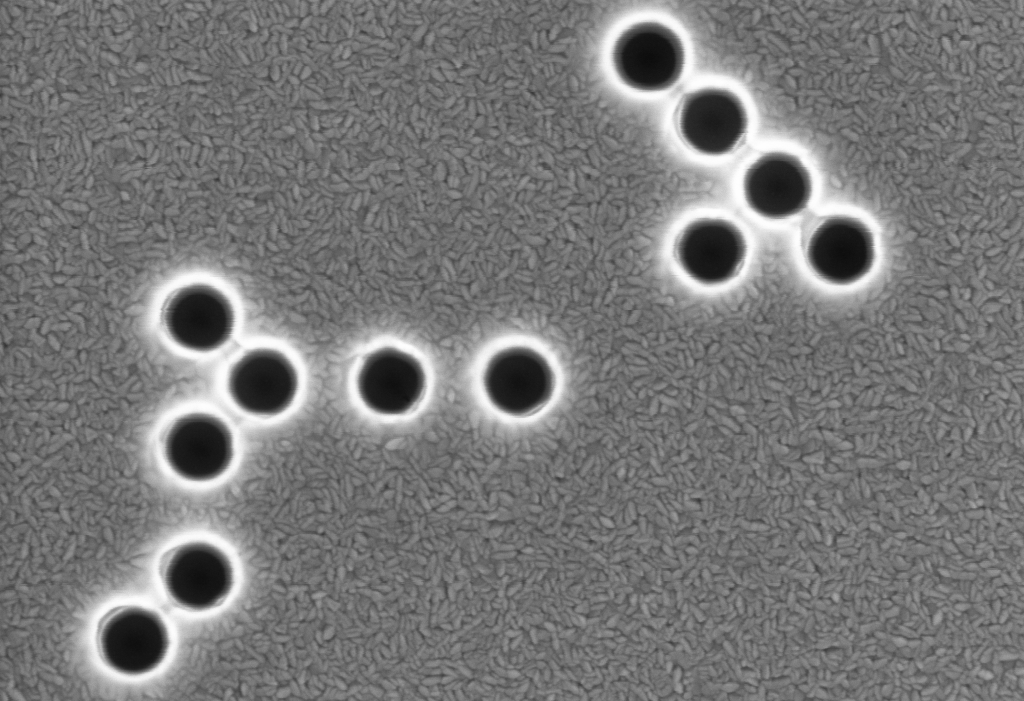}
  }
  \caption[Test images]{SEM images of the samples: (a) Star of David, (b)
    ``random''.}
  \label{fig:sem-images}
\end{figure}

Let us present the reconstruction results first, followed by a full
description of our method in Section~\ref{sec:reconstr-meth}.  We
begin with an ordered structure: a Star of David, consisting of 30
nanoholes. Figure~\ref{fig:sod-main-results}a shows an SEM image of
this sample.  Figure~\ref{fig:sod-main-results}b depicts the image
seen in the microscope. As expected, the image is small and severely
blurred. The spatial power spectrum (absolute value squared of the
Fourier transform) of the image is shown in
Figure~\ref{fig:sod-main-results}c. This truncated power spectrum
covers a larger area on the camera detector, therefore facilitating a
much higher number of meaningful measurements (each pixel corresponds
to one measurement). We emphasize that only intensity measurements are
used, in both the (blurred) image plane and in the (truncated) Fourier
plane (Figures~\ref{fig:sod-main-results}b,
\ref{fig:sod-main-results}c, respectively), without measuring (or
assuming) the phase anywhere. The recovered image, using our
sparsity-based algorithm, is shown in
Figure~\ref{fig:sod-main-results}d.  Clearly, we recover the correct
number of circles, their positions, their amplitudes, and the entire
spectrum (amplitude and phase), including the large evanescent part of
the spectrum. This demonstrates sub-wavelength Coherent Diffractive
Imaging: image reconstruction combined with phase-retrieval at the
sub-wavelength scale. Moreover, as explained in
Section~\ref{sec:reconstr-meth}, the intensity of the blurred image
(Figure~\ref{fig:sod-main-results}b) is used only for rough estimation
of the image support. Our reconstruction method yields better results
than other phase-retrieval algorithms (see comparisons in
Section~\ref{sec:comp-with-other}), because it exploits the sparsity
of the signal (the image to be recovered), as prior information. As
mentioned earlier, the underlying logic is to minimize the number of
degrees of freedom, while always conforming to the measured data,
which in this case is the truncated power spectrum (intensity in
Fourier space). In the example presented in
Figure~\ref{fig:sod-main-results}, we take the data from
Figures~\ref{fig:sod-main-results}b and~\ref{fig:sod-main-results}c,
search for the sparsest solution in the basis of circles of \unit[100]{nm}
diameter on a grid, and reconstruct a perfect Star of David, as shown
in Figure~\ref{fig:sod-main-results}d. The grid is rectangular with
\unit[100]{nm}  spacing
(Section~\ref{sec:choosing-grid-basis} describes how this parameter is
found automatically), while the exact position of the grid with
respect to the reconstructed information is unimportant (see
Section~\ref{sec:reconstr-meth}).
\begin{figure}[H]
  \centering
  \includegraphics[width=\textwidth{}]{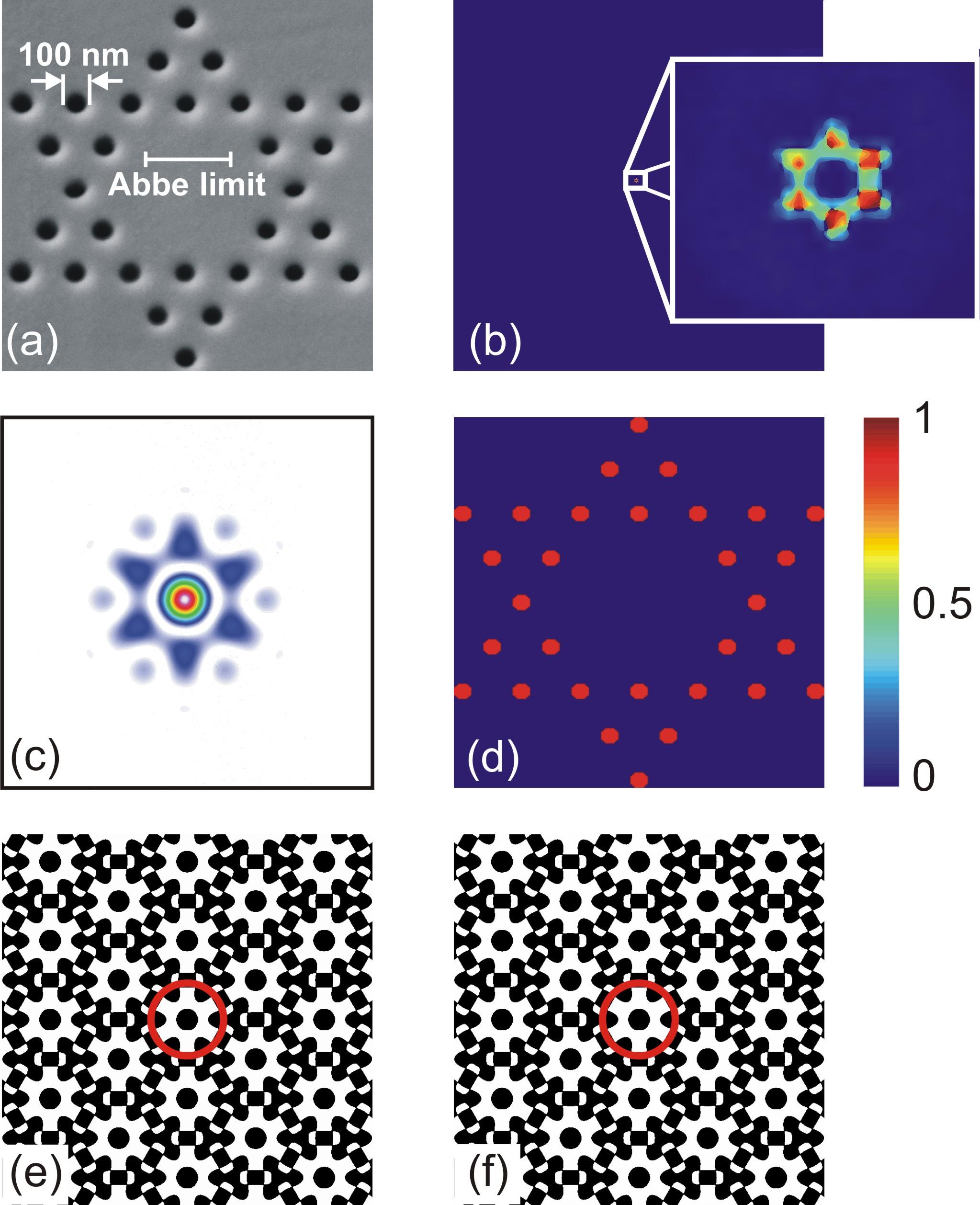}
  \caption{Reconstruction results for the SOD image.}
  \label{fig:sod-main-results}
\end{figure}

We emphasize that our reconstruction algorithm is able to reconstruct
the phase in the spatial spectrum domain (the Fourier transform), from
the intensity measurement in Fourier space and some rough estimation
of the image support. In addition, we use the knowledge that the holes
are illuminated by a plane wave, implying non-negativity of the image
in real space. In this Star of David example, our algorithm
reconstructs the phase in the spectral plane, as presented in Figure
1e. For comparison, Figure 1f shows the phase distribution in Fourier
space, as obtained numerically from the ideal model of the
subwavelength optical information (calculated from the SEM image of
Figure~\ref{fig:sod-main-results}a). The reconstruction in
Figure~\ref{fig:sod-main-results} therefore constitutes the first
demonstration of subwavelength CDI.

Interestingly, when comparing the Fourier transform of the sample with
the measured spatial power spectrum, one finds that more than 90\% of
the power spectrum is truncated by the diffraction limit, acting as a
low-pass filter (see Figure~\ref{fig:sod-powerspectrum}).  That is, we
use the remaining 10\% of the power spectrum and the blurred image, to
successfully reconstruct the sub-wavelength features with high
accuracy. In other words, the prior knowledge of sparsity and the
basis is overcoming the loss of information in 90\% of the power
spectrum. As demonstrated in Section~\ref{sec:reconstr-meth}, it is
the sparsity prior that makes it happen: without assuming the sparsity
prior the reconstruction suffers from large errors.
\begin{figure}[H]
  \centering
  \includegraphics[width=\textwidth{}]{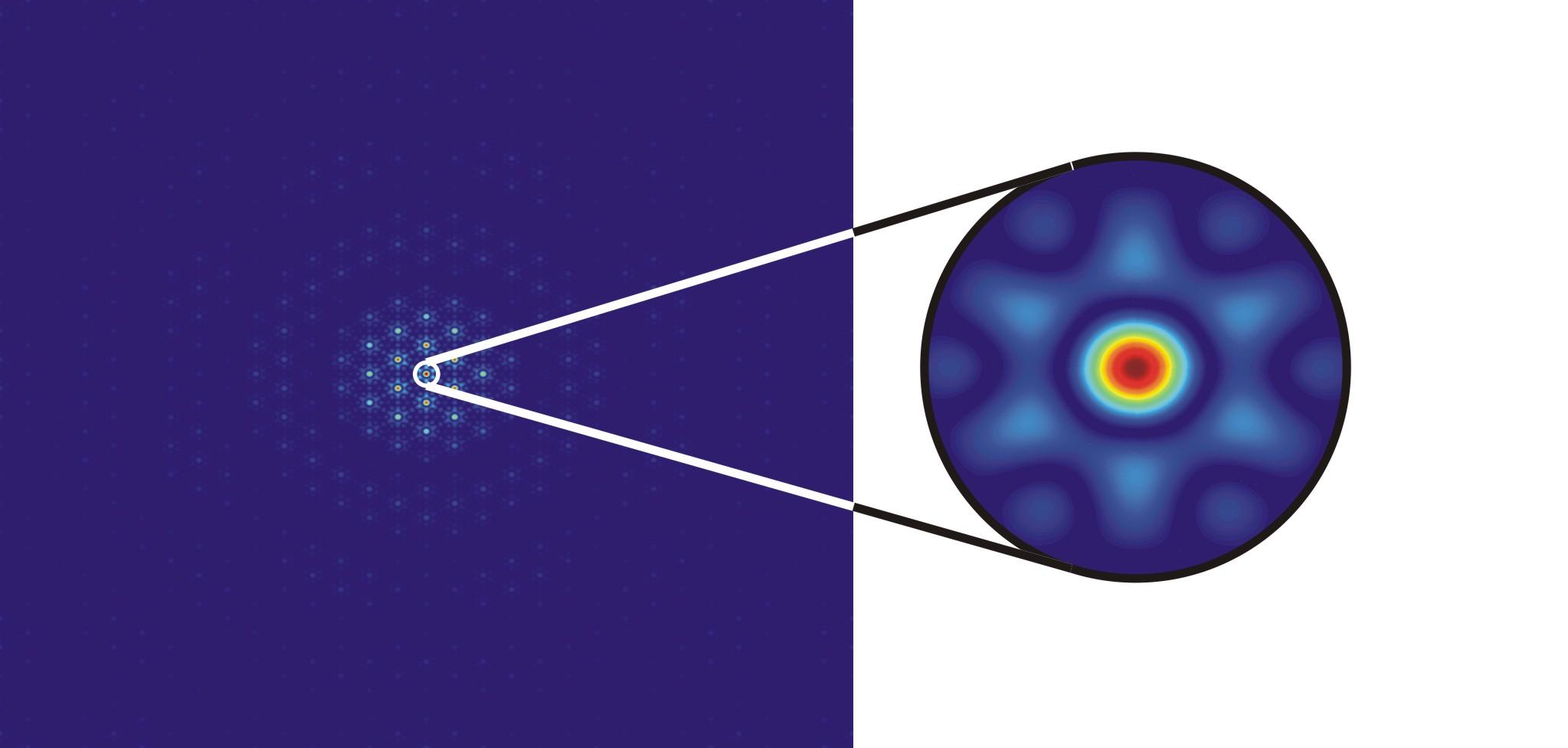}
  \caption{SOD image: available power spectrum.}
  \label{fig:sod-powerspectrum}
\end{figure}

The Star of David exhibits certain symmetries which could in principle
assist the phase retrieval, had these symmetries been known. However,
symmetry was not used for reconstruction of sub-wavelength features of
Figure~\ref{fig:sod-main-results}. Nevertheless, it is illustrative to
present another example with no spatial symmetry at all: an irregular
arrangement of sub-wavelength holes on the assumed
grid. Figure~\ref{fig:random-main-results}a shows the blurred image of
an unknown number of sub-wavelength circles, distributed in a random
manner. The respective Fourier power spectrum, as observed in the
microscope, is shown in Figure~\ref{fig:random-main-results}b. This
sample is clearly not symmetric in real space, hence it does not
exhibit a real Fourier transform. Still, we are able to reconstruct
the sub-wavelength information, as shown in
Figure~\ref{fig:random-main-results}c, where all features of the
original sample are retrieved, despite the inevitable noise in the
experimental system. Figure~\ref{fig:random-main-results}c shows the
SEM image of the sample, displaying the random arrangement of \unit[100]{nm}
holes. The electromagnetic (EM) field passing through these nano-holes
has roughly the
same amplitude for all the holes. The reconstructed amplitudes at the
hole sites are represented by the colors in
Figure~\ref{fig:random-main-results}c, highlighting the fact that the
reconstructed field has similar amplitude at all the holes. The
reconstructed phase in the spectral plane is presented in Figure 3e,
where the white circle marks the cutoff imposed by the diffraction
limit. As shown there, our algorithm recovers the phase throughout the
entire Fourier plane, including the region of evanescent waves far
away from the cutoff frequency. For comparison,
Figure~\ref{fig:random-main-results}f shows the phase distribution in
Fourier space, as obtained numerically from the ideal model of the
sub-wavelength optical information (calculated from the SEM image of
Figure~\ref{fig:random-main-results}d). Clearly, the correspondence
between the original spectral phase and the reconstructed one is
excellent, including in the deep evanescent regions. Interestingly,
Figure~\ref{fig:random-main-results}e also displays the correct
reconstruction of the phase around the faint high-frequency circle (of
radius approximately 4 times the diffraction limit) where the phase
jumps by $\pi$ Physically, this ``phase-jump circle'' is located at the
first zero of the Fourier transform of a circular aperture, which in
Fourier space multiplies the phase distribution generated by the
irregular positions of the holes. The excellent agreement between
Figures~\ref{fig:random-main-results}e
and~\ref{fig:random-main-results}f highlights the strength of the
sparsity-based algorithmic technique.

\begin{figure}[H]
  \centering
\includegraphics[width=\textwidth]{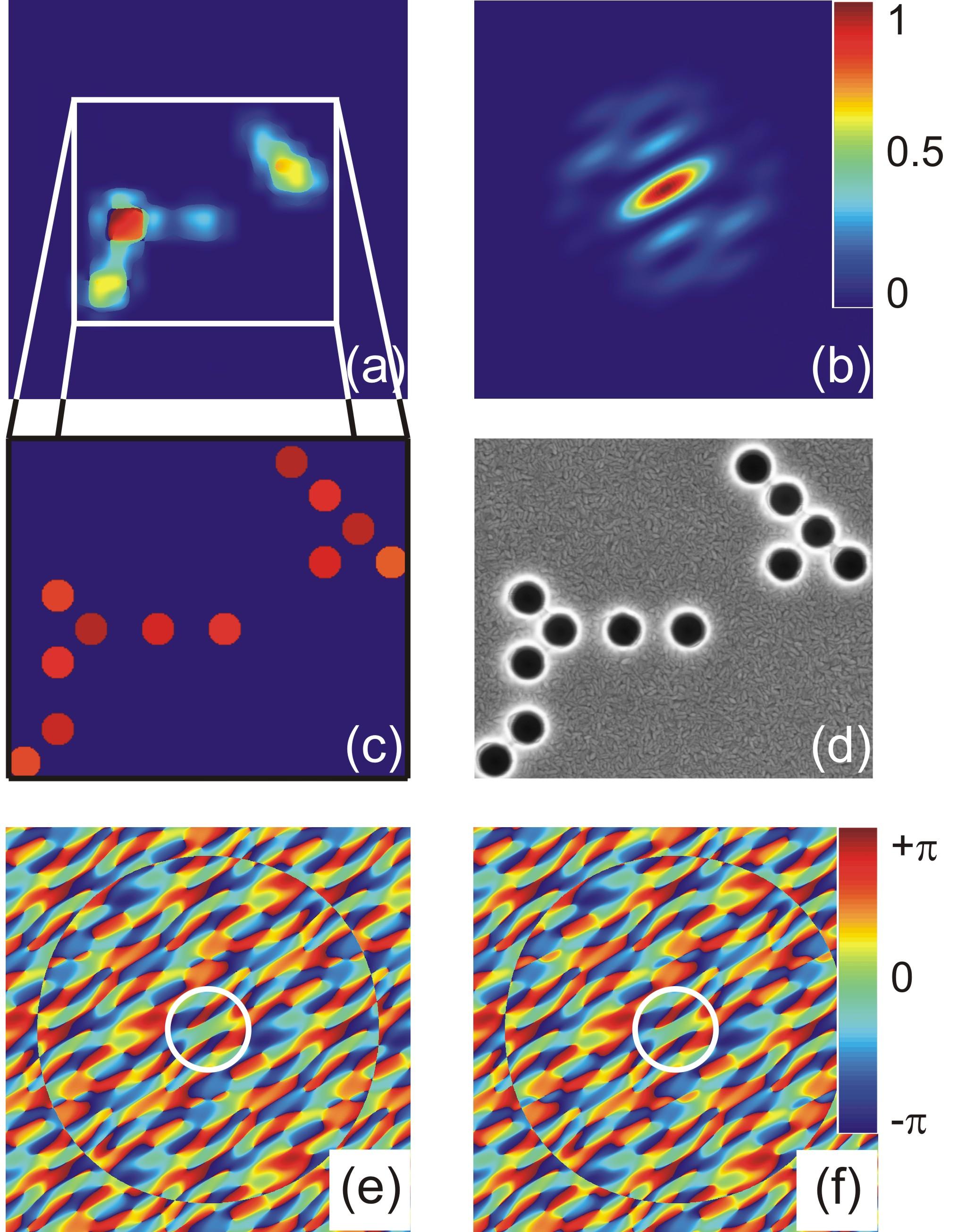}
  \caption{Reconstruction results for the "random" image.}
  \label{fig:random-main-results}
\end{figure}

\section{Our sparsity-based reconstruction method for CDI}
\label{sec:reconstr-meth}
Under the experimental conditions described in the previous section,
our problem amounts to the reconstruction of a signal from the
magnitude of its Fourier transform, assuming furthermore that this
information is known only for a small interval of low frequencies as
shown in Figure~\ref{fig:fourier-data}. The discussion below is
general and applies to both examples given in the paper (and, of
course, to a very large class of optical images). However, in order to
make the explanation more succinct, we demonstrate most of the results
on the ``random'' image, because it has no implicit symmetries. The
SOD image exhibits very similar behavior and its main results will be
presented
below.
\begin{figure}[H]
  \centering
  \subfloat[]{
    \includegraphics[width=0.4\textwidth]{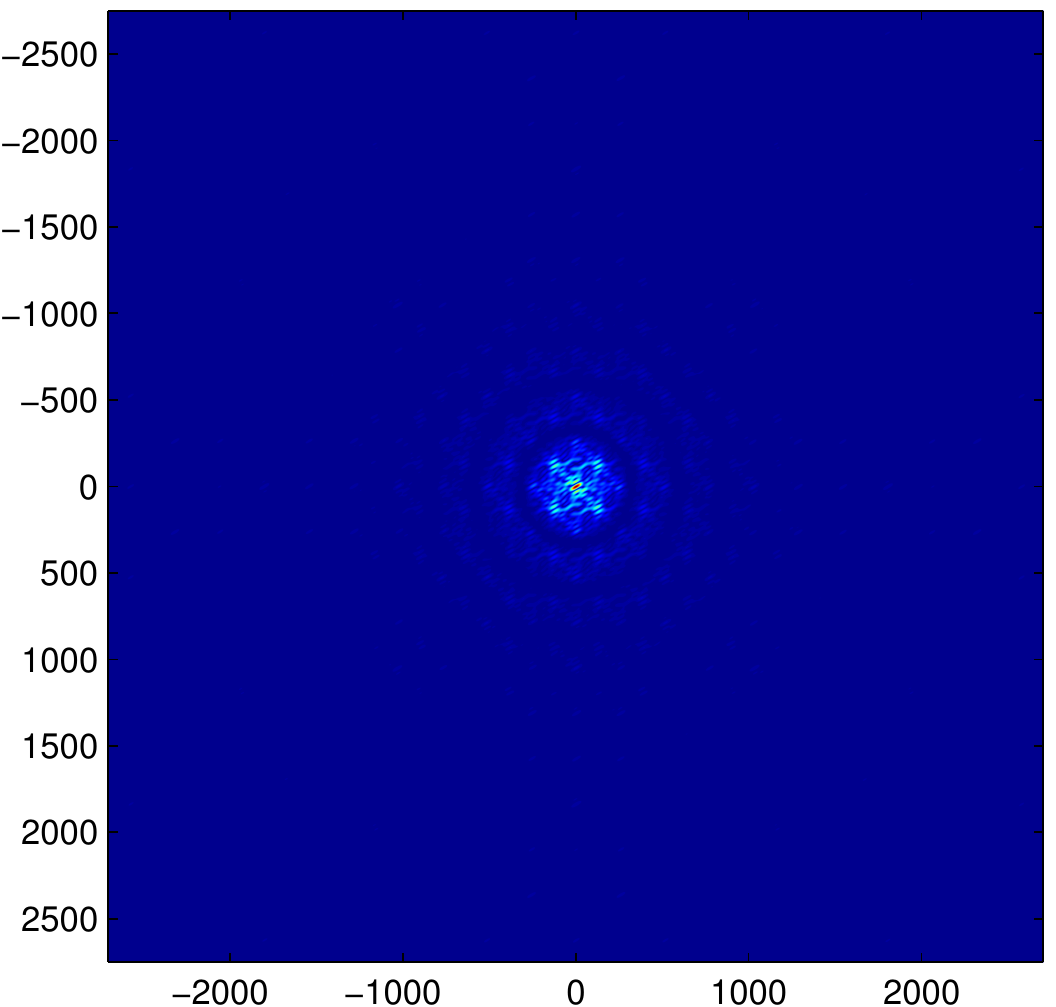}
  }%
  \qquad{}%
  \subfloat[]{
    \includegraphics[width=0.4\textwidth]{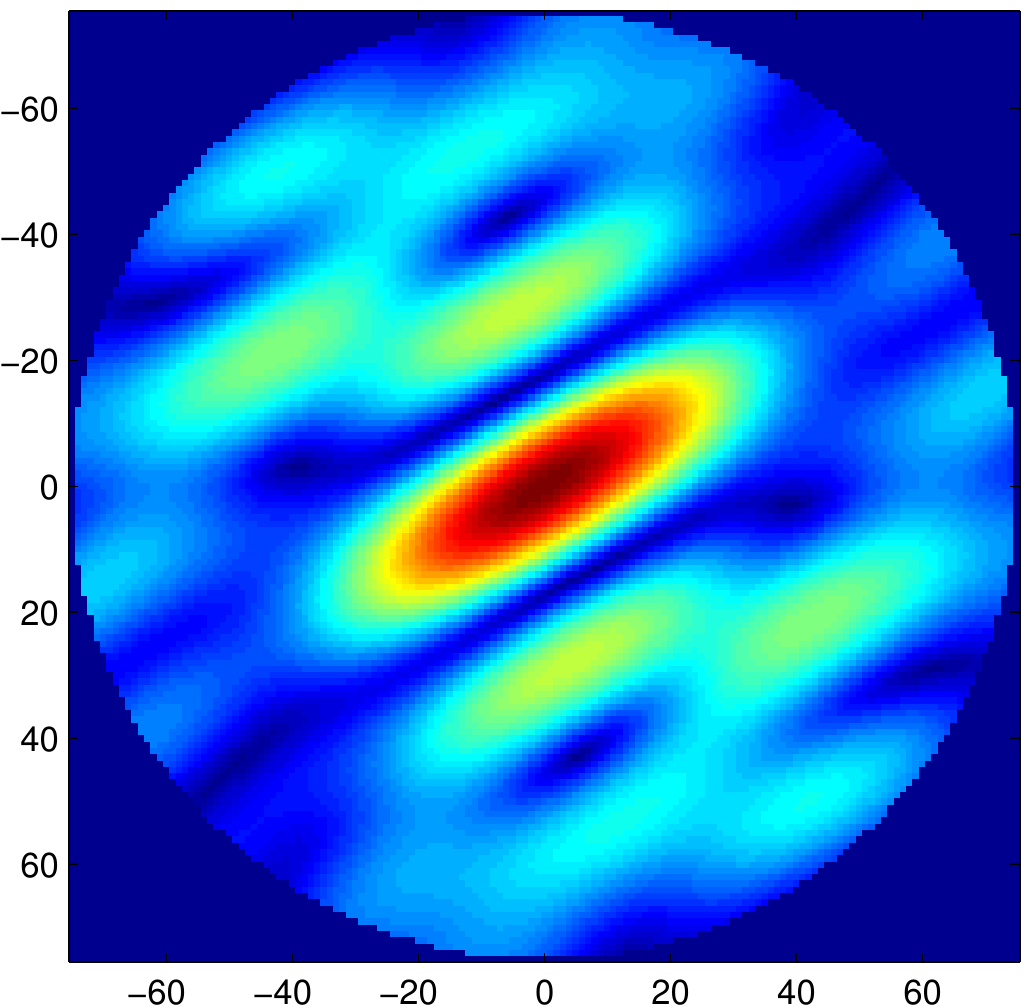}
  }
  \caption[Fourier domain magnitude of the ``random'' image]{Fourier domain magnitude of the ``random'' image: (a) the
    full spectrum (simulated, without noise) needed to reconstruct the
    image precisely (by a simple application of the inverse Fourier
    transform), (b) the low-frequency part (actual measurements, in
    the presence of experimental noise).}
  \label{fig:fourier-data}
\end{figure}
Of course, when the majority of the frequencies are lost, precise
reconstruction is not possible, unless we have, or may assume, some
additional information about the sought signal. In fact, the problem
is even more difficult, because the measurements contain non-negligible
noise.  In a manner similar to~\shortcite{gazit09super-resolution}, we
assume that the EM field in the object domain $(u,v)$ can be
represented precisely, or approximated adequately (hereinafter, this
relation is denoted by $\approxeq$) by means of a known generating
function $g(u,v)$. That is
\begin{equation}
  \label{eq:sparse-1}
  E(u,v) \approxeq \sum_{m}\sum_{n} x_{mn}g(u-m\Delta_{u}, v - n\Delta_{v})\ ,
\end{equation}
where ${x_{mm}}$ are unknown signal coefficients in the basis defined
by the shifted versions of $g(u,v)$. Note that the set
$\left\{(m\Delta_{u}, n\Delta_{v})\right\}$ defines a rectangular grid
where the shifted versions of the generating function are
located. Hence, for example, by choosing $g(u,v)$ to be the Dirac
delta function we can obtain the sampled version of the continuous EM
field distribution, where $\Delta_{u}$, and $\Delta_{v}$ define the
sampling interval.  Another classical example: all bandwidth limited
signals can be represented precisely in this form when $g$ is chosen
to be the $\mathrm{jinc(\rho)}$ function. For more examples
see~\shortcite{eldar09beyond} and references therein. Of course, the
generator must be chosen in a way that corresponds to the signal in
question (although, in the most general case of 2D information, the
generator could simply be rectangular pixels).  In this section and in
Section~\ref{sec:comp-with-other}, where we compare our algorithm with
other methods, we assume that the basis function is chosen in a way
that allows a perfect reconstruction of the sought signal, namely $g$
represents a circle of a priori known diameter (\unit[100]{nm}). We assume
also that $\Delta_{u}=\Delta_{v}=\unit[100]{nm}$. That is, we assume that the
sought signal is comprised of non-overlapping circles of known
diameter.  The grid $\left\{(m\Delta_{u}, n\Delta_{v})\right\}$
containing all possible locations (144) is shown in
Figure~\ref{fig:full-grid}. Note that the exact placement of the grid
is unimportant as our measurements are insensitive to shifts.  A more
detailed explanation of this property is presented in
Section~\ref{sec:choosing-grid-basis}, where we discuss the
implications of the grid assumption along with the impact of the basis
function on the reconstructed signal.

Before moving on, there are two points we would like to stress. First,
the assumption of an underlying grid is natural in many situations
arising in digital signal processing. A prominent example are digital
images that are comprised of pixels located on a rectangular
grid. Just like in digital images, the grid in our case defines the
resolution of a digitized version of the sought signal (see
Section~\ref{sec:choosing-grid-basis} for details). Second, it is
important to note that all our comparisons with other methods are done
under exactly the same assumptions, including a grid, basis functions,
etc. As is evident from the experiments presented in
Section~\ref{sec:comp-with-other}, our algorithm outperforms other
methods.
\begin{figure}[H]
  \centering
  \includegraphics[width=0.4\textwidth]{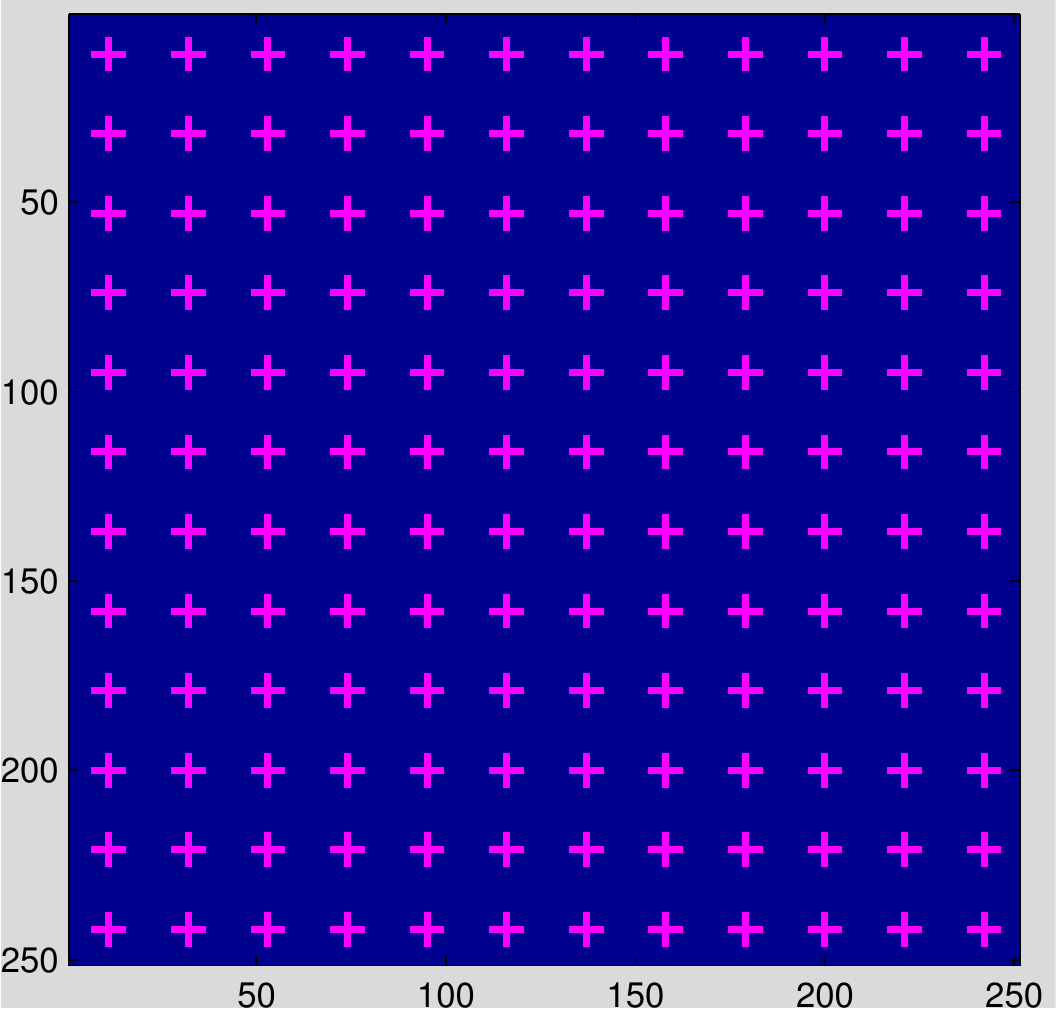}
  \caption{The full grid.}
  \label{fig:full-grid}
\end{figure}
We emphasize  that even if the correct number of circles were known
(12 circles, in this example) there
would be ${144 \choose 12} > 10^{17}$ possible variants to choose from for
the signal support. To limit the search space, we use the blurred version
of the signal as shown in Figure~\ref{fig:grid-restriction}.
\begin{figure}[H]
  \centering
  \subfloat[]{
    \includegraphics[width=0.4\textwidth]{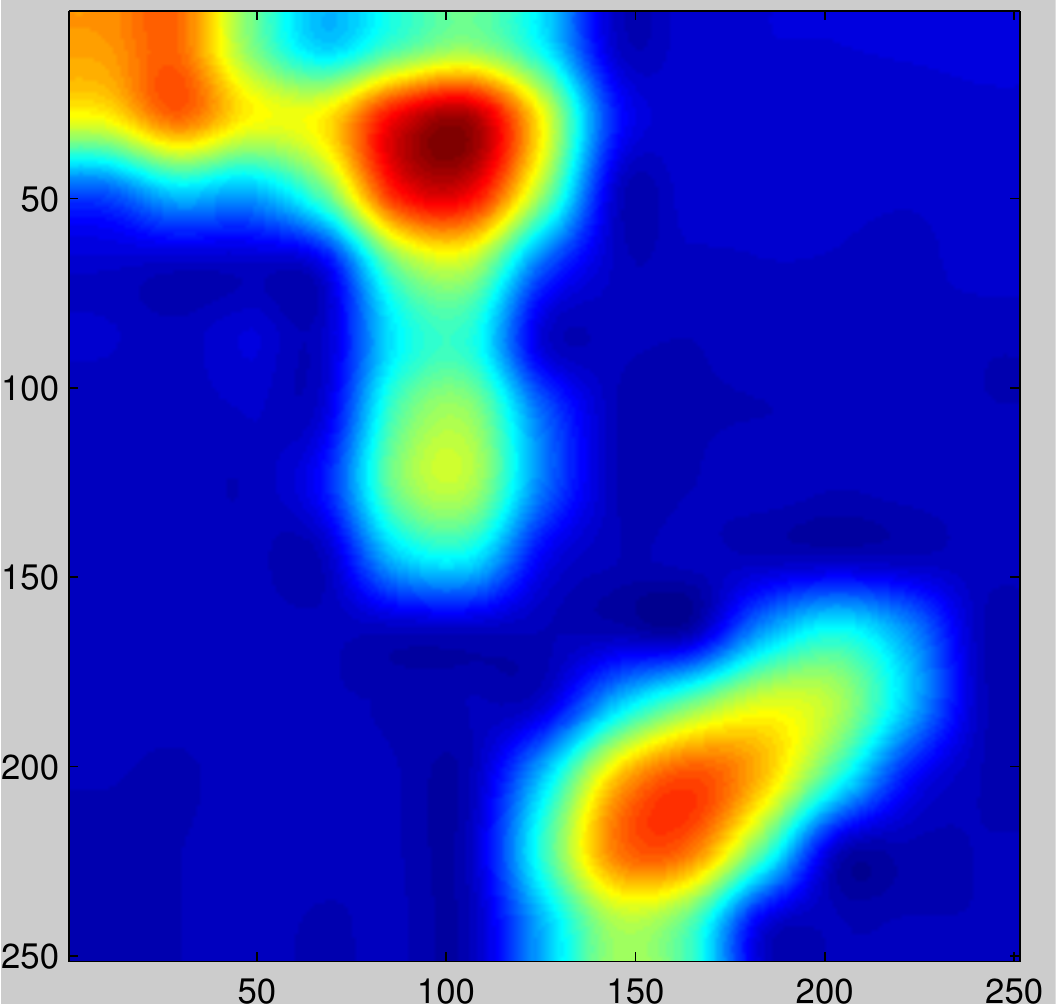}
  }
  \qquad{}
  \subfloat[]{
    \includegraphics[width=0.4\textwidth]{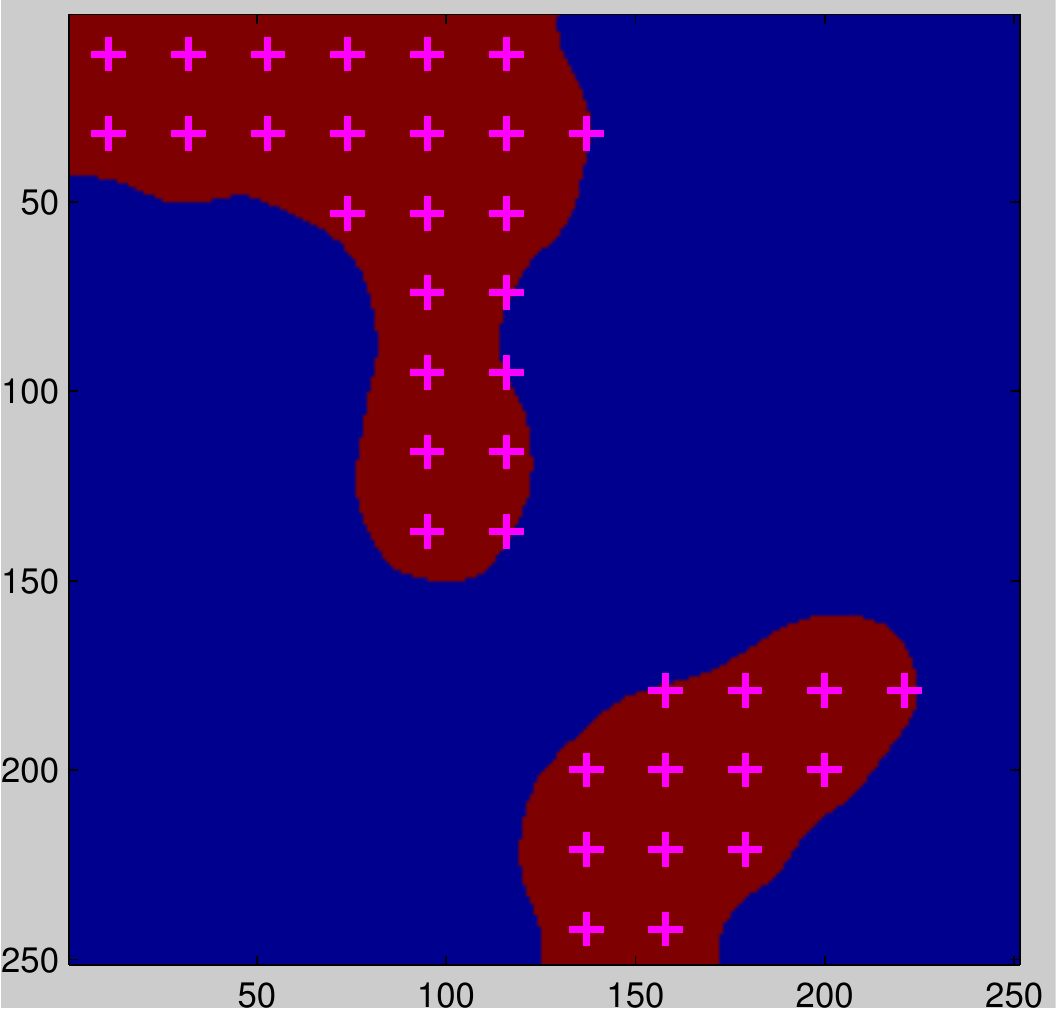}
  }
  \caption[Support restriction by the low-resolution image]{Support restriction by the low-resolution image: (a)
    blurred image magnitude, (b) grid restricted by the blurred
    image.}
  \label{fig:grid-restriction}
\end{figure}
However, even after this restriction, there still remain ${37 \choose 12}
> 1.85\times 10^{9}$ variants. More importantly,
even after this restriction, the image cannot be reconstructed
precisely unless additional information is available (see
Section~\ref{sec:comp-with-other}). Below we present
our method that provides excellent reconstruction results based on the
knowledge that the total number of circles in the image is small,
that is, the image is \emph{sparse} in the basis associated with the
circles, as defined by Equation~\eqref{eq:sparse-1}.

In our method, we reconstruct the support and the magnitude of the
circles in the sought signal simultaneously. To this end, we seek
the sparsest $x$ ($x$ being a column vector comprised of the
image coefficients $x_{mn}$ as defined by Equation~\eqref{eq:sparse-1}), that
yields a good agreement with the measurements. Mathematically, we
try to solve the following optimization problem
\begin{equation}
  \label{eq:sparse-2}
  \begin{split}
    \min &\quad\|x\|_{0}\\
    \mathrm{subject\ to} &\quad \||LFCx| - r\|_{2}^{2}\leq\epsilon\, , \\
    &\quad x \geq 0\ .
  \end{split}
\end{equation}
Here $\|\cdot\|_{0}$ denotes the $l_{0}$ norm: $\|x\|_{0} =
\sum_{i}|x_{i}|^{0}$,  that is, $\|x\|_{0}$
equals to the number of elements of $x$ that are not zero. The
measured (noisy) magnitude in the Fourier domain is denoted by
$r$. Note that the operators and inequalities, like $|\cdot|$, and
$\geq$ are applied element-wise. The matrix $C$ represents all
possible shifts of the generator function (a circle); hence, $Cx$ is
the actual image that we reconstruct; $F$ stands for the Fourier
transform operator; $L$ represents the low-pass filter. That is, $L$
is obtained from the identity matrix of appropriate size by removing
most of its rows while keeping only those that correspond to the low
frequencies of its operand, as shown in
Figure~\ref{fig:fourier-data}. Physically, $L$ is the low-pass filter
associated with the cutoff spatial frequency of the optical system,
which, for microscopes with NA=1, corresponds to the diffraction
limit. Note that, due to errors in the
measurements, the discrepancy in the Fourier domain is allowed to be
up to some small value $\epsilon $ ($>0$). A short discussion about the precise value of
$\epsilon$ and whether it must be known a priori will follow. Note
also that the last requirement $x\geq 0$ is valid because the optical
information is generated by illuminating the sample with a plane wave,
that is, a plane of equal amplitude and phase. Hence, the
phase is the same across the whole image. Therefore, without loss of
generality, we may assume that the phase is zero everywhere, since the
absolute phase is unimportant. We do not
assume that all circles have the same magnitude, however---they can
have any value.

To solve \eqref{eq:sparse-2} we developed an iterative method whose basic
iteration contains the following two steps:
\begin{description}
\item[Step 1:] Solve the minimization problem:
  \begin{equation}
    \label{eq:sparse-3}
    \begin{split}
      \min &\quad \||LFCx| - r\|_{2}^{2}\\
      \mathrm{subject\ to} &\quad x \geq 0 \ .
    \end{split}
  \end{equation}
  (in practice, we use an unconstrained formulation that is solved by
  the L-BFGS method \shortcite{liu89limited}).
\item[Step 2:]
  After a solution $x$ to Step 1 is found, set to zero the entry of
  $x$ with minimal value. Once set to zero the entry remains so
  forever.
\end{description}
In theory, the iterations should be repeated so long as the constraint
$\||LFCx| - r\|^{2}\leq\epsilon$ is satisfied. It is often argued that
the value of $\epsilon$ is known a priori or can be estimated from
physical constraints (as a matter of fact, in the case of the
``random'' image, the difference between the measured Fourier
magnitude $r$ and its ideal variant $r^{*}$ is $\|r - r^{*}\|^{2} =
1.7434$, which corresponds to a signal-to-noise ratio of $\|r^{*}\|/\|r
- r^{*}\| = 1/0.041$). However, it is an important question whether the
best value of $\|x\|_{0}$ (the true number of circles in the image)
can be determined \emph{automatically}. Consider the different stages
of our method as shown in Figure~\ref{fig:randomimg-rec-stages}. Is
there any way to recognize that the correct number of circles is 12
without knowing $\epsilon$?  It turns out that the answer to the above
question is affirmative. As is evident from
Figure~\ref{fig:randomimg-objfunction}, there is a big jump in the
objective function value when the number of circles dips below the
correct value of 12. Hence, even without knowing the noise bound
$\epsilon$ one can easily identify that the smallest number of circles
that ``explains'' well the measurements is 12 (this is, of
course, correct as long as the circles have large enough amplitude).
The result of our
reconstruction and the true image are shown in
Figure~\ref{fig:randimg-rec-and-true}.  Note that some circles have low
magnitude so they are invisible in the color images. We therefore, place
the '+' sign at the center of all circles in the image ($x$'s entries
that are not zeros).
\begin{figure}[H]
  \centering
  \subfloat[]{
    \includegraphics[width=0.3\textwidth]{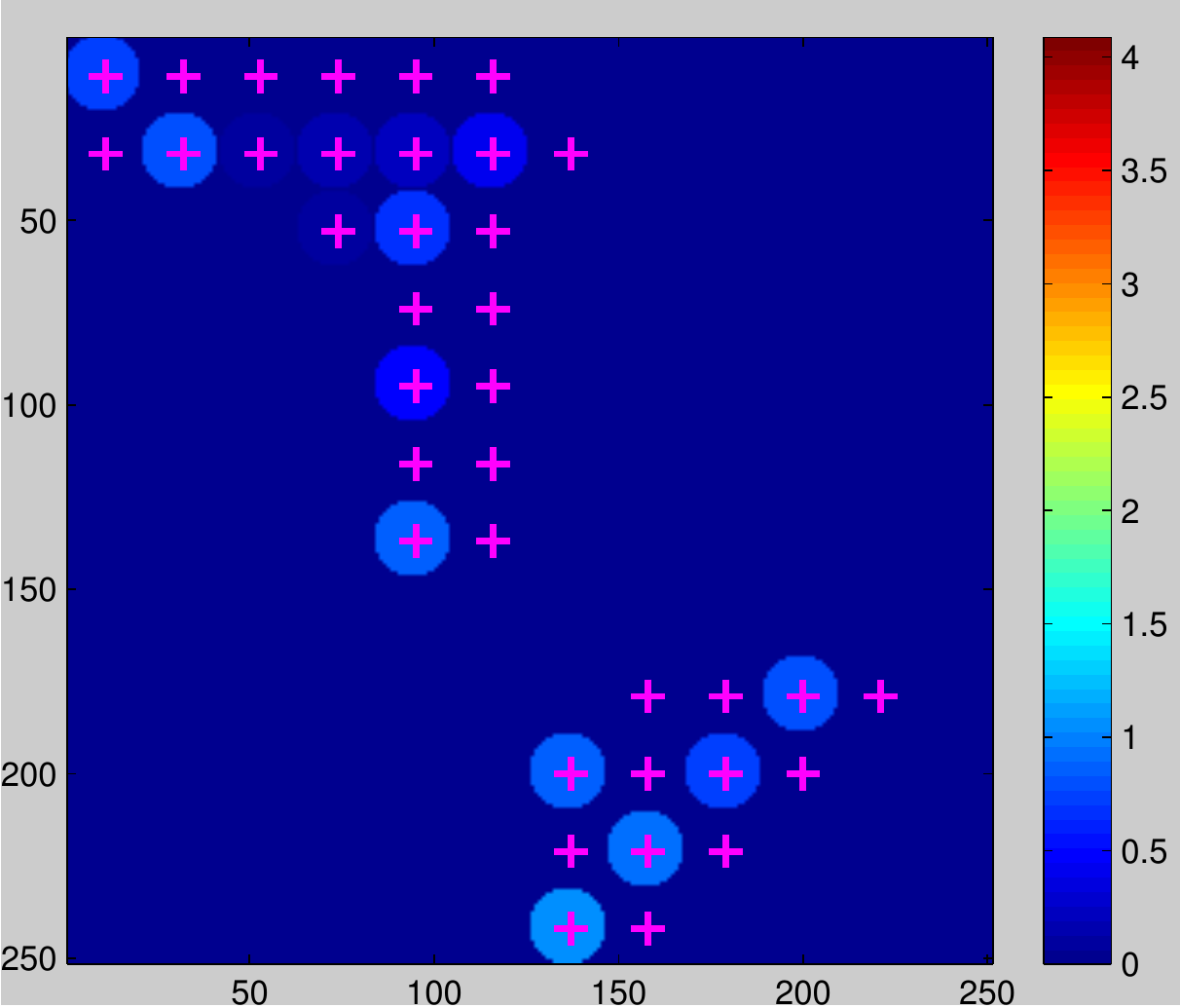}
  }
  \subfloat[]{
    \includegraphics[width=0.3\textwidth]{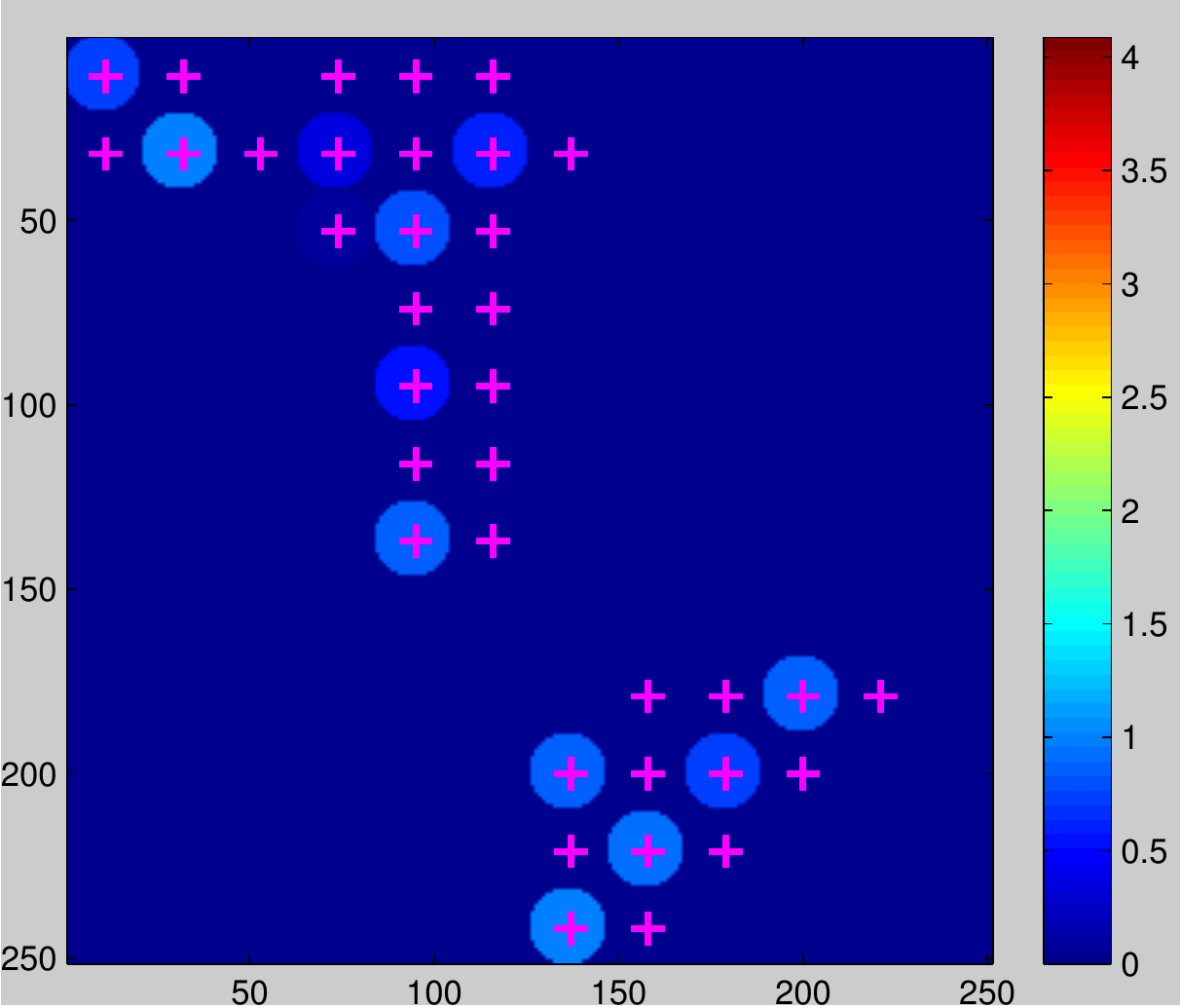}
  }
  \subfloat[]{
    \includegraphics[width=0.3\textwidth]{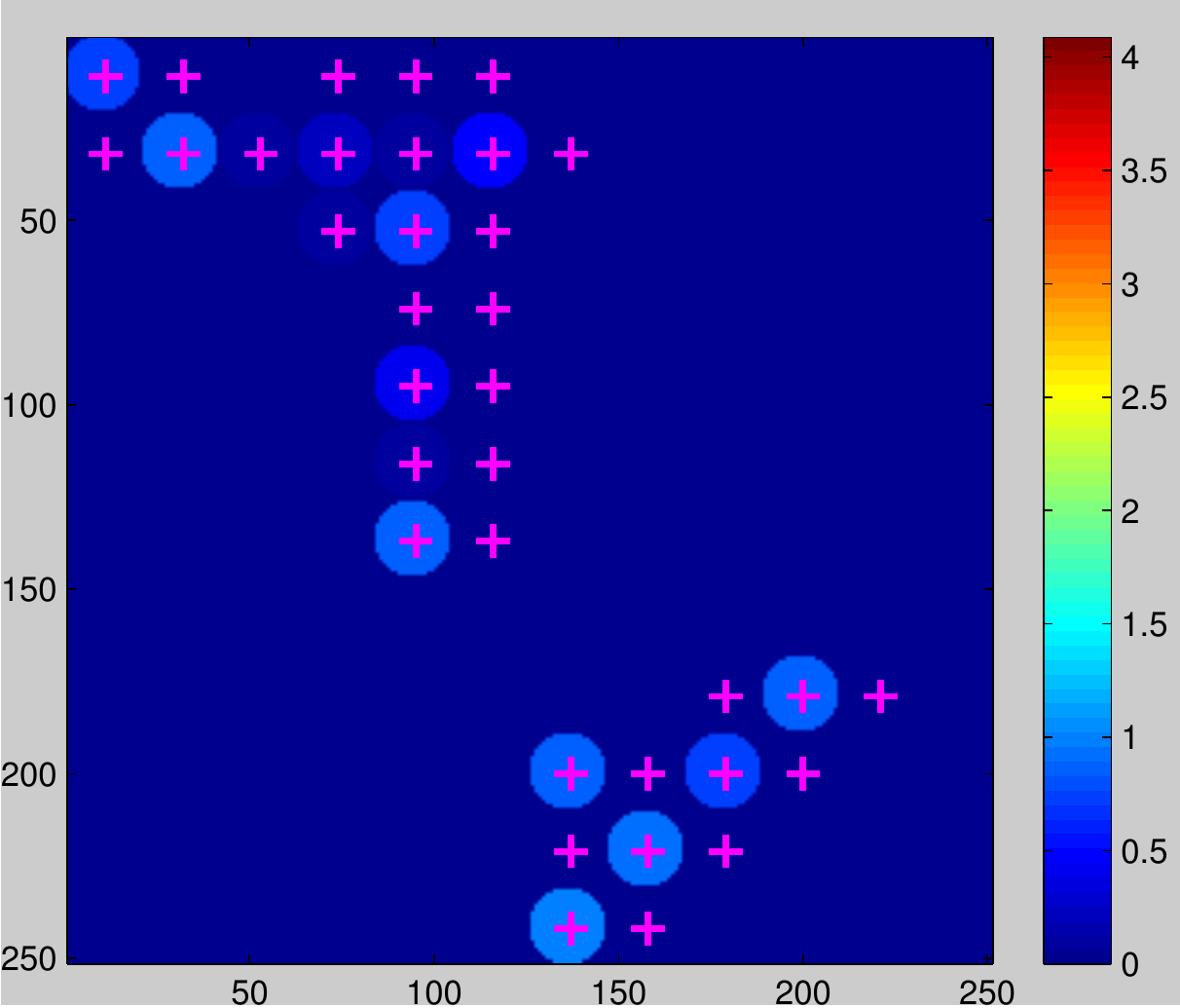}
  }\\
  \subfloat[]{
  \includegraphics[width=0.3\textwidth]{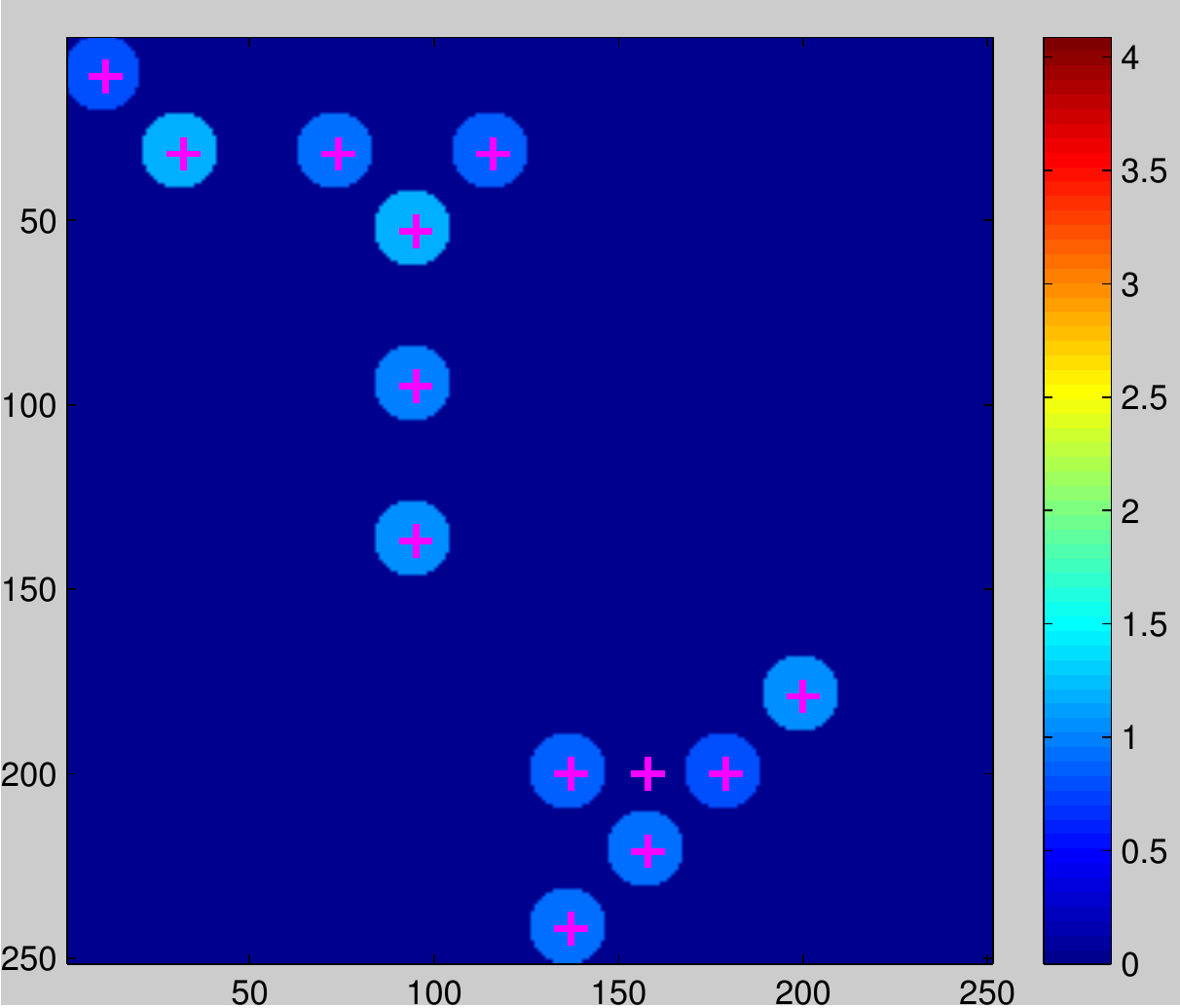}
  }
  \subfloat[]{
    \includegraphics[width=0.3\textwidth]{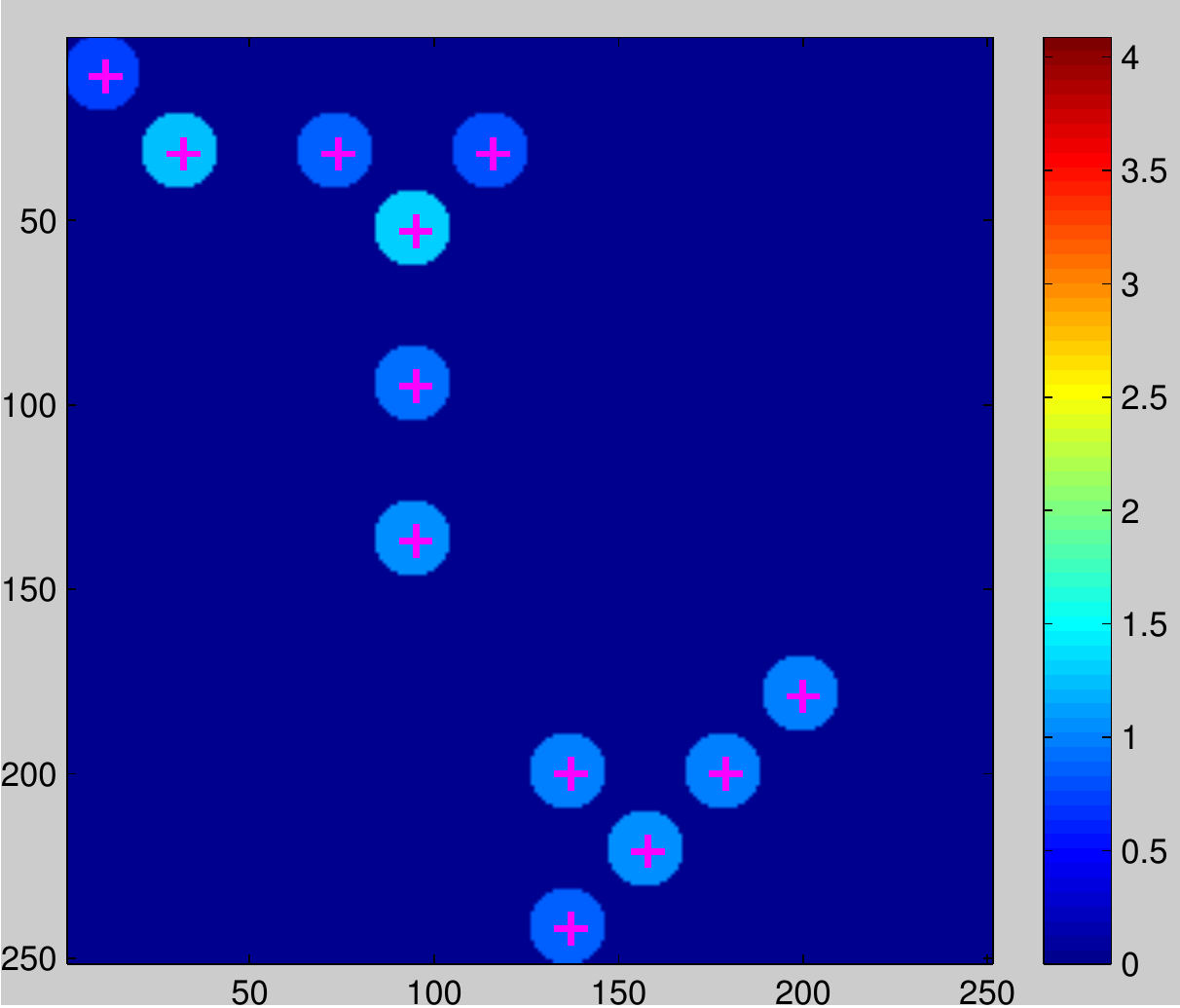}
  }
  \subfloat[]{
    \includegraphics[width=0.3\textwidth]{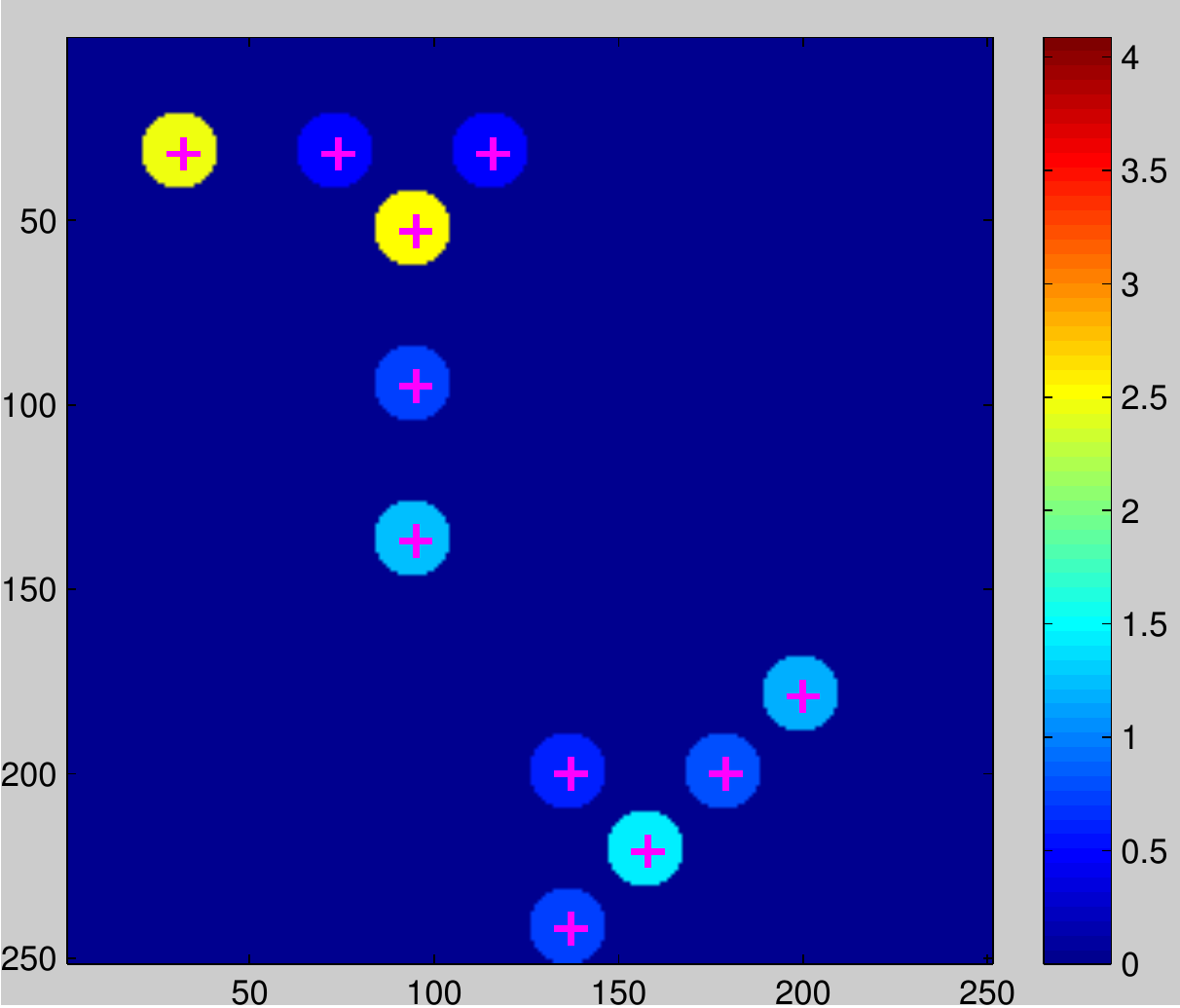}
  }\\
    \subfloat[]{
    \includegraphics[width=0.3\textwidth]{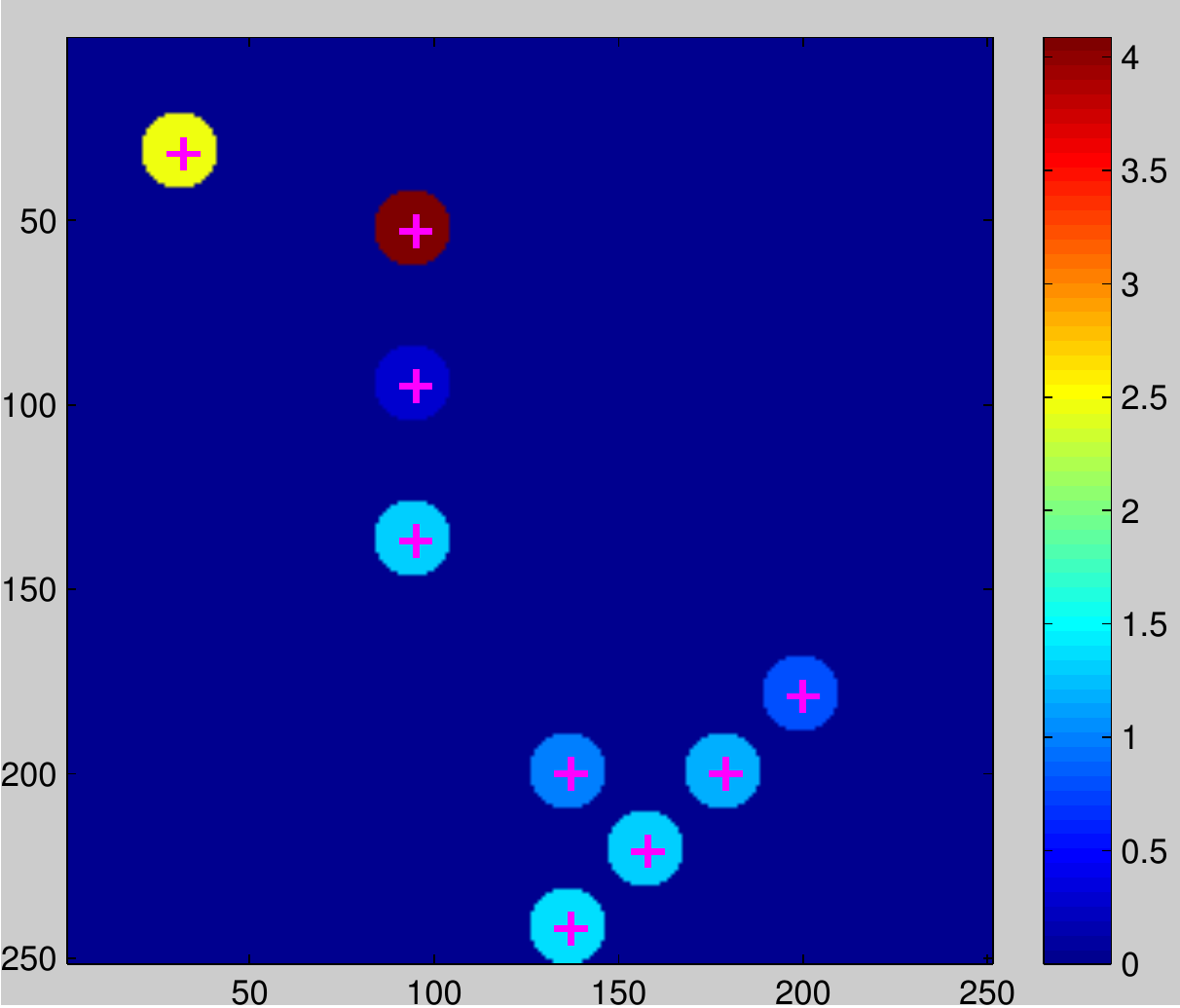}
  }
  \subfloat[]{
    \includegraphics[width=0.3\textwidth]{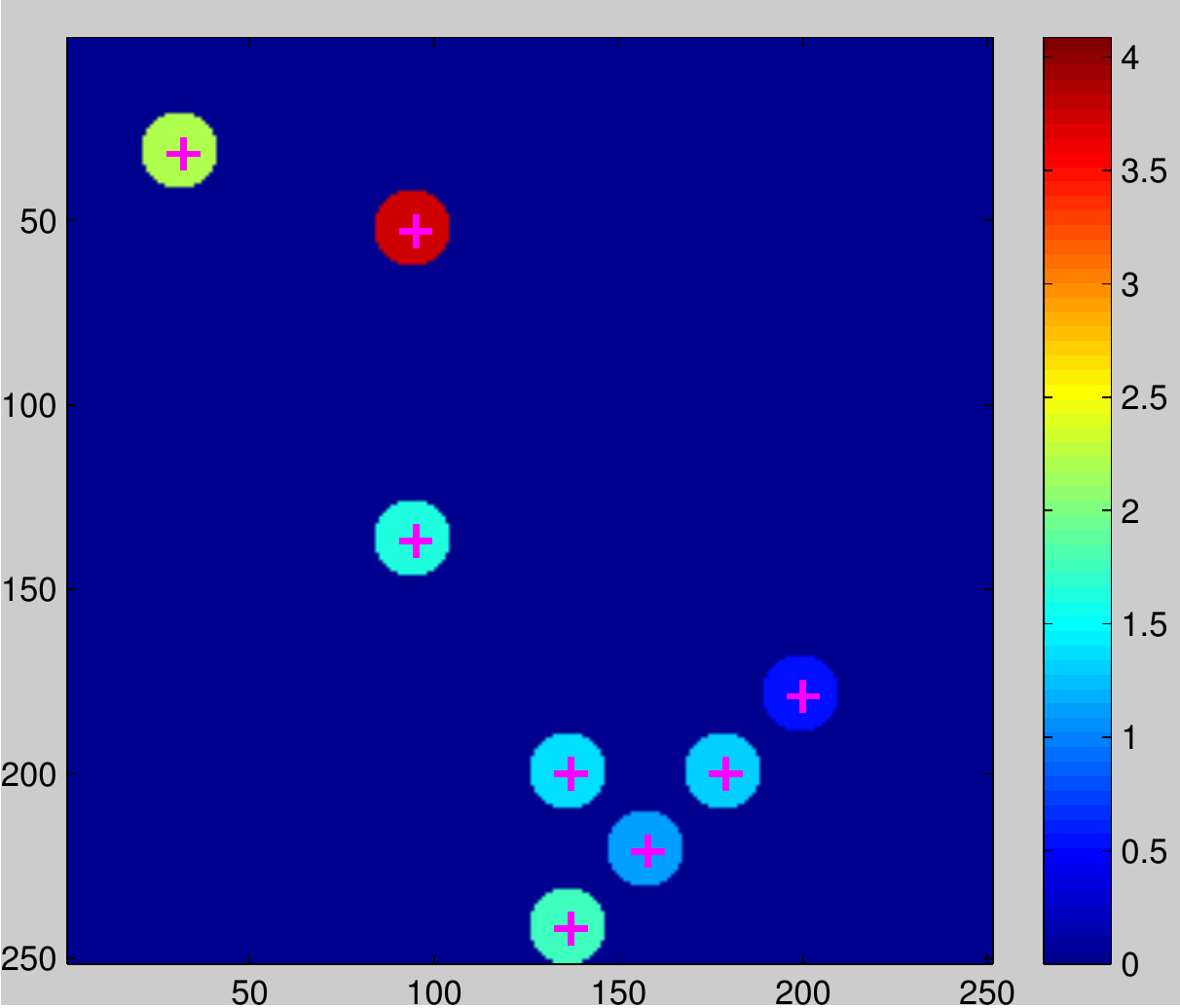}
  }
  \subfloat[]{
    \includegraphics[width=0.3\textwidth]{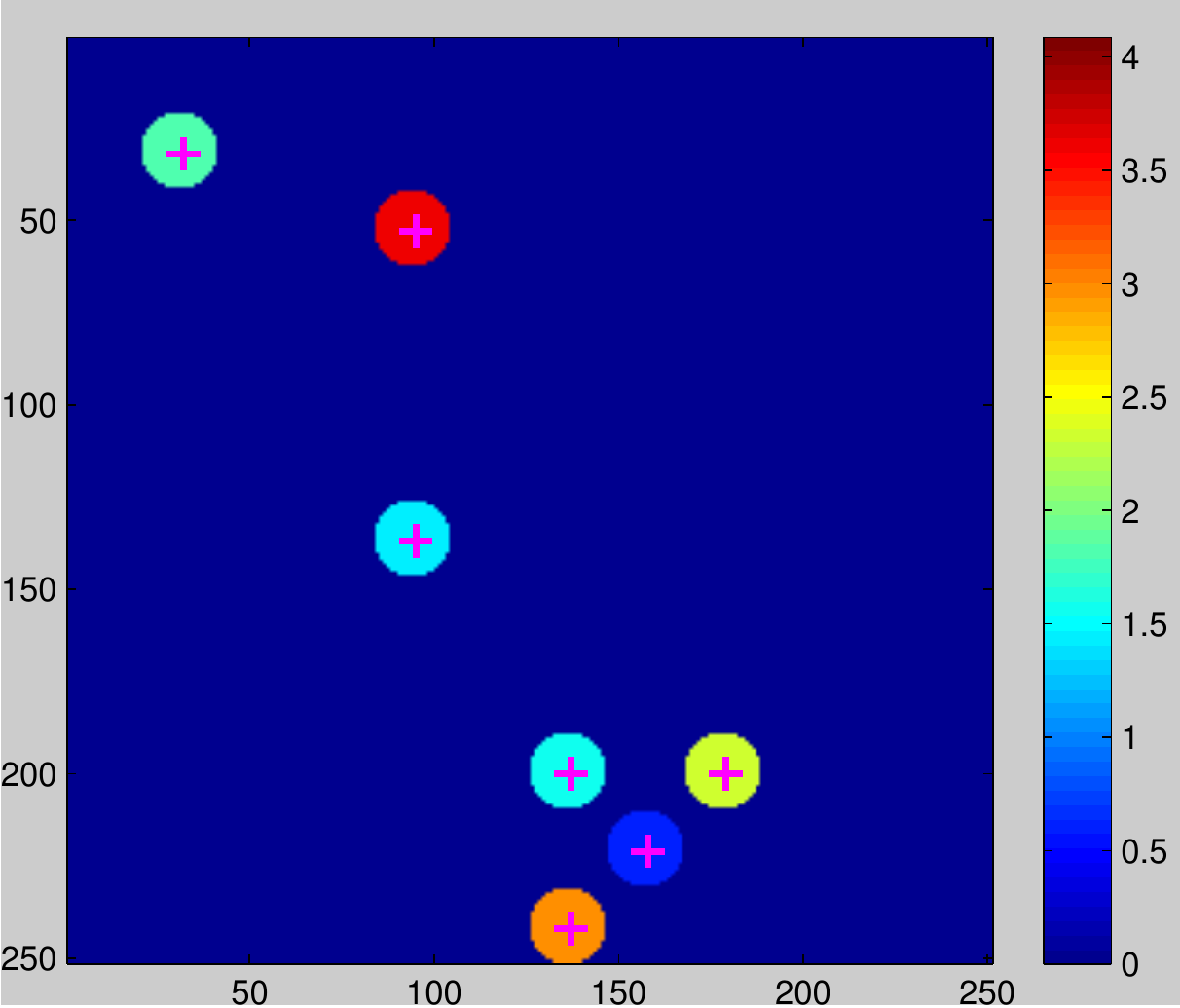}
  }
  \caption[Reconstruction stages for the example of the ``random''
    image]{Reconstruction stages for the example of the ``random''
    image. Each stage (iteration) corresponds to a certain number of
    circles (non-zero entries in $x$): (a) 37 circles, (b) 36, (c) 35,
    (d) 13, (e) 12, (f) 11, (g) 9, (h) 8, (i) 7.}
  \label{fig:randomimg-rec-stages}
\end{figure}
\begin{figure}[H]
  \centering
  \includegraphics[width=0.9\textwidth]{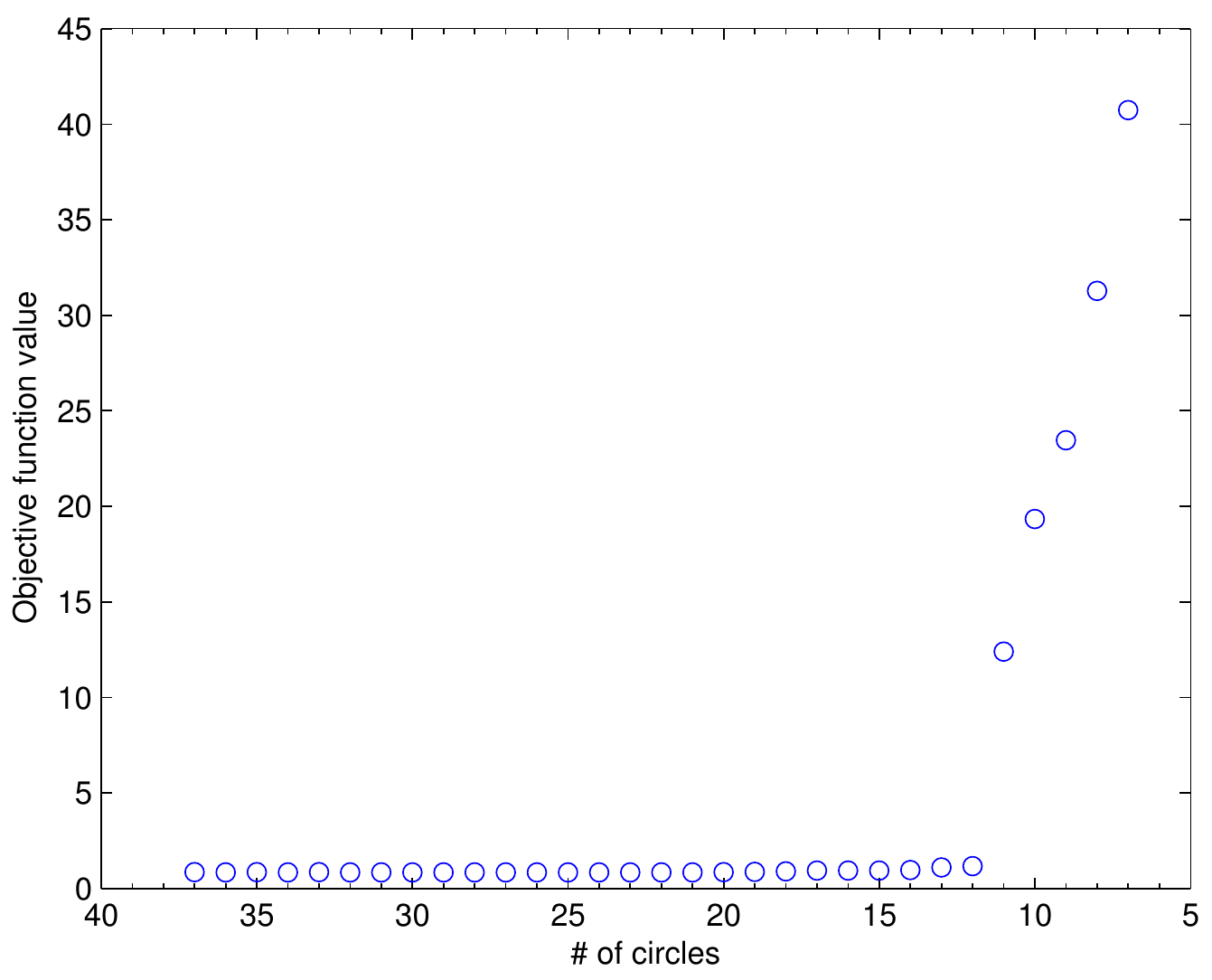}
  \caption[``Random'' image: objective function value]{``Random'' image: objective function value (Fourier domain
    discrepancy) versus the number of circles in the solution.}
  \label{fig:randomimg-objfunction}
\end{figure}

\begin{figure}[H]
  \centering
  \subfloat[]{
    \includegraphics[width=0.4\textwidth]{sparse/randImg_2011-08-25-16-42-47-iter26-circles12}
  }
  \qquad{}
  \subfloat[]{
    \includegraphics[width=0.4\textwidth]{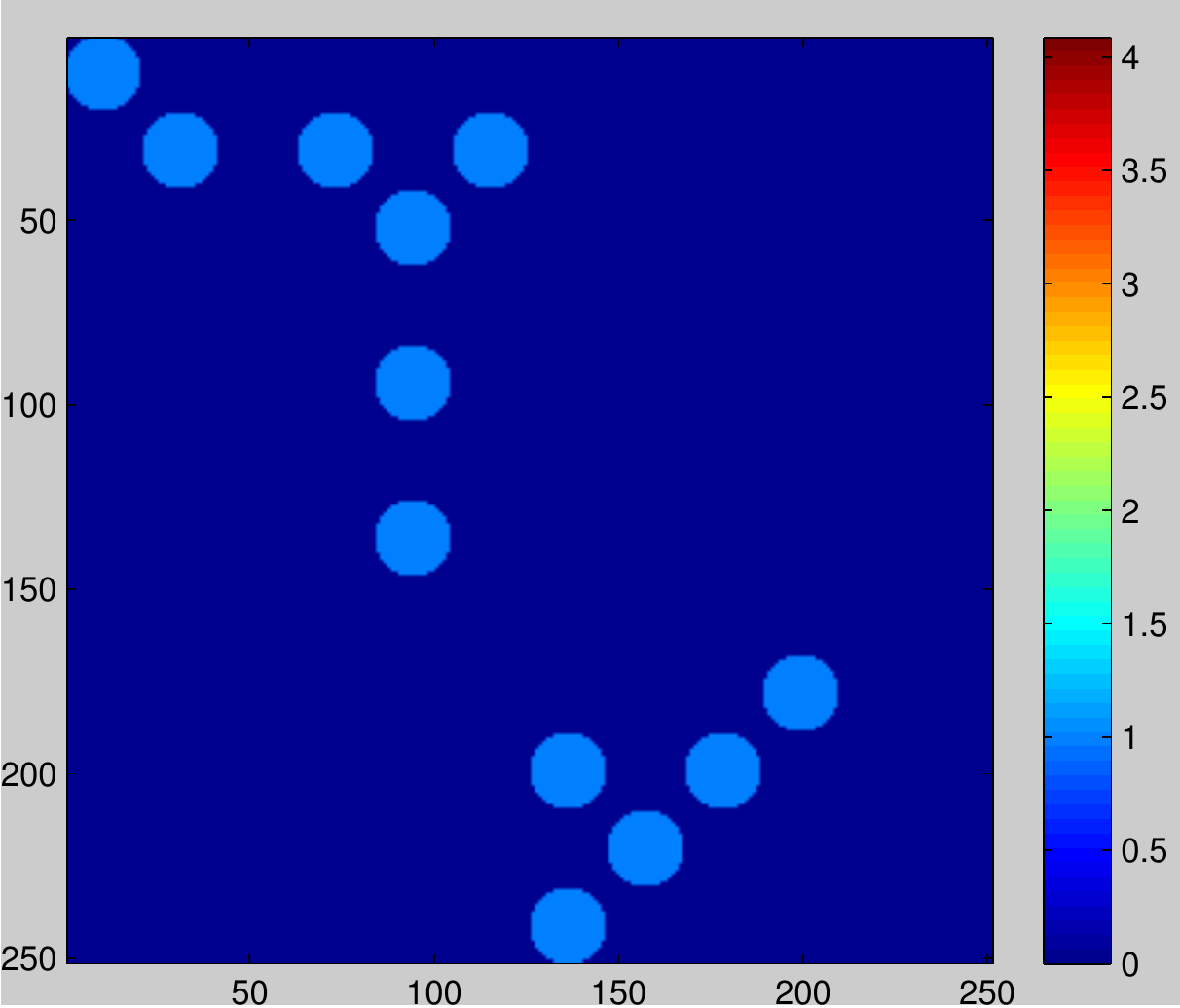}
  }
  \caption[Reconstruction result for the ``random'' image]{Reconstruction result for the ``random'' image (a), and the
    true (original) image (b).}
  \label{fig:randimg-rec-and-true}
\end{figure}
Very similar behavior is
observed for the second image (SOD) whose results are shown in
Figures~\ref{fig:sodimg-objfunction}
and~\ref{fig:sodimg-rec-and-true}.
\begin{figure}[H]
  \centering
  \includegraphics[width=0.9\textwidth]{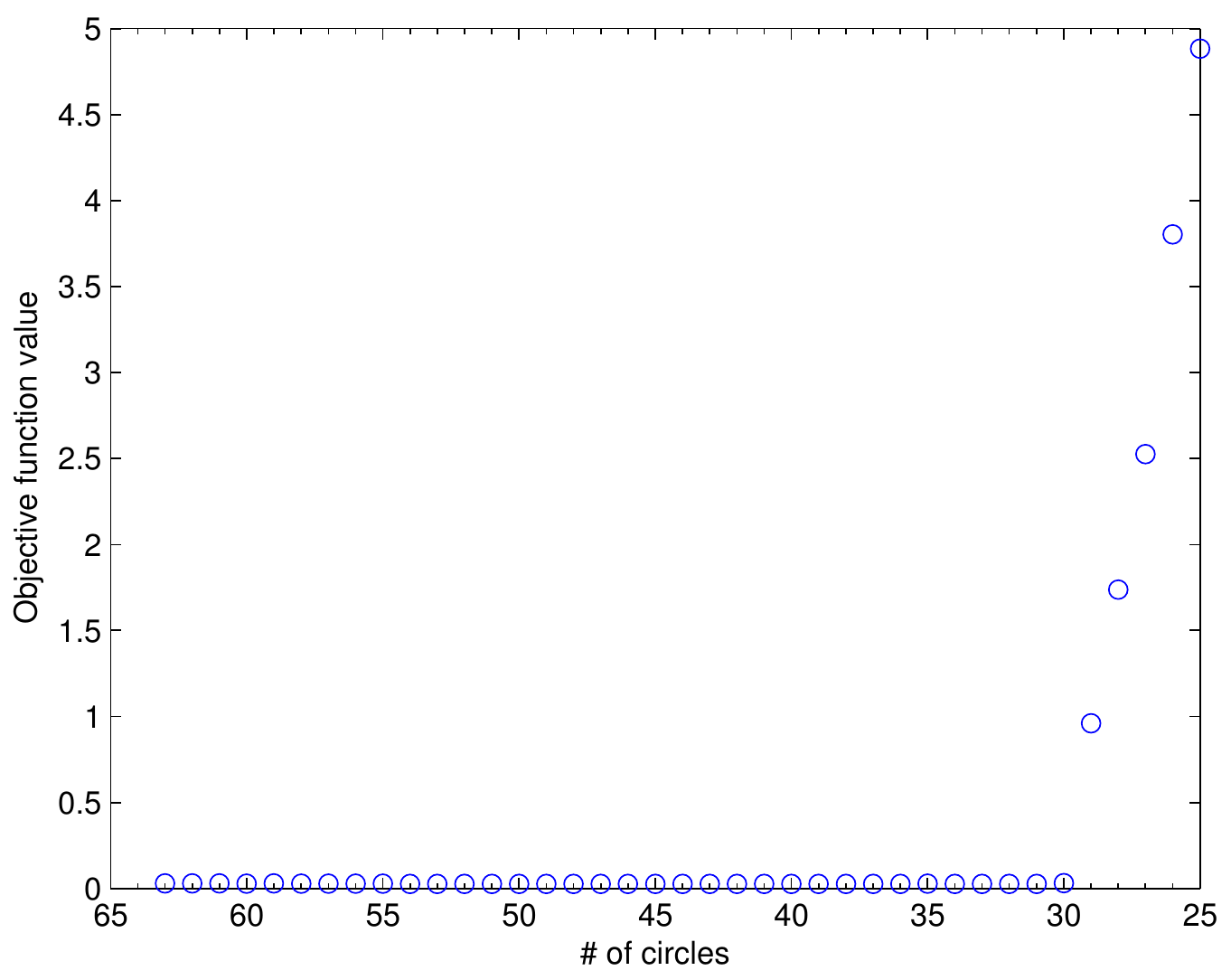}
  \caption[SOD image: objective function value]{SOD image: objective function value (Fourier domain
    discrepancy) versus the number of circles in the solution.}
  \label{fig:sodimg-objfunction}
\end{figure}

\begin{figure}[H]
  \centering
  \subfloat[]{
    \includegraphics[width=0.4\textwidth]{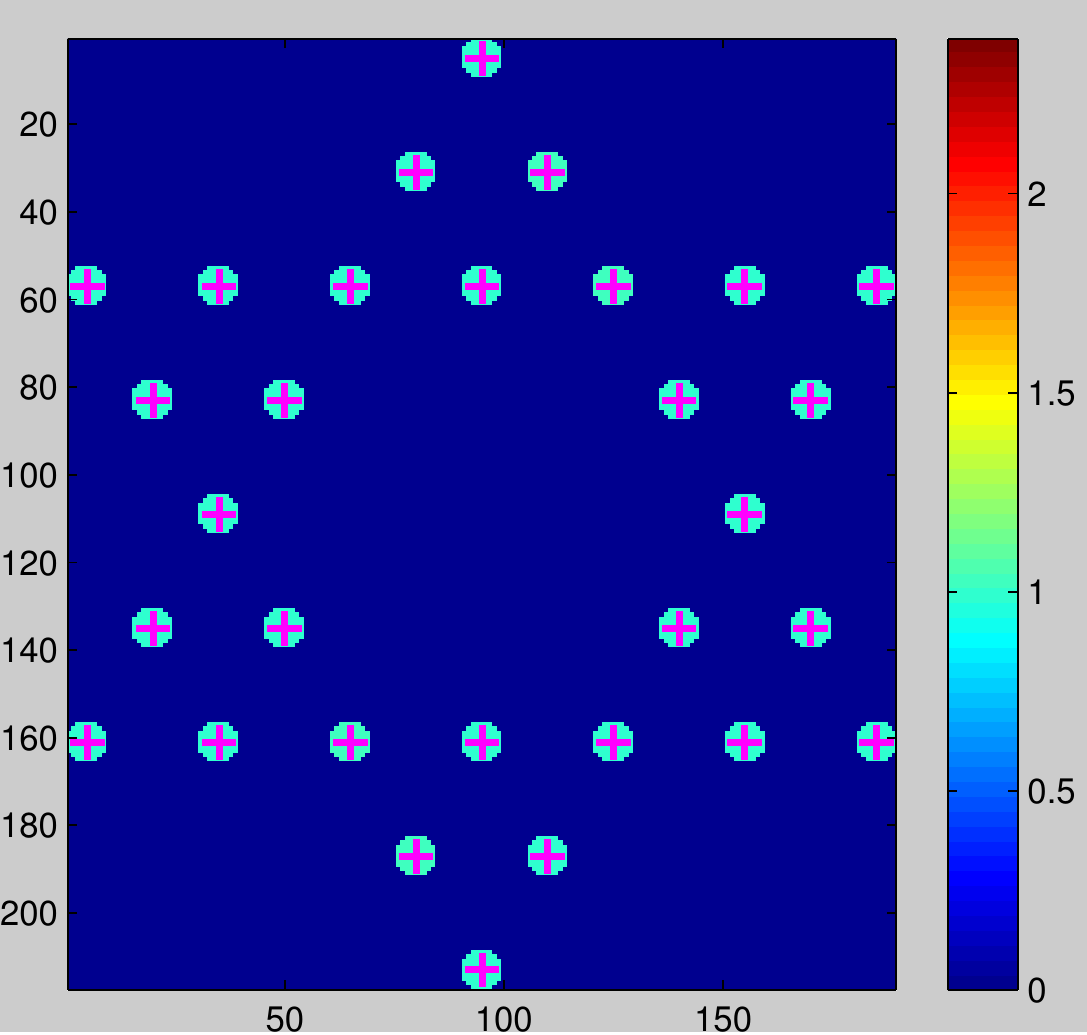}
  }
  \qquad{}
  \subfloat[]{
    \includegraphics[width=0.4\textwidth]{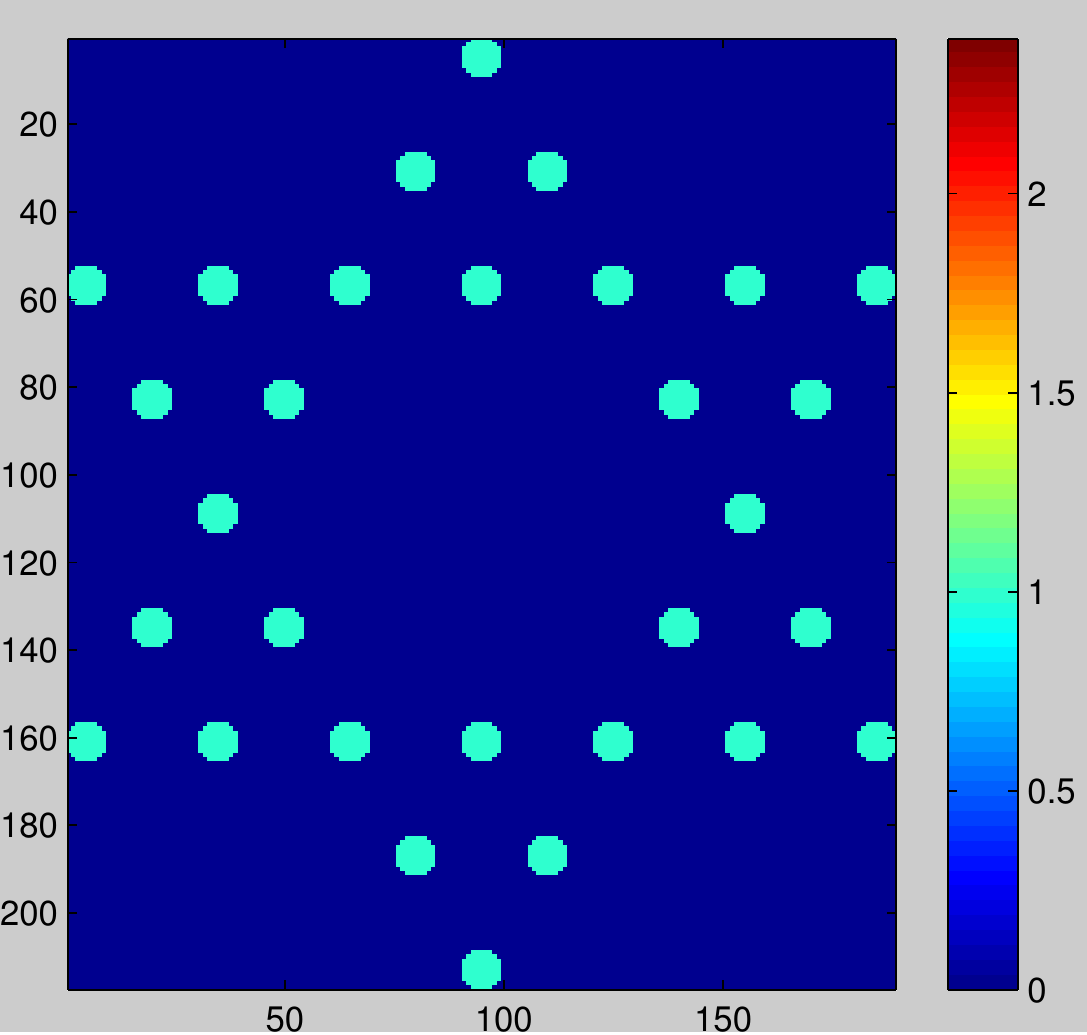}
  }
  \caption[Reconstruction result for the SOD image]{Reconstruction result for the SOD image (a), and the true
    (original) image (b).}
  \label{fig:sodimg-rec-and-true}
\end{figure}
In
Section~\ref{sec:choosing-grid-basis} we demonstrate that choosing an
``incorrect'' basis function, even one whose shape does not allow 
perfect representation of 
the sought signal, results, nevertheless, in a reasonable
reconstruction. Furthermore, we also demonstrate  that the  grid's cell
size can be determined automatically.

\section{Comparison with other methods}
\label{sec:comp-with-other}
We would like to stress again that our method is successful because we
exploit the \emph{sparsity} of the sought signal. To demonstrate this,
we present a comparison with some classical reconstruction methods, and
discuss the relation between our setup and  classical compressed
sensing.

\subsection{Without a regularization}
\label{sec:with-regul}
Our sparsity-based technique  minimizes the $l_{0}$ norm
subject to additional constraints. This formulation resembles closely
a regularization imposed on $x$. Hence, the most naive approach would
be to abandon the regularization altogether and to try to find $x$ that
minimizes the discrepancy in the measurements. That is, we might
solve the following problem
\begin{equation}
  \label{eq:sparse-4}
    \begin{split}
      \min &\quad \||LFCx| - r\|^{2} \\
      \mathrm{subject\ to} &\quad x \geq 0 \ .
    \end{split}
\end{equation}
Note that this is exactly the problem we solve in the first iteration
of our method. However, using this approach as the full reconstruction
process has a number of drawbacks. First,
the problem of image reconstruction from the magnitude of its Fourier
transform (also called phase retrieval) is known to be particularly
tough for continuous optimization techniques (for explanation and
further details see~\shortcite{osherovich11approximate}). To the best of our
knowledge, the
most widely used method for phase retrieval without additional information
is the Hybrid Input-Output method~\shortcite{fienup82phase}.  A more
detailed investigation of this method will follow in
Section~\ref{sec:hybrid-input-output}. Here, we present the results
obtained by our optimization routine. As mentioned earlier, this
formulation is equivalent to performing only one iteration of our
method. Hence, the result is as shown in
Figure~\ref{fig:rec-without-regularization}. Note that the
reconstruction contains many superfluous circles, and even if the correct
number of the circles were known, a simple thresholding would yield an
incorrect reconstruction.
\begin{figure}[H]
  \centering
  \subfloat[]{
    \includegraphics[width=0.4\textwidth]{sparse/randImg_2011-08-25-16-42-47-iter1-circles37}
  }
  \qquad{}
  \subfloat[]{
    \includegraphics[width=0.4\textwidth]{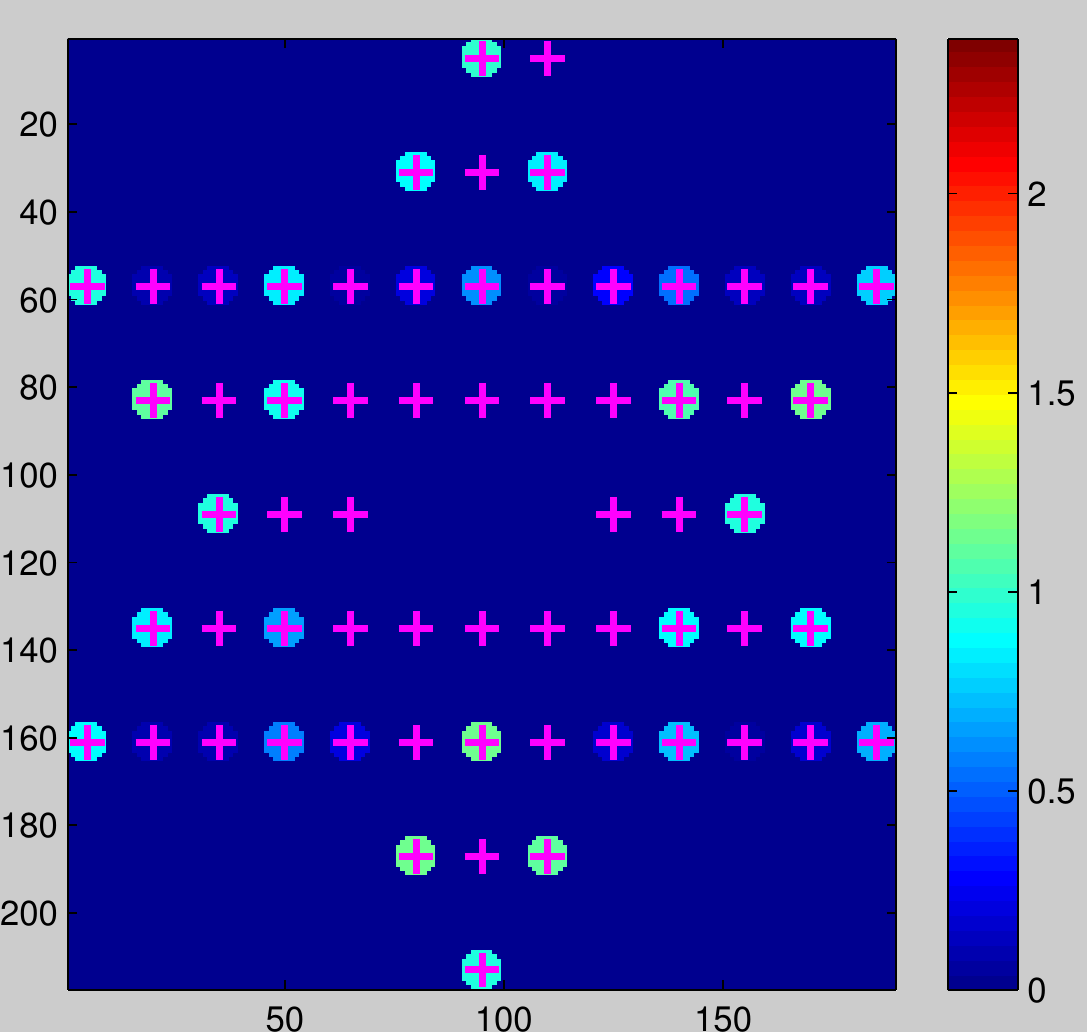}
  }
  \caption[Reconstruction without a regularization on $x$]{Reconstruction without a regularization on $x$: (a)
    ``random'' image, (b) SOD image.}
  \label{fig:rec-without-regularization}
\end{figure}

\subsection{Replacing $l_0$ with another norm}
\label{sec:replacing-l_0-with}
Using $l_{2}$ regularization has long been a favorite among engineers
due to its simplicity and the ability to obtain closed-form solutions
in linear cases. In the non-linear case, these benefits are lost, of
course. However, for us it is more important that the $l_{2}$ norm
does not promote sparsity (actually, some papers claim that it
usually results in the most dense solution
possible~\shortcite{chen99atomic}). To demonstrate that this regularization
is not suitable for bandwidth extrapolation of sparse signals, we
solved the following problem
\begin{equation}
  \label{eq:sparse-5}
   \begin{split}
    \min &\quad\|x\| \\
    \mathrm{subject\ to} &\quad \||LFCx| - r\|^{2}\leq\epsilon\ ,\\
    &\quad x \geq 0\ .
  \end{split}
\end{equation}
The problem was solved by  transforming it into an
unconstrained optimization problem and choosing the weights of the
penalty function terms so as to get the discrepancy in the
measurements close to the true values. That is, 
assuming that the true $\epsilon$ is known ($\epsilon=1.74$ in the
case of ``random'' image, and $\epsilon=0.0329$ in the case of SOD
image). For the solution we used exactly the same routine (L-BFGS) as in our
main algorithm. The results are shown in Figure~\ref{fig:l2-rec}.
\begin{figure}[H]
  \centering
  \subfloat{
    \includegraphics[width=0.4\textwidth]{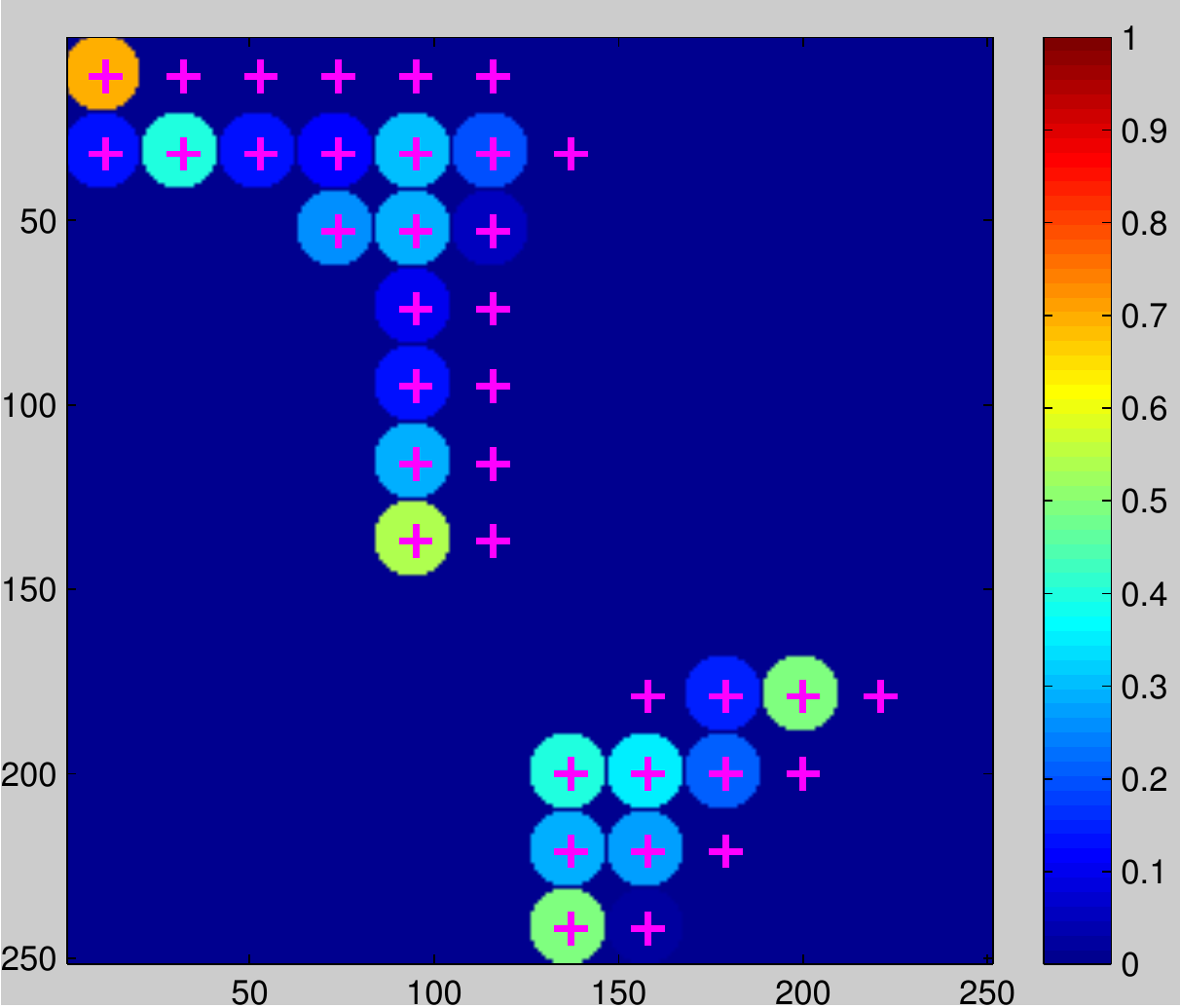}
  }
  \qquad{}
  \subfloat{
    \includegraphics[width=0.4\textwidth]{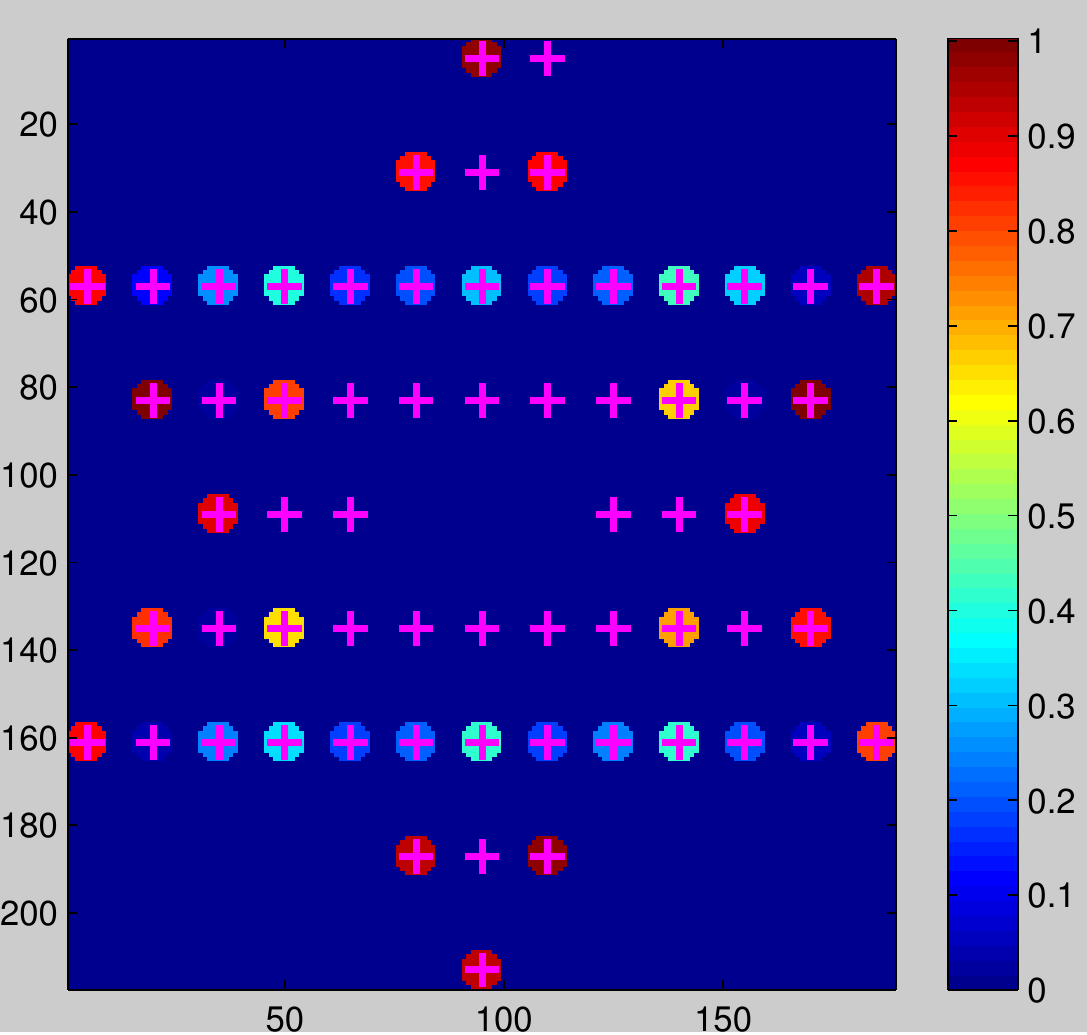}
  }
  \caption[Reconstruction using $l_2$ regularization]{Reconstruction using $l_2$ regularization: (a) ``random''
    image, and (b) SOD image.}
  \label{fig:l2-rec}
\end{figure}
It is obvious that the reconstruction result is incorrect. Moreover, even if the
correct number of circles were known, a simple thresholding would
still 
produce an incorrect result.

Another viable alternative would be using the $l_{1}$ norm. A
discussion on this norm is postponed to
Section~\ref{sec:relat-compr-sens}.

\subsection{Methods based on alternating projections}
\label{sec:meth-base-altern}
In Chapter~\ref{cha:curr-reconstr-meth} we have already seen some methods for phase
recovery~\shortcite{gerchberg72practical,fienup82phase} that are based
on a simple and elegant idea of alternating projections. Similar ideas
was applied in the field of bandwidth
extrapolation~\shortcite{gerchberg74super-resolution,papoulis75new}.
In general, the current
signal estimate is transformed back and forth between the object and
the Fourier domains. In each domain, all available information is used
to form the next estimate. Here we consider two major methods of this
type: Gerschberg type method (often referred to as Gerschberg-Saxton
or Gerschberg-Papoulis) and Fienup's Hybrid Input-Output method. As we
already know, the
former is a classical method of alternating projections where all
available information in the current domain is imposed upon the
current estimate. In the latter approach the object domain
information is not directly imposed on the current estimate; instead a
more complex update rule is used as we presented in
Section~\ref{sec:fienup-algor-phase}. 

\subsubsection{Gerschberg type methods}
\label{sec:gerschb-type-meth}
As mentioned before, Gerschberg type methods are ``pure'' projection
methods. The idea is to transform back and forth the current signal
estimate between the signal and the Fourier domain 
performing a ``projection'' in each of the domains, that is, replacing
the current estimate $x_{cur}$ with 
the nearest one that satisfies the constraints in the relevant
domain ($x_{new}$).
Hence, in each domain the
following optimization problem is solved
\begin{equation}
  \label{eq:sparse-6}
  \begin{split}
    \min_{x_{new}} & \quad \|x_{cur} - x_{new}\|_{2}^{2}\\
    \mathrm{subject\ to} & \quad x_{new} \in \mathcal{S} \ ,
  \end{split}
\end{equation}
where $\mathcal{S}$ denotes the set of all admissible signals in the
current domain. In our case the current estimate is first Fourier
transformed. Then the current (wrong) magnitude is replaced with the
measured (correct) magnitude in the low-frequency regime. The resulting
signal is back-transformed into the object domain (the result
denoted by $x'$) where it is converted into an image comprised of
circles (denoted by $x_{new}$) in the following manner. Recall that
the image model is of the following form $E = Cx$. Hence to find a
projection we must solve the following problem
\begin{equation}
  \label{eq:sparse-7}
  \begin{split}
    \min_{x_{new}} &\quad \|Cx_{new} - x'\|_{2}^{2}\ ,\\
    \mathrm{subject\ to} &\quad x_{new} \geq 0 \ .
  \end{split}
\end{equation}
The problem is convex and can be solved efficiently, however, we
used a two-steps approximation instead of a full solution.
\begin{description}
\item[Step 1] Solve $\min_{x_{new}}  \|Cx_{new} - x'\|_{2}^{2}$. Note that
  this problem has a closed form solution: $x_{new} = C^{\dagger}x'$,
  where $C^{\dagger}$ denotes the Moore-Penrose pseudo-inverse of $C$.
\item[Step 2] Set all entries of $x_{new}$ that are negative to zero.
\end{description}
In general, this is not a true projection. However, it is a
projection, if the
vector $x_{new}$ obtained after the first step is non-negative. This
is indeed the case we observe in all our experiments. The
results obtained after 5000 iterations of this method are shown in
Figure~\ref{fig:gerschberg-type-rec}. Usually, the correspondence
between these and the true image falls considerably behind our
sparsity-based reconstruction method.
\begin{figure}[H]
  \centering
  \subfloat[]{
    \includegraphics[width=0.4\textwidth]{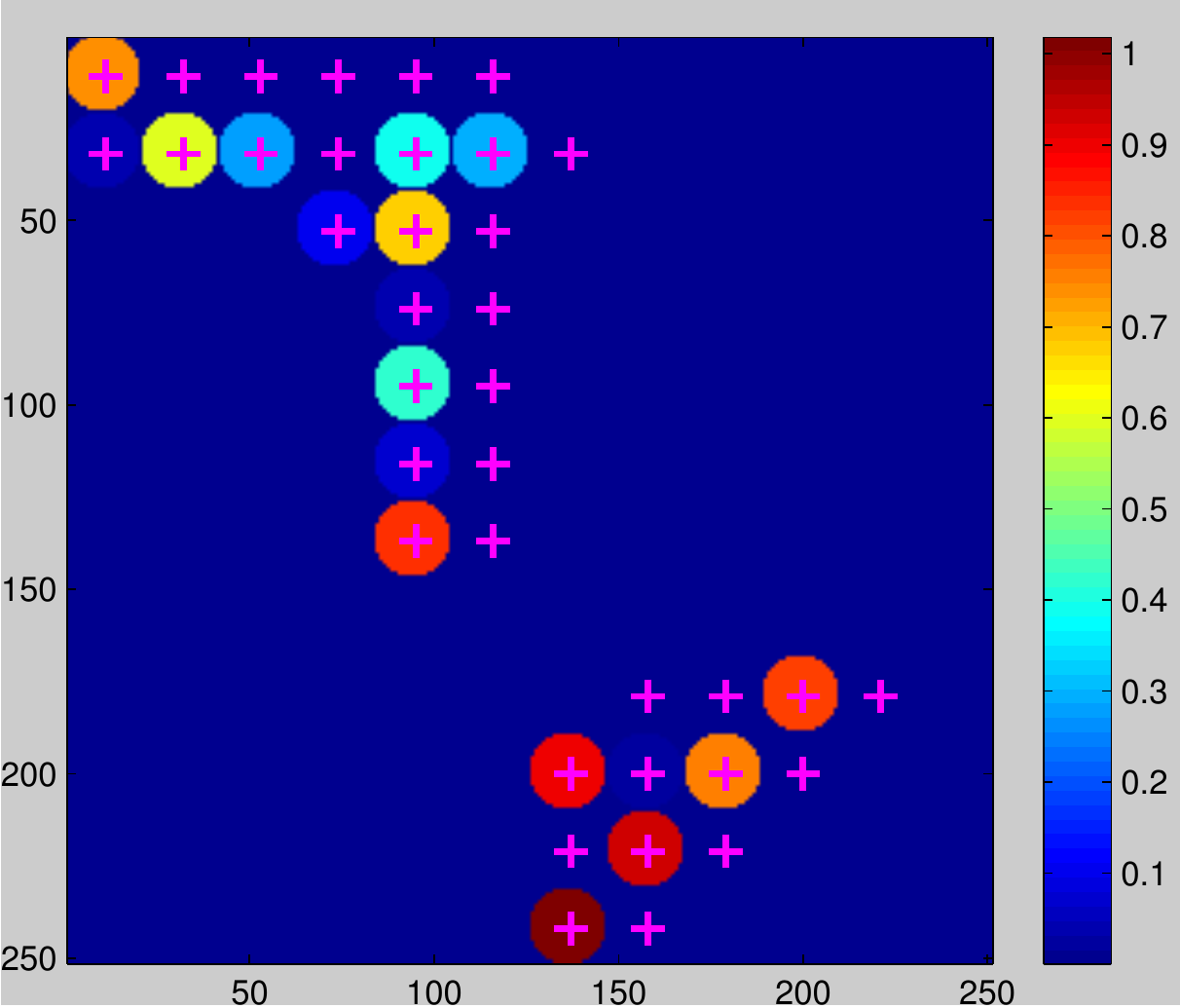}
  }
  \qquad{}
  \subfloat[]{
    \includegraphics[width=0.4\textwidth]{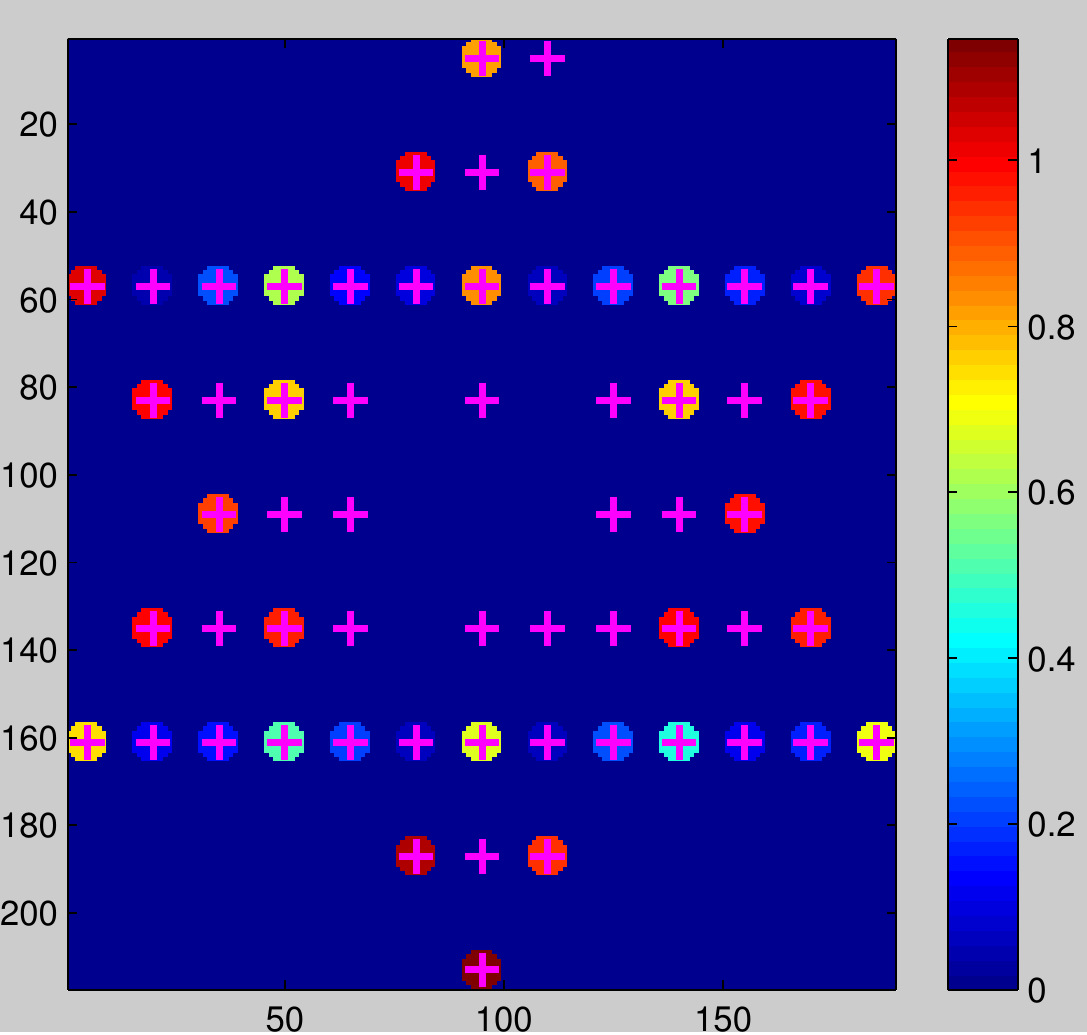}
  }
  \caption[Gerschberg type method: results of reconstruction]{Gerschberg type method: results of reconstruction (a)
    ``random'' image, (b) SOD image.}
  \label{fig:gerschberg-type-rec}
\end{figure}
From the results above, it is evident that the reconstruction is
incorrect and even if the correct number of circles were known a
simple thresholding would still result in incorrect images.

\subsubsection{Fienup's Hybrid Input-Output method}
\label{sec:hybrid-input-output}
We already have met the Hybrid Input-Output method (see
Section~\ref{sec:fienup-algor-phase}) that was developed
by Fienup for the phase retrieval
problem~\shortcite{fienup82phase}. Although based on the method of
alternating projections, HIO does not enforce the object domain
constraints, that is, the image is allowed to be non-zero in the
off-support areas and the values may be negative. To the best of our
knowledge, HIO is the most successful numerical method for signal
reconstruction from the magnitude of its Fourier transform. However,
the method only achieves good results when all or most of the Fourier
spectrum is available. Judging by the result shown below, the method
is not suitable for the situation where the Fourier magnitude is
available only for a small fraction of the frequencies. In our tests
we applied the method in its original form, using only the Fourier
domain magnitude and support information in the object domain (along
with non-negativity). We did not try to enforce a constant value
across every circle or zero values in the off-support areas, as the
original method does not do that. As a post-processing step, the
result returned by HIO was zeroed in the off-support areas (shown in
Figures~\ref{fig:hio-rec-random-orig} and~\ref{fig:hio-rec-sod-orig})
and then the values across each circle were averaged (shown in
Figures~\ref{fig:hio-rec-random-postproc}
and~\ref{fig:hio-rec-sod-postproc}). As is evident from the results,
the method is not capable of correct reconstruction of the
signals. They cannot be recovered even if the correct number of
circles is known: a simple thresholding will result in an incorrect
reconstruction.

\begin{figure}[H]
  \centering
  \subfloat[]{\label{fig:hio-rec-random-orig}
    \includegraphics[width=0.4\textwidth]{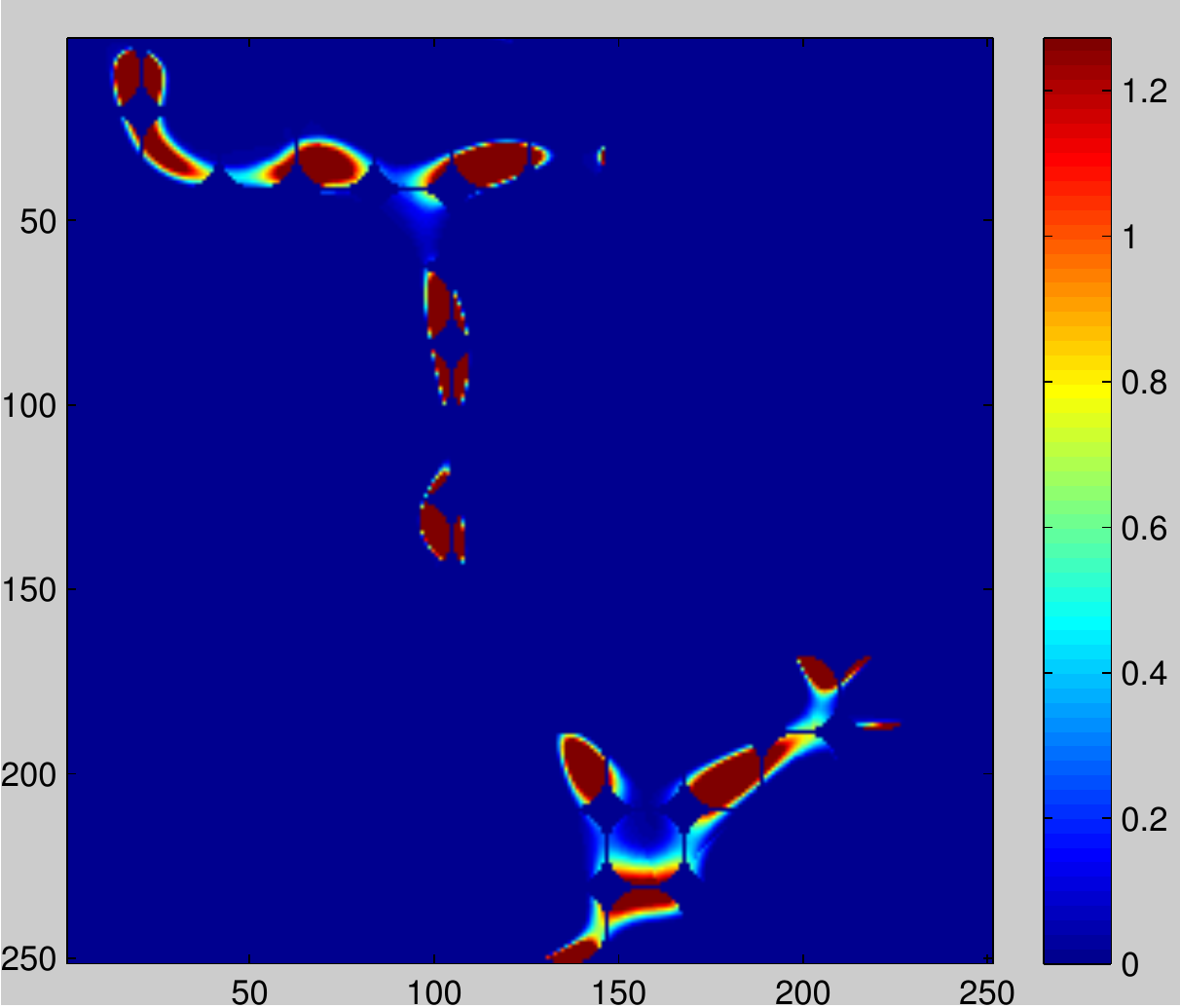}
  }
  \qquad{}
  \subfloat[]{\label{fig:hio-rec-random-postproc}
    \includegraphics[width=0.4\textwidth]{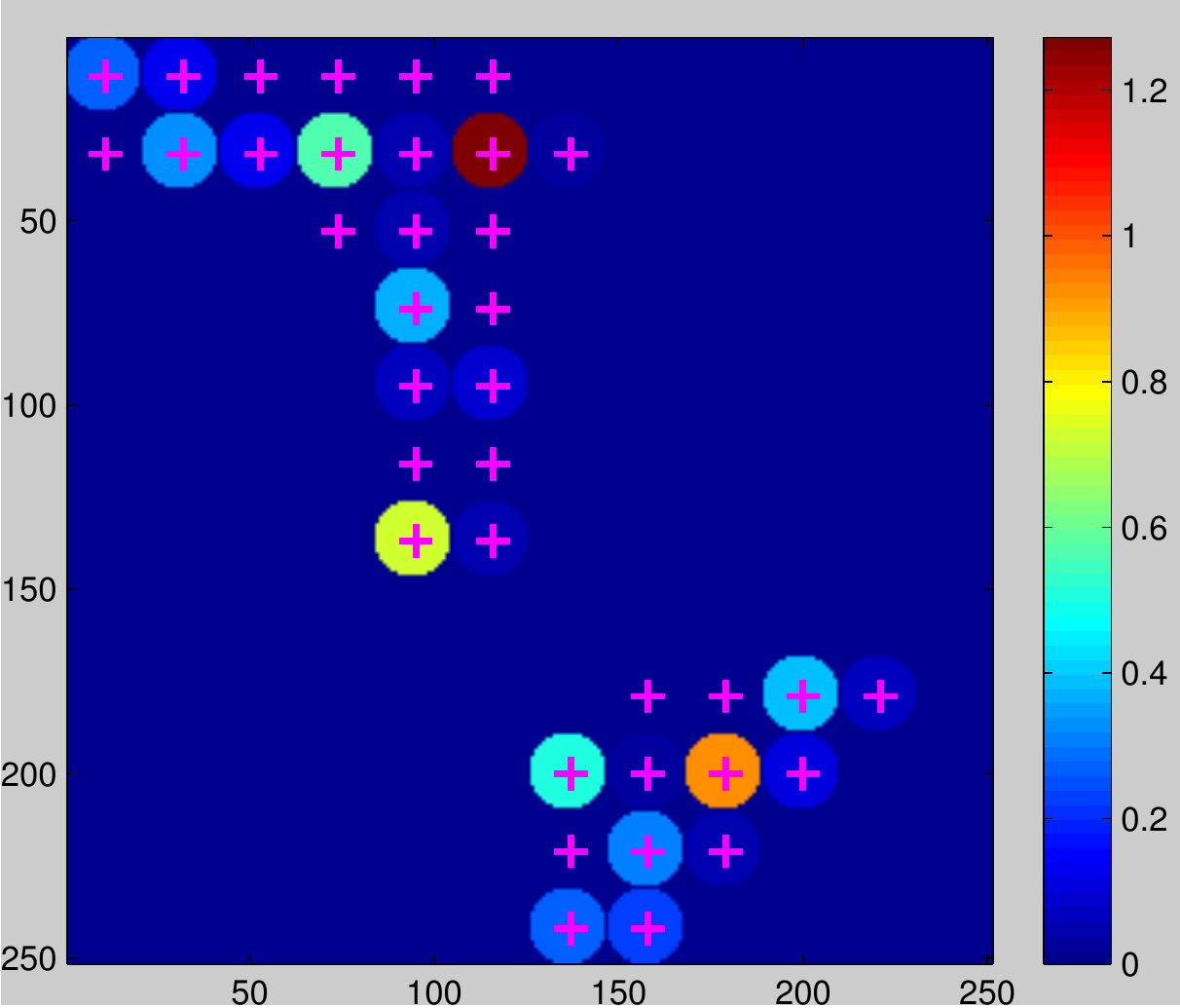}
  }
  \caption[Fienup's HIO method: ``random'' image results of
  reconstruction]{Fienup's HIO method: ``random'' image results of
    reconstruction: (a) as produced by the method, (b) after enforcing
    a constant value across every circle.}
  \label{fig:hio-rec-random}
\end{figure}

\begin{figure}[H]
  \centering
  \subfloat[]{\label{fig:hio-rec-sod-orig}
    \includegraphics[width=0.4\textwidth]{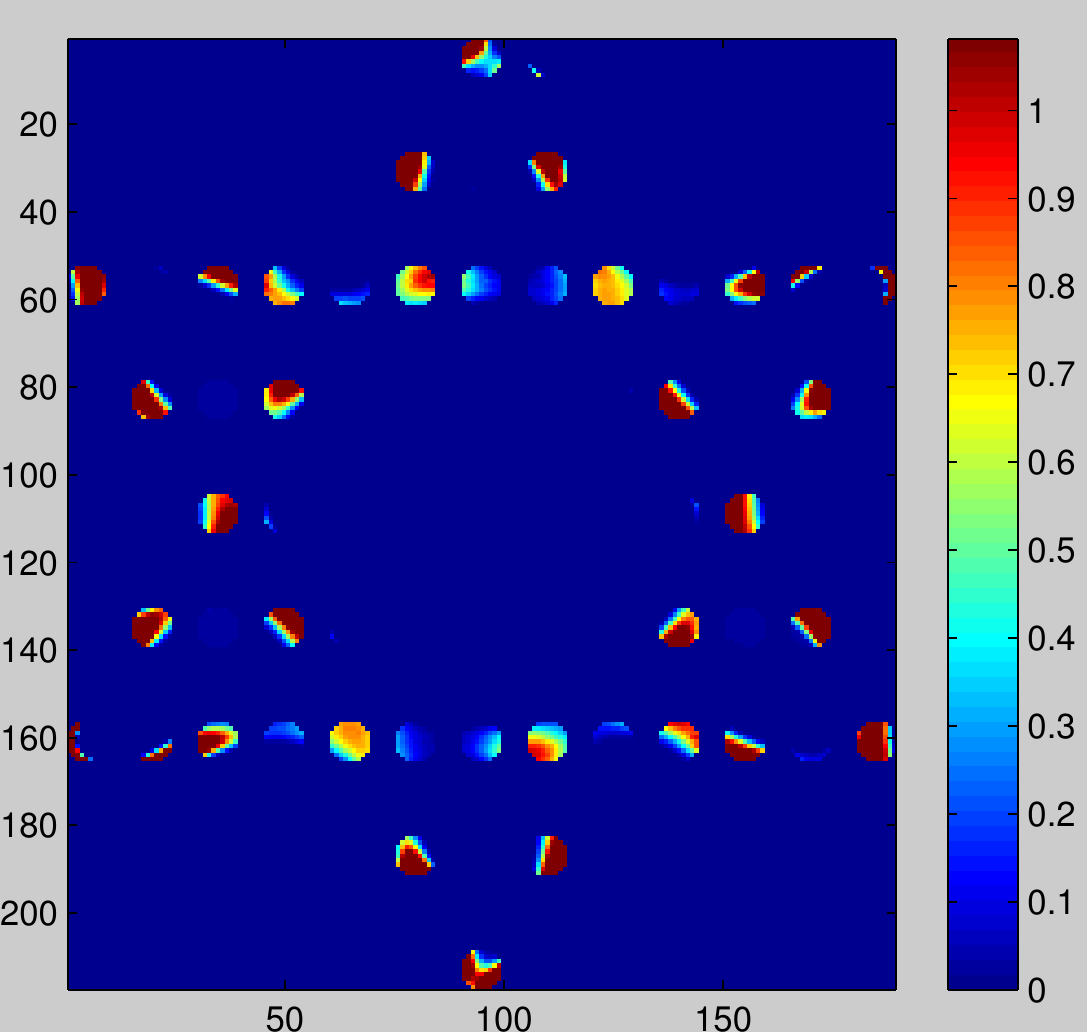}
  }
  \qquad{}
  \subfloat[]{\label{fig:hio-rec-sod-postproc}
    \includegraphics[width=0.4\textwidth]{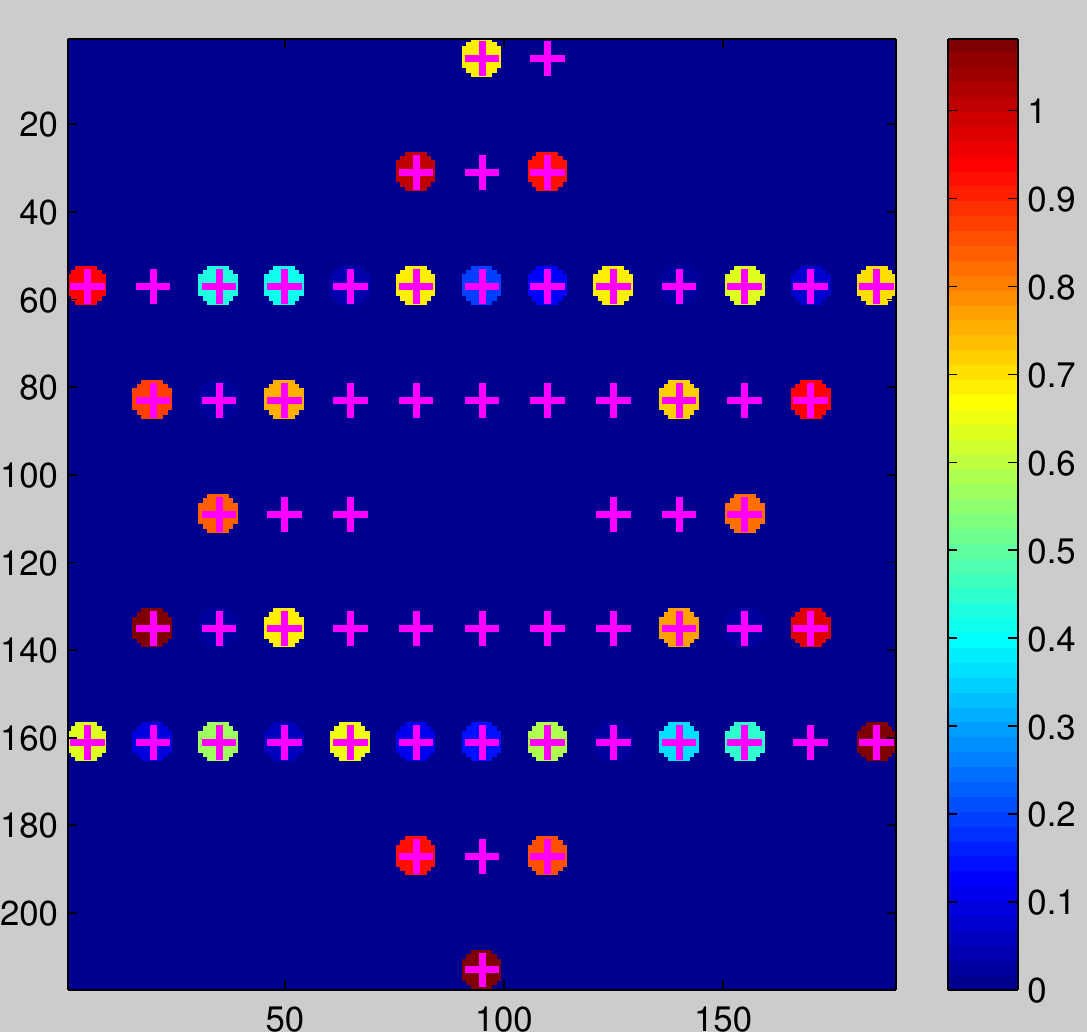}
  }
  \caption[Fienup's HIO method: SOD image results of reconstruction]{Fienup's HIO method: SOD image results of reconstruction:
    (a) as produced by the method, (b) after enforcing constant a
    value across every circle.}
  \label{fig:hio-rec-sod}
\end{figure}

\subsection{Relation to compressed sensing}
\label{sec:relat-compr-sens}
Compressed sensing (CS) is an emerging field in image processing that
performs signal reconstruction from a small number of its projections
\shortcite{donoho06compressed,candes06robust,candes06near-optimal}. Conceptually,
all CS techniques and their mathematical theory are based heavily upon
the sparsity of the sought signals. It is important to note that CS,
in its classical form, deals with measurements that are linear with
respect to the unknown signal. Likewise, CS techniques generally
assume random sampling distributed throughout the measurement
domain. By contrast, in our current case of sub-wavelength CDI the
measurements are: (1) nonlinear with respect to the sought signal, (2)
taken only in a small (low-frequency) region of the measurement
domain, where (3) they are taken in a periodic fashion (dictated by
the pixels' arrangement of a digital camera sensor). Still, our
reconstruction method relies on sparsity. As such, conceptually, our
approach can still be viewed as CS in a broader sense.

Clearly, for the reasons stated above, many theoretical results and
reconstruction methods of classical CS are not applicable to our
problem. For example, the Matching Pursuit (MP) method
\shortcite{mallat93matching} cannot be applied in its original
form. Another popular method — Basis Pursuit (BP) \shortcite{chen99atomic},
could, in principle, be applied here (considering BP as a general
approach based on replacing the $l_{0}$ with the $l_{1}$ norm, rather
than a specific algorithm). However, its benefits are not clear,
because, in contrast to the linear case, in our nonlinear
problem---using the $l_{1}$ norm still does not lead to a convex problem.

Besides the standard CS methods, which are inapplicable to the
sub-wavelength CDI problem, it is instructive to consider other
sparsity-based approaches which are related to CS, in the broader
sense.  One of these is based on division of the reconstruction
process into two stages: at the first stage the missing Fourier phase
is reconstructed using Fienup’s HIO algorithm (or Gerchberg-type
method); at the second stage this phase is combined with the measured
Fourier magnitude to form complete measurements that are linear with
respect to the unknown signal. Once these linear measurements are
available, one can use methods from classical CS (like, for example,
BP) or our previously proposed method NLHT 
\shortcite{gazit09super-resolution}, which is aimed at recovering data from
low-pass measurements. We find, however, that this approach
does not produce high quality results. This failure is, probably,
attributed to inability of the projection-based methods to reconstruct
the phase precisely, as shown in Figure~\ref{fig:fourier-phase-hio}
below.

\begin{figure}[H]
  \centering
  \subfloat[]{
    \includegraphics[width=0.4\textwidth]{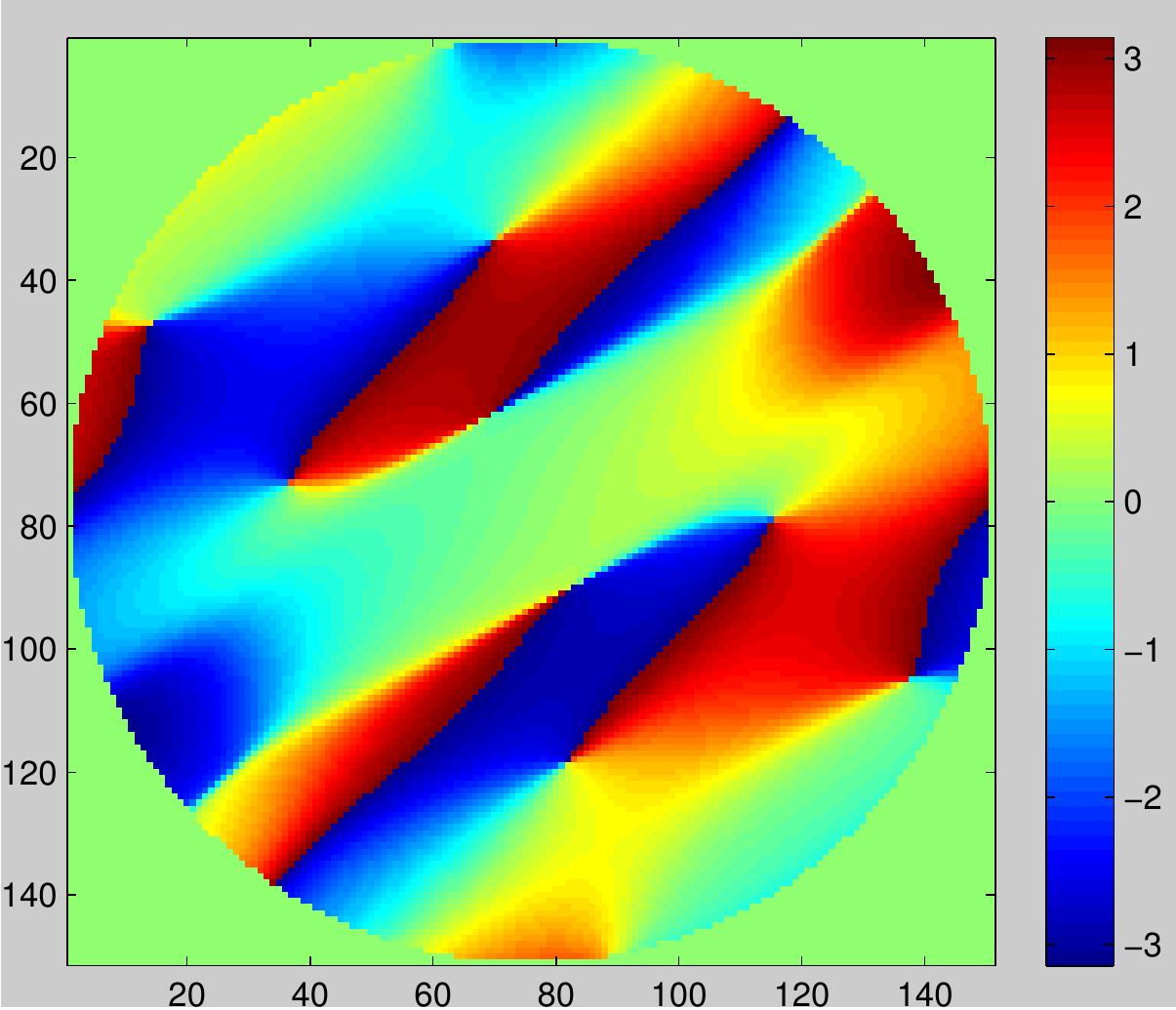}
  }
  \qquad{}
  \subfloat[]{
    \includegraphics[width=0.4\textwidth]{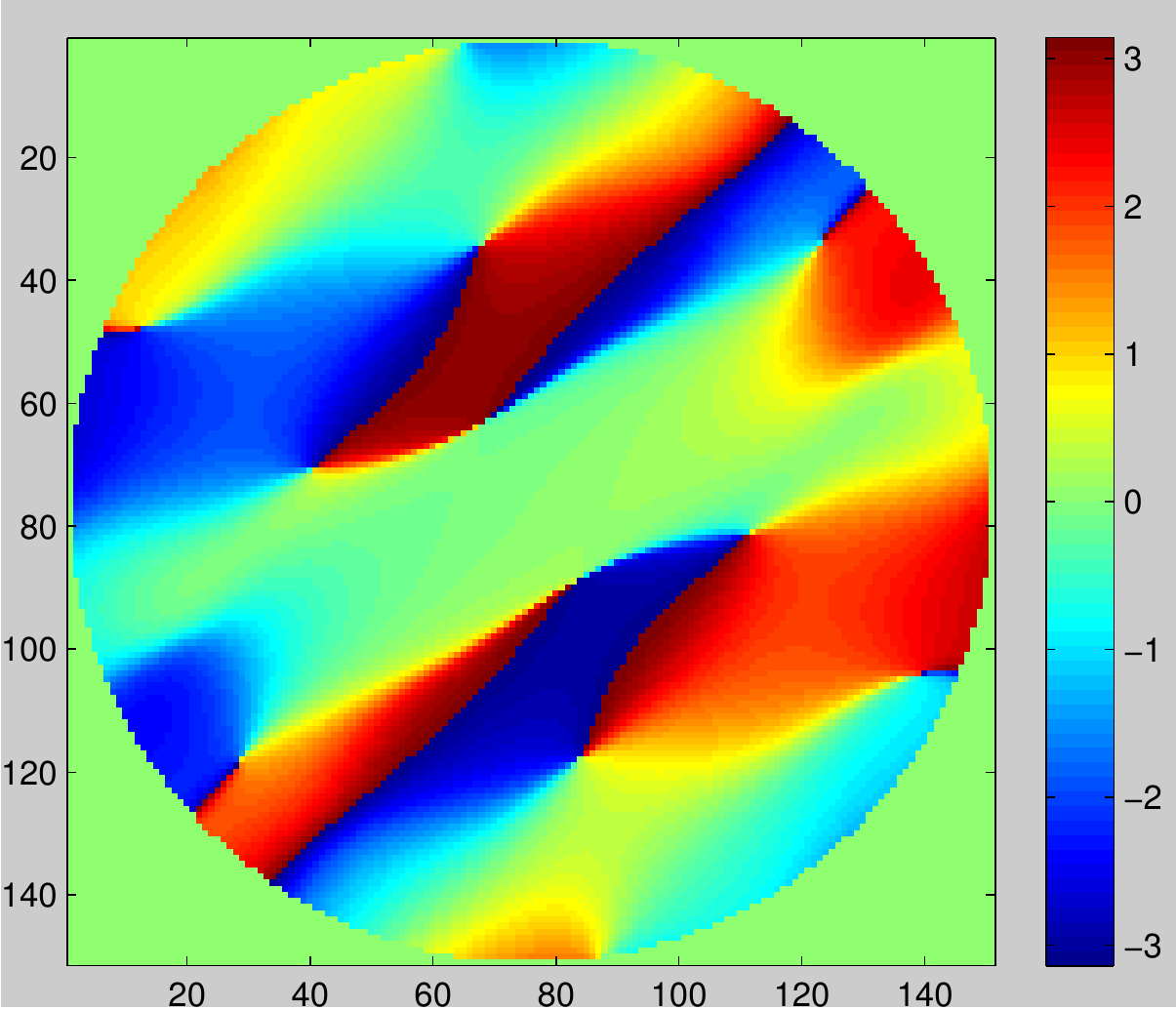}
  }
  \caption[Fourier phase of the ``random'' image]{Fourier phase of the ``random'' image: (a) the true
    phase, (b) the phase obtained after 5000 iterations of
    HIO.}
  \label{fig:fourier-phase-hio}
\end{figure}

Recently, several works have considered CS with quadratic nonlinear
measurements \shortcite{shechtman11sparsity,candes11phase}. In both papers
the resulting nonlinear constraints are relaxed to semidefinite
constraints using matrix lifting and an appropriate sparsity promoting
objective is used.  The work of \shortcite{candes11phase} considers phase
retrieval assuming the availability of several diffraction patterns
obtained from multiple structured illuminations, which is not relevant
to our problem.  In contrast, the scenario considered in
\shortcite{shechtman11sparsity} is much closer to our current case. Namely,
simultaneous phase retrieval and bandwidth extrapolation from a
single-shot power-spectrum measurement. In fact, our present problem
can be viewed as a special case of the problem addressed in
\shortcite{shechtman11sparsity}. However, the algorithm suggested in
\shortcite{shechtman11sparsity} is targeting a more general problem, hence
its computational complexity is high. With this reasoning in mind, we
devised the new sparsity-based approach and algorithm described in
the next section, which is tailored for the specific
problem of sub-wavelength CDI.

\section{A method for automatic grid determination, and the
  (un)importance of the basis function}
\label{sec:choosing-grid-basis}
In this section we would like to discuss the implications of our
assumption regarding the existence of a grid that, in fact, defines a
discrete set of allowed locations where the chosen basis function can
be placed.  In many cases, especially when the optical information
represents experimental data, introducing such a grid is highly
justified. For example, a digital image is obtained from a continuous
intensity distribution by sampling it with a sensor that physically is
an array of square pixels arranged in rows and columns. Hence,
naturally, the grid is rectangular and the basis functions are squares
whose size is equal to the grid's cell size. Likewise, our
reconstruction provides a digitized version of the true signal as if
it were performed by a sensor whose pixels' shape corresponds to the
chosen basis function (circular in our experiments above). Hence, the
grid used in our reconstruction algorithm essentially defines the
resolution of the reconstructed image. This is especially true when
the spatial extent of the basis function is smaller than (or equal to)
the grid's cell size. An example of such a sensor with circular pixels
is shown in Figure~\ref{fig:cirle-imposed}.

However, there is an important dissimilarity between our case and the
regular sampling in the object domain. Since our measurements contain
only the Fourier magnitude and no information is available about the
phase, we cannot distinguish between all the shifted versions of the
original signal. That is, if $E(u,v)$ represents the original signal,
our best hope is to reconstruct a shifted version of it, that is,
$E(u-\Delta u, v-\Delta v)$ for some $\Delta u$, and $\Delta v$.
Which version (shift) of the original signal is reconstructed depends, of
course, on the reconstruction method. Because our method seeks the
sparsest solution, we obtain the digitization that corresponds to the perfect
alignment shown in Figure~\ref{fig:circleImposed-aligned} and not the
``misaligned'' version shown in
Figure~\ref{fig:circleImposed-misaligned}. Because in the latter case
each circle in the original image ``switches on'' two pixels in the
sensor, in contrast to one pixel per circle in the aligned case. Hence,
one does not need  to manually align the grid with respect to the sought
signal as the best alignment is obtained automatically with our
reconstruction method. The only concern regarding the grid alignment
is related to the placement of the blurred image that we use for loose
support estimation. Fortunately, the solution to this problem is easy:
the blurred image must be placed in a way that guarantees maximal grid
coverage, that is, we shall keep as many allowed locations as
possible.

\begin{figure}[H]
  \centering
  \subfloat[]{\label{fig:circleImposed-aligned}
    \includegraphics{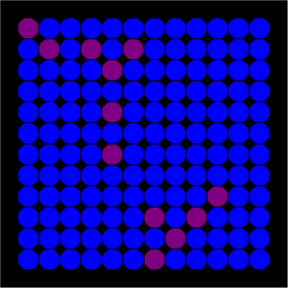}
  }\\
   \subfloat[]{\label{fig:circleImposed-misaligned}
    \includegraphics{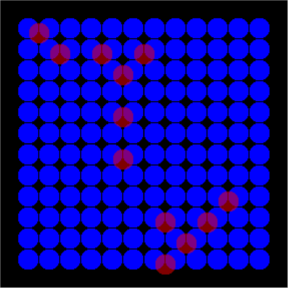}
  }
  \caption[The sought signal imposed on a sensor with circular
    pixels]{The sought signal (red) imposed on a sensor with circular
    pixels (blue). Note that the best alignment (a) is automatically
    obtained as it results is a sparser reconstruction than a bad
    alignment (b).}
  \label{fig:cirle-imposed}
\end{figure}

\subsection{The impact of the basis function}
\label{sec:impact-basis-funct}
Let us now consider the situations where the basis function is
chosen in a way that does not allow a perfect
reconstruction. Specifically, we consider basis functions in a shape
of a square and a triangle, as shown in
Figure~\ref{fig:square-triangle-imposed}.
\begin{figure}[H]
  \centering
   \subfloat[]{\label{fig:triangleImposed}
    \includegraphics[height=0.35\textheight]{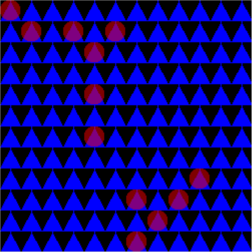}
  }\\
   \subfloat[]{\label{fig:squareImposed}
    \includegraphics[height=0.35\textheight]{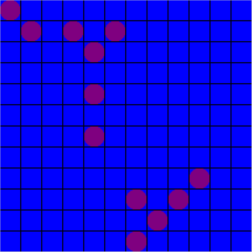}
  }
  \caption[Basis functions that do not allow a perfect
    reconstruction]{Basis functions that do not allow a perfect
    reconstruction: triangles (a), and squares (b).}
  \label{fig:square-triangle-imposed}
\end{figure}
As is evident from Figure~\ref{fig:square-triangle-reconstruction},
the reconstruction in these cases matches our expectations:
we obtain the correct ``digitized'' version of the sought signal that
corresponds to the chosen basis function and the grid. We emphasize the
fact that all experiments are done with actual data that contains a
significant amount of noise.
\begin{figure}[H]
  \centering
  \subfloat[]{\label{fig:triangleReconstruction}
    \includegraphics[height=0.35\textheight]{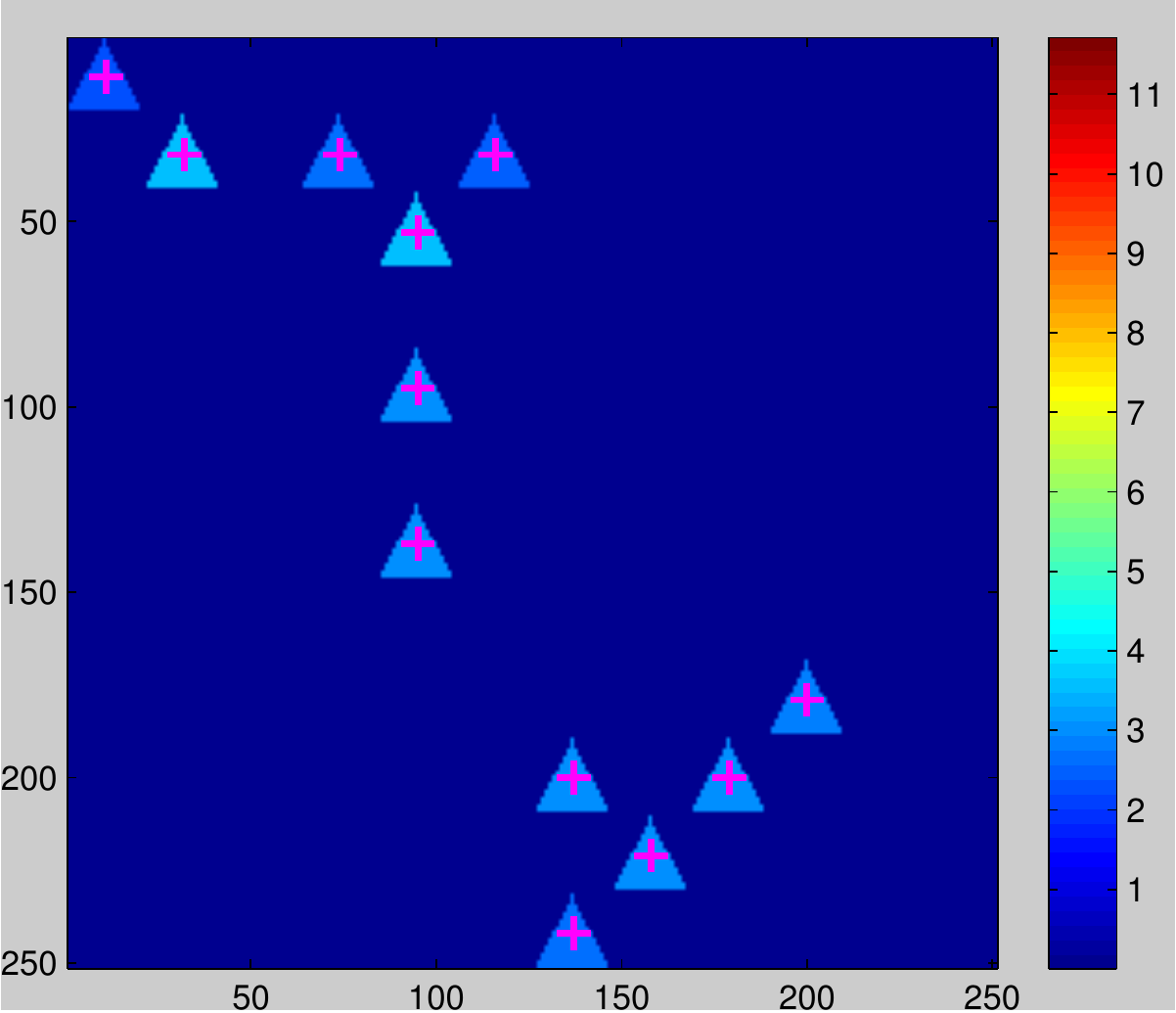}
  }\\
  \subfloat[]{\label{fig:squareReconstruction}
    \includegraphics[height=0.35\textheight]{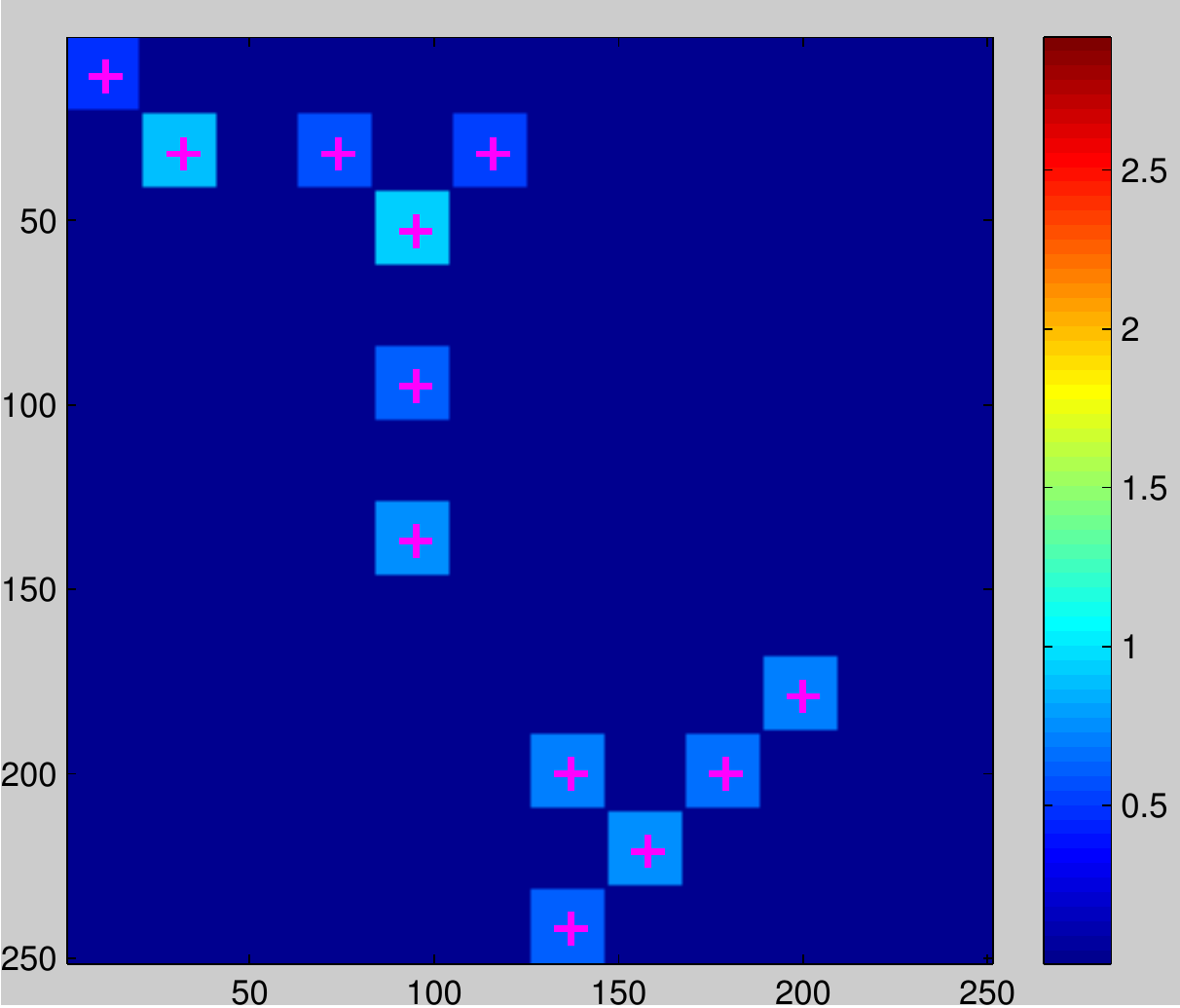}
  }
  \caption{Reconstruction in the case of basis functions that do not
    match the sought signal.}
  \label{fig:square-triangle-reconstruction}
\end{figure}
Moreover, if we consider the progress of the reconstruction process
(see Figure~\ref{fig:objective-function-different-bases}) we observe
that even an incorrect choice of the basis function has no adverse
effect on the reconstruction. This fact has a simple explanation: the
difference between a circle and a square (or a triangle) of size \unit[100]{nm}
is much smaller than \unit[100]{nm}. Hence, being able to distinguish between
these shapes would mean effective resolution that is much better than
\unit[100]{nm}. Thus, we conclude that the shape of the basis function is not of
great importance so long as its size matches the size of a typical
feature in the sought signal. In what follows, we evaluate the
possibility to discover the most appropriate grid pitch (basis
function size) automatically, without any prior information.
\begin{figure}[H]
  \centering
  \includegraphics[width=0.9\textwidth]{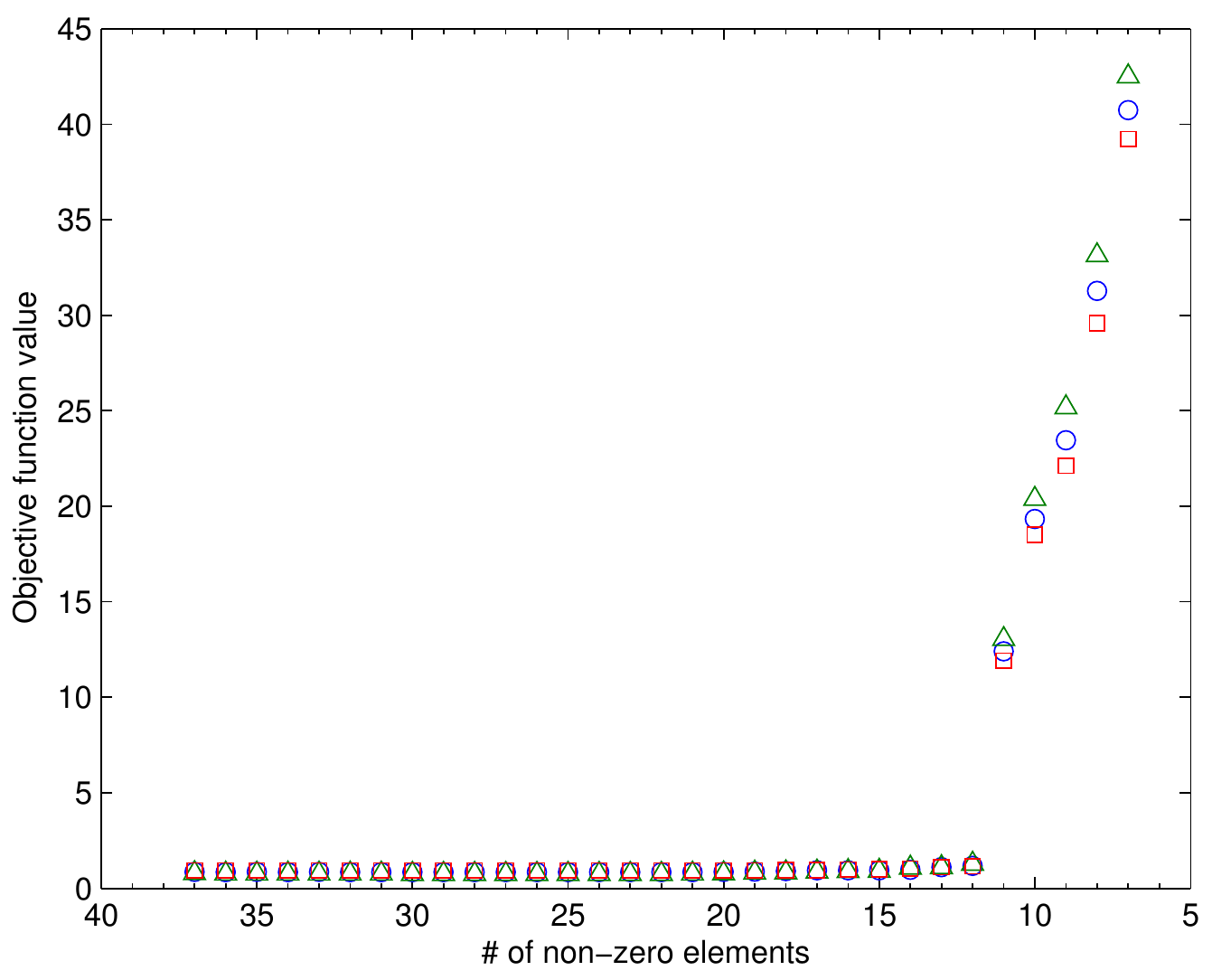}
  \caption[Reconstruction of the ``random'' image using different
    basis functions]{Reconstruction of the ``random'' image using different
    basis functions: objective function value (Fourier domain
    discrepancy) versus the number of circles/squares/triangles in the
    solution. Marker shape corresponds to the basis function shape.}
  \label{fig:objective-function-different-bases}
\end{figure}

\subsection{A method for automatic determination of an optimal grid}
\label{sec:meth-autom-grid}
So far, we have seen that the shape of the basis function has no
severe impact on the reconstruction process. Moreover, the best
possible alignment is obtained automatically due to our requirement of
maximal sparsity. These two properties can be used for automatic
determination of the optimal grid pitch. To this end we ran a series
of experiments with different grids whose pitch varies from 10 to 32
pixels (corresponds to the range of \unit[48--152]{nm}) using the square basis
function of the size that matches the grid cell. As was mentioned
earlier, the results of Section~\ref{sec:impact-basis-funct} show that
the particular choice of the basis function is not very
important. Hence, we could choose any shape of the size equal to the
grid pitch. The choice of the square basis function was stipulated by
the fact that most digital images are comprised of square
pixels. Hence, this basis function will, probably, be the first choice
in the situation where nothing is known about the sought signal. For
each grid pitch we ran a few iterations of our method keeping the
lowest discrepancy in the Fourier space as a numerical value that
corresponds to the current grid pitch . There is no need to solve the
problem completely, as our goal here is to see whether the sought
signal can be represented well by the current grid. We expect that
fine grids (small pitch) will represent well the sought signal so long
as the grid's pitch is smaller than or equal to the size of a typical
feature in the signal. However, once the grid becomes too coarse, we
expect a rapid growth of the objective function value. Hence, we
expect the graph to have the distinctive
``\rotatebox[origin=c]{90}{L}\,''-shape, similar to the graphs in
Figures~\ref{fig:objective-function-different-bases},
\ref{fig:sodimg-objfunction}, and \ref{fig:randomimg-objfunction}. As
is evident from Figure~\ref{fig:different-grids}, our expectations are
confirmed by the experimental results.
\begin{figure}[H]
  \centering
  \includegraphics[width=0.9\textwidth]{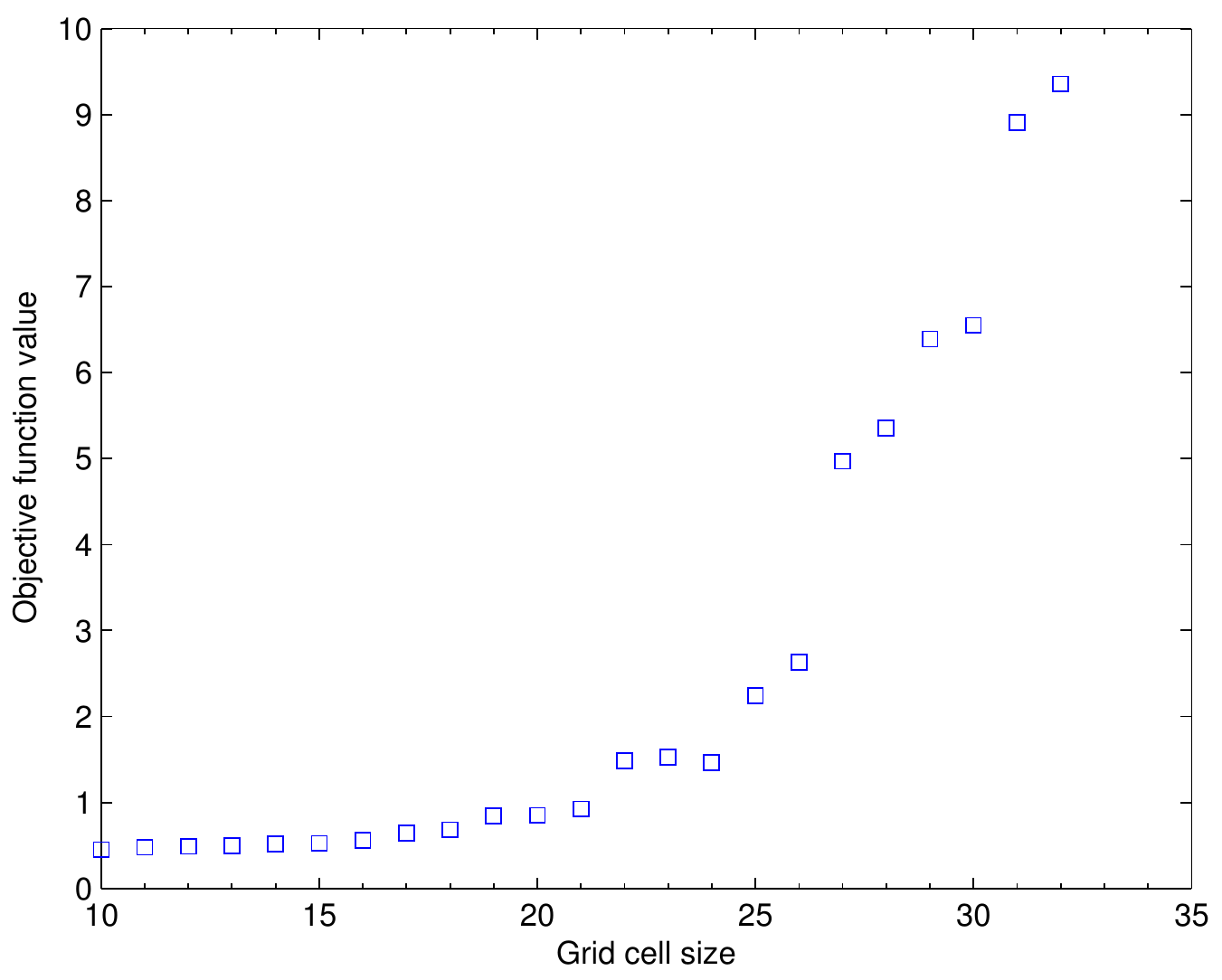}
  \caption[Objective function value versus grid pitch size]{Objective
    function value (Fourier domain discrepancy) versus grid pitch
    size.}
  \label{fig:different-grids}
\end{figure}
Note that the first sharp jump in the objective function value happens
during the transition from 21 pixels (the correct value) to 22
pixels. However, it may be argued that the transition is not
sufficiently apparent and the true value may lie in some small
interval around 21 pixels. Hence, we evaluate the behavior of our
reconstruction method for the grid pitch lying in the interval of
18--24 pixels. As is evident from
Figure~\ref{fig:objectiv-function-gridptich}, only the correct value
of 21 pixels results in a clear and sharp jump after we dip below
the correct value of squares (12). This property can be used for
pinpointing the correct pitch size. Hence, an automatic subroutine for
the optimal pitch determination is comprised of two steps: first, run
a few iterations of our reconstruction method to obtain quantitative
results indicating how well different grid sizes can represent the
sought image; second, run a full reconstruction procedure for a
limited range of pitches near the elbow in
Figure~\ref{fig:different-grids} and check what pitch results in a
clear evidence of existence of  the sparsest solution (as in
Figure~\ref{fig:objectiv-function-gridptich}).

Note that the obtained grid cell size is \emph{optimal}
in the sense that it satisfies two important properties
simultaneously: first, it allows good approximation of the sought
signal; second, it leads to a highly evident sparse solution.

The suggested method is also based on the sparsity assumption: it
works well when there are a few features in the sought signal are of
approximately the same size. This situation arises in many physical
setups. However, we currently are working on extending the algorithm
to cases where the signal features may be of various sizes.

\begin{figure}[H]
  \centering
  \subfloat[]{
    \includegraphics[height=0.19\textheight]{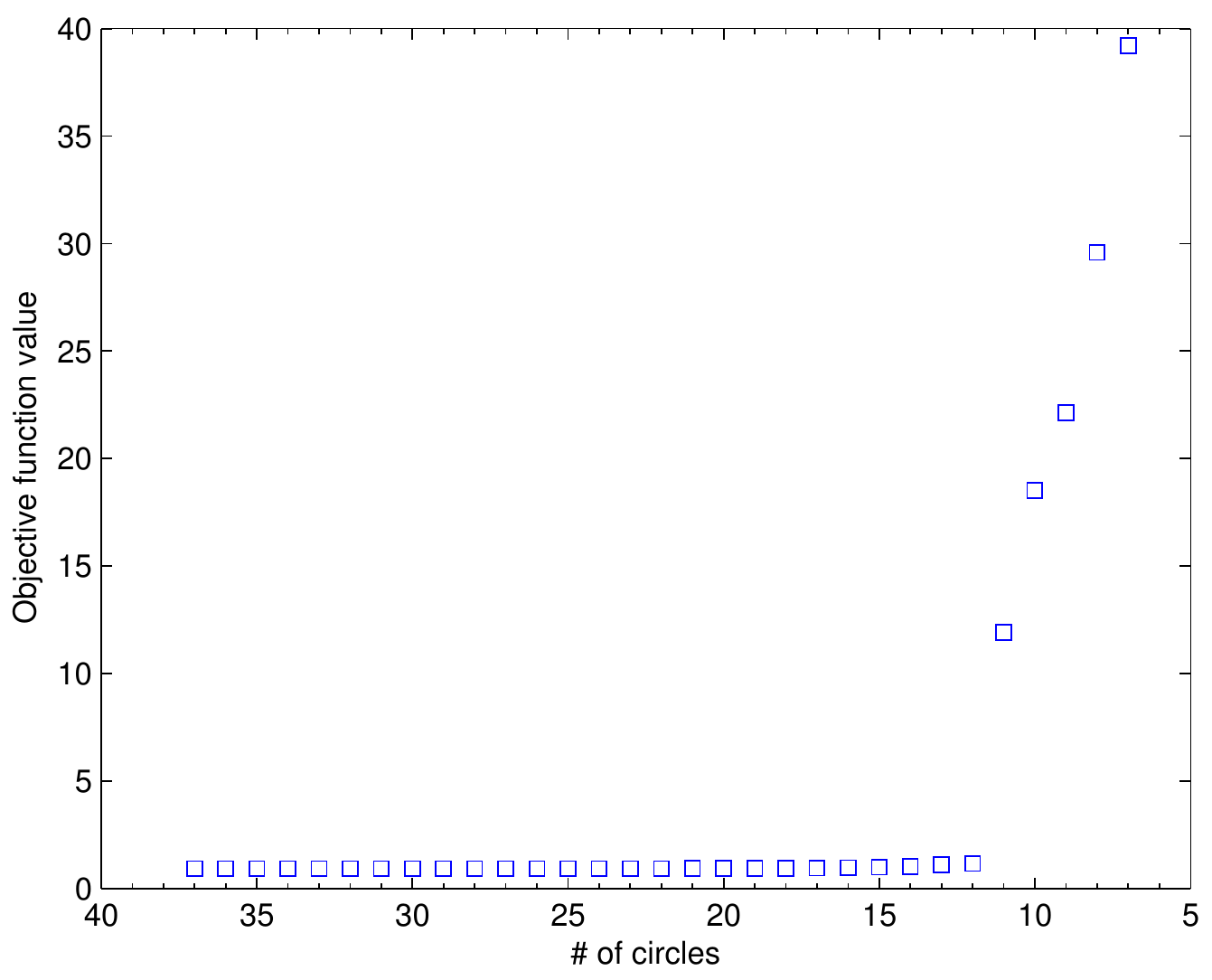}
  }\\
  \subfloat[]{
    \includegraphics[height=0.19\textheight]{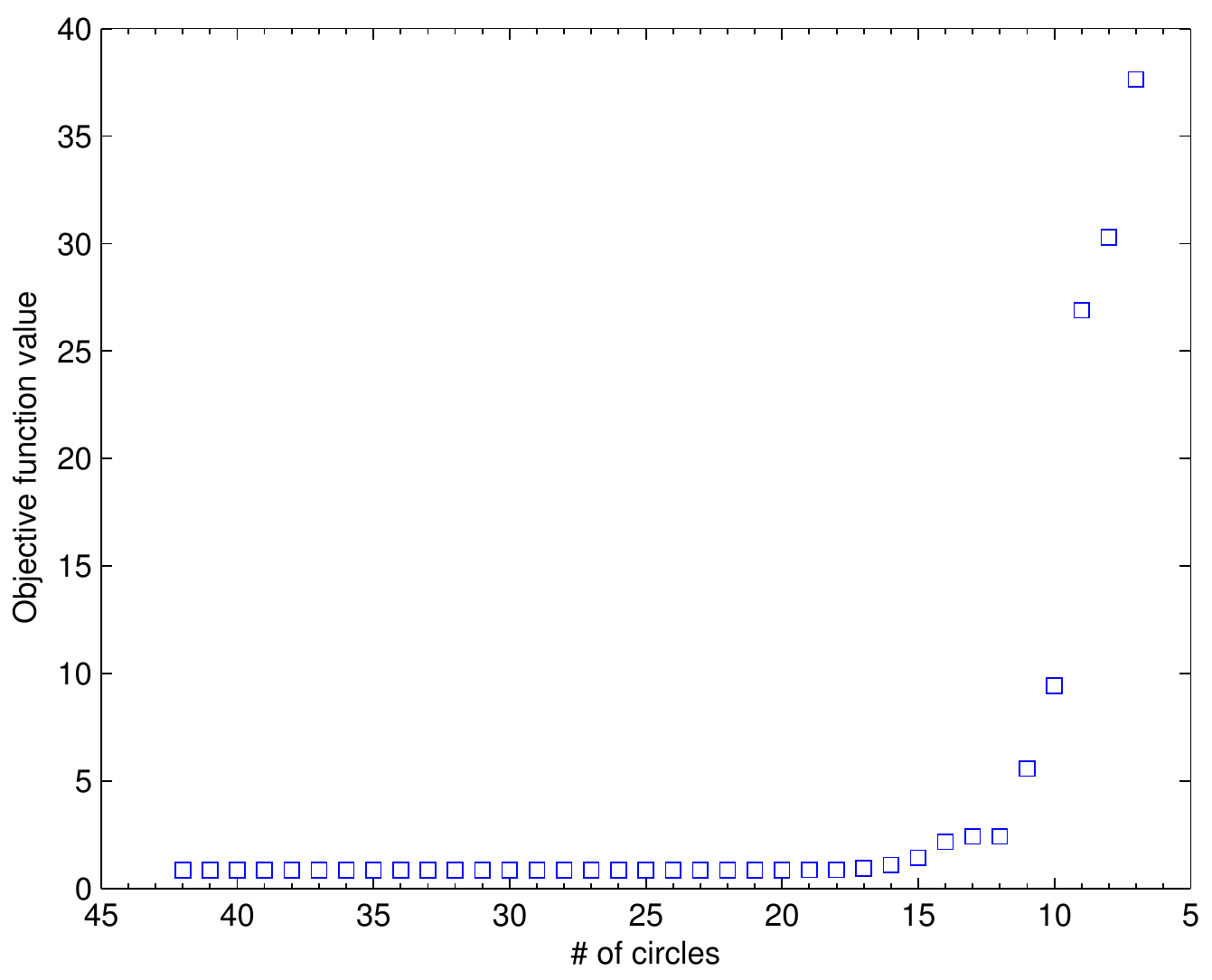}
  }
  \subfloat[]{
    \includegraphics[height=0.19\textheight]{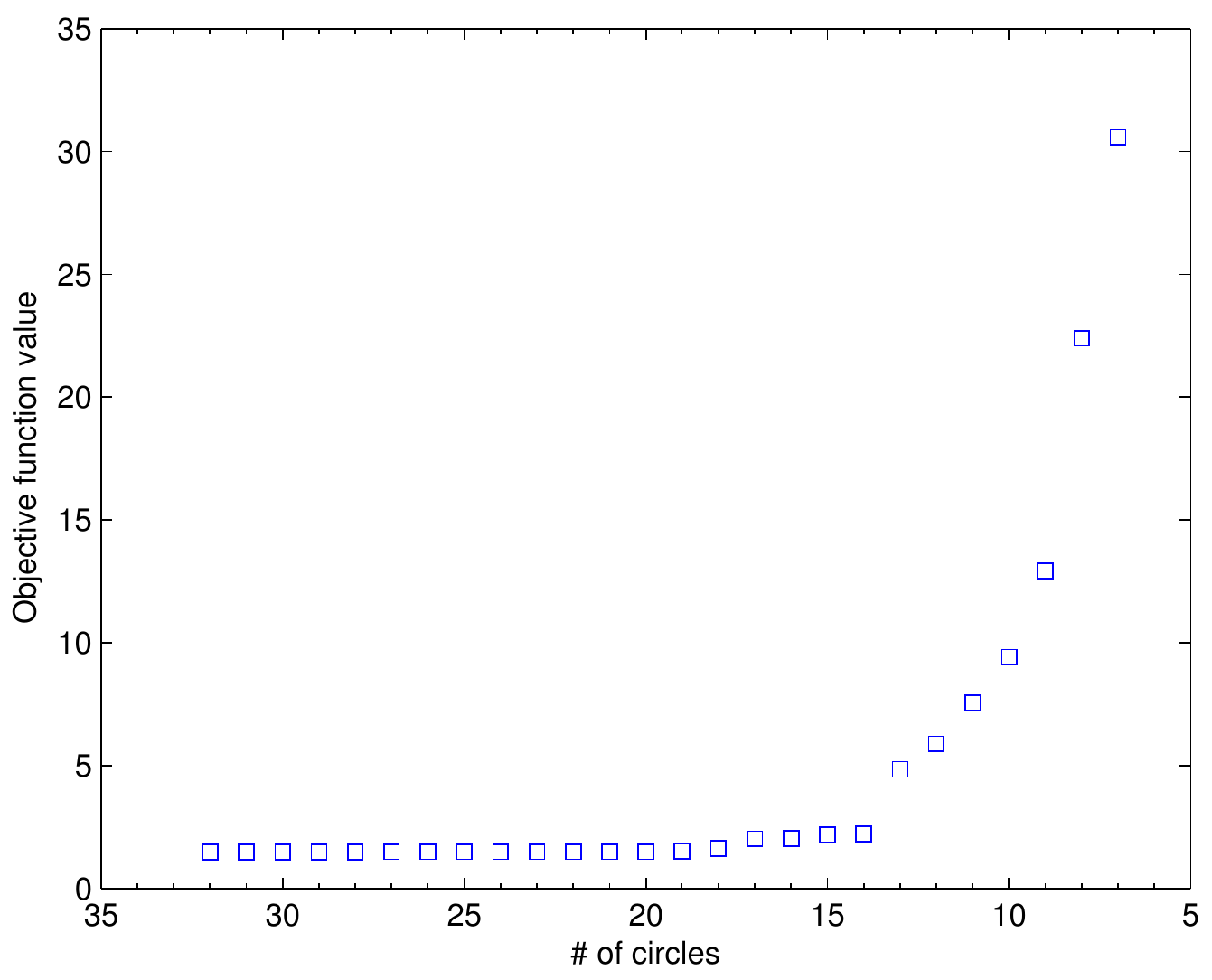}
  }\\
  \subfloat[]{
    \includegraphics[height=0.19\textheight]{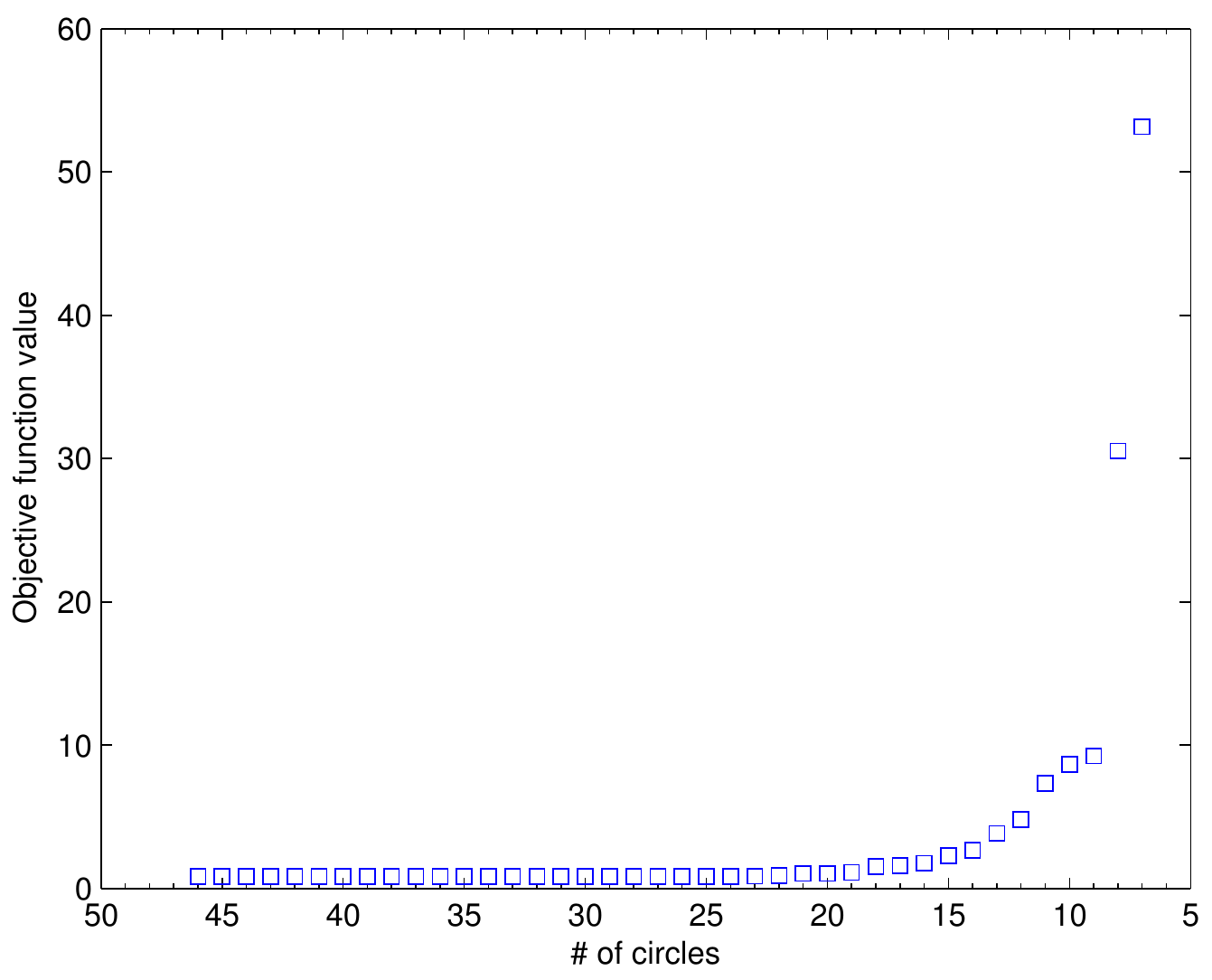}
  }
  \subfloat[]{
    \includegraphics[height=0.19\textheight]{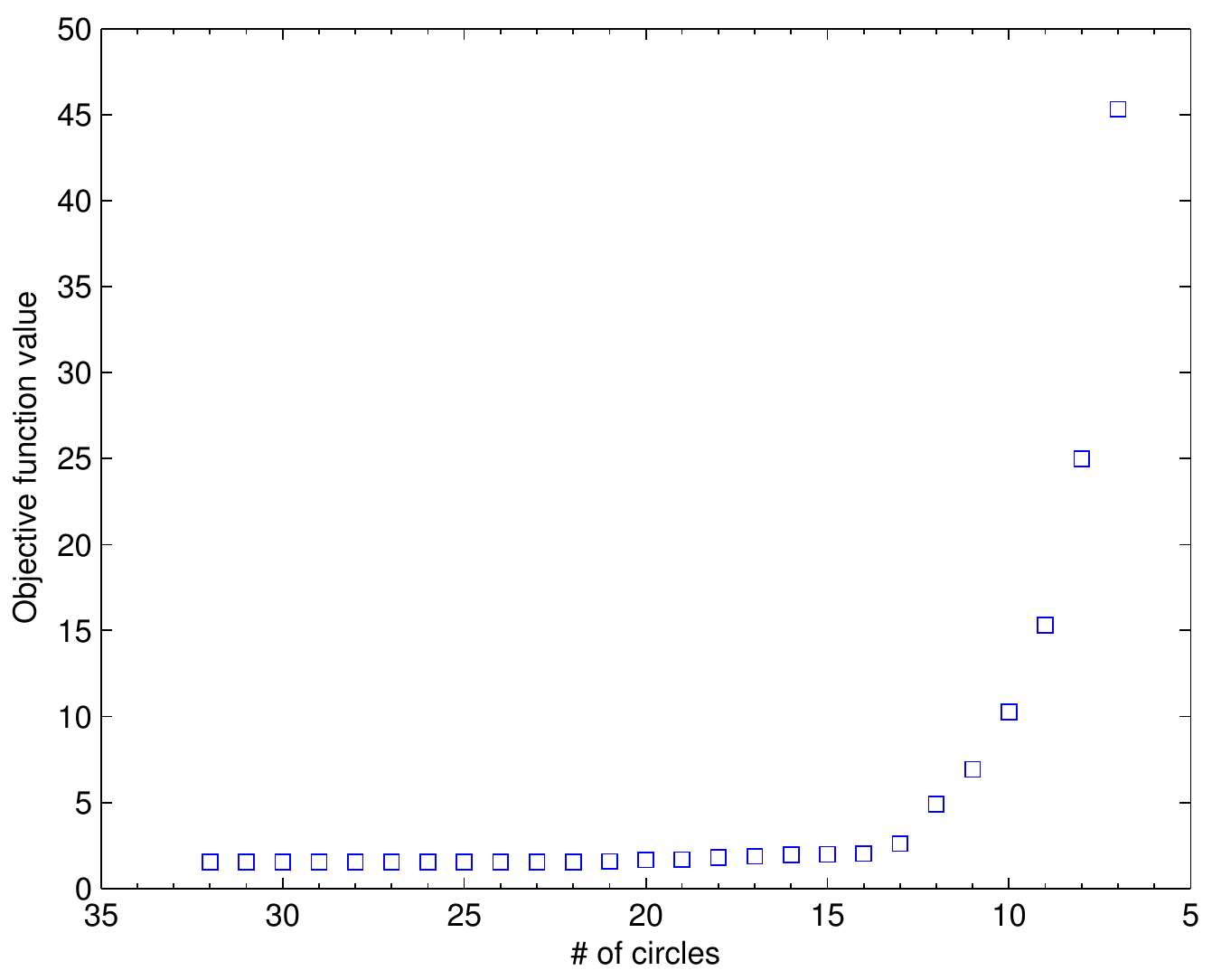}
  }\\
  \subfloat[]{
    \includegraphics[height=0.19\textheight]{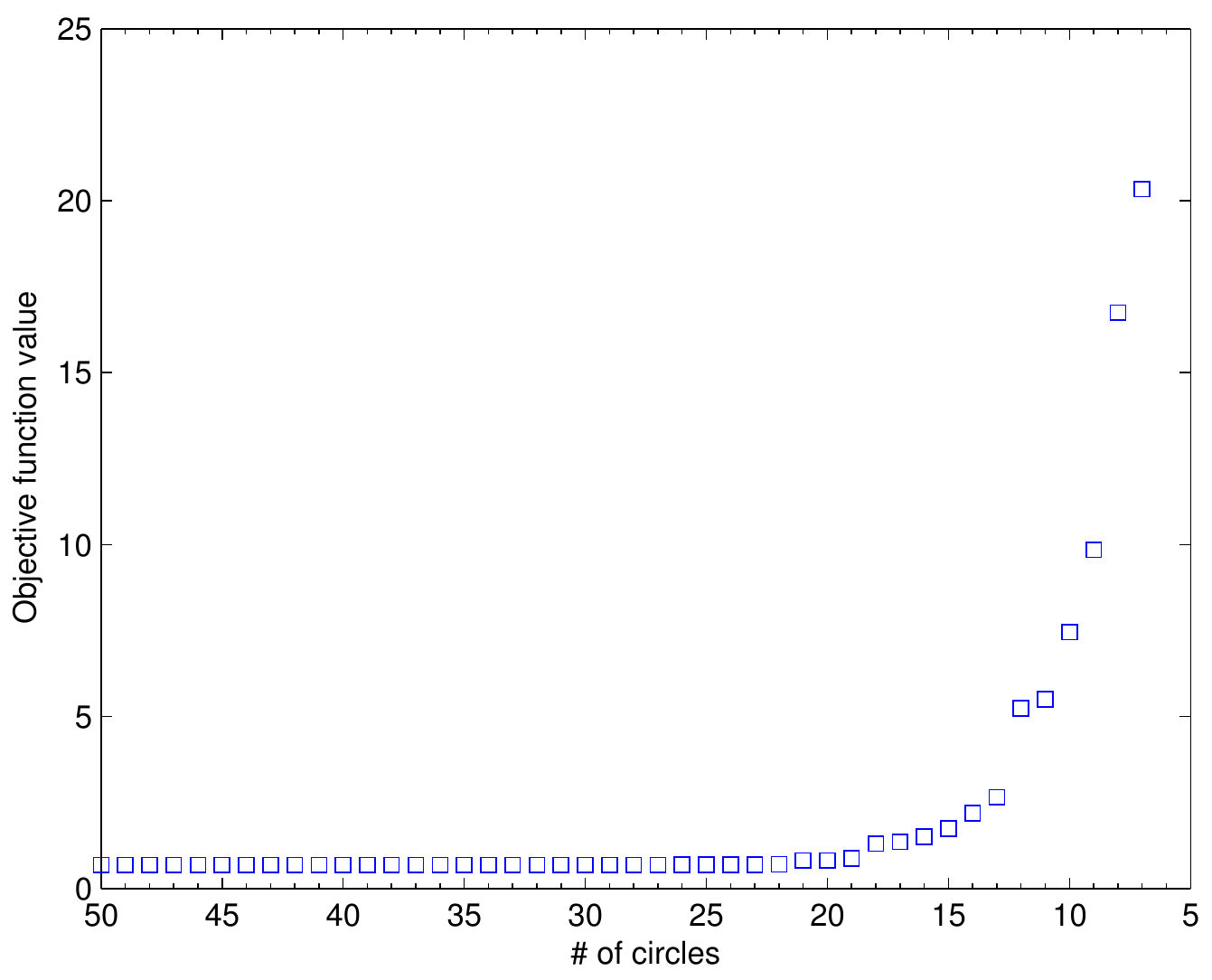}
  }
  \subfloat[]{
    \includegraphics[height=0.19\textheight]{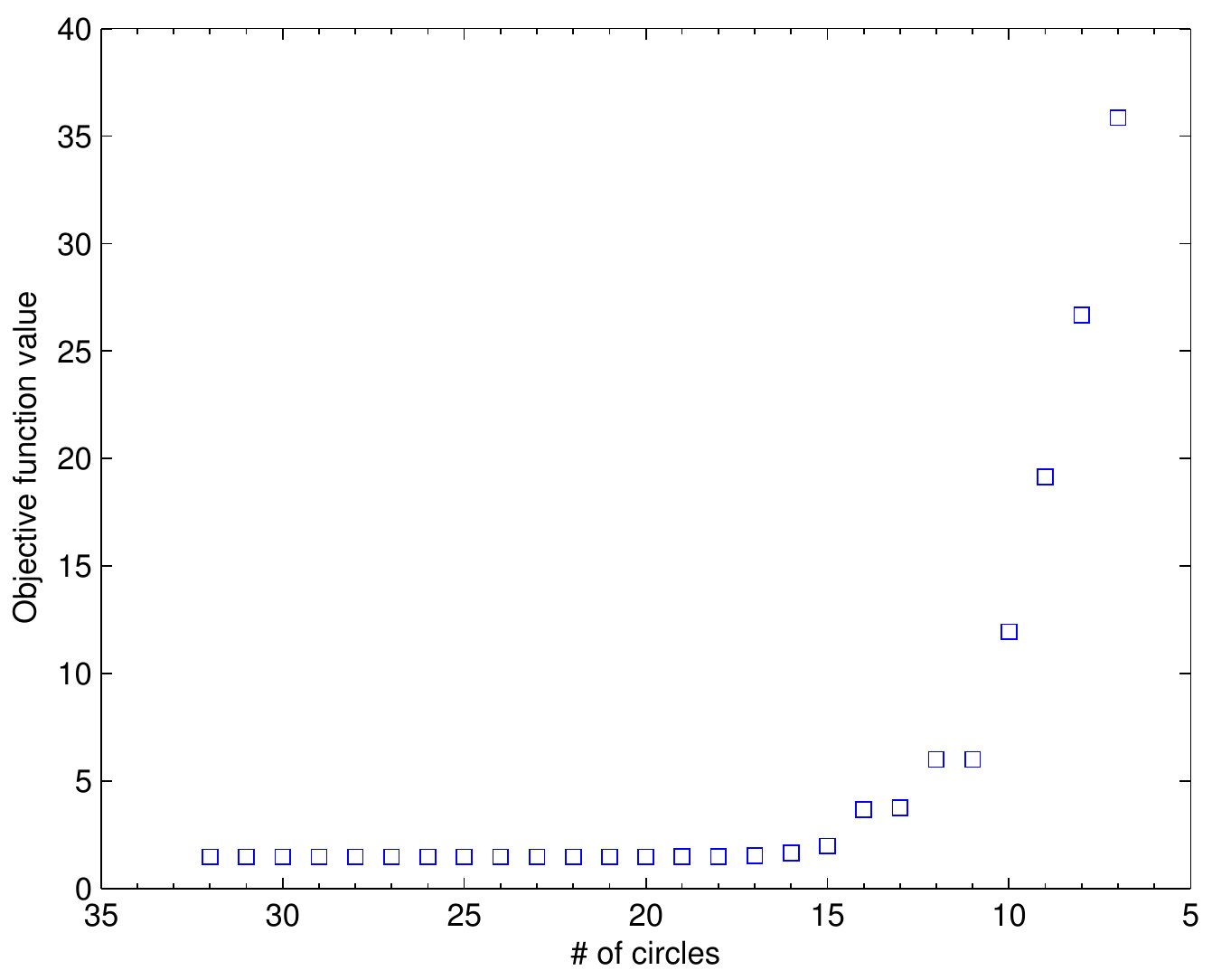}
  }
  \caption[Objective function behavior versus the grid pitch size]{Objective function behavior versus the grid pitch size
    (in pixels):
    (a) 21---the correct value, (b) 20, (d) 19, (f) 18, (c) 22, (e)
    23, (g) 24.}
  \label{fig:objectiv-function-gridptich}
\end{figure}

\section{Concluding remarks}
\label{sec:sparse-concluding-remarks}
In this chapter, we presented a technique facilitating reconstruction
of sub-wavelength features, along with phase-retrieval at the
sub-wavelength scale, at an unprecedented resolution for single-shot
experiments. That is, we have taken coherent lensless imaging into the
sub-wavelength scale, and demonstrated sub-wavelength CDI from
intensity measurements only. The method relies on prior knowledge
that the sample is sparse in a known basis (circles on a grid, in our
examples). We emphasize that sparsity is what makes our
phase retrieval work---the other assumptions used in the algorithm
(non-negativity, bounded support and the known basis) alone are not
sufficient. It is important to note that most natural and artificial
objects are sparse, in some basis. The information does not
necessarily have to be sparse in real space---it can be sparse in any
mathematical basis whose relation to the measurement basis is known,
for example, the wavelet basis or the gradient of the field intensity, given
that this basis is sufficiently uncorrelated with the measurements. In
all these cases our technique can provide a major improvement by
``looking beyond the resolution limit'' in a single-shot
experiment. Since our approach is purely algorithmic, it can be
applied to every optical microscope and imaging system as a simple
computerized image processing tool, delivering results in real time
with practically no additional hardware. The fact that our technique
works in a single-shot holds the promise for ultrafast sub-wavelength
imaging---one could capture a series of ultrafast blurred images, and
then off-line processing will reveal their sub-wavelength features,
which could vary from one frame to the next.  Finally, we note that
our technique is general, and can be extended also to other,
non-optical, microscopes, such as atomic force microscope,
scanning-tunnelling microscope, magnetic microscopes, and other
imaging systems. We believe that the microscopy technique presented
here holds the promise to revolutionize the world of microscopy with
just minor adjustments to current technology---sparse sub-wavelength
images could be recovered by making efficient use of their available
degrees of freedom. Last but not least, we emphasize that our approach
is more general than the particular subject of optical sub-wavelength
imaging. It is in fact a universal scheme for recovering information
beyond the cut-off of the response function of a general system,
relying only on a priori knowledge that the information is sparse in a
known basis.  Our preliminary theoretical and experimental results
indicate, unequivocally, that our method offers an improvement by
orders of magnitude beyond the most sophisticated deconvolution
methods. In a similar vein, we believe that our method can be applied
for spectral analysis, offering a means to recover the fine details of
atomic lines, as long as they are sparse (that is, do not form bands). In
principle, the ideas described here can be generalized to any
sensing/detection/data acquisition schemes, provided only that the 
information is sparse in a known basis, and that the measurements are
taken in a basis sufficiently uncorrelated to it.


\chapter{Afterword}
\label{cha:afterword}

An alternative title for this thesis could be ``Prior information in
the phase retrieval problem''. This also can describe the main line of
the work presented here. In fact we discovered and showed how to use
two powerful priors: approximately known Fourier phase, and sparsity
of the sought signal. The material in each chapter represents
an idea and the main results in a succinct form that is suitable for 
publication in a scientific journal. Therefore, some of the
results were not included in this thesis. Part of them is available as
technical reports listed below
\begin{itemize}
\item \shortcite{osherovich10algorithms}
\item \shortcite{osherovich10simultaneous}
\item \shortcite{osherovich09image}
\item \shortcite{osherovich08signal}
\end{itemize}

Other forms of prior knowledge, for example, defocused/blurred version of the
sought signal can be found in~\shortcite{osherovich10numerical}, which
summarizes our work done in  collaboration with KLA Tencor Inc. 

We should also mention a standalone work done
in~\shortcite{osherovich09designing}, where we used Fienup's HIO
algorithm to design an overcomplete dictionary in a way that its atoms
(columns) are maximally uncorrelated. The method produces excellent
results that are significantly better than the results produced by
currently used algorithms.


\bibliographystyle{newapave}
\bibliography{My_Library}

\end{document}